\documentclass[parskip=full, titlepage=false, twoside=false]{scrartcl} 
\usepackage[utf8]{inputenc} 
\usepackage[T1]{fontenc} 
\usepackage[english]{babel}
\usepackage[autostyle=true]{csquotes}
\usepackage{lmodern}
\usepackage[backend=biber, maxnames=50, backref=true, backrefstyle=three]{biblatex}
\usepackage[colorlinks=true, bookmarks=true, pdfstartview=FitH, pdfauthor={Romain Gicquaud and Jonathan Glöckle}, pdftitle={On the topology of the space of vacuum initial data sets}]{hyperref} 
\usepackage{amsmath, amsthm, amssymb, bm, bbm, mathrsfs, mathtools, mathabx}
\usepackage{extarrows}
\usepackage{tabulary,tabularx}
\usepackage[usenames,dvipsnames]{xcolor}
\usepackage{graphicx}
\usepackage{tikz-cd}
\usepackage{enumerate}
\usepackage[capitalise]{cleveref}
\usepackage{bbold}
\usepackage{wasysym}
\usepackage{enumitem}

\theoremstyle{plain}

\newtheorem{thm}{Theorem}[section]
\newtheorem{theorem}[thm]{Theorem}

\newtheorem{corollary}[thm]{Corollary}

\newtheorem{lemma}[thm]{Lemma}

\newtheorem{proposition}[thm]{Proposition}

\newtheorem{claim}{Claim}

\theoremstyle{remark}
\newtheorem{example}[thm]{Example}
\newtheorem{remark}[thm]{Remark}
\newtheorem{observation}[thm]{Observation}

\theoremstyle{definition}

\newtheorem{definition}[thm]{Definition}

\newcounter{mnotecount}[section]

\renewcommand{\phi}{\varphi}
\renewcommand{\epsilon}{\varepsilon}

\newcommand{\bR}{\mathbb{R}}
\newcommand{\bL}{\mathbb{L}}

\newcommand{\bZ}{\mathbb{Z}}
\newcommand{\bN}{\mathbb{N}}

\newcommand{\cA}{\mathcal{A}}

\newcommand{\cC}{\mathcal{C}}

\newcommand{\cO}{\mathcal{O}}

\newcommand{\cW}{\mathcal{W}}

\newcommand{\cY}{\mathcal{Y}}

\newcommand{\ghat}{{\widehat{g}}}

\newcommand{\gbar}{\overline{g}}

\renewcommand{\hbar}{\overline{h}}

\newcommand{\cL}{\mathcal{L}}

\newcommand{\sigmatil}{\widetilde{\sigma}}

\newcommand{\Hring}{\mathring{H}}

\newcommand{\Sring}{\mathring{S}}
\newcommand{\kring}{\mathring{k}}

\let\<\langle
\let\>\rangle

\newcommand{\kulkoverlay}[2]{\ooalign{$#1\bigcirc$\cr$#1\mkern2mu\wedge$\cr}}
\newcommand{\kulk}{\mathbin{\mathpalette\kulkoverlay{}}}

\DeclareMathOperator{\tr}{tr}
\DeclareMathOperator{\divg}{div}

\DeclareMathOperator{\id}{Id}

\newcommand{\lie}{\mathcal{L}}

\DeclareMathOperator{\supp}{supp}

\newcommand{\riem}{\mathrm{R}}

\newcommand{\riemuddd}[4]{\riem^{#1}_{\phantom{#1} #2 #3 #4}}

\newcommand{\weyl}{\mathrm{W}}

\newcommand{\schouten}{\mathrm{P}}

\newcommand{\cotton}{\mathrm{C}}

\newcommand{\ric}{\mathrm{Ric}}

\newcommand{\scal}{\mathrm{Scal}}

\DeclareMathOperator{\hess}{Hess}

\def\XXint#1#2#3{{\setbox0=\hbox{$#1{#2#3}{\int}$}
			\vcenter{\hbox{$#2#3$}}\kern-.5\wd0}}

\newcommand{\arxiv}[1]{\textsc{arxiv}: \href{https://arxiv.org/abs/#1}{\nolinkurl{#1}}}
\bibliography{refs}

\let\div\relax

\newcommand{\del}{\partial}
\newcommand{\lto}{\longrightarrow}
\newcommand{\lmapsto}{\longmapsto}
\newcommand{\blank}{\,\cdot\,}

\newcommand{\upd}{\mathrm{d}}
\newcommand{\dotcup}{\mathbin{\dot\cup}}
\newcommand{\pt}{\{*\}}
\newcommand{\Ztwo}{\bZ/2\bZ}

\renewcommand{\emptyset}{\varnothing}
\renewcommand{\epsilon}{\varepsilon}

\renewcommand{\phi}{\varphi}
\renewcommand{\hat}{\widehat}

\DeclareMathOperator{\div}{div}

\DeclareMathOperator{\Fred}{Fred}
\DeclareMathOperator{\ind}{ind}

\DeclareMathOperator{\Cl}{Cl}

\DeclareMathOperator{\SO}{SO}

\DeclareMathOperator{\adiff}{\alpha--diff}
\DeclareMathOperator{\oladiff}{\overline{\alpha}--diff}
\DeclareMathOperator{\KO}{KO}

\newcommand{\Ini}{\mathcal{I}}
\newcommand{\sDEC}{\Ini^{>}}
\newcommand{\DEC}{\Ini^{\geq}}
\newcommand{\VacC}{\Ini^{\equiv\,0}}
\newcommand{\InvI}{\Ini^{\mathrm{inv}}}
\newcommand{\Met}{\mathcal{R}}
\newcommand{\PSC}{\Met^{>}}
\newcommand{\NCK}{\Met_0}
\newcommand{\PNCK}{\PSC_0}
\newcommand{\TT}{\mathrm{TT}}

\newcommand{\Susp}{\mathrm{Susp}}

\newcommand{\CMC}{\mathrm{CMC}}

\newcommand{\co}{\mathfrak{co}}

\newcommand{\cf}{cf.~}
\newcommand{\ie}{i.\,e.~}
\newcommand{\eg}{e.\,g.~}
\renewcommand{\geq}{\geqslant}
\renewcommand{\leq}{\leqslant}
\let\Im\relax
\DeclareMathOperator{\Im}{Im}

\begin{document}

\author{Romain Gicquaud\thanks{Université de Tours, Parc Grandmont, 37200 Tours, France, E-mail address: \href{mailto:romain.gicquaud@univ-tours.fr}{romain.gicquaud@univ-tours.fr}} \and Jonathan Glöckle\thanks{Institutionen för Matematik, KTH Stockholm, Lindstedtsvägen 25, 100 44 Stockholm, Sweden, E-mail address: \href{mailto:jonathan.gloeckle.math@outlook.de}{jonathan.gloeckle.math@outlook.de}}}
\title{On the topology of the space of vacuum initial data sets}
\date{}
\maketitle

\begin{abstract}
	We show that the space of vacuum initial data sets on a closed manifold often
	has many non-trivial homotopy groups. The starting point is a result of the
	second named author, which constructs non-trivial elements in the homotopy
	groups of the space of initial data sets satisfying the strict dominant energy
	condition, together with a later result in joint work with Bernd Ammann showing
	that these elements often persist when the strictness assumption is dropped. In
	this work, we use a parametrized version of the conformal method to show that
	these elements may also be represented by maps into the space of vacuum initial
	data sets. This requires two results that may be of independent interest:
	metrics admitting conformal Killing vectors can be removed from the space of
	metrics without changing its weak homotopy type, and over the remaining metrics
	the York decomposition can be carried out in families, the TT-tensors forming a
	trivial Hilbert bundle. To our knowledge, this is the first result on the
	global topology of this space.
\end{abstract}

\noindent\textbf{2020 Mathematics Subject Classification.} Primary 83C05;
Secondary 58D17, 53C21, 53C27, 58J20, 53C18, 35J61.

\noindent\textbf{Keywords.} Einstein constraint equations, vacuum initial data
sets, conformal method, dominant energy condition, positive scalar curvature,
index difference, conformal Killing vectors, spaces of metrics.

\pagebreak

\tableofcontents

\section{Introduction}\label{secIntro}
In general relativity, spacetimes are modeled by Lorentzian manifolds
satisfying the Einstein field equations, and Choquet-Bruhat's initial value
formulation describes them through their Cauchy data on a spacelike
hypersurface, \cf\cite{ChoquetBruhat,Ringstroem}. Such an initial data set on a
closed $n$-dimensional manifold $M$ consists of a Riemannian metric $g$
together with a symmetric $2$-tensor $k$, to be thought of respectively as the
induced metric and the second fundamental form of the hypersurface. The pair
$(g,k)$ cannot be prescribed freely. It must satisfy the Einstein constraint
equations, which relate it to the energy density $\rho$ and the momentum
density $j$ of the ambient matter fields:
\begin{equation}\label{eqDensities}
	\left\lbrace
	\begin{aligned}
		\rho & = \tfrac{1}{2}\big(\scal^g + (\tr^g k)^2 - |k|_g^2\big), \\
		j    & = \divg^g(k) - \upd(\tr^g k),
	\end{aligned}
	\right.
\end{equation}
\cf\cite{BartnikIsenberg}. The data are called vacuum if $\rho$ and $j$ vanish
identically. They satisfy the dominant energy condition (DEC) if
$\rho \geq |j|_g$ everywhere, which is the condition met by physically
reasonable matter fields, \cf\cite[Ch.~9]{Wald}.

Building on work of the second named author~\cite{Gloeckle2024a}, we study the
global topology of the spaces of initial data sets that these conditions
define. We fix the smooth manifold $M$, equip tensor fields with the
$C^\infty$-topology, write $\Ini(M)$ for the space of all initial data sets,
and single out the subspaces
\[
	\VacC(M) \;\subset\; \DEC(M) \;\supset\; \sDEC(M)
\]
of vacuum initial data sets, of initial data sets satisfying the DEC, and of
those satisfying the strict DEC $\rho > |j|_g$, respectively.

The structure of $\VacC(M)$ has been studied since the 1970s, but almost
exclusively from a local point of view. The linearization stability program of
Fischer and Marsden~\cite{FischerMarsden} and Moncrief~\cite{Moncrief} showed
that $\VacC(M)$ is a smooth infinite-dimensional manifold near every initial
data set whose vacuum development carries no Killing vector, and Arms, Marsden
and Moncrief~\cite{ArmsMarsdenMoncrief} proved that, at these points, it has a
quadratic cone singularity. Bartnik~\cite{Bartnik} constructed a phase space
formulation adapted to this picture, and Fischer and
Moncrief~\cite{FischerMoncrief} studied the quotient by the diffeomorphism
group and its reduced dynamics. All these results describe $\VacC(M)$ locally.
To the best of our knowledge, nothing was known so far about its topology in
the large, and the present paper is a first step in this direction.

A closely related and notoriously difficult question is the parametrization
problem: describing the whole of $\VacC(M)$ by freely specifiable data. In the
constant mean curvature (CMC) regime, \ie in the case where $\tau = \tr^g k$ is
constant, the answer is complete. The conformal method, introduced by
Lichnerowicz~\cite{Lichnerowicz1944} and brought to its modern form by
Choquet-Bruhat and York~\cite{ChoquetBruhatYork}, parametrizes the CMC initial
data by a conformal class, the constant $\tau$ and a transverse and traceless
tensor $\sigma$, and Isenberg's classification~\cite{Isenberg} turns this into
an exact description of the CMC part of $\VacC(M)$, \cf the
surveys~\cite{BartnikIsenberg,CarlottoReview}. Away from the CMC regime,
despite decades of efforts, only partial existence results are available,
\cf\cite{MaxwellNonCMC,DahlGicquaudHumbert} and the references therein, and
there is strong evidence that the conformal parametrization itself degenerates.
Maxwell exhibited model problems in which it folds and
bifurcates~\cite{MaxwellModel}, and proposed drift
reformulations~\cite{MaxwellDrift} in which one gives up on a single global
parametrization and keeps instead enough freedom to change charts.

The results of this paper show that such difficulties are not defects of one
particular method, but topological features of $\VacC(M)$ itself. Every natural
space of freely specifiable data, such as sections of tensor bundles or
conformal classes, is contractible. Assume that $P \colon \mathcal{D} \to
	\VacC(M)$ is a continuous parametrization defined on such a space $\mathcal{D}$
and that it admits a homotopy section, \ie a continuous map $s \colon \VacC(M)
	\to \mathcal{D}$ with $P \circ s$ homotopic to the identity. Then all homotopy
groups of $\VacC(M)$ vanish. More generally, if $P$ is only defined on a subset
$\mathcal{D}' \subset \mathcal{D}$, for instance on the data for which a given
method has a unique solution depending continuously on the data, and admits a
homotopy section, then $s$ induces injective maps $\pi_\ell(\VacC(M)) \to
	\pi_\ell(\mathcal{D}')$. The topology of $\VacC(M)$ then has to be present in
$\mathcal{D}'$. Our main theorem below produces non-zero homotopy groups
$\pi_\ell(\VacC(M))$ in every degree $\ell \geq 1$ with $n + \ell \equiv 0, 1,
	2, 4 \mod 8$, for a large class of closed spin manifolds of dimension $n \geq
	6$. On such manifolds, a parametrization of the above kind can only be defined
on a set of data that is not contractible.

\begin{table}
	\begin{tabular}{c|cccc}
		               & $\tau = 0,\, \sigma \equiv 0$ & $\tau = 0,\, \sigma \not\equiv 0$ & $\tau \neq 0,\, \sigma \equiv 0$ & $\tau \neq 0,\, \sigma \not\equiv 0$ \\
		\hline
		$\cY(M,g) > 0$ & No                            & Yes                               & No                               & Yes                                  \\
		$\cY(M,g) = 0$ & Yes, but not unique           & No                                & No                               & Yes                                  \\
		$\cY(M,g) < 0$ & No                            & No                                & Yes                              & Yes                                  \\
	\end{tabular}
	\caption{Existence of a unique solution of the CMC conformal method, adapted from~\cite{Isenberg}}
	\label{tab:SolLich}
\end{table}

In the CMC case, this is what Isenberg's classification shows (\cf
\cref{tab:SolLich}). The conformal method requires $\sigma \not\equiv 0$ when
the Yamabe invariant of the conformal class is non-negative, and $\tau \neq 0$
when it is non-positive. When the Yamabe invariant vanishes, $\tau = 0$ and
$\sigma \equiv 0$, solutions exist but are only unique up to multiplication by
a positive constant. Since metrics with positive Yamabe invariant allow any
value of $\tau$ and the other ones only $\tau \neq 0$, the set of pairs $(g,
	\tau)$ allowed by these conditions is, up to homotopy, the model $\PSC(M)
	\times I \cup \Met(M) \times \del I$ of the suspension of $\PSC(M)$ used
in~\eqref{eqSuspDEC}. The homotopy classes of \cref{thmMain} are obtained by
applying the CMC conformal method to data of this kind (\cref{secHomotopy}).
Away from the CMC case, no analogue of Isenberg's classification is known. For
instance, the results of~\cite{GicquaudSmallTT,GicquaudConformal} show that, in
the far-from-CMC regime and for small~$\sigma$, the conformal method admits
either zero, one or two solutions whose size is controlled by~$\sigma$. It is
determined by an equation involving both $\sigma$ and $\tau$,
whereas \cref{tab:SolLich} only involves whether $\sigma$ vanishes. Our results
show that phenomena of this kind cannot be avoided entirely. Whatever form an
analogue of Isenberg's classification takes, it cannot be a continuous
parametrization, with a homotopy section, by a contractible set of data. Some
data have to be excluded, in such a way that the remaining set is not
contractible, or uniqueness or continuity have to fail. Both phenomena occur in
the CMC case (\cref{tab:SolLich}) as well as in the far-from-CMC results just
mentioned. This supports Maxwell's point of view that what one can hope for is
an atlas for $\VacC(M)$, not a global chart.

The tools we use to detect this topology come from the older and well-developed
study of the space $\PSC(M)$ of positive scalar curvature metrics. Beyond the
existence problem studied by Gromov and Lawson, Schoen and Yau, and
Stolz~\cite{GromovLawson1980,GromovLawson1980b,SchoenYau,Stolz}, a great deal
is known about its homotopy type, \cf the survey~\cite{Rosenberg}. The first
results go back to Hitchin~\cite{Hitchin}, who used the Dirac operator to
produce non-trivial elements in $\pi_0$ and $\pi_1$ of $\PSC(M)$. The modern
form of his idea is the \emph{index difference}, a map from the homotopy groups
of $\PSC(M)$ to the $\KO$-theory of a point. By the work of Hanke, Schick and
Steimle~\cite{HankeSchickSteimle}, of Crowley, Schick and
Steimle~\cite{CrowleySchickSteimle} and of Botvinnik, Ebert and
Randal-Williams~\cite{BotvinnikEbertRandalWilliams}, this index difference is
non-trivial in every degree in which its target does not vanish, so that
$\PSC(M)$ has many non-trivial higher homotopy groups whenever $M$ is spin of
dimension $n \geq 6$.

An analogue of this picture for initial data sets was developed
in~\cite{Gloeckle2024a}. There the Dirac operator is replaced by the
Dirac-Witten operator, which underlies Witten's spinorial proof of the positive
mass theorem~\cite{Witten,ParkerTaubes}, and one obtains a $\KO$-valued index
difference on $\sDEC(M)$ that detects non-trivial homotopy groups, again
produced from the corresponding classes on $\PSC(M)$. In later joint work with
Ammann~\cite{AmmannGloeckle}, these classes were shown to persist in the larger
space $\DEC(M)$ for every manifold whose topology forbids initial data sets
carrying a parallel spinor. In the present article we extend this program to
the space of vacuum initial data $\VacC(M)$. Concretely, we prove that, under
the same hypotheses, the homotopy classes
of~\cite{Gloeckle2024a,AmmannGloeckle} can already be represented by maps into
$\VacC(M)$, which is the content of \cref{thmMain} below.

Our main tool is a parametrized version of the conformal method discussed
above. In its constant mean curvature form, it builds a vacuum initial data set
out of a metric, a constant and a transverse and traceless (TT) ``seed''
tensor, by solving the Lichnerowicz equation for a conformal factor. To carry
out this construction in families, the seed has to be non-zero and chosen
continuously in the metric. The TT-tensors of a metric are obtained from York's
decomposition, which degenerates as the metric approaches one admitting a
conformal Killing vector (CKV). We deal with this in two steps that may be of
independent interest. We first show, using ideas from tractor calculus,
\cf\cite{CurryGover}, that metrics admitting CKVs are confined to subspaces of
arbitrarily high codimension, and that discarding them does not change the
(weak) homotopy type of the relevant spaces. The strategy goes back to work of
Beig, Chru\'sciel and Schoen~\cite{BeigChruscielSchoen}, to which we give a
more modern point of view based on the prolongation equation for conformal
Killing vectors. Second, we carry out the York decomposition in families over
the space $\NCK(M)$ of metrics without CKVs. The TT-tensors then form an
infinite-dimensional Hilbert bundle over $\NCK(M)$, which is trivial by
Kuiper's theorem and hence admits a nowhere-vanishing continuous section. In
order to establish the Hilbert bundle structure, we need a positive lower bound
on the coercivity constant of the conformal Killing operator that is locally
uniform in the metric.

The following is the main result of this article. Because its general strategy
is quite simple, we also provide its proof straight away, making use of the
more technical results that we are going to establish later in the article.
Although it may not seem so at first sight, there are plenty of manifolds $M$
satisfying all the assumptions of the theorem, \cf\cref{exMfds}.

\begin{theorem}\label{thmMain}
	Let $M$ be a closed connected spin manifold of dimension $n \geq 6$ admitting a
	metric of positive scalar curvature. Assume furthermore that $M$ does not admit
	initial data sets with timelike or lightlike parallel spinors, \cf Theorems
	\ref{thmParallelSpinorTop} and \ref{thmParallelSpinorObstructions}. Then there
	exists a CMC vacuum initial data set $(g,k) \in \VacC(M)$ such that
	\begin{gather*}
		\pi_\ell(\VacC(M), (g,k)) \neq 0
	\end{gather*}
	for all $\ell \geq 1$ with $n+\ell \equiv 0,1,2,4 \mod 8$. Moreover $(g,k)$ may
	be chosen independently of $\ell$, and in each of these degrees there is a
	non-zero class in $\pi_\ell(\VacC(M),(g,k))$ whose image in
	$\pi_\ell(\DEC(M),(g,k))$ is again non-zero, being detected by the index
	difference.
\end{theorem}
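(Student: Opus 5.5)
The plan is to factor the index-difference classes on $\DEC(M)$ through $\VacC(M)$ by a parametrized CMC conformal method. Recall the setup of \cite{Gloeckle2024a,AmmannGloeckle}. The non-trivial classes in $\pi_\ell(\DEC(M))$ come from classes in $\pi_{\ell-1}(\PSC(M))$ via a suspension-type map
\[
	\Susp\PSC(M) \simeq \PSC(M)\times I \cup \Met(M)\times\del I \lto \DEC(M),
\]
given roughly by $(g,t)\mapsto (g, t\,g)$, suitably rescaled in $t$. This is the map referred to in~\eqref{eqSuspDEC}. By the work of Botvinnik--Ebert--Randal-Williams, these classes are detected by the index difference whenever $n+\ell\equiv 0,1,2,4 \mod 8$ and $n\geq 6$. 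Under the parallel spinor hypothesis, Ammann--Glöckle show that the index difference is still defined and non-trivial on the image in $\DEC(M)$. So it suffices to construct a map
\[
	\Phi\colon \Susp\PSC(M) \lto \VacC(M)
\]
whose composition with the inclusion $\VacC(M)\subset\DEC(M)$ is homotopic to the map above. Non-vanishing of the class in $\pi_\ell(\VacC(M))$ then follows, since its image in $\pi_\ell(\DEC(M))$ is non-zero. The basepoint $(g,k)$ will be $\Phi$ of a fixed basepoint, which is independent of $\ell$.

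\textbf{Step 1: removing metrics with CKVs.} Using the later result that $\NCK(M)\subset\Met(M)$ and $\PNCK(M)\subset\PSC(M)$ are weak homotopy equivalences, we may replace $\Susp\PSC(M)$ by the corresponding model built from $\PNCK(M)$ and $\NCK(M)$. This does not change the homotopy classes of maps from spheres. The point is that, because the subspaces removed have high codimension, compact families of spheres can be pushed off them.

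\textbf{Step 2: a nowhere-zero TT seed.} Over $\NCK(M)$, the York decomposition gives a Hilbert bundle of TT-tensors. By Kuiper's theorem this bundle is trivial, so it has a continuous nowhere-vanishing section $g\mapsto\sigma_g$; a smooth-topology version is obtained via elliptic regularity.

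\textbf{Step 3: the parametrized conformal method.} For $(g,\tau)$ with $\tau\in[-1,1]$, allowed only as $\tau=\pm1$ when $g\notin\PNCK(M)$, we solve the Lichnerowicz equation
\[
	\tfrac{4(n-1)}{n-2}\Delta\phi + \scal^g\phi - |\sigma_g|^2\phi^{-\frac{3n-2}{n-2}} + \tfrac{n-1}{n}\tau^2\phi^{\frac{n+2}{n-2}} = 0 .
\]
By Isenberg's table, a unique positive solution exists in the following situations:
\begin{itemize}
	\item when the Yamabe invariant $\cY(M,g)>0$, since $\sigma_g\not\equiv0$;
	\item for any metric $g$ when $\tau\neq0$, again using $\sigma_g\not\equiv0$.
\end{itemize}
By the implicit function theorem the solution depends continuously on the data: the linearization is invertible because of the monotone structure, more precisely by the maximum principle applied to the linearized operator at the solution. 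Setting
\[
	\Phi(g,\tau) = \Bigl(\phi^{\frac4{n-2}}g,\ \phi^{-2}\sigma_g + \tfrac{\tau}{n}\phi^{\frac4{n-2}}g\Bigr)
\]
gives a CMC vacuum initial data set.

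\textbf{Step 4: the homotopy, which is the main obstacle.} We must connect $\Phi$ to the DEC model map inside $\DEC(M)$, at least after restricting to spheres. First, scale the seed: replace $\sigma_g$ by $s\,\sigma_g$ for $s\in(0,1]$. The vacuum constraint is then unaffected, but letting $s\to0$ is exactly what degenerates on the region $\tau=0$. So instead we interpolate in $\DEC(M)$ by a convex homotopy. The path
\[
	\bigl(\phi^{\frac4{n-2}}g,\ \phi^{-2}\sigma_g + \tfrac{\tau}{n}\phi^{\frac4{n-2}}g\bigr) \;\to\; \bigl(\phi^{\frac4{n-2}}g,\ \tfrac{\tau}{n}\phi^{\frac4{n-2}}g\bigr),
\]
which drops the TT part, stays in $\DEC(M)$: linearly reducing a TT part keeps $j=0$ and only increases $\rho$. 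We then deform the conformal factor $\phi$ to $1$; along the $\PSC$ part and for $\tau\neq0$ one checks that $\rho\geq0$ is preserved along a suitable path. The delicate points are the following:
\begin{itemize}
	\item on $\PSC\times I$ the DEC must persist while $\phi$ is deformed, which can be handled by first deforming within the positive scalar curvature conformal class;
	\item on $\Met\times\del I$ we need $\tau^2$ large relative to $\scal$, which we arrange by enlarging $|\tau|$ on compact families.
\end{itemize}
We then identify the result with the model map of \cite{Gloeckle2024a}. Compatibility with the index difference follows from its homotopy invariance.
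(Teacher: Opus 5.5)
Your route differs from the paper's and can be made to work, but the step you flag as the main obstacle is asserted rather than proved.

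\textbf{Comparison with the paper.} The paper does not solve the vacuum equation on the suspension model and then return to the DEC model by hand. Instead, \cref{thmHomotopy} deforms the whole inclusion $\sDEC_{0,\CMC}(M)\hookrightarrow\DEC_{0,\CMC}(M)$. For each $(g,k)$ and $t\in[0,1]$ it solves the \emph{non-vacuum} Lichnerowicz equation with seed $t\kring+(1-t)\sigma(g)$ and source $2t\rho$. The ansatz $\rho_t=t\phi_t^{-2-3\kappa/2}\rho$, $j_t=t\phi_t^{-2-\kappa}j$ of \cref{lemIDSForPsi} then gives the DEC automatically along the path, with $\phi_1\equiv1$ at $t=1$ and vacuum data at $t=0$. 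CKVs are removed on $\sDEC_{\CMC}(M)$ via \cref{corReductionToNCK}.

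Your version has two advantages. It needs only the vacuum case of Isenberg's theorem. It also needs only the easy half of the CKV reduction: represent a class of $\pi_{\ell-1}(\PSC(M))$ in $\PNCK(M)$, then extend over discs in the weakly contractible $\NCK(M)$. Your Steps 1--3 and the first half of Step 4 are correct: $\phi^{-2}\sigma_g$ is TT for $\phi^\kappa g$, so scaling it to zero keeps $j=0$ and only increases $\rho$. The cost is the second half of Step 4, which the paper's ansatz avoids entirely.

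\textbf{The gap in Step 4.} For data $(\psi^\kappa g,\frac{\tau}{n}\psi^\kappa g)$ you have $j=0$ and $2\rho=\psi^{-\kappa-1}F_\tau(\psi)$, where $F_\tau(\psi)\coloneqq L_0\psi+\frac{n-1}{n}\tau^2\psi^{\kappa+1}$. So you need a path of conformal factors from $\phi$ to $1$, continuous in the parameters, along which $F_\tau\geq0$.

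The obvious path $\psi_r=(1-r)\phi+r$ does not do this. Since $\psi\mapsto\psi^{\kappa+1}$ is convex, $F_\tau(\psi_r)\leq(1-r)F_\tau(\phi)+rF_\tau(1)$, so positivity at the endpoints does not propagate. Concretely, at a zero of $\sigma_g$ where $\phi>1+1/\kappa$,
\[
F_\tau(\psi_r)=r\bigl(\scal^g-\tfrac{n-1}{n}\tau^2\phi^\kappa(\kappa\phi-\kappa-1)\bigr)+O(r^2),
\]
and nothing controls the sign of the bracket. Your remedy of ``deforming within the positive scalar curvature conformal class'' does not apply either. For $\tau\neq0$, $\phi^\kappa g$ has $\scal=-\frac{n-1}{n}\tau^2+|\sigma_g|_g^2\phi^{-2\kappa-4}$, which need not be positive. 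On $\Met\times\del I$ there is no positive scalar curvature at all.

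\textbf{A fix.} For $\lambda\geq1$ one has $F_\tau(\lambda\psi)\geq\lambda F_\tau(\psi)$. Use a three-segment path:
\begin{enumerate}
\item Scale $\phi$ up to $\lambda\phi$. Along the way $F_\tau\geq s|\sigma_g|^2\phi^{-\kappa-3}\geq0$.
\item Interpolate linearly from $\lambda\phi$ to the constant $\lambda$. Along $\psi_r=\lambda((1-r)\phi+r)$,
\[
F_\tau(\psi_r)=\lambda\Bigl[(1-r)|\sigma_g|^2\phi^{-\kappa-3}+r\scal^g+\tfrac{n-1}{n}\tau^2\bigl(\lambda^\kappa((1-r)\phi+r)^{\kappa+1}-(1-r)\phi^{\kappa+1}\bigr)\Bigr].
\]
This is non-negative once
\[
\lambda^\kappa\min\{1,\min\phi\}^{\kappa+1}\geq\max\{1,\max\phi\}^{\kappa+1}+\tfrac{n}{(n-1)\tau^2}\max\{0,-\min\scal^g\}.
\]
The last term vanishes for $g\in\PSC(M)$, and $\tau\neq0$ otherwise, so $\lambda$ can be chosen continuously.
\item Scale down from $\lambda$ to $1$. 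Here $F_\tau(s)\geq s(\scal^g+\frac{n-1}{n}\tau^2)>0$ for $s\geq1$, because the model data satisfy the strict DEC. This requires the trace $\tau$ to equal $nt\,\tau(g)$ as in \eqref{eqSuspDEC}, not $\tau\in[-1,1]$.
\end{enumerate}
With this in place, the rest of your argument closes as in the paper. Free homotopies suffice, since $\oladiff$ is defined on $[S^\ell,\InvI(M)]$.
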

\begin{proof}
	The proof builds on the main result of~\cite{Gloeckle2024a} that we recall in
	\cref{thmCompIndDiff}: for any $\ell \geq 1$ and $g_0 \in \PSC(M)$ there is a
	commutative diagram
	\begin{equation*}
		\begin{tikzcd}
			\pi_{\ell-1}(\PSC(M),g_0) \rar{\Susp} \arrow[dr,"{\adiff}"'] & \pi_\ell(\Susp(\PSC(M)),[(g_0,0)]) \arrow[r,"\Phi_*"] & \pi_\ell(\sDEC(M),(g_0,0)) \arrow[dl,"{\oladiff}"]\\
			& \KO^{-n-\ell}(\pt). &
		\end{tikzcd}
	\end{equation*}
	The index difference $\adiff \colon \pi_{\ell-1}(\PSC(M), g_0) \to
		\KO^{-n-\ell}(\pt)$, whose target is recalled in~\eqref{eqKOGroups}, has been
	studied in numerous
	works~\cite{HankeSchickSteimle,CrowleySchickSteimle,BotvinnikEbertRandalWilliams}
	on the homotopy groups of $\PSC(M)$. For instance, it is known to be a
	non-trivial map for $\ell \geq 1$ and $n \geq 6$ whenever the target is
	non-zero, \ie if $n+\ell \equiv 0,1,2,4 \mod 8$, \cf\cref{thmNonTrivPSC}. So in
	all these cases, there exist non-trivial elements in
	$\pi_\ell(\sDEC(M),(g_0,0))$. They can be constructed from elements outside the
	kernel of $\adiff$ via $\Phi_* \circ \Susp$ and detected by their non-trivial
	image under $\oladiff$.

	We now make use of two observations about the maps in this diagram. Firstly,
	for the definition of $\oladiff$ it is actually not the strict DEC that matters
	but only the invertibility of the associated Dirac-Witten operator
	(\cf\cref{defOladiff}). This implies that $\oladiff$ factors over the subspace
	$\InvI(M)$ of initial data sets with invertible Dirac-Witten operator. Under
	the assumption that $M$ does not admit initial data sets with timelike or
	lightlike parallel spinors, \cref{thmParallelSpinorTop} implies that $\DEC(M)
		\subset \InvI(M)$ so that $\oladiff$ factors as
	\begin{align*}
		\oladiff \colon \pi_\ell(\sDEC(M),(g_0,0)) \to \pi_\ell(\DEC(M),(g_0,0)) \to \KO^{-n-\ell}(\pt),
	\end{align*}
	where the first map is induced by the inclusion, \cf also
	\cref{corExtensionToDEC}.

	Secondly, by \cref{obsValuesInCMC}, the map $\Phi$ takes values in CMC initial
	data sets. In other words, $\Phi_*$ factors as
	\begin{align*}
		\Phi_* \colon \pi_\ell(\Susp(\PSC(M)),[(g_0,0)]) \longrightarrow \pi_\ell(\sDEC_{\CMC}(M),(g_0,0)) \longrightarrow \pi_\ell(\sDEC(M),(g_0,0)),
	\end{align*}
	where the second map is induced by the inclusion. Taking these observations
	together, we see that for all $\ell$ under consideration there exist non-zero
	elements in $\pi_\ell(\sDEC_{\CMC}(M),(g_0,0))$ that stay non-zero after
	mapping to $\pi_\ell(\DEC(M),(g_0,0))$ with the map induced by the inclusion,
	since $\oladiff$ factors through this group. Let us fix such a $\gamma \in
		\pi_\ell(\sDEC_{\CMC}(M),(g_0,0))$.

	We now wish to use the conformal method to deform the inclusion
	$\sDEC_{\CMC}(M) \hookrightarrow \DEC_{\CMC}(M)$ into a map with values in
	$\VacC_{\CMC}(M)$. This does not work directly -- the potential presence of
	CKVs poses problems -- but by \cref{thmHomotopy}, we can achieve the following:
	there is a map $\Psi \colon \sDEC_{0, \CMC}(M) \to \VacC_{0,\CMC}(M)$ such that
	the left triangle in
	\begin{equation} \label{eqDefVacC}
		\begin{tikzcd}
			& & \VacC_{0, \CMC}(M) \arrow[d,hook] \arrow[r,"\iota",hook] & \VacC(M) \arrow[d,hook]\\
			\sDEC_{0, \CMC}(M) \arrow[urr,"{\Psi}"] \arrow[rr,hook] & & \DEC_{0, \CMC}(M) \arrow[r,hook] & \DEC(M)
		\end{tikzcd}
	\end{equation}
	commutes up to homotopy. Here the subscript ``$0$'' indicates that the first
	component of the initial data sets is required to be a metric without CKVs.

	The final ingredient is the statement that the inclusion $\sDEC_{0, \CMC}(M)
		\hookrightarrow \sDEC_{\CMC}(M)$ is a weak homotopy equivalence
	(\cref{corReductionToNCK}). This firstly means that for $g_0 \in \PSC(M)$ there
	is some $(g_1,k_1) \in \sDEC_{0, \CMC}(M)$ that is in the same path-connected
	component of $\sDEC_{\CMC}(M)$ as $(g_0,0)$, and we fix once and for all a path
	between those. Neither $(g_1,k_1)$ nor this path depends on $\ell$, which will
	entail that the initial data set $(g,k)$ produced below is the same in all
	degrees. Secondly, for all $\ell \geq 1$ the inclusion $\sDEC_{0, \CMC}(M)
		\hookrightarrow \sDEC_{\CMC}(M)$ induces an isomorphism
	\begin{gather*}
		\pi_\ell(\sDEC_{0, \CMC}(M),(g_1,k_1)) \overset{\cong}{\longrightarrow} \pi_\ell(\sDEC_{\CMC}(M),(g_1,k_1)) \cong \pi_\ell(\sDEC_{\CMC}(M),(g_0,0)),
	\end{gather*}
	the second isomorphism being constructed from the fixed path connecting the
	base points. By virtue of this isomorphism, the non-zero element $\gamma \in
		\pi_\ell(\sDEC_{\CMC}(M),(g_0,0))$ from above gives rise to a non-zero element
	of $\pi_\ell(\sDEC_{0, \CMC}(M),(g_1,k_1))$. Note that $\Psi$ moves the base
	point, so we set
	\begin{gather*}
		(g,k) \coloneqq \Psi(g_1,k_1) \in \VacC_{0, \CMC}(M),
	\end{gather*}
	which is the CMC vacuum initial data set of the statement, and we push our
	element forward by $(\iota \circ \Psi)_*$ to obtain $\widetilde{\gamma} \in
		\pi_\ell(\VacC(M),(g,k))$.

	It remains to see that the outcome $\widetilde{\gamma}$ is non-zero. The left
	triangle in~\eqref{eqDefVacC} commutes up to a homotopy $H$ from the inclusion
	$\sDEC_{0, \CMC}(M) \hookrightarrow \DEC_{0, \CMC}(M)$ to the composition of
	$\Psi$ with $\VacC_{0, \CMC}(M) \hookrightarrow \DEC_{0, \CMC}(M)$. In
	particular $t \mapsto H((g_1,k_1),t)$ is a path in $\DEC(M)$ from $(g_1,k_1)$
	to $(g,k)$, and concatenating it with the path fixed above yields an
	isomorphism
	\begin{gather*}
		\pi_\ell(\DEC(M),(g,k)) \cong \pi_\ell(\DEC(M),(g_0,0)).
	\end{gather*}
	Under this isomorphism, the image of $\widetilde{\gamma}$ in
	$\pi_\ell(\DEC(M),(g,k))$ corresponds to the image in
	$\pi_\ell(\DEC(M),(g_0,0))$ of the original element $\gamma \in
		\pi_\ell(\sDEC_{\CMC}(M),(g_0,0))$. The latter is non-zero with
	$\oladiff(\gamma) \neq 0$, so the same is true for $\widetilde{\gamma} \in
		\pi_\ell(\VacC(M),(g,k))$ as well.
\end{proof}

Let us point out that, under the assumptions of the theorem, we do not know
whether the space of CMC vacuum initial data sets, or $\VacC(M)$ itself, is
path-connected, see however~\cite{Gloeckle2021,Gloeckle2024a}. The conclusion
of the theorem is therefore understood relative to the base point $(g,k)$
produced by the proof.

We view this article as a first step in the study of the topology of spaces of
initial data sets, and several natural questions arise after this work. The
restriction $n \geq 6$ in \cref{thmMain} comes from the results on $\PSC(M)$
that it relies on, and in low dimensions the topology of $\PSC(M)$ is quite
different. In dimension $3$, Marques proved that the moduli space
$\PSC(M)/\mathrm{Diff}(M)$ is path-connected~\cite{Marques}, and Bamler and
Kleiner proved that $\PSC(M)$ is contractible whenever it is
non-empty~\cite{BamlerKleiner}. It is therefore natural to ask whether this
contractibility carries over to $\DEC(M)$ and $\VacC(M)$ when $M$ is a closed
$3$-manifold admitting a metric of positive scalar curvature. This is the
physically relevant dimension. Note that the results of
\cref{secKilling,secTTtensors} hold for all $n \geq 3$, so that the
parametrized conformal method of \cref{secHomotopy} is also available in
dimension $3$.

Much less is known about the topology of $\PSC(M)$ in dimensions $4$ and $5$,
where non-trivial topology of $\PSC(M)$ has been detected by Seiberg-Witten
theory~\cite{Ruberman} and by eta invariants~\cite{BotvinnikGilkey}, and the
corresponding questions for initial data sets are open.

Finally, all the classes constructed in this article consist of CMC initial
data sets. This raises the question of how the CMC initial data sets sit inside
the space of all initial data sets. Is $\VacC_{\CMC}(M)$ a deformation retract
of $\VacC(M)$, or at least is the inclusion a weak homotopy equivalence, and
similarly for $\DEC_{\CMC}(M)$ in $\DEC(M)$? Or is it possible to construct
homotopy classes of non-CMC initial data sets that do not come from CMC ones,
that is, which are not in the image of the map induced by the inclusion? In the
first case, the topology of $\VacC(M)$ would be accessible through the CMC
conformal method, which describes the CMC part of the constraint set
completely. In the second case, it would contain topology that this method
cannot detect. We intend to study this question in future work.

The rest of the article establishes the results used above. First, we recall in
\cref{secPrelim} previous results on the homotopy groups of $\PSC(M)$ and
$\sDEC(M)$. This includes the extension to $\DEC(M)$
following~\cite{AmmannGloeckle} as well as the observation that the considered
non-zero homotopy classes may be represented by maps to $\sDEC_{\CMC}(M)$. In
\cref{secKilling} we reduce to metrics without CKVs. In \cref{secTTtensors} we
carry out the York decomposition in families and construct the continuous
family $\NCK(M) \ni g \mapsto \sigma(g)$ of non-zero TT-tensors, and in
\cref{secHomotopy} we use it to deform the inclusion $\sDEC_{0, \CMC}(M)
	\hookrightarrow \DEC_{0, \CMC}(M)$ into a map with values in $\VacC_{0,
		\CMC}(M)$. The appendix recalls the definition of the index difference
$\oladiff$.

\subsection*{Acknowledgments}
R. G. is partially supported by the French National Research Agency (ANR) under
grants ANR-23-CE40-0010-02 (Einstein constraints: past, present, and future,
EINSTEIN-PPF) and ANR-25-CE40-4883 (Scattering, Holography and General
Relativity). J. G. is thankful to Matthias Uschold for sharing his
group-theoretic knowledge and acknowledges support through the Walter Benjamin
grant No.~556505019 by the German Science Foundation DFG.

\subsection*{Use of AI assistance}
In preparing this article the authors used Claude (Anthropic) to perform part
of the symbolic computations of \cref{secKilling}, in particular in proving
that explicit metrics do not carry any non-zero conformal Killing vector, and
to help revise parts of the exposition. All definitions, statements and proofs
are the authors' own, and the authors have checked every part of the article,
including these calculations.

\section{Preliminaries}\label{secPrelim}
In the following, $M$ will always be a closed manifold of dimension $n$. We use
the geometric sign convention for the Laplacian, $\Delta = -\div \nabla = -\tr
	\nabla^2$, so that $\Delta f(x_0) \geq 0$ at any local maximum $x_0$ of $f$ and
$\Delta$ has non-negative spectrum. We define $\Met(M)$ to be the space of
Riemannian metrics on $M$, equipped with $C^\infty$-topology. We denote by
$\PSC(M) = \{g \in \Met(M) \mid \scal^g > 0\}$ its subspace of positive scalar
curvature (PSC) metrics. Likewise, $\Ini(M) = \{(g,k) \mid g \in \Met(M), k \in
	\Gamma(S_2M)\}$ is the space of initial data sets on $M$. We shall consider
various subspaces:
\begin{align*}
	\sDEC(M) & \coloneqq \{(g,k) \in \Ini(M) \mid \rho > |j|_g \}     \\
	\DEC(M)  & \coloneqq \{(g,k) \in \Ini(M) \mid \rho \geq |j|_g \}  \\
	\VacC(M) & \coloneqq \{(g,k) \in \Ini(M) \mid \rho = 0,\, j=0 \},
\end{align*}
where $\rho$ and $j$ are the energy and momentum densities defined in
\cref{eqDensities}.

Since our goal is to study the topology of $\VacC(M)$, we briefly review what
is known about $\sDEC(M)$ (and $\DEC(M)$). To date, the only information about
non-trivial higher homotopy groups of $\sDEC(M)$ is obtained by Dirac-operator
methods. More precisely, it is obtained from the index difference
\begin{gather*}
	\oladiff \colon \pi_\ell(\sDEC(M), (g_0,k_0)) \lto \KO^{-n-\ell}(\pt),
\end{gather*}
of~\cite{Gloeckle2024a}, defined for a base point $(g_0,k_0) \in \sDEC(M)$. We
recall its definition in \cref{secOladiff}.

Its definition requires $M$ to be a spin manifold with a chosen spin structure,
which we assume from now on. We recall that the $\KO$-groups of a point are
simply given by
\begin{align} \label{eqKOGroups}
	\KO^{-i}(\pt) \cong \begin{cases}
		                    \bZ   & i \equiv 0,4 \mod 8      \\
		                    \Ztwo & i \equiv 1,2 \mod 8      \\
		                    0     & i \equiv 3,5,6,7 \mod 8.
	                    \end{cases}
\end{align}

The motivation to consider $\oladiff$ originates from an analogous construction
in the case of positive scalar curvature metrics that dates back to
Hitchin~\cite{Hitchin}. In this case, the index difference is a map
\begin{gather*}
	\adiff \colon \pi_\ell(\PSC(M),g_0) \to \KO^{-n-\ell-1}(\pt),
\end{gather*}
defined for $g_0 \in \PSC(M)$, for which several non-triviality results have
been established. Probably the most powerful of these is the following:

\begin{theorem}[Botvinnik, Ebert and Randal-Williams~\cite{BotvinnikEbertRandalWilliams}]\label{thmNonTrivPSC}
	Let $M$ be a closed spin manifold of dimension $n \geq 6$ admitting a metric of positive scalar curvature.
	Then
	\begin{align*}
		\adiff \colon \pi_\ell(\PSC(M),g_0) \to \KO^{-n-\ell-1}(\pt)
	\end{align*}
	is non-trivial for all $\ell \geq 0$ for which the target is non-zero, \ie when
	$n+\ell+1 \equiv 0,1,2,4 \mod 8$, and all base points $g_0 \in \PSC(M)$.
\end{theorem}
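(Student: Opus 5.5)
This theorem is quoted from Botvinnik, Ebert and Randal-Williams, so the plan is to recall the structure of their argument rather than reprove it. The route goes through the \emph{Gromov--Lawson/Chernysh} comparison and the \emph{Madsen--Weiss/Galatius--Randal-Williams} description of moduli spaces of manifolds.

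\textbf{Step 1: reduce to a single manifold and base point.}
\begin{itemize}
\item By the surgery invariance theorem of Chernysh and Walsh, the homotopy type of $\PSC(M)$ depends only on the spin bordism class, once $M$ is simply connected or has the relevant handle conditions. In general one works with spin cobordisms built from surgeries in codimension $\geq 3$.
\item Hence for $n\geq 6$ the image of $\adiff$ can be compared along the inclusion of a disc. Choosing $g_0$ to be a torpedo-type metric near a disc and using invariance under change of base point within a path component shows it suffices to treat one $g_0$.
\item Different base points in different components are handled by the additivity of $\adiff$ under the $\pi_0$-action.
\end{itemize}

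\textbf{Step 2: factor $\adiff$ through a moduli-space map.} Following BER, consider the fibre sequence
\[
\PSC(D^n)_\partial \to \text{(moduli of psc manifolds)} \to \text{(moduli of manifolds)}.
\]
Using the index of families of Dirac operators, $\adiff$ is identified with a composite
\[
\Omega^{\infty+1}MTSpin(n)\to \Omega^{\infty+1}KO\simeq \Omega^{\infty+n+1}\mathrm{ko}\ \text{(up to shift)},
\]
namely the loop of the family index map $\hat A$. One must show $\adiff$ hits a generator of $\KO^{-n-\ell-1}(\pt)$. This becomes surjectivity on homotopy groups of the spin Atiyah--Bott--Shapiro orientation restricted to $MTSpin(n)$. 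That surjectivity holds in the stated degrees because $\hat A$ is surjective on $\pi_*$ of $MSpin\to KO$; the Milnor/Anderson--Brown--Peterson result with Joyce and Bott manifolds gives generators in degrees $\equiv 0,1,2,4$.

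\textbf{Step 3: realize homotopy of the infinite loop space by psc families on $M$.}
\begin{itemize}
\item Use the Galatius--Randal-Williams stable homology theorem, or the parametrized surgery argument of BER, to show that elements in the image of $\pi_\ell$ of the moduli space, after stabilization, come from $\pi_\ell(\PSC(M))$.
\item This is done by a fibration argument: the psc-moduli space is shown to be a fibration over the moduli space with fibre $\PSC(M)$. This uses Chernysh's theorem that surgery gives a homotopy equivalence.
\item The long exact sequence then yields elements in $\pi_\ell(\PSC(M),g_0)$ mapping onto the detected $\KO$-class.
\end{itemize}

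\textbf{Main obstacle.} The fibration/quasifibration statement in Step 3 is the hard step. It requires a parametrized version of the Gromov--Lawson surgery with control in families, together with the group-completion arguments. I would simply invoke BER's Theorem A here rather than reprove it.
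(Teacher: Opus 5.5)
The paper gives no proof of this statement: it imports it from Botvinnik--Ebert--Randal-Williams and uses it as a black box. Your final step of invoking their main theorem, which is exactly this statement, is therefore what the paper does, and at that level the proposal is fine. That invocation also makes your Steps~1 and~2 superfluous, and read as a summary of the BER argument, several of your steps are inaccurate.

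\textbf{Step 1.} Bordism invariance of $\PSC(M)$ via Chernysh--Walsh needs hypotheses on $\pi_1$, so it cannot cover an arbitrary $M$. It is also not needed. Your second bullet already handles every $M$ and every base point: move $g_0$ within its path component until it is standard near a disc, glue in $\PSC(D^n)_\partial$, and use that the index difference is compatible with this gluing. No $\pi_0$-action argument is required.

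\textbf{Step 2, direction.} The logic is reversed: $\adiff$ does not factor through $\mathrm{MTSpin}(n)$. Rather, one constructs $\rho\colon\Omega^{\infty+1}\mathrm{MTSpin}(n)\to\PSC(D^n)_\partial$ with $\adiff\circ\rho\simeq\Omega^{\infty+1}\hat{\mathcal A}_n$, whose target is $\Omega^{\infty+n+1}\KO$ (periodic, not connective). Surjectivity of $\Omega^{\infty+1}\hat{\mathcal A}_n$ on homotopy then forces that of $\adiff$.

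\textbf{Step 2, surjectivity.} Surjectivity of $\pi_*(\mathrm{MSpin})\to\KO_*$ does not give non-triviality of $\pi_{\ell+1}(\mathrm{MTSpin}(n))\to\KO_{n+\ell+1}$. The reason is that $\pi_k(\mathrm{MTSpin}(n))\to\pi_{n+k}(\mathrm{MSpin})$ is in general not surjective for $k\geq 1$. Its image consists of classes represented by manifolds whose stable tangent bundle reduces to an $n$-dimensional spin bundle. For instance, $[\mathbb{HP}^2]$ is not in the image for $n=7$, $k=1$, since $w_8$ would then vanish, whereas $w_8[\mathbb{HP}^2]=\chi(\mathbb{HP}^2) \bmod 2 = 1$. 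In the $\bZ$-degrees you can argue rationally, using $\pi_*(\mathrm{MTSpin}(n))\otimes\mathbb{Q}\cong H_{*+n}(B\mathrm{Spin}(n);\mathbb{Q})$ and the non-vanishing of the components of the $\hat A$-class on $B\mathrm{Spin}(n)$. The $\Ztwo$-degrees need a genuinely integral argument.

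\textbf{Step 3.} The relevant fibration is not the Borel fibration $E\mathrm{Diff}(M)\times_{\mathrm{Diff}(M)}\PSC(M)\to B\mathrm{Diff}(M)$, which is a fibre bundle for trivial reasons. It is the forgetful map from a psc cobordism category to the spin cobordism category, whose classifying space is identified with an infinite loop space associated to $\mathrm{MTSpin}(n)$ by Galatius--Madsen--Tillmann--Weiss. The hard step is proving that this map is a quasifibration whose fibres are the spaces of psc metrics on the objects, and that is where Chernysh's surgery theorem is used.
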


\begin{remark}
	\begin{enumerate}
		\item In degrees $n+\ell+1 \equiv 1,2 \mod 8$, where $\KO^{-n-\ell-1}(\pt) \cong
			      \Ztwo$, Crowley, Schick and Steimle \cite{CrowleySchickSteimle} showed that
		      families $(g_x)_{x \in S^\ell}$ with $\adiff([g_x]) \neq 0$ can be constructed
		      by pulling back a single PSC metric with a suitable family of diffeomorphisms.
		\item On the other hand, Hanke, Schick and Steimle \cite{HankeSchickSteimle}
		      constructed ``geometrically significant'' families of positive scalar curvature
		      metrics on spheres $S^n$ for large~$n$. They have non-zero index difference in
		      degrees $n+\ell+1 \equiv 0,4 \mod 8$, where $\KO^{-n-\ell-1}(\pt) \cong \bZ$,
		      and the additional property that they remain non-trivial under the homomorphism
		      $\pi_\ell(\PSC(S^n),g_0) \to \pi_\ell(\PSC(S^n)/\mathrm{Diff}_p(S^n),[g_0])$,
		      where $\mathrm{Diff}_p(S^n)$ denotes the group of diffeomorphisms of $S^n$
		      fixing a point $p \in S^n$.
	\end{enumerate}
\end{remark}

In order to conclude from \cref{thmNonTrivPSC} non-triviality of $\oladiff$, we
need a suitable comparison map between the space of PSC metrics and the one of
DEC initial data sets. Setting $I \coloneqq [-1,1]$, we consider the following
map of pairs:
\begin{equation} \label{eqSuspDEC}
	\begin{aligned}
		(\Met(M) \times I, \PSC(M) \times I \cup \Met(M) \times \del I) & \lto (\Ini(M), \sDEC(M))                  \\
		(g,t)                                                           & \lmapsto \bigl(g,\; t\,\tau(g)\, g\bigr),
	\end{aligned}
\end{equation}
where $C > 0$ is a fixed constant and
\begin{gather*}
	\tau(g) = \sqrt{\frac{1}{n(n-1)} \max \bigl\{0, -\min_{x \in M} \scal^g(x)\bigr\}} + C.
\end{gather*}
The precise form of $\tau \colon \Met(M) \to \bR_{\geq 0}$ does not matter. All
we use is that it is a continuous function for which $\bigl(g,\tau(g)\,g\bigr)$
satisfies the strict dominant energy condition $\rho > |j|$. The subspace
$\PSC(M) \times I \cup \Met(M) \times \del I$ may be viewed as a model for the
suspension of $\PSC(M)$, which is defined by $\Susp(\PSC(M)) \coloneqq \PSC(M)
	\times I / \sim$ if $\PSC(M) \neq \emptyset$, where $(g,1) \sim (g^\prime, 1)$
and $(g,-1) \sim (g^\prime, -1)$ for all $g, g^\prime \in \PSC(M)$, and by
$\Susp(\emptyset) = \del I$. Because $\Met(M)$ is contractible, the canonical
projection map $\PSC(M) \times I \cup \Met(M) \times \del I \to \Susp(\PSC(M))$
is a homotopy equivalence, \ie there is a continuous map $\Susp(\PSC(M)) \to
	\PSC(M) \times I \cup \Met(M) \times \del I$ such that both compositions are
homotopic to the respective identities. An explicit homotopy inverse is given
in~\cite[before Cor.~6.2]{AmmannGloeckle} and we use it together with
\eqref{eqSuspDEC} to obtain a continuous map
\begin{align*}
	\Phi \colon \Susp(\PSC(M)) \overset{\simeq}{\lto} \PSC(M) \times I \cup \Met(M) \times \del I & \lto \sDEC(M).
\end{align*}
We recall furthermore that there is a suspension homomorphism
\begin{align*}
	\Susp \colon \pi_\ell(\PSC(M),g_0) & \lto \pi_{\ell+1}(\Susp(\PSC(M)),[(g_0,0)]),
\end{align*}
which takes a representative $S^\ell \to \PSC(M)$ and maps it to its suspension
$S^{\ell+1} \cong \Susp(S^\ell) \to \Susp(\PSC(M))$. The main theorem
of~\cite{Gloeckle2024a} shows that the suspension homomorphism and the map
$\Phi$ produce, out of families in $\PSC(M)$ with non-trivial index difference,
families in $\sDEC(M)$ with non-trivial $\oladiff$.
\begin{theorem}[{\cite{Gloeckle2024a}}]\label{thmCompIndDiff}
	For any base point $g_0 \in \PSC(M)$ (and compatible base points in the other
	spaces) and all $\ell \geq 0$, the following diagram commutes:
	\begin{equation*}
		\begin{tikzcd}
			\pi_\ell(\PSC(M),g_0) \rar{\Susp} \arrow[dr,"{\adiff}"'] & \pi_{\ell+1}(\Susp(\PSC(M)),[(g_0,0)]) \arrow[r,"\Phi_*"] & \pi_{\ell+1}(\sDEC(M),(g_0,0)) \arrow[dl,"{\oladiff}"]\\
			& \KO^{-n-\ell-1}(\pt). &
		\end{tikzcd}
	\end{equation*}
\end{theorem}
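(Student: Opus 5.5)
The plan is to prove commutativity by comparing the two index differences directly at the level of families of Fredholm operators. I would check it on a representative $f\colon S^\ell \to \PSC(M)$, $x \mapsto g_x$, with $g_{x_0}=g_0$.

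By the model of the suspension used in \eqref{eqSuspDEC}, $\Phi_*\Susp[f]$ is represented by the map
\[
S^{\ell}\times I \to \sDEC(M),\qquad (x,t)\mapsto \bigl(g_x,\, t\,\tau(g_x)\,g_x\bigr),
\]
glued at $t=\pm 1$ to paths in $\Met(M)\times\{\pm1\}$ contracting the family to $g_0$. These paths come from the homotopy inverse of \cite{AmmannGloeckle}, and they stay in $\sDEC(M)$ because $(g,\pm\tau(g)g)$ satisfies the strict DEC for every $g$. For $k = t\tau g$, the Dirac-Witten operator is $\overline{D} = D^g \mp \tfrac{n}{2}\, t\,\tau\, e_0\cdot$ up to the paper's conventions. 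It is a Clifford-linear operator on the spinor bundle of the spacetime, which carries an extra Clifford generator $e_0$.

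The key step is to identify $\oladiff$ of this $(\ell+1)$-parameter family with $\adiff$ of the $\ell$-parameter family $D^{g_x}$. This uses the standard suspension isomorphism in $\KO$-theory: a family $D+s\,\epsilon$ of operators on an extra Clifford module, with $s$ running over $[-\lambda,\lambda]$, realizes the Bott/Clifford shift $\KO^{-n-\ell-1}\cong\KO^{-n-(\ell+1)}$ relative to the ends. The relevant algebraic facts are as follows.
\begin{enumerate}
\item The operator $D^g + s\,e_0\cdot$ is invertible for $s\neq 0$, since $e_0$ anticommutes with $D^g$ and $(D^g+se_0)^2 = (D^g)^2 + s^2$ up to sign conventions.
\item It is invertible for $s=0$ exactly when $g$ is PSC, or more precisely when $D^g$ is invertible.
\end{enumerate}
So the $\oladiff$ family is precisely the Karoubi/Atiyah--Singer clutching of the $\adiff$ family. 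The ends $t=\pm1$ over all of $\Met(M)$ are the trivializations used in the construction of the suspension map $\Omega\Fred^{n+\ell+1}\simeq\Fred^{n+\ell}$.

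Concretely, I would write $\oladiff$ as in \cref{defOladiff}: the class of a family of Clifford-linear Fredholm operators over the disk $D^{\ell+1}$ that is invertible on the boundary. I would then deform the given family, through families invertible on the boundary, to the product form $D^{g_x}\hat\otimes 1 + 1\hat\otimes(\text{Bott generator on } I)$. First, homotope $\tau(g_x)$ to a constant. Second, make the Clifford-multiplication term literally anticommute with $D$, which holds here since $k$ is pure trace. Third, identify the resulting element with the external product of $\adiff[f]$ with the generator of $\KO^{-1}(I,\partial I)$.

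The main obstacle is compatibility of conventions. One has to check that the Clifford structures of the Dirac-Witten bundle (one extra negative or positive generator) match the Atiyah--Singer model of $\adiff$ with the correct sign and degree shift. One also has to check that the gluing paths at $t=\pm1$ contribute trivially, which follows since they are invertible families over contractible pieces. Sign conventions can change the map by an automorphism of $\KO^{-n-\ell-1}(\pt)$, so I would fix them so that the diagram commutes on the nose.
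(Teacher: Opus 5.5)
The paper does not prove this statement. It imports it from \cite{Gloeckle2024a} and uses it as a black box, so there is no internal argument to compare with, and your outline has to stand on its own.

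\textbf{What is correct.} Your reduction is the natural one and the algebra is right. For $k=t\,\tau(g)\,g$, \eqref{eqCompDW} gives $\overline{D}=D-\tfrac{n}{2}t\,\tau(g)\,e_0\cdot$. Since $e_0\cdot$ is an odd self-adjoint involution anticommuting with $D$, one gets $\overline{D}^2=D^2+\tfrac{n^2}{4}t^2\tau(g)^2$. Because $\oladiff$ only sees invertibility, you may also homotope $\tau$ to a constant inside $(\Ini(M),\InvI(M))$.

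\textbf{The gap at the crux.} The argument stops exactly where the content of the theorem begins. Your step ``identify the resulting element with the external product of $\adiff[f]$ with the Bott generator'' is the theorem restated in $\KO$-language. What has to be proved is a lemma about the Fredholm model of $\KO$. Let $e_0$ act on $H\otimes_{\Cl_n}\Cl_{n,1}$ by the grading operator of $H$ tensored with left multiplication. The lemma says: for every family $F\colon(X,Y)\to(\Fred^{n,0}(H),\Fred^{*,n,0}(H))$, the family $(x,s)\mapsto\sqrt{1-s^2}\,F_x\otimes 1+s\,e_0$ over $(X,Y)\times(I,\partial I)$ has index equal to the image of $\ind(F)\in\KO^{-n}(X,Y)$ under the suspension isomorphism $\KO^{-n}(X,Y)\cong\KO^{1-n}\bigl((X,Y)\times(I,\partial I)\bigr)$.
\begin{itemize}
\item This must hold for the specific bijection $\ind$ and the specific identifications $\KO^{*}(D^{m+1},S^{m})\cong\KO^{*-m-1}(\pt)$ used in \cref{defOladiff}.
\item The orientation must be compatible with the one chosen for $\Susp$.
\item You also need that the Bär--Gauduchon--Moroianu identifications give $L^2(\overline\Sigma_{\Cl}M)\cong L^2(\Sigma_{\Cl}M)\otimes_{\Cl_n}\Cl_{n,1}$ compatibly with $e_0$. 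Only then is $\overline{F}$ of $(g,t\tau g)$ really of this form.
\end{itemize}
The lemma is true: after substituting $s=\cos(\pi u)$ the family takes the Atiyah--Singer form $\cos(\pi u)\,e_0+\sin(\pi u)\,F$. But it needs a proof or a precise reference, and calling it ``standard'' leaves the argument empty at its crux. Note also the degree. With the convention $\Cl_{r,s}\mapsto\KO^{s-r}$, the generator lies in $\KO^{1}(I,\partial I)\cong\KO^{0}(\pt)\cong\bZ$. Your $\KO^{-1}(I,\partial I)\cong\Ztwo$ would land in the wrong degree.

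\textbf{The caps.} Your treatment of the caps also needs correcting: they do not ``contribute trivially''. Up to homotopy in $\sDEC(M)$ and reparametrization, $\Phi\circ\Susp f$ is the restriction of $(y,t)\mapsto\bigl(g_y,\,t\,\tau(g_y)\,g_y\bigr)$ to
\[
\partial(D^{\ell+1}\times I)=S^\ell\times I\cup D^{\ell+1}\times\partial I\cong S^{\ell+1}.
\]
Here $y\mapsto g_y$ is a cone on $f$ in $\Met(M)$, as used to define $\adiff[f]$. The caps are exactly the faces $D^{\ell+1}\times\partial I$ of the relative pair, and without them there is no relative class at all. Convexity of $\Ini(M)$ then shows that this filling of $S^{\ell+1}$ gives the same element of $[(D^{\ell+2},S^{\ell+1}),(\Ini(M),\InvI(M))]$ as the convex-combination filling of \cref{defOladiff}. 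That is what justifies passing to the product form.

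\textbf{Sign conventions.} You cannot choose sign conventions at the end to make the diagram commute. $\adiff$, $\oladiff$, $\Susp$ and $\ind$ are already fixed, so the orientation has to be checked. For the use made of the theorem in \cref{thmMain}, only non-triviality matters, so commutativity up to sign would suffice there.
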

Combining Theorems \ref{thmNonTrivPSC} and \ref{thmCompIndDiff}, we can
conclude the following. Note that \cref{thmCompIndDiff} also gives us an
explicit way to construct non-trivial elements using suspension and the map
$\Phi$ out of elements in $\pi_\ell(\PSC(M),g_0)$ with non-trivial index
difference.

\begin{corollary}
	\label{corNonTrivSDEC}
	Let $M$ be a closed spin manifold of dimension $n \geq 6$ admitting a metric of positive scalar curvature.
	Then
	\begin{align*}
		\oladiff \colon \pi_\ell(\sDEC(M),(g_0,0)) \to \KO^{-n-\ell}(\pt)
	\end{align*}
	is non-trivial for all $\ell \geq 1$ for which the target is non-zero, \ie when
	$n+\ell \equiv 0,1,2,4 \mod 8$, and all base points $g_0 \in \PSC(M)$.
\end{corollary}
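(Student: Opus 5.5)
The corollary follows by combining \cref{thmNonTrivPSC} and \cref{thmCompIndDiff}, with a shift of degree. Fix $\ell \geq 1$ with $n+\ell \equiv 0,1,2,4 \mod 8$ and a base point $g_0 \in \PSC(M)$. Set $\ell' \coloneqq \ell - 1 \geq 0$. Then $n+\ell'+1 = n+\ell$, so the target $\KO^{-n-\ell'-1}(\pt) = \KO^{-n-\ell}(\pt)$ is non-zero by~\eqref{eqKOGroups}.

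First I apply \cref{thmNonTrivPSC} in degree $\ell'$. Its hypotheses hold: $M$ is closed spin, $n \geq 6$, and $M$ admits a PSC metric. It gives a class $x \in \pi_{\ell'}(\PSC(M),g_0)$ with $\adiff(x) \neq 0$, and since the theorem holds for every base point, this works for the given $g_0$.

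Next I apply \cref{thmCompIndDiff} in degree $\ell'$. It gives
\begin{gather*}
	\oladiff\bigl(\Phi_*(\Susp(x))\bigr) = \adiff(x) \neq 0 .
\end{gather*}
Here $\Phi_*\Susp(x)$ lies in $\pi_{\ell'+1}(\sDEC(M),(g_0,0)) = \pi_\ell(\sDEC(M),(g_0,0))$, and $\oladiff$ on this group takes values in $\KO^{-n-\ell}(\pt)$. So $\oladiff$ has an element outside its kernel, i.e.\ it is non-trivial.

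The only point needing care is base points. One must check that $\Phi$ sends $[(g_0,0)]$ to $(g_0,0)$, so that the diagram in \cref{thmCompIndDiff} is based at the stated point $(g_0,0) \in \sDEC(M)$. This holds because $(g_0,0)\in\sDEC(M)$ when $\scal^{g_0}>0$ (then $\rho=\tfrac12\scal^{g_0}>0=|j|$), and \cref{thmCompIndDiff} is stated with compatible base points. With this checked, there is no real obstacle, since the corollary is a direct composition of the two cited results.
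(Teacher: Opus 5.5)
Your proposal is correct and is the paper's own argument: the paper obtains the corollary by combining \cref{thmNonTrivPSC} in degree $\ell-1$ with the commutative diagram of \cref{thmCompIndDiff}, so that $\oladiff(\Phi_*\Susp(x)) = \adiff(x) \neq 0$. On base points, $(g_0,0) \in \sDEC(M)$ alone does not show that $\Phi$ sends $[(g_0,0)]$ to $(g_0,0)$, since $\Phi$ involves a chosen homotopy inverse. That compatibility is part of the hypothesis of \cref{thmCompIndDiff}, as you also note, so nothing further is needed.
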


All the initial data sets in the image of~\eqref{eqSuspDEC} have constant
$\tr^g(k)$, that is, they are CMC (constant mean curvature) initial data sets.
This is what will let us run the conformal method later. From now on, the
subscript \emph{CMC} indicates the restriction to initial data sets satisfying
the CMC condition.

\begin{observation}\label{obsValuesInCMC}
	The map $\Phi$ in \cref{thmCompIndDiff} takes values in $\sDEC_{\CMC}(M)
		\subset \sDEC(M)$.
\end{observation}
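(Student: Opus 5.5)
The plan is to check the claim directly on the two factors of $\Phi$. The second factor is the map of pairs \eqref{eqSuspDEC}. The first is the homotopy inverse $\Susp(\PSC(M)) \to \PSC(M) \times I \cup \Met(M) \times \del I$ of \cite[before Cor.~6.2]{AmmannGloeckle}. The first factor takes values in the domain of \eqref{eqSuspDEC} and imposes no constraint. So it suffices to show that every point in the image of \eqref{eqSuspDEC}, restricted to $\PSC(M) \times I \cup \Met(M) \times \del I$, lies in $\sDEC_{\CMC}(M)$.

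\textbf{Constant mean curvature.} Fix $(g,t)$ in the domain and set $k = t\,\tau(g)\,g$. Here $\tau(g)$ is a real number depending only on $g$, not a function on $M$, and $t$ is constant. Hence $\tr^g k = n\,t\,\tau(g)$ is constant on $M$, so $(g,k)$ is CMC.

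\textbf{Strict dominant energy condition.} Since $k$ is a constant multiple of $g$, we have $\divg^g(k) = 0$ and $\upd(\tr^g k) = 0$, so $j = 0$. It remains to show $\rho > 0$. We compute
\[
\rho = \tfrac12\bigl(\scal^g + n^2 t^2\tau(g)^2 - n t^2 \tau(g)^2\bigr) = \tfrac12\bigl(\scal^g + n(n-1)t^2\tau(g)^2\bigr).
\]
There are two cases, according to which piece of the domain $(g,t)$ lies in.

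\begin{enumerate}
\item If $g \in \PSC(M)$, then $\rho \geq \tfrac12 \scal^g > 0$.
\item If $t = \pm 1$, then $\tau(g) > \sqrt{\max\{0,-\min \scal^g\}/(n(n-1))}$ because $C > 0$. Therefore $n(n-1)\tau(g)^2 > -\min_M \scal^g$, which gives $\rho > 0$ everywhere.
\end{enumerate}

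In both cases $\rho > 0 = |j|_g$, so $(g,k) \in \sDEC(M)$. This is exactly the property of $\tau$ the paper relies on.

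Putting the two factors together, $\Phi$ factors through $\sDEC_{\CMC}(M)$. I expect no genuine obstacle. The only point needing care is that the homotopy inverse from \cite{AmmannGloeckle} indeed lands in the domain pair, which holds by its construction as a map into $\PSC(M) \times I \cup \Met(M) \times \del I$.
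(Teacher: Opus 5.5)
Your proposal is correct and follows the paper's argument. The paper justifies the observation by noting that every initial data set $(g,\,t\,\tau(g)\,g)$ in the image of \eqref{eqSuspDEC} has constant trace $n\,t\,\tau(g)$, and it takes the strict DEC as already built into the map of pairs. Your check that $j=0$ and $\rho=\tfrac12\bigl(\scal^g+n(n-1)t^2\tau(g)^2\bigr)>0$ on both pieces of the domain fills in that last step.
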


Next, we explain how to extend the results to the (non-strict) dominant energy
condition. The ideas for this were developed in~\cite{AmmannGloeckle}. We
recall from \cref{defOladiff} that $\oladiff$ factors through the space
$\InvI(M)$ of initial data sets with invertible Dirac-Witten operator defined
in \cref{defInvI}:
\begin{align*}
	\oladiff \colon \pi_\ell(\sDEC(M), (g_0,k_0)) & \lto \pi_\ell(\InvI(M), (g_0,k_0)) \lto \KO^{-n-\ell}(\pt),
\end{align*}
where the first map is induced by the inclusion $\sDEC(M) \subset \InvI(M)$.
This implies that elements in $\pi_\ell(\sDEC(M))$ whose non-triviality is
detected by $\oladiff$ actually remain non-trivial in the bigger space
$\InvI(M)$.

The central idea of~\cite{AmmannGloeckle} is now that often all DEC initial
data sets have an invertible Dirac-Witten operator, for otherwise the initial
data set has to carry some kind of parallel spinor, which is a very special
property. For many manifolds, the existence of such spinors may be excluded for
all DEC initial data sets for topological reasons. Consequently, $\DEC(M)
	\subset \InvI(M)$ and $\oladiff$ may serve to also detect non-triviality of
$\pi_\ell(\DEC(M),(g_0,k_0))$.

\begin{theorem}[{\cite[Thm.~5.1]{AmmannGloeckle}}]\label{thmParallelSpinorTop}
	Given an $n$-dimensional closed connected spin manifold $M$, an initial data
	set $(g,k) \in \DEC(M) \setminus \InvI(M)$ admits a non-trivial parallel
	spinor. This spinor is of one of two types that are called \emph{timelike} and
	\emph{lightlike}.
\end{theorem}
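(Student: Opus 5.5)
The plan is to combine a Schrödinger–Lichnerowicz-type (Witten) formula for the Dirac-Witten operator with an integration by parts on the closed manifold $M$. First I would recall the setup. Via the Clifford action of the future unit normal $e_0$, the spinor bundle of $(M,g)$ carries the hypersurface spinor bundle, and on it the modified connection
\[
\widetilde\nabla_X\phi = \nabla_X\phi + \tfrac12 k(X)\cdot e_0\cdot\phi .
\]
The Dirac-Witten operator is $\widetilde D = \sum_i e_i\cdot\widetilde\nabla_{e_i}$, which differs from the Riemannian Dirac operator by the zeroth-order term $-\tfrac12 \tr^g(k)\, e_0\cdot$. One checks that $\widetilde D$ is a formally self-adjoint elliptic first-order operator. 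Since $M$ is closed, it is Fredholm of index zero on $H^1\to L^2$. Hence $(g,k)\notin\InvI(M)$ means exactly that $\ker\widetilde D\neq 0$, and elliptic regularity makes any kernel element smooth.

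The key step is the Weitzenböck formula
\[
\widetilde D^2 = \widetilde\nabla^*\widetilde\nabla + \tfrac12\bigl(\rho + e_0\cdot j^\sharp\cdot\bigr),
\]
with $\rho,j$ as in \eqref{eqDensities}. I would either quote it from Witten and Parker–Taubes or verify it by a local frame computation, expanding $\widetilde D^2$ and using the Gauss and Codazzi-type identities that produce $\scal^g+(\tr k)^2-|k|^2$ and $\divg k-\upd\tr k$. This verification is the only computational obstacle; I expect it to be routine but sign-sensitive. The endomorphism $e_0\cdot j^\sharp\cdot$ is self-adjoint with square $|j|^2$, since $e_0$ and $j^\sharp$ anticommute. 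Its eigenvalues are therefore $\pm|j|_g$, and pointwise
\[
\langle(\rho+e_0\cdot j^\sharp\cdot)\phi,\phi\rangle\geq(\rho-|j|_g)|\phi|^2\geq 0
\]
under the DEC. Integrating $\langle\widetilde D^2\phi,\phi\rangle$ for $\phi\in\ker\widetilde D$ gives
\[
0=\|\widetilde\nabla\phi\|_{L^2}^2+\tfrac12\int_M\langle(\rho+e_0\cdot j^\sharp\cdot)\phi,\phi\rangle .
\]
Both terms are non-negative, so $\widetilde\nabla\phi=0$. Thus $\phi$ is a non-trivial $\widetilde\nabla$-parallel spinor. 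Since parallel sections vanish either identically or nowhere on a connected manifold, $\phi$ is nowhere zero.

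For the dichotomy, I would attach to $\phi$ its Lorentzian Dirac current $U_\phi=|\phi|^2 e_0+V_\phi$, where $g(V_\phi,X)=\langle X\cdot e_0\cdot\phi,\phi\rangle$. Cauchy–Schwarz gives $|V_\phi|\leq|\phi|^2$, so $U_\phi$ is causal. Since $\widetilde\nabla$ is the restriction of the spacetime spin connection, $U_\phi$ is parallel for the induced connection on $\underline{\bR}e_0\oplus TM$. Concretely, one checks that $X(|\phi|^2)=-k(X,V_\phi)$ and $\nabla_XV_\phi=-|\phi|^2 k(X)^\sharp$. These imply that the Minkowski length $|V_\phi|^2-|\phi|^4$ is locally constant, hence constant because $M$ is connected. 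If the length is negative, $\phi$ is called timelike; if it is zero, lightlike. This yields the two types claimed.
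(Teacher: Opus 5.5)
Your proposal is correct and is essentially the argument behind the cited result; the paper only quotes it, but its appendix records the same Schr\"odinger--Lichnerowicz formula you start from. Integrating that formula under the DEC gives a $\overline{\nabla}$-parallel spinor, and its causal Dirac current has constant Lorentzian length, which yields the timelike/lightlike dichotomy. The one slip is the sign of the $e_0\cdot j^\sharp\cdot$ term: with the paper's conventions, which match your connection, it reads $\tfrac12(\rho - e_0\cdot j^\sharp\cdot)$, but nothing changes since you only use that this endomorphism has eigenvalues $\pm|j|_g$.
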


Each of the two types imposes strong restrictions on the topology of $M$, which
is what makes the criterion usable.

\begin{theorem}[{\cite[Cor.~5.5 and Thm.~5.19]{AmmannGloeckle}}]\label{thmParallelSpinorObstructions}
	Let $M$ be an $n$-dimensional closed connected spin manifold.

	If $M$ admits an initial data set with timelike parallel spinor, then it also
	admits a Ricci-flat Riemannian metric with parallel spinor. In particular,
	\begin{enumerate}[nosep,itemsep=2pt,topsep=-5pt]
		\item $\pi_1(M)$ is virtually abelian and has polynomial growth of degree at most $n$,
		\item unless $M$ is finitely covered by a torus, $b_i(\widetilde{M}) \neq 0$ for at
		      least one $i \in \{2,3,4\}$, where $\widetilde{M}$ denotes the universal
		      covering of $M$, and
		\item $|\hat{A}(M)| \leq 2^{n/2 -1}$.
	\end{enumerate}

	If $M$ admits an initial data set with lightlike parallel spinor, then
	\begin{enumerate}[nosep,itemsep=2pt,topsep=-5pt]
		\item $b_1(M) \neq 0$, and
		\item a finite index subgroup $\Gamma < \pi_1(M)$ of its fundamental group fits into
		      a short exact sequence $1 \to \bZ^{s-r} \to \Gamma \to \bZ^{r+1} \to 1$ for
		      some $0 \leq r \leq s \leq n$.
	\end{enumerate}
\end{theorem}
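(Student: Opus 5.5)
The statement collects two geometric consequences of the existence of a parallel spinor in the Dirac--Witten sense, $\nabla_X\phi = -\tfrac12 k(X)\cdot e_0\cdot\phi$. I would treat the two types separately. In both cases the starting point is the associated \emph{lapse--shift pair} $(u,U)$, with $u=|\phi|^2$ and $U$ the (spatial part of the) Dirac current. The parallel-spinor equation makes $(u,U)$ satisfy a Killing-type initial data equation, and $u\geq |U|_g$ holds pointwise. The type is \emph{timelike} if $u>|U|_g$ and \emph{lightlike} if $u=|U|_g$ (one of the two holds globally by a connectedness and unique continuation argument).

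In the timelike case, I would first pass to the (local) Killing development of $(g,k)$ along the Killing vector $(u,U)$. This produces a Lorentzian manifold $\overline{M}\cong \bR\times M$ carrying a parallel spinor whose Dirac current is a timelike, parallel, hence Killing, vector field. Since the current is parallel and timelike, $\overline{M}$ splits locally isometrically as $(\bR\times N, -\upd t^2+g_N)$ with $N$ Riemannian. The spinor restricts to a parallel spinor on $N$, so $g_N$ is Ricci-flat. The main obstacle is globalizing: one must produce a Riemannian metric on the compact manifold $M$ itself rather than on a leaf. I would do this by projecting $M$ along the flow of the parallel timelike field onto the orthogonal distribution. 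This distribution is integrable since the field is parallel, and $M$ is transverse to the flow, so the projection gives a local isometry structure and pulls back $g_N$ to a Ricci-flat metric on $M$ with a parallel spinor.

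The three listed consequences are then standard for closed Ricci-flat manifolds with parallel spinor. By Cheeger--Gromoll, a finite cover is isometric to $T^k\times N'$ with $N'$ simply connected and Ricci-flat, so $\pi_1(M)$ is virtually $\bZ^k$ with polynomial growth of degree $k\leq n$. If $M$ is not finitely covered by a torus, then $\dim N'>0$. By de Rham and Berger, $N'$ is a product of irreducible factors with holonomy $\mathrm{SU}$, $\mathrm{Sp}$, $G_2$ or $\mathrm{Spin}(7)$, and each of these has a nonzero Betti number in degree $2$, $3$ or $4$ (the Kähler form, the $3$-form, the Cayley form). Finally, by the Lichnerowicz formula, harmonic spinors are parallel. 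Hence $|\hat A(M)|$ is at most the number of linearly independent parallel spinors of one chirality, which is bounded by the rank $2^{n/2-1}$ of the half-spinor bundle.

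In the lightlike case, the key step is to show that the $1$-form $U^\flat$ is closed; this follows from the parallel-spinor equation together with $u=|U|$. It is also nowhere vanishing, since $|U|=u=|\phi|^2>0$. A closed nowhere-vanishing $1$-form on a closed manifold cannot be exact, because $\upd f$ vanishes at an extremum of $f$. Hence $b_1(M)\neq 0$. For the group-theoretic statement, I would use that $\ker U^\flat$ defines a codimension-one foliation which is transversally parallel. The period group of $U^\flat$ is free abelian of rank $r+1$ and gives the surjection to $\bZ^{r+1}$. Passing to a finite cover and analysing the leaves, which carry an induced structure of a flat torus bundle, I would identify the kernel of this map on a finite-index subgroup as $\bZ^{s-r}$, with $s$ bounded by the dimension. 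This last step is the most delicate and relies on the finer structure theory of \cite{AmmannGloeckle}.
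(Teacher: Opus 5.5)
The paper does not prove this statement. It quotes it from \cite{AmmannGloeckle}, so your argument has to stand on its own.

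Your timelike half and your proof of $b_1(M)\neq 0$ are sound in outline. The projection onto $\partial_t^\perp$ in the Killing development is simply the globally defined metric $g+(u^2-|U|_g^2)^{-1}\,U^\flat\otimes U^\flat$ on $M$. Items (1)--(3) then follow from Cheeger--Gromoll, Berger--Wang and the Lichnerowicz formula, as you say.

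Two local corrections:
\begin{itemize}
\item The dichotomy between the two types does not come from unique continuation, which only gives $u>0$. It comes from the constancy of $u^2-|U|_g^2$, which follows from $\upd u=\epsilon\,k(U,\cdot)$ and $\nabla U^\flat=\epsilon\,u\,k$ for a convention-dependent sign $\epsilon$.
\item The second identity shows that $U^\flat$ is closed for both types. So $u=|U|_g$ is only needed to get $U\neq 0$.
\end{itemize}

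The genuine gap is the second lightlike assertion, which is the actual content of that half. You defer it to ``the finer structure theory'' of the very paper the statement is quoted from, so nothing is proved there.

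The mechanism you suggest is also wrong: the leaves of $\ker U^\flat$ are not flat torus bundles in general. Take $\mathrm{K3}\times S^1$ with a product Ricci-flat metric, $k=0$, and the spin structure that is trivial along $S^1$. Projecting a parallel spinor onto an eigenspace of $e_0\cdot\partial_\theta\cdot$ gives a lightlike parallel spinor, and the leaves are the K3 slices.

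The correct leafwise statement is that every leaf $L$, with the induced metric, carries a parallel spinor. Insert $e_0\cdot U\cdot\phi=\pm u\,\phi$ into the spinorial Gauss formula for $L\subset M$, whose second fundamental form is $\pm k|_{TL}$. The $k$-terms cancel and $u^{-1/2}\phi|_L$ is parallel, so the leaves are complete and Ricci-flat.

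The kernel of the period map is $\pi_1(L)$. Indeed, on the cover $\hat{M}$ on which $U^\flat=\upd f$, the flow of $U/|U|_g^2$ gives $\hat{M}\cong\bR\times L$.

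For $r=0$ this suffices:
\begin{itemize}
\item $L$ is a compact fiber of $M\to S^1$, so Cheeger--Gromoll makes $\pi_1(L)$ virtually $\bZ^s$.
\item A characteristic finite-index subgroup $\bZ^s\subset\pi_1(L)$, together with a lift of a generator of $\bZ$, produces $\Gamma$.
\end{itemize}

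For $r\geq 1$, however, every leaf is dense and non-compact, so Cheeger--Gromoll in its compact form does not apply. One needs a further argument using the compactness of $M$, equivalently the cocompact $\bZ^{r+1}$-action on $\hat{M}\cong\bR\times L$. That argument must show that $\pi_1(L)$ is virtually free abelian of rank $s-r$. Your proposal contains no such argument, so this part remains unproved.
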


Thus any manifold $M$ for which \cref{thmParallelSpinorObstructions} rules out
both types has $\DEC(M) \subset \InvI(M)$. In~\cite[Ex.~5.6~(3)--(5),
	Ex.~5.25]{AmmannGloeckle} numerous examples of such manifolds are listed. In
view of \cref{thmMain}, we are interested in examples that are furthermore
closed connected manifolds of dimension $n \geq 6$ and admit metrics of
positive scalar curvature.

\begin{example} \label{exMfds}
	\begin{itemize}
		\item The sphere $S^n$, $n \geq 6$, is simply connected and can therefore not carry
		      lightlike parallel spinors ($b_1(M) = 0$). Since its Betti numbers in degrees
		      $2,3,4$ vanish as well, it cannot carry timelike parallel spinors. The same
		      argument applies to every simply connected closed spin manifold with $b_2(M) =
			      b_3(M) = b_4(M) = 0$, and such a manifold carries a positive scalar curvature
		      metric if and only if $\alpha(M) = 0$ by Stolz's theorem~\cite{Stolz}. Examples
		      of this kind are homotopy spheres $\Sigma^n$ with $\alpha(\Sigma^n) = 0$,
		      products $S^p \times S^q$ with $p, q \geq 5$, and connected sums of such
		      manifolds. If a finite quotient of such a manifold is spin and carries a
		      positive scalar curvature metric, which is the case for some of the lens
		      spaces, it is also an example.
		\item 	Examples with large fundamental group exist as well. If $\pi_1(M)$ is not
		      virtually solvable, then $M$ does not admit initial data sets with timelike or
		      lightlike parallel spinors. Such manifolds are easily constructed by connected
		      sums. The connected sum of closed spin manifolds carrying positive scalar
		      curvature metrics is spin and, if $n \geq 3$, admits a positive scalar
		      curvature metric by the surgery theorem of Gromov and Lawson and of Schoen and
		      Yau \cite{GromovLawson1980b,SchoenYau}. In these dimensions, the fundamental
		      group of a connected sum is obtained as the free product of the fundamental
		      groups of the summands. In particular, its first $\ell^2$-Betti number is
		      non-zero if at least two summands have a non-trivial fundamental group
		      \cite[Thm.~4.5 (ii)]{Kammeyer} (unless both of them are $\Ztwo$, in which case
		      a third non-trivial group is needed). On the other hand, by a theorem of
		      Cheeger and Gromov \cite[Thm.~0.2]{CheegerGromov}, the $\ell^2$-Betti numbers
		      of an amenable group vanish in all positive degrees. Hence, the fundamental
		      group of such a connected sum is not amenable and in particular not virtually
		      solvable. The simplest examples obtained in this way are the connected sums
		      $\#_N(S^{n-1}\times S^1)$ with $N \geq 2$ and $n \geq 6$.
	\end{itemize}
\end{example}

\begin{corollary} \label{corExtensionToDEC}
	If $M$ is such that $\DEC(M) \subset \InvI(M)$, then $\oladiff$ factors as
	\begin{align*}
		\pi_\ell(\sDEC(M),(g_0,k_0)) \to \pi_\ell(\DEC(M),(g_0,k_0)) \to \KO^{-n-\ell}(\pt),
	\end{align*}
	where the first map is induced by the inclusion, and \cref{thmCompIndDiff}
	holds with $\sDEC(M)$ replaced by $\DEC(M)$.
\end{corollary}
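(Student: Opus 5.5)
The factorization itself is formal. By \cref{defOladiff}, $\oladiff$ is defined on $\pi_\ell(\InvI(M),(g_0,k_0))$ and is precomposed with the map induced by $\sDEC(M) \subset \InvI(M)$. Under the hypothesis $\DEC(M) \subset \InvI(M)$, this inclusion factors as
\[
\sDEC(M) \hookrightarrow \DEC(M) \hookrightarrow \InvI(M).
\]
Functoriality of $\pi_\ell$ then splits the induced map accordingly. We define the second arrow $\pi_\ell(\DEC(M),(g_0,k_0)) \to \KO^{-n-\ell}(\pt)$ as the composition of the map induced by $\DEC(M) \hookrightarrow \InvI(M)$ with the $\InvI$-version of the index difference. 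With this definition the claimed factorization holds on the nose.

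I would only briefly check that the $\InvI$-version is genuinely defined on the homotopy groups of $\InvI(M)$, and not merely on $\sDEC(M)$. The construction in \cref{secOladiff} uses only invertibility of the Dirac-Witten operator along the family, so it should be a homotopy invariant on $\InvI(M)$. This is the one point requiring care: one has to confirm that nothing in the definition uses strictness of the DEC, e.g.\ a spectral gap bound that is uniform in $\rho - |j|$. For continuous families over a compact sphere, invertibility alone gives a uniform gap by compactness, which settles it.

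For the second claim, replace $\sDEC(M)$ by $\DEC(M)$ in the diagram of \cref{thmCompIndDiff}. The new map is $\Phi'_* = i_* \circ \Phi_*$, where $i \colon \sDEC(M) \hookrightarrow \DEC(M)$. The new right-hand diagonal is the map just defined, call it $\oladiff_{\geq}$; by construction it satisfies $\oladiff_{\geq} \circ i_* = \oladiff$. Hence
\[
\oladiff_{\geq} \circ \Phi'_* \circ \Susp = \oladiff \circ \Phi_* \circ \Susp = \adiff
\]
by \cref{thmCompIndDiff}, so the diagram with $\DEC(M)$ in place of $\sDEC(M)$ commutes. Base points are handled identically since $(g_0,0) \in \sDEC(M) \subset \DEC(M)$.

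I expect no real obstacle beyond the homotopy-invariance check above, which is already built into the definition recalled in \cref{defOladiff}.
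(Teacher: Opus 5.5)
Your proposal is correct and is essentially the paper's argument: by \cref{defOladiff}, $\oladiff$ is already defined on $[S^\ell,\InvI(M)]$. The hypothesis $\DEC(M)\subset\InvI(M)$ makes the inclusion $\sDEC(M)\subset\InvI(M)$ factor through $\DEC(M)$, and the $\DEC$-version of \cref{thmCompIndDiff} then follows from $\oladiff_{\geq}\circ i_*=\oladiff$, exactly as you wrote. Your compactness check for a uniform spectral gap is harmless but not needed: the index map on pairs $(\Fred^{n,1}(H),\Fred^{*,n,1}(H))$ only uses pointwise invertibility together with the continuity of $\overline{F}$.
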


\section{Metrics without conformal Killing vectors}\label{secKilling}
Let $M$ be an $n$-dimensional closed connected manifold. A vector field $X$ is
called a conformal Killing vector (CKV) for the metric $g$ if $\lie_X g =
	\frac{1}{n} \tr^g(\lie_X g) g$, \ie $X$ is a conformal Killing vector if and
only if the traceless part of $\lie_X g$ vanishes:
\begin{equation}\label{eqConfKillingOp}
	\bL X \coloneqq \lie_X g - \frac{2}{n} \divg^g(X) g \equiv 0.
\end{equation}

Just as Killing vectors are the infinitesimal generators of isometries,
CKVs are the infinitesimal generators of conformal diffeomorphisms. We denote
by $\NCK(M) \subset \Met(M)$ the space of metrics without non-zero CKVs and by
$\PNCK(M) \coloneqq \NCK(M) \cap \PSC(M)$ its subspace of metrics that are
additionally of positive scalar curvature. Moreover, we let $\Ini_{0,\CMC}(M)
	\coloneqq \{(g,k) \in \Ini_{\CMC}(M) \mid g \in \NCK(M)\}$ and
$\sDEC_{0,\CMC}(M) \coloneqq \Ini_{0,\CMC}(M) \cap \sDEC(M)$, and similarly
$\DEC_{0,\CMC}(M)$ and $\VacC_{0,\CMC}(M)$. The main technical result of this
section is the following theorem, from which we will deduce that several
natural inclusions are weak homotopy equivalences.

\begin{theorem}\label{thmReductionToNCK}
	Assume $n \geq 3$. Let $\ell \geq 0$ and $\Omega \subset S^\ell$ be a closed subset.
	Moreover, let $A \subset \Omega$ be a closed subset such that $A \supset \del \Omega = \Omega
		\setminus \ring{\Omega}$.
	Let $g \colon (\Omega,A) \to (\Met(M), \NCK(M))$ be a map of pairs and fix a
	$C^\infty$-neighborhood $\mathcal{U}$ of $g$. Then there exists a homotopy $H
		\colon \Omega \times [0,1] \to \Met(M)$ such that
	\begin{enumerate}
		\item $H(-,0) = g$,
		\item\label{itmRedNCKFixA} $H(a,-)$ is constant for all $a \in A$,
		\item $H(-,1)$ takes values in $\NCK(M)$, and
		\item\label{itmRedNCKCinfty} $H(-,t)$ is $C^\infty$-close to $g$ for all $t \in [0,1]$ in the sense that it lies in~$\mathcal{U}$.
	\end{enumerate}
\end{theorem}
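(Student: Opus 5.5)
We reduce the existence of CKVs to a finite-dimensional transversality problem at a single point. The plan follows Beig, Chruściel and Schoen, recast through the prolongation of the conformal Killing equation.

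\textbf{Step 1: prolongation.} A CKV $X$ is determined by its $2$-jet at any point $p$, that is, by the tractor-type data
\[
	\bigl(X(p),\ (\nabla X)(p),\ \nabla\!\divg X(p)\bigr) \in \mathcal{T}_p \cong \bR^{(n+1)(n+2)/2}.
\]
The conformal Killing equation closes into a linear connection $D^g$ on the bundle $\mathcal{T}$, whose coefficients depend on $g$ and its derivatives up to third order. CKVs correspond to $D^g$-parallel sections. A parallel section must also be annihilated by the curvature of $D^g$ and by all its iterated covariant derivatives, evaluated at $p$. For each $m$ this yields a finite system of linear conditions
\[
	\mathcal{K}_m(g)(p) \colon \mathcal{T}_p \to E_m ,
\]
whose entries are polynomial in the $(m+3)$-jet of $g$ at $p$. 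If $\mathcal{K}_m(g)(p)$ is injective, then $g$ has no non-zero CKV.

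\textbf{Step 2: codimension estimate.} Fix $p \in M$ and consider jets of metrics at $p$. The set $\Sigma_m$ of $(m+3)$-jets for which $\mathcal{K}_m$ fails to be injective is a real algebraic subset of the finite-dimensional jet space. We aim to show that its codimension tends to infinity as $m \to \infty$.

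It suffices to exhibit, for every non-zero $\xi \in \mathcal{T}_p$, enough independent linear conditions. Concretely, we would show that the map $j^{m+3}g \mapsto \mathcal{K}_m(g)(p)\xi$ is a submersion onto a space of dimension at least $\dim\mathcal{T}_p + \ell + 2$. The curvature of $D^g$ contains the Weyl and Cotton tensors, and for $m \geq 1$ their derivatives can be prescribed essentially freely via the top-order jets of $g$; this is where tractor calculus enters. A projectivized incidence-variety count over $\xi \in P(\mathcal{T}_p)$ then gives $\operatorname{codim}\Sigma_m > \ell + 1$.

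This is the main obstacle. The top-order dependence must be linear and surjective enough uniformly in $\xi$, including directions where $X(p) = 0$ and only $\nabla X(p)$ or $\nabla\!\divg X(p)$ is non-zero. We would verify it on explicit polynomial perturbations of the flat metric, which is what the symbolic computations mentioned in the introduction suggest.

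\textbf{Step 3: parametric transversality.} Choose $m$ with $\operatorname{codim}\Sigma_m > \ell + 1$. Also choose finitely many symmetric tensors $h_1, \dots, h_N$, supported near $p$, whose $(m+3)$-jets at $p$ span the jet space. Pick a cutoff $\chi \colon \Omega \to [0,1]$ that vanishes exactly on $A$; this is possible since $A$ is closed. Consider
\[
	G(x, s) = g_x + \chi(x) \sum_i s_i h_i, \qquad s \in \bR^N \text{ small}.
\]
For $x \notin A$ the map $s \mapsto j^{m+3}_p G(x,s)$ is a submersion. By the parametric transversality theorem (Sard, applied stratum-wise to the semialgebraic set $\Sigma_m$), for almost every small $s$ the map $x \mapsto j^{m+3}_p G(x,s)$ on $\Omega \setminus A$, which has dimension at most $\ell$, misses $\Sigma_m$. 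Near $A$ itself, $g_x$ already lies in the open set $\NCK(M)$. Openness of $\NCK(M)$ follows from lower semicontinuity of the kernel dimension of the elliptic operator $\bL^*\bL$, and here the condition $A \supset \del\Omega$ is used so that compactness arguments apply. We therefore shrink $\chi$ near $A$ appropriately, or, more simply, use that on a compact neighbourhood of $A$ the family stays in $\NCK(M)$ for all small perturbations.

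Finally set $H(x,t) = g_x + t\chi(x)\sum_i s_i h_i$. Taking $|s|$ small keeps $H$ inside $\mathcal{U}$ and keeps all values positive definite. Since $\chi$ vanishes on $A$, $H$ is constant there, and $H(\cdot,1)$ takes values in $\NCK(M)$ by Step 1.
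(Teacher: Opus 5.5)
Your Step 2 is not a step but the entire difficulty, and you leave it unproved. Nothing in the proposal shows that $\operatorname{codim}\Sigma_m$ exceeds $\ell+1$, or even that it grows with $m$. The claim is plausible, but your incidence count needs a lower bound on the rank of $j\mapsto\mathcal{K}_m(j)\xi$ at \emph{every} jet of its zero set, uniformly in $[\xi]$. A computation on perturbations of the flat metric only controls jets near the flat one. The uniformity in $\xi$ is also genuinely delicate. For $n=3$ the Weyl tensor vanishes identically, so $\xi=(0,0,\kappa)$ lies in the kernel of the curvature of the prolongation connection for every metric. It is detected only through the algebraic part of the connection in the covariant derivatives of that curvature.

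The paper never needs such an estimate. At a single point it only uses that the bad set $Z$ of $4$-jets is a \emph{proper} algebraic subset (\cref{lemZariski}, with one explicit good jet from \cref{lemGoodJet}). It gets the required codimension from the number of points rather than from the jet order. It fixes $\ell+1$ points $p_0,\dots,p_\ell$ with disjoint neighborhoods $U_j$, since a metric with a non-zero CKV fails the obstruction at every $p_j$. In place of transversality it uses \cref{lemGoodPartition}: $\ell+1$ partitions of unity such that every $x$ has some $\eta^j_i(x)=1$. On that $U_j$ the perturbed metric is then exactly $g_x+h_{ij}$, which is CKV-free near $p_j$ by the $C^K$-stability in \cref{propLocalDefNCK}. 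So no interpolation between good perturbations is ever needed.

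There is a second, independent gap in Step 3. The family $x\mapsto g_x$ is only continuous, and parametric transversality (a Sard-type dimension count) requires smooth dependence on $x$. It genuinely fails here. Suppose that on an open set of parameters where your $\chi$ is constant, the jets $j^{m+3}_p g_x$ fill a ball around a point of $\Sigma_m$ (a Peano-type family $g_x=g_0+\sum_k c_k(x)h_k$). Then $G(\cdot,s)$ is there a translate of this family and meets $\Sigma_m$ for every small $s$, whatever the codimension of $\Sigma_m$.

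You would first have to deform $g$, relative to $A$ and inside $\mathcal{U}$, to a family that is smooth on $\Omega\setminus A$. One way is convex combinations $\sum_i\psi_i(x)\,g_{x_i}$ over a locally finite partition of unity of $\Omega\setminus A$ whose supports shrink towards $A$. After such a smoothing, your transversality framework would work with the paper's input if you replaced the high-order jet at one point by the $(\ell+1)$-tuple of $4$-jets at $p_0,\dots,p_\ell$, since $Z^{\ell+1}$ has codimension at least $\ell+1>\ell$. The paper's argument avoids the smoothing altogether, because it uses only continuity of $x\mapsto g_x$ and openness of the good condition.
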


Here, what is meant by a $C^\infty$-neighborhood of $g$ is a neighborhood of
$g$ in the compact-open topology with respect to the induced topology on $\Omega$
and the $C^\infty$-topology on $\Met(M)$.

We remind the reader that a map $f \colon Y \to Z$ is called a weak homotopy
equivalence if the pushforward map
\[
	f_* \colon \pi_0(Y) \to \pi_0(Z)
\]
is a bijection and if for any base point $y_0 \in Y$, the pushforward map
\[
	f_* \colon \pi_{\ell}(Y, y_0) \to \pi_{\ell}(Z, f(y_0))
\]
is a group isomorphism for every $\ell > 0$. From \cref{thmReductionToNCK} we
deduce the following corollary, which is what will be used in the rest of the
paper.

\begin{corollary}\label{corReductionToNCK}
	Assume that $n \geq 3$. All the following inclusions are weak homotopy
	equivalences:
	\begin{align*}
		\NCK(M)           & \hookrightarrow \Met(M)          \\
		\PNCK(M)          & \hookrightarrow \PSC(M)          \\
		\Ini_{0,\CMC}(M)  & \hookrightarrow \Ini_{\CMC}(M)   \\
		\sDEC_{0,\CMC}(M) & \hookrightarrow \sDEC_{\CMC}(M).
	\end{align*}
\end{corollary}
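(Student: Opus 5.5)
The plan is to deduce all four statements from \cref{thmReductionToNCK} by a uniform relative-homotopy argument. Recall that an inclusion $i \colon Y \hookrightarrow Z$ is a weak homotopy equivalence as soon as, for every $\ell \geq 0$, every map of pairs $(D^\ell, S^{\ell-1}) \to (Z, Y)$ is homotopic relative to $S^{\ell-1}$ to a map with values in $Y$ (with $S^{-1} = \emptyset$). By the long exact sequence, this compression property makes all relative homotopy groups vanish for every base point, and also gives surjectivity and injectivity on $\pi_0$.

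We identify $D^\ell$ with a closed hemisphere $\Omega \subset S^\ell$ and take $A = \del\Omega \cong S^{\ell-1}$. Then $A \supset \del\Omega$ holds as required.

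For $\NCK(M) \hookrightarrow \Met(M)$ this is exactly \cref{thmReductionToNCK}, applied with $\mathcal{U}$ equal to the whole mapping space.

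For $\PNCK(M) \hookrightarrow \PSC(M)$, take a map $g \colon (\Omega, A) \to (\PSC(M), \PNCK(M))$. Because $\Omega$ is compact and $\scal$ is continuous in the $C^2$-topology, the set
\[
\mathcal{U} = \{h \mid \scal^{h(x)} > 0 \text{ for all } x\}
\]
is a $C^\infty$-neighborhood of $g$ in the compact-open sense. \cref{thmReductionToNCK} then produces a homotopy that stays inside $\PSC(M)$, is fixed on $A$, and ends in $\NCK(M)$.

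For the initial data versions, start from a map $(g,k) \colon (\Omega, A) \to (\Ini_{\CMC}(M), \Ini_{0,\CMC}(M))$. The CMC condition couples $k$ to $g$, so $k$ cannot be held fixed while $g$ moves. Decompose $k = \mathring{k} + \frac{\tau}{n} g$, where $\tau = \tr^g k$ is a constant depending continuously on $x$ and $\mathring{k} = k - \frac{\tau}{n}g$ is its $g$-traceless part. Let $H$ be the homotopy of metrics from \cref{thmReductionToNCK} and set
\[
K(x,t) = \mathring{k}(x) - \tfrac{1}{n}\tr^{H(x,t)}\mathring{k}(x)\, H(x,t) + \tfrac{\tau(x)}{n} H(x,t).
\]
This has $\tr^{H} K = \tau(x)$, which is constant on $M$, so $(H, K)$ is a homotopy in $\Ini_{\CMC}(M)$. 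It agrees with $(g,k)$ at $t=0$ and on $A$, since there $H = g$ and $\tr^g \mathring{k} = 0$. It depends continuously on $(x,t)$, and it ends in $\Ini_{0,\CMC}(M)$. This settles the third inclusion.

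For $\sDEC$, note that $(h,\kappa) \mapsto (\rho - |j|)$ is continuous in $C^2$. By compactness of $\Omega \times M$, the strict inequality $\rho > |j|$ is preserved when $(H,K)$ is $C^2$-close to $(g,k)$ uniformly in $x$. Choose $\mathcal{U}$ small enough that $H$ close to $g$ forces $K$ close to $k$. This works because $K$ depends continuously on $H$ through a map that reproduces $k$ when $H = g$, uniformly on the compact set $\Omega$. With this $\mathcal{U}$, the homotopy $(H,K)$ remains in $\sDEC_{\CMC}(M)$.

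The main point requiring care is this last continuity and uniformity argument. One must check that the assignment $(h, x) \mapsto K$ is jointly continuous into $C^\infty$, and then use compactness of $\Omega$ to convert pointwise openness of $\sDEC$ into a single neighborhood $\mathcal{U}$ of the map $g$ in the compact-open $C^\infty$ topology.
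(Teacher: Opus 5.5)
Your proof is correct. Its analytic content matches the paper's: the PSC case uses the neighborhood $\mathcal{U}$ of \cref{thmReductionToNCK} exactly as the paper does. Your tensor $K(x,t)$ is precisely the second component of $\mathrm{Rec}(H(x,t),\tau(x),\mathring{k}(x))$ for the paper's reconstruction map $\mathrm{Rec}(g,\tau,\tilde{k}) = \bigl(g,\tilde{k} + \tfrac{1}{n}(\tau - \tr^g\tilde{k})\,g\bigr)$. The uniformity point you flag at the end is settled in the paper in one line, by continuity of $\mathrm{Rec}$. Equivalently, $\tilde g \mapsto \mathrm{Rec}\circ(\tilde g,\tau,\mathring{k})$ is continuous for the compact-open topologies, and the maps $\Omega \to \Ini(M)$ landing in the open set $\sDEC(M)$ form an open set.

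Where you genuinely differ is the homotopy-theoretic bookkeeping. The paper checks $\pi_0$-surjectivity, $\pi_0$-injectivity, $\pi_\ell$-surjectivity and $\pi_\ell$-injectivity separately, with four choices of $(\Omega,A)$:
\begin{itemize}
	\item $(\pt,\emptyset)$;
	\item an arc with its endpoints;
	\item $(S^\ell,\pt)$;
	\item the cylinder $S^\ell\times[0,1]\subset S^{\ell+1}$ with $A = S^\ell\times\{0,1\}\cup\pt\times[0,1]$.
\end{itemize}
The last two cases use the extra flexibility $A \supsetneq \del\Omega$ in \cref{thmReductionToNCK} to keep base points fixed. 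You instead use the compression criterion for relative homotopy groups together with the long exact sequence of the pair. This reduces everything to the single case (closed hemisphere, equator), so only $A = \del\Omega$ is ever needed.

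Your route is shorter, and it shows that the theorem would suffice if stated only for pairs $(D^\ell,S^{\ell-1})$. The paper's route is more elementary and self-contained: it avoids relative homotopy groups and the low-degree subtleties of the exact sequence, where $\pi_1(Z,Y)$ and $\pi_0$ are only pointed sets. You correctly sidestep those subtleties by treating $\pi_0$ directly through the cases $\ell = 0,1$.
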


\begin{proof}[Proof of \cref{corReductionToNCK} using \cref{thmReductionToNCK}]
	All the assertions are proved according to the same scheme. In each case, the
	map to be deformed is defined on a closed subset $\Omega$ of a sphere, the closed
	subset $A \supset \del \Omega$ collects the part of the domain where the map
	already takes values in the smaller space and must not move, and the homotopy
	$H$ provided by \cref{thmReductionToNCK} deforms the map into the smaller
	space relative to $A$. All that has to be done in each case is to identify $\Omega$
	and $A$.

	We spell this out for the inclusion $\NCK(M) \hookrightarrow \Met(M)$. For
	surjectivity on $\pi_0$, let $g \in \Met(M)$ be arbitrary. Applying the theorem
	with $\Omega = \pt \subset S^0$ and $A = \emptyset$ to the constant map $g$, the
	homotopy $H$ is a path from $g$ to a metric in $\NCK(M)$, so every path
	component of $\Met(M)$ contains a point of $\NCK(M)$. For $\pi_0$-injectivity,
	let $g \colon [0,1] \to \Met(M)$ be a path whose endpoints lie in $\NCK(M)$. We
	regard $[0,1]$ as a closed arc $\Omega$ in the circle $S^1$. The boundary of $\Omega$ in
	$S^1$ is $\{0, 1\}$, so the pair $(\Omega, A) = ([0,1], \{0,1\})$ satisfies $A
		\supset \del \Omega$, and the theorem yields a homotopy, relative to the endpoints,
	from $g$ to a path $H(-,1)$ with values in $\NCK(M)$. Hence the endpoints of
	$g$ lie in the same path component of $\NCK(M)$.

	For surjectivity on $\pi_\ell$, $\ell \geq 1$, we apply the theorem with $\Omega =
		S^\ell$ and $A = \pt$ the base point of $S^\ell$ (here $\del \Omega = \emptyset$, so
	any closed $A$ is allowed) to a representative of an element of
	$\pi_\ell(\Met(M),g_0)$ with $g_0 \in \NCK(M)$. The resulting $H(-,1)$
	represents the same class, and it is again a based map because the base point
	is fixed by \cref{itmRedNCKFixA}. Lastly, for $\pi_\ell$-injectivity, let $f_0,
		f_1 \colon S^\ell \to \NCK(M)$ be maps based at $g_0$ that are homotopic as
	maps to $\Met(M)$, and let $g \colon S^\ell \times [0,1] \to \Met(M)$ be a
	based homotopy between them. To fit this into the setting of the theorem, we
	regard the cylinder $S^\ell \times [0,1]$ as the complement of two open polar
	caps in $S^{\ell+1}$. In this way $\Omega \coloneqq S^\ell \times [0,1]$ becomes a
	closed subset of $S^{\ell+1}$ with $\del \Omega = S^\ell \times \{0,1\}$. We set $A
		\coloneqq S^\ell \times \{0,1\} \cup \pt \times [0,1]$, which contains $\del
		\Omega$, and on which $g$ takes values in $\NCK(M)$: on $S^\ell \times \{0,1\}$
	because $f_0$ and $f_1$ map to $\NCK(M)$, and on $\pt \times [0,1]$ because the
	homotopy is based. The theorem yields $H$ such that $H(-,1) \colon S^\ell
		\times [0,1] \to \NCK(M)$ agrees with $g$ on $A$. In particular $H(-,1)$ is a
	based homotopy from $f_0$ to $f_1$ within $\NCK(M)$, so $f_0$ and $f_1$
	represent the same class in $\pi_\ell(\NCK(M), g_0)$.

	The inclusion $\PNCK(M) \hookrightarrow \PSC(M)$ is a weak homotopy equivalence
	by exactly the same proof except that in every application of
	\cref{thmReductionToNCK}, we have to make use of \cref{itmRedNCKCinfty} to
	ensure that the homotopies nowhere violate the PSC condition (which is an open
	condition in the $C^\infty$-topology).

	For the statements about initial data sets, we note that
	\cref{thmReductionToNCK} continues to hold true if at all occasions $\Met(M)$
	is replaced by $\Ini_{\CMC}(M)$ and $\NCK(M)$ by $\Ini_{0,\CMC}(M)$. In fact,
	this follows from \cref{thmReductionToNCK}: we consider the decomposition map
	\begin{align*}
		\mathrm{Dec} \colon \Ini_{\CMC}(M)                          & \longrightarrow \Met(M) \times \bR \times \Gamma(S_2 M)                                    \\
		(g,k)                                                       & \longmapsto \left(g, \tr^g(k), k-{\textstyle\frac{1}{n}} \tr^g(k)g \right)
		\intertext{and its left inverse}
		\mathrm{Rec} \colon \Met(M) \times \bR \times \Gamma(S_2 M) & \longrightarrow \Ini_{\CMC}(M)                                                             \\
		(g,\tau,\tilde{k})                                          & \longmapsto \left(g, \tilde{k} + {\textstyle\frac{1}{n}} (\tau -\tr^g(\tilde{k}))g \right)
	\end{align*}

	and notice that $\mathrm{Dec}$ maps $\Ini_{0,\CMC}(M)$ into $\NCK(M) \times \bR
		\times \Gamma(S_2 M)$, while $\mathrm{Rec}$ maps $\NCK(M) \times \bR \times
		\Gamma(S_2 M)$ into $\Ini_{0,\CMC}(M)$. Now given a map $(\Omega,A) \to
		(\Ini_{\CMC}(M), \Ini_{0,\CMC}(M))$, we compose with $\mathrm{Dec}$ and denote
	the component functions by $(g,\tau,\tilde{k})$, so that $\mathrm{Rec} \circ
		(g,\tau,\tilde{k})$ is the original map. Then we apply the theorem to its first
	component $g \colon (\Omega,A) \to (\Met(M), \NCK(M))$ and consider $\mathrm{Rec}
		\circ (H,\tau,\tilde{k})$, where $H$ is the homotopy from
	\cref{thmReductionToNCK} and $\tau$ and $\tilde{k}$ are extended constantly in
	$t \in [0,1]$. Note that by continuity of $\mathrm{Rec}$ for every
	$C^\infty$-neighborhood $\mathcal{U}$ of $\mathrm{Rec} \circ
		(g,\tau,\tilde{k})$ there is a $C^\infty$-neighborhood $\mathcal{V}$ of $g$
	such that $\mathrm{Rec} \circ (\tilde{g}, \tau, \tilde{k}) \in \mathcal{U}$ for
	every $\tilde{g} \in \mathcal{V}$. It is now straightforward to check that
	$\mathrm{Rec} \circ (H,\tau,\tilde{k})$ indeed satisfies all the properties.
	With this version of \cref{thmReductionToNCK} for initial data sets at hand,
	the proofs that $\Ini_{0,\CMC}(M) \hookrightarrow \Ini_{\CMC}(M)$ and
	$\sDEC_{0,\CMC}(M) \hookrightarrow \sDEC_{\CMC}(M)$ are weak homotopy
	equivalences are literally the same as the proofs for the other two inclusions,
	respectively.
\end{proof}

The remainder of this section is devoted to the proof of
\cref{thmReductionToNCK}. The strategy goes back to Beig, Chru\'sciel and
Schoen~\cite{BeigChruscielSchoen}, who proved that metrics with local conformal
Killing vectors are non-generic, with an explicit pointwise obstruction built
from the Cotton tensor in dimension three, and with a non-explicit one in
higher dimensions. Their results would in fact be sufficient for our purposes,
\cf \cite[Thm.~2.1 and Thm.~7.4]{BeigChruscielSchoen}, at the cost of replacing
the explicit value of $K$ in \cref{propLocalDefNCK} below by a finite but
non-explicit $K = K(n)$. What we propose instead is a version of their argument
that is hopefully easier to follow. The obstruction is written down explicitly,
and it involves no more than the $4$-jet of the metric. It does, however, rest
in part on the Weyl tensor, so we construct it for $n \geq 4$ and treat
dimension three separately in \cref{propLocalDefNCK}, where their obstruction
takes over.

Following~\cite[App.~A]{CurryGover}, we encode the $2$-jet of a conformal
Killing vector as a parallel section of the adjoint tractor bundle, so that a
conformal Killing vector is determined by its $2$-jet at a single point. Such a
section lies, at every point, in the kernel of the curvature of the
corresponding connection. This curvature is a pointwise object built from
the Weyl and Cotton tensors and their first covariant derivatives, and for a
generic metric its kernel is trivial.

We first recall how the $2$-jet of a conformal Killing vector at any given
point $p_0 \in M$ is constrained. If $X$ is an arbitrary vector field,
splitting $\nabla^g X^\flat$ into its antisymmetric part, its trace-free
symmetric part and its trace part gives
\[
	\nabla^{g}_a X_ b = \frac{1}{2}(\upd X^\flat)_{ab} + \frac{1}{2} (\bL X)_{ab} + \frac{1}{n} \divg^g(X) g_{ab}.
\]
In particular, if $X$ is a conformal Killing vector, the component $\bL X$
vanishes and we are left with
\begin{equation}\label{eqRestricted1stJet}
	\nabla^{g}_a X_ b = \frac{1}{2}(\upd X^\flat)_{ab} + \frac{1}{n} \divg^g(X) g_{ab}.
\end{equation}

Taking the covariant derivative of the previous equation and summing over
circular permutations, we get
\begin{align*}
	 & \nabla^g_a \nabla^g_b X_c + \nabla^g_b \nabla^g_c X_a + \nabla^g_c \nabla^g_a X_b                                                                                                      \\
	 & \qquad = \frac{1}{2}(\upd (\upd X^\flat))_{abc} + \frac{1}{n} \nabla^g_a (\divg^g(X) g_{bc}) + \frac{1}{n} \nabla^g_b (\divg^g(X) g_{ca}) + \frac{1}{n} \nabla^g_c (\divg^g(X) g_{ab}) \\
	 & \qquad = \kappa_a g_{bc} + \kappa_b g_{ac} + \kappa_c g_{ab},
\end{align*}
where we used the fact that $\upd \circ \upd = 0$ and where $\kappa \in
	\Gamma(T^* M)$ is defined as
\[
	\kappa \coloneqq \frac1n\, \upd(\divg^g X).
\]
The last two terms on the left hand side combine to
\begin{align*}
	\nabla^g_b \nabla^g_c X_a + \nabla^g_c \nabla^g_a X_b
	 & = \nabla^g_c \nabla^g_b X_a - \riemuddd{d}{a}{b}{c} X_d + \nabla^g_c \nabla^g_a X_b \\
	 & = \nabla^g_c \big(\nabla^g_b X_a + \nabla^g_a X_b\big) - \riemuddd{d}{a}{b}{c} X_d  \\
	 & = \frac{2}{n} \nabla^g_c (\divg^g(X) g_{ab}) - \riemuddd{d}{a}{b}{c} X_d            \\
	 & = 2 \kappa_c g_{ab} - \riemuddd{d}{a}{b}{c} X_d,
\end{align*}
so we are left with
\begin{equation}\label{eqRestricted2ndJet}
	\nabla^g_a \nabla^g_b X_ c = \riemuddd{d}{a}{b}{c} X_d + \kappa_a g_{bc} + \kappa_b g_{ac} - \kappa_c g_{ab}.
\end{equation}
In the rest of this section, $X$ denotes a conformal Killing vector defined on
a connected open subset $U$ of $M$. By~\eqref{eqRestricted1stJet}, the
covariant $2$-tensor $\nabla^g X^\flat$ takes values in the bundle
\[
	\co(TM) \coloneqq \Lambda^2 T^*M \oplus \bR \cdot g.
\]
Raising the second index with $g$, we identify a $2$-tensor $\beta_{ab}$ with
the endomorphism $\beta_a{}^b$ of $TM$. In this way $\co(TM)$ becomes the Lie
algebra of infinitesimal rotations and scalings, and $\nabla^g X^\flat$ becomes
the endomorphism $Y \mapsto \nabla^g_Y X$. Setting
\[
	\cA \coloneqq TM \oplus \co(TM) \oplus T^*M,
\]
the conformal Killing vector $X$ therefore defines a section
\[
	\mathscr{X} \coloneqq \bigl(X,\, \nabla^g X^\flat,\, \kappa\bigr)
\]
of $\cA$ over $U$, with $\kappa$ as above, which records the $2$-jet of $X$. We
denote a general section of $\cA$ by $(\xi, \beta, \kappa)$, and we use the
same letters for the value of $\mathscr{X}$ at a point $p_0 \in U$, so that, in
particular, $\xi = X(p_0)$.

Besides the Weyl tensor $\weyl$, seen as a tensor of type $(1,3)$, the
construction involves the Schouten tensor and the Cotton tensor
\[
	\schouten_{ab} \coloneqq \frac{1}{n-2}\left(\ric_{ab} - \frac{\scal}{2(n-1)}\, g_{ab}\right),
	\qquad
	\cotton_{abc} \coloneqq \nabla^g_a \schouten_{bc} - \nabla^g_b \schouten_{ac},
\]
the latter being a $2$-form in $(a,b)$ with values in $T^*M$. Through this
identification, the bundle $\co(TM)$ acts on every tensor bundle over $M$. On
the three tensors above, this action reads
\begin{equation}\label{eqCoAction}
	\left\lbrace
	\begin{aligned}
		(\beta \cdot \schouten)_{ab}  & = -\schouten_{cb}\, \beta_a{}^c - \schouten_{ac}\, \beta_b{}^c,                                                                  \\
		(\beta \cdot \weyl)^d{}_{abc} & = \beta_e{}^d\, \weyl^e{}_{abc} - \weyl^d{}_{ebc}\, \beta_a{}^e - \weyl^d{}_{aec}\, \beta_b{}^e - \weyl^d{}_{abe}\, \beta_c{}^e, \\
		(\beta \cdot \cotton)_{abc}   & = - \cotton_{dbc}\, \beta_a{}^d - \cotton_{adc}\, \beta_b{}^d - \cotton_{abd}\, \beta_c{}^d.
	\end{aligned}
	\right.
\end{equation}
Since the Levi-Civita connection is torsion-free, the Lie derivative along $X$
of a tensor field $T$ is\footnote{Here $X$ is not necessarily a conformal
	Killing vector. The definition of the action is to be understood as extended to
	the whole of $\mathfrak{gl}(TM)$.}
\begin{equation}\label{eqLieRewrite}
	\lie_X T = \nabla^g_X T - (\nabla^g X) \cdot T.
\end{equation}
For instance, for a symmetric $2$-tensor $T$ and vector fields $Y, Z$, using
$[X, Y] = \nabla^g_X Y - \nabla^g_Y X$, we have
\begin{align*}
	(\lie_X T)(Y,Z)
	 & = X\bigl(T(Y,Z)\bigr) - T([X,Y], Z) - T(Y, [X,Z])                \\
	 & = (\nabla^g_X T)(Y,Z) + T(\nabla^g_Y X, Z) + T(Y, \nabla^g_Z X).
\end{align*}

\begin{definition}\label{defAdjointTractor}
	The bundle $\cA$ is the \emph{adjoint tractor bundle}, and the
	\emph{prolongation connection} of the conformal Killing equation is the linear
	connection $\nabla^{\cA}$ on $\cA$ defined by
	\[
		\begin{aligned}
			\nabla^{\cA}_a (\xi, \beta, \kappa) = \Bigl(\;
			 & \nabla^g_a \xi_b - \beta_{ab} \;;\;                                                                               \\
			 & \nabla^g_a \beta_{bc} - \riemuddd{d}{a}{b}{c}\, \xi_d - \kappa_a g_{bc} - \kappa_b g_{ac} + \kappa_c g_{ab} \;;\; \\
			 & \nabla^g_a \kappa_b + \bigl(\nabla^g_\xi \schouten - \beta \cdot \schouten\bigr)_{ab} \;\Bigr).
		\end{aligned}
	\]
\end{definition}

\begin{proposition}\label{propProlongation}
	Let $X$ be a conformal Killing vector on $U$. The section $\mathscr{X}$ is
	parallel for $\nabla^{\cA}$. Conversely, every $\nabla^{\cA}$-parallel section
	of $\cA$ over $U$ is of the form $\mathscr{X}$ for a unique conformal Killing
	vector $X$ on $U$.
\end{proposition}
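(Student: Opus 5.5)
The plan is to prove both directions by comparing with the identities \eqref{eqRestricted1stJet} and \eqref{eqRestricted2ndJet} derived above. The only new ingredient is a formula for the Lie derivative of the Schouten tensor along a conformal Killing vector.

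\emph{Direct implication.} Let $X$ be a conformal Killing vector on $U$, and put $\xi = X$, $\beta = \nabla^g X^\flat$ and $\kappa = \frac1n \upd(\divg^g X)$.
\begin{enumerate}
\item The first component $\nabla^g_a \xi_b - \beta_{ab}$ vanishes by definition of $\beta$. By \eqref{eqRestricted1stJet}, $\beta$ is a section of $\co(TM)$, so $\mathscr{X}$ is indeed a section of $\cA$.
\item The second component vanishes because it is exactly \eqref{eqRestricted2ndJet}.
\item For the third component, \eqref{eqLieRewrite} gives $\nabla^g_\xi \schouten - \beta\cdot\schouten = \lie_X \schouten$. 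We must therefore show $\lie_X \schouten = -\nabla^g \kappa = -\hess^g \phi$, where $\phi \coloneqq \frac1n \divg^g X$.
\end{enumerate}

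Step 3 is the main point, and we argue by naturality. Let $\psi_t$ be the local flow of $X$. Then $\lie_X \schouten^g = \frac{\upd}{\upd t}\big|_{t=0} \schouten^{\psi_t^* g} = D\schouten_g(\lie_X g)$. Since $X$ is a conformal Killing vector, $\lie_X g = 2\phi g$. It remains to linearize the conformal transformation law
\[
\schouten^{e^{2u}g} = \schouten^g - \hess^g u + \upd u\otimes \upd u - \tfrac12 |\upd u|_g^2\, g
\]
at $u=0$ in the direction $\phi$. This yields $D\schouten_g(2\phi g) = -\hess^g \phi$. The identity $\nabla_a\kappa_b + (\lie_X\schouten)_{ab} = 0$ follows.

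Because $\psi_t$ is only a local flow on $U$, we apply this argument pointwise on small neighbourhoods. Alternatively, $\lie_X \schouten = D\schouten_g(\lie_X g)$ can be checked directly as the infinitesimal diffeomorphism invariance of $g \mapsto \schouten^g$. The only genuine computation is the sign and normalization check in the transformation law for $\schouten$, which I expect to be the main (if routine) obstacle.

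\emph{Converse.} Let $(\xi,\beta,\kappa)$ be $\nabla^{\cA}$-parallel on $U$.
\begin{enumerate}
\item The first component gives $\nabla^g \xi^\flat = \beta \in \Lambda^2T^*M \oplus \bR g$. Hence the trace-free symmetric part of $\nabla^g\xi^\flat$ vanishes, i.e.\ $\bL\xi = 0$, so $X \coloneqq \xi$ is a conformal Killing vector.
\item By the direct implication, $\mathscr{X}$ is parallel and has the same first two components as $(\xi,\beta,\kappa)$.
\item The difference of the two sections, $(0,0,\delta\kappa)$, is therefore parallel. Its vanishing second component reads $-\delta\kappa_a g_{bc} - \delta\kappa_b g_{ac} + \delta\kappa_c g_{ab} = 0$.
\item Tracing over $b,c$ gives $-(n+1-1)\,\delta\kappa_a = -n\,\delta\kappa_a = 0$, so $\kappa = \frac1n\upd(\divg^g X)$ and the section equals $\mathscr{X}$.
\end{enumerate}

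Uniqueness of $X$ is immediate, since $X$ is the first component of $\mathscr{X}$.
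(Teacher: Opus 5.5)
Your proposal is correct and follows the paper's proof essentially step for step. The third component comes from naturality of $\schouten$ together with its conformal transformation law; you linearize $g \mapsto \schouten^g$ in the direction $\lie_X g = 2\phi\, g$, whereas the paper differentiates $\phi_t^*\schouten$ along the flow, which is the same computation. The converse reduces to tracing $\delta_a g_{bc} + \delta_b g_{ac} - \delta_c g_{ab} = 0$ to obtain $n\,\delta_a = 0$, exactly as in the paper.
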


\begin{proof}
	The first component of $\nabla^{\cA} \mathscr{X}$ vanishes by definition of
	$\mathscr{X}$, and the second one by~\eqref{eqRestricted2ndJet}. For the third
	one, note that $\lie_X g = \frac{2}{n} (\divg^g X)\, g$, so that the local flow
	$\phi_t$ of $X$ satisfies $\phi_t^* g = e^{2\omega_t} g$ for functions
	$\omega_t$ with $\omega_0 = 0$ and $\del_t \omega_t\vert_{t=0} = \frac{1}{n}
		\divg^g X$. By naturality of the Schouten tensor and its transformation law
	under conformal changes of the metric, \cf \cite[\S~2.2]{CurryGover}, we have
	\[
		\phi_t^* \schouten = \schouten - \nabla^g \upd \omega_t + \upd \omega_t \otimes \upd \omega_t - \frac{1}{2} |\upd \omega_t|_g^2\, g.
	\]
	Differentiating at $t = 0$ gives $\lie_X \schouten = -\nabla^g \kappa$, which
	is the vanishing of the third component by~\eqref{eqLieRewrite}. The same
	prolonged system appears in~\cite[App.~A.2]{CurryGover}, written in terms of
	$\kappa + \schouten(X, \cdot)$ instead of $\kappa$.

	Conversely, let $(\xi, \beta, \kappa)$ be a parallel section over $U$. The
	first component says that $\nabla^g \xi^\flat = \beta$ takes values in
	$\co(TM)$. The trace-free symmetric part $\frac{1}{2} \bL \xi$ of $\nabla^g
		\xi^\flat$ therefore vanishes, so that $X \coloneqq \xi$ is a conformal Killing
	vector with $\beta = \nabla^g X^\flat$. Subtracting~\eqref{eqRestricted2ndJet}
	from the second component, the $1$-form $\delta \coloneqq \kappa - \frac{1}{n}
		\upd(\divg^g X)$ satisfies $\delta_a g_{bc} + \delta_b g_{ac} - \delta_c g_{ab}
		= 0$, and taking the trace in $(b,c)$ gives $n\, \delta_a = 0$. Hence $(\xi,
		\beta, \kappa) = \mathscr{X}$, and $X = \xi$ is unique.
\end{proof}

A parallel section being determined by its value at a single point, and $U$
being connected, we obtain
\begin{equation}\label{eqJetDeterminesX}
	X \equiv 0 \quad\Longleftrightarrow\quad (\xi, \beta, \kappa) = 0
\end{equation}
for every point $p_0 \in U$, where $(\xi, \beta, \kappa)$ is the value of
$\mathscr{X}$ at $p_0$.

A parallel section of a bundle with connection lies, at every point, in the
kernel of the curvature of that connection. We now compute this curvature, with
the convention
\[
	\riem^{\nabla}(Y,Z) = \nabla_Y \nabla_Z - \nabla_Z \nabla_Y - \nabla_{[Y,Z]}.
\]

\begin{lemma}\label{lemTractorCurvature}
	In the splitting of $\cA$, the curvature of $\nabla^{\cA}$ is given by
	\[
		\riem^{\nabla^{\cA}}(Y,Z)\,(\xi, \beta, \kappa) = (0, \beta', \kappa'),
	\]
	where, with the second index of $\beta'$ raised as above,
	\begin{align*}
		\beta'_c{}^d & = \bigl(\nabla^g_\xi \weyl - \beta \cdot \weyl\bigr)^d{}_{c\,ab}\,Y^a Z^b,                                     \\
		\kappa'_c    & = \bigl(\nabla^g_\xi \cotton - \beta \cdot \cotton\bigr)_{abc}\,Y^a Z^b - \kappa_d\, \weyl^d{}_{cab}\,Y^a Z^b,
	\end{align*}
	for all $Y, Z \in T_p M$ and all $(\xi, \beta, \kappa) \in \cA_p$.
\end{lemma}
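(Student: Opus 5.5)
We compute the curvature directly from the definition, one component at a time. The key structural fact is that $\nabla^{\cA} = \nabla^g \oplus \nabla^g \oplus \nabla^g - \Theta$, where $\Theta$ is the zeroth-order algebraic term in \cref{defAdjointTractor}. Hence
\[
	\riem^{\nabla^{\cA}}(Y,Z) = \riem^{g}(Y,Z) - (\nabla^g_Y \Theta)_Z + (\nabla^g_Z \Theta)_Y + [\Theta_Y, \Theta_Z],
\]
with $\riem^g$ acting on each summand as the natural curvature action. To keep the bookkeeping manageable, we work at a point $p$ in normal coordinates and extend $Y,Z$ with $\nabla^g Y = \nabla^g Z = 0$ at $p$, so that $[Y,Z]=0$ there. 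Then the curvature is simply $\nabla^{\cA}_Y \nabla^{\cA}_Z - \nabla^{\cA}_Z \nabla^{\cA}_Y$ applied to an arbitrary section with prescribed value $(\xi,\beta,\kappa)$ at $p$. The computation is tensorial, so the derivatives of the section must cancel, and this cancellation serves as a consistency check.

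\textbf{First component.} Antisymmetrizing $\nabla_Y(\nabla_Z \xi - \beta(Z)) - \nabla_Y(\ldots)$ gives $\riem^g(Y,Z)\xi - \big(\nabla_Y\beta(Z) - \nabla_Z\beta(Y)\big)$. We substitute $\nabla_Y \beta$ from the second component of $\nabla^{\cA}_Y$, namely $\nabla_Y\beta = \riem(\cdot)\xi + \kappa\text{-terms}$. The $\kappa$-terms are symmetric in the relevant slots and cancel. The $\riem\,\xi$ terms cancel against $\riem^g(Y,Z)\xi$ by the first Bianchi identity. This yields $0$.

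\textbf{Second component.} We collect four groups of terms:
\begin{enumerate}
	\item $\riem^g(Y,Z)\cdot\beta$ (the commutator of covariant derivatives acting on the $2$-tensor $\beta$);
	\item $-(\nabla_Y \riem)(\cdot,Z,\cdot)\xi$ antisymmetrized, plus $\riem$ contracted with $\nabla_Y \xi = \beta(Y)$;
	\item the $\kappa$-terms with $\nabla_Y \kappa$ replaced via the third component, i.e.\ by $-(\nabla_\xi\schouten - \beta\cdot\schouten)(Y,\cdot)$;
	\item the remaining $\kappa$-terms, which cancel by symmetry.
\end{enumerate}
Next we use the second Bianchi identity to convert the antisymmetrized $\nabla\riem$ terms into $\nabla_\xi \riem$. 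Then we decompose $\riem = \weyl + \schouten \kulk g$ (in the sign convention matching $\riem^d{}_{abc}$ as used in \eqref{eqRestricted2ndJet}). The $\schouten$-parts of $\nabla_\xi\riem - \beta\cdot\riem$ cancel exactly against the contributions of group~3, because $\beta \in \co(TM)$ acts on $g$ by a multiple of $g$: its antisymmetric part kills $g$, and its trace part rescales consistently. What survives is $(\nabla_\xi \weyl - \beta\cdot\weyl)^d{}_{cab}Y^aZ^b$.

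\textbf{Third component.} Antisymmetrizing $\nabla_Y\big(\nabla_Z\kappa + (\nabla_\xi\schouten - \beta\cdot\schouten)(Z,\cdot)\big)$ produces three contributions:
\begin{itemize}
	\item $\riem^g(Y,Z)\kappa$;
	\item $\nabla_\xi$ of the antisymmetrized $\nabla\schouten$, i.e.\ $\nabla_\xi\cotton$, up to the commutator $[\nabla_Y,\nabla_\xi]$ (which vanishes at $p$ after using $\nabla_Y\xi = \beta(Y)$);
	\item the terms coming from $\nabla_Y \xi = \beta(Y)$ and $\nabla_Y \beta$, which are substituted via the first two components.
\end{itemize}
The $\beta$-terms assemble into $-\beta\cdot\cotton$. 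The $\riem\,\xi$ terms from $\nabla_Y\beta$ act on $\schouten$ and cancel after decomposition. For the $\kappa$-contributions, $\riem^g(Y,Z)\kappa$ combines with the $\kappa\,g$-terms acting on $\schouten$, and the $\schouten\kulk g$ part of $\riem$ cancels, leaving $-\kappa_d\weyl^d{}_{cab}Y^aZ^b$.

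\textbf{Main obstacle.} The expected difficulty is purely the index and sign bookkeeping in the second and third components. In particular, we must verify that the Schouten contributions cancel exactly given the specific sign conventions of \eqref{eqCoAction} and of the decomposition of $\riem$. We would fix these conventions first by checking the formula on the round sphere, where $\weyl = 0$, $\cotton = 0$ and the curvature must vanish (flat tractor connection). This check pins down all the signs.
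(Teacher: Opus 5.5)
Your route is the paper's: compute $\nabla^{\cA}_a\nabla^{\cA}_b-\nabla^{\cA}_b\nabla^{\cA}_a$ one component at a time, using the Ricci identity, both Bianchi identities and $\riem=\weyl+\schouten\kulk g$. Your packaging $\riem^{g}(Y,Z)-(\nabla^g_Y\Theta)_Z+(\nabla^g_Z\Theta)_Y+[\Theta_Y,\Theta_Z]$ is correct. Your device of ``substituting'' $\nabla^g s$ by $\Theta s$ is also legitimate, since it reproduces exactly these four terms without assuming a parallel section. The first and second components go through as you sketch them.

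The third component, however, rests on two false claims.

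\emph{The commutator does not vanish.} The second-derivative terms there are $\xi^d(\nabla^g_a\nabla^g_d\schouten_{bc}-\nabla^g_b\nabla^g_d\schouten_{ac})Y^aZ^b$, and
\[
	\xi^d\nabla^g_a\nabla^g_d\schouten_{bc}
	= \xi^d\nabla^g_d\nabla^g_a\schouten_{bc}
	- \xi^d\bigl(\riemuddd{e}{b}{a}{d}\,\schouten_{ec} + \riemuddd{e}{c}{a}{d}\,\schouten_{be}\bigr).
\]
So besides $\nabla^g_\xi\cotton$ you get curvature terms, linear in $\xi$, that are non-zero in general. Using $\nabla^g_Y\xi=\beta(Y)$ does not affect them. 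It only produces the separate terms $\nabla^g_{\beta(Y)}\schouten$, and those are what combine with $\beta$ acting on $\nabla^g\schouten$ into $-\beta\cdot\cotton$.

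\emph{The $\riem\,\xi$-on-$\schouten$ terms do not cancel on their own.} These are the terms in which the $\riemuddd{d}{a}{b}{c}\xi_d$-part of the second component of $\nabla^{\cA}$ acts on $\schouten$ through $-\beta\cdot\schouten$. They equal $(\omega_Z\cdot\schouten)(Y,\cdot)-(\omega_Y\cdot\schouten)(Z,\cdot)$ with $(\omega_Y)_{bc}=\riemuddd{d}{a}{b}{c}\,\xi_dY^a\in\Lambda^2T^*_pM$. Decomposing $\riem$ does not kill them: already their Weyl part is generically non-zero.

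What is true is that these two families cancel \emph{each other}, term by term. This uses only the algebraic symmetries of $\riem$ (indices lowered, e.g.\ $\riem_{ebad}=\riem_{dabe}$); no decomposition is needed. As written, your plan drops one non-zero contribution and dismisses another. The stated formula comes out only because the two mistakes compensate, and carrying out the plan literally would leave an unexplained leftover.

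Finally, the round-sphere check cannot ``pin down all the signs.'' There $\weyl$, $\cotton$ and their derivatives vanish, so the check is blind to every term of the final formula; it only tests the $\schouten\kulk g$-cancellations. It would not have caught the problem above either: $\schouten=\frac12 g$ is parallel and annihilated by $\Lambda^2T^*M$, so both families vanish on the sphere. You must track the relative signs of $\nabla^g_\xi\weyl$, $\beta\cdot\weyl$ and $\kappa_d\weyl^d{}_{cab}$ directly, from the Ricci identity $[\nabla^g_a,\nabla^g_b]\kappa_c=-\riemuddd{d}{c}{a}{b}\kappa_d$ and the action~\eqref{eqCoAction}.
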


\begin{proof}
	Both sides are tensorial, so it is enough to compute $\nabla^{\cA}_a
		\nabla^{\cA}_b - \nabla^{\cA}_b \nabla^{\cA}_a$ from \cref{defAdjointTractor},
	one component at a time. We spell out the first component and indicate how the
	other two are obtained.

	The first component of $\nabla^{\cA}_b (\xi, \beta, \kappa)$ is $\nabla^g_b
		\xi_c - \beta_{bc}$, and its second component is the one displayed in
	\cref{defAdjointTractor}. Applying $\nabla^{\cA}_a$ to the resulting section,
	the first component of $\nabla^{\cA}_a \nabla^{\cA}_b (\xi, \beta, \kappa)$ is
	\[
		\nabla^g_a \nabla^g_b \xi_c - \nabla^g_a \beta_{bc}
		- \bigl(\nabla^g_b \beta_{ac} - \riemuddd{d}{b}{a}{c}\, \xi_d - \kappa_b g_{ac} - \kappa_a g_{bc} + \kappa_c g_{ba}\bigr).
	\]
	Antisymmetrizing in $(a,b)$, the terms in $\nabla^g \beta$ and the terms in
	$\kappa$ cancel, and the Ricci identity turns the first one into
	$-\riemuddd{d}{c}{a}{b}\, \xi_d$. What is left is
	\[
		\bigl(\riemuddd{d}{b}{a}{c} - \riemuddd{d}{a}{b}{c} - \riemuddd{d}{c}{a}{b}\bigr)\, \xi_d,
	\]
	which vanishes by the first Bianchi identity.

	The second component is computed in the same way. The terms carrying a
	derivative of $\xi$ cancel, and so do all the terms in $\nabla^g \kappa$. The
	terms in which the curvature is differentiated combine, by the second Bianchi
	identity, into $-\nabla^g_\xi \riem$, while the Ricci identity applied to
	$\beta$, together with the terms in $\beta$ coming from the second component of
	\cref{defAdjointTractor}, produces $\beta \cdot \riem$. Finally, the third
	component of \cref{defAdjointTractor} contributes $\bigl(\nabla^g_\xi \schouten
		- \beta \cdot \schouten\bigr) \kulk g$. Decomposing $\riem = \weyl + \schouten
		\kulk g$, the terms involving $\schouten$ cancel and the stated expression
	remains.

	For the third component, the Ricci identity applied to $\kappa$ produces
	$-\riemuddd{d}{c}{a}{b}\, \kappa_d$, whose $\schouten \kulk g$ part cancels
	against the terms in $\kappa$ coming from the second component of
	\cref{defAdjointTractor}, leaving $-\kappa_d\, \weyl^d{}_{cab}$. The
	derivatives of the Schouten terms give $\nabla^g_\xi \cotton$ by the definition
	of $\cotton$, and the terms in $\beta$ give $-\beta \cdot \cotton$.
\end{proof}

As a consequence, we obtain the following result.

\begin{corollary}\label{corObstruction}
	Let $X$ be a conformal Killing vector in a neighborhood of a point $p_0$, and
	let $(\xi, \beta, \kappa)$ be the value of $\mathscr{X}$ at $p_0$. Then, at
	$p_0$,
	\begin{align}
		\bigl(\nabla^g_\xi \weyl - \beta \cdot \weyl\bigr)^d{}_{abc}                              & = 0, \label{eqObstructionW} \\
		\bigl(\nabla^g_\xi \cotton - \beta \cdot \cotton\bigr)_{abc} - \kappa_d\, \weyl^d{}_{cab} & = 0. \label{eqObstructionC}
	\end{align}
	In other words, $(\xi, \beta, \kappa)$ lies in the kernel of the linear map
	\[
		\cO_{p_0} \colon T_{p_0}M \oplus \co(T_{p_0}M) \oplus T^*_{p_0}M
		\lto \cW_{p_0} \oplus \cC_{p_0}
	\]
	whose two components are the left hand sides of~\eqref{eqObstructionW}
	and~\eqref{eqObstructionC}. Here $\cW_{p_0}$ denotes the space of algebraic
	Weyl tensors at $p_0$, \cf \cite[Ch.~1, \S~G]{Besse}, and $\cC_{p_0}$ the space
	of algebraic Cotton tensors, that is, of the $3$-tensors $T_{abc}$ at $p_0$
	satisfying
	\[
		T_{abc} = -T_{bac},
		\qquad
		T_{abc} + T_{bca} + T_{cab} = 0,
		\qquad
		g^{bc}\, T_{abc} = 0,
	\]
	which are the symmetries of $\cotton$, the last one following from the
	contracted Bianchi identity.
\end{corollary}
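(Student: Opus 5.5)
The plan is to deduce \cref{corObstruction} directly from \cref{propProlongation} and \cref{lemTractorCurvature}. Let $U$ be a connected open neighborhood of $p_0$ on which $X$ is defined. By \cref{propProlongation}, the section $\mathscr{X}$ is $\nabla^{\cA}$-parallel on $U$. For a parallel section $s$ of any bundle with connection, the definition of the curvature gives
\[
	\riem^{\nabla}(Y,Z)s = \nabla_Y \nabla_Z s - \nabla_Z \nabla_Y s - \nabla_{[Y,Z]} s = 0
\]
for all vector fields $Y,Z$. Since the curvature is tensorial, this holds pointwise for all $Y,Z \in T_{p_0}M$. Applying this to $s = \mathscr{X}$ at $p_0$ and inserting the formula of \cref{lemTractorCurvature}, we get $\beta' = 0$ and $\kappa' = 0$ for all $Y,Z$. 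The first component of the curvature vanishes identically and gives no information.

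Since $Y$ and $Z$ are arbitrary, $\beta'=0$ yields $\bigl(\nabla^g_\xi \weyl - \beta\cdot\weyl\bigr)^d{}_{cab}=0$ for all indices. After renaming indices this is \eqref{eqObstructionW}. Likewise, $\kappa'=0$ yields $\bigl(\nabla^g_\xi \cotton - \beta\cdot\cotton\bigr)_{abc} - \kappa_d \weyl^d{}_{cab} = 0$, which is \eqref{eqObstructionC}. Here $(\xi,\beta,\kappa)$ is exactly the value of $\mathscr{X}$ at $p_0$, as the statement requires.

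It remains to justify that the map $\cO_{p_0}$ is well defined with the stated targets, so that the conclusion can be phrased as a kernel condition. The map is evidently linear in $(\xi,\beta,\kappa)$. For the first component, $\nabla_\xi \weyl$ is an algebraic Weyl tensor because covariant differentiation preserves the algebraic symmetries and trace-freeness, these being expressed through $g$, which is parallel. The term $\beta\cdot\weyl$ is also in $\cW_{p_0}$. For the skew part of $\beta$ this holds because $\mathfrak{so}$ acts equivariantly. For the scaling part $\lambda g$, \eqref{eqCoAction} gives $\lambda(1-3)\weyl = -2\lambda\weyl$, which again lies in $\cW_{p_0}$.

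For the second component, the symmetries $T_{abc}=-T_{bac}$ and the cyclic identity of $\cotton$ hold by definition and the symmetry of $\schouten$. The trace condition $g^{bc}\cotton_{abc}=0$ follows from the contracted Bianchi identity $\nabla^b \schouten_{ab} = \upd(\tr\schouten)_a$. These properties are preserved by $\nabla_\xi$ and by the $\co$-action, by the same argument as above. The remaining term $\kappa_d \weyl^d{}_{cab}$ is antisymmetric in $(a,b)$. It satisfies the cyclic identity by the first Bianchi identity of $\weyl$, and it is trace-free in $(b,c)$ because $\weyl$ is trace-free.

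The main point needing care is this last verification that the image lands in $\cW_{p_0}\oplus\cC_{p_0}$, which amounts to index bookkeeping. The vanishing statement itself is immediate from the two earlier results.
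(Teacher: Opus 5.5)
Your proposal is correct and follows the paper's own argument. By \cref{propProlongation}, $\mathscr{X}$ is parallel, so its value at $p_0$ lies in the kernel of $\riem^{\nabla^{\cA}}(Y,Z)$ for all $Y,Z$, and \cref{lemTractorCurvature} turns this into \eqref{eqObstructionW} and \eqref{eqObstructionC}. Your additional check that the two components actually take values in $\cW_{p_0}$ and $\cC_{p_0}$ is also correct, and it spells out a point the paper leaves implicit.
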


\begin{proof}
	The section $\mathscr{X}$ is parallel by \cref{propProlongation}, so its value
	at $p_0$ lies in the kernel of $\riem^{\nabla^{\cA}}(Y,Z)$ for all $Y$ and $Z$,
	which by \cref{lemTractorCurvature} amounts to~\eqref{eqObstructionW}
	and~\eqref{eqObstructionC}.
\end{proof}

\begin{corollary}\label{corReductionToKernel}
	Suppose that the map $\cO_{p_0}$ of \cref{corObstruction} is injective. Then
	$g$ admits no non-zero conformal Killing vector on any connected neighborhood
	of $p_0$. In particular $(M,g)$ has no non-zero conformal Killing vector, that
	is, $g \in \NCK(M)$.
\end{corollary}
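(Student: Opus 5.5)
Let $X$ be a conformal Killing vector on a connected open neighborhood $U$ of $p_0$, and write $(\xi,\beta,\kappa)$ for the value of $\mathscr{X}$ at $p_0$. The plan is to put together \cref{corObstruction} and the equivalence~\eqref{eqJetDeterminesX}.

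First, \cref{corObstruction} applies to $X$ on $U$. It shows that $(\xi,\beta,\kappa)$ lies in the kernel of $\cO_{p_0}$. Since $\cO_{p_0}$ is injective by assumption, we get $(\xi,\beta,\kappa)=0$.

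Second, we pass from the point to all of $U$. By \cref{propProlongation}, $\mathscr{X}$ is parallel for the linear connection $\nabla^{\cA}$ on $U$. A parallel section of a vector bundle over a connected set is determined by its value at one point: parallel transport along paths is a linear isomorphism, so if the section vanishes at $p_0$ it vanishes everywhere. This is exactly~\eqref{eqJetDeterminesX}, and it gives $\mathscr{X}\equiv 0$ on $U$. In particular the first component $X$ vanishes identically. So there is no non-zero conformal Killing vector on any connected neighborhood of $p_0$.

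For the global statement, let $X$ be a conformal Killing vector on $M$. The manifold $M$ is connected and is itself a connected neighborhood of $p_0$, so restricting to $U = M$ gives $X\equiv 0$. Hence $g\in\NCK(M)$.

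No step here is a real obstacle, since everything reduces to results already proved. The only point to check is that $U$ is connected. This is needed for the uniqueness of parallel sections, and it is why the statement is phrased for connected neighborhoods.
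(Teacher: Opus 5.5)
Your proposal is correct and follows essentially the same route as the paper's proof. \cref{corObstruction} places the $2$-jet $(\xi,\beta,\kappa)$ at $p_0$ in $\ker \cO_{p_0} = 0$, and \eqref{eqJetDeterminesX} (uniqueness of $\nabla^{\cA}$-parallel sections on the connected set $U$) then gives $X \equiv 0$ on $U$, with $U = M$ for the global claim.
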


\begin{proof}
	Let $U$ be a connected neighborhood of $p_0$ (that can be the whole of $M$) and
	let $X$ be a conformal Killing vector on $U$. By \cref{corObstruction} its
	$2$-jet $(\xi, \beta, \kappa)$ at $p_0$ lies in the kernel of $\cO_{p_0}$, so
	it vanishes, and $X \equiv 0$ on $U$ by~\eqref{eqJetDeterminesX}.
\end{proof}

We now aim to produce metrics for which $\cO_{p_0}$ is injective, and to show
that this can be achieved by a small perturbation localized near $p_0$. Fix
local coordinates around $p_0$ and use these to identify the space of $4$-jets
of Riemannian metrics at $p_0$ with a subset of the finite-dimensional vector
space of jet coefficients, namely the open subset defined by positive
definiteness of the $0$-th order term.

\begin{lemma}\label{lemZariski}
	The set $Z$ of $4$-jets of Riemannian metrics at $p_0$ for which $\cO_{p_0}$ is
	not injective is Zariski-closed: it is the common zero locus, inside the open
	set of $4$-jets of Riemannian metrics, of a finite family of polynomials in the
	jet coefficients.
\end{lemma}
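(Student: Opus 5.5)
The plan is to express $\cO_{p_0}$ as a matrix whose entries are rational functions of the $4$-jet with denominators that are powers of $\det(g_{ij}(p_0))$. Non-injectivity is then the vanishing of all maximal minors, and clearing denominators produces polynomials.

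First, I will fix bases. In the chosen coordinates, $T_{p_0}M$, $T^*_{p_0}M$ and $\Lambda^2 T^*_{p_0}M \oplus \bR$ have coordinate bases that do not depend on the metric. The domain of $\cO_{p_0}$ is therefore the fixed vector space $\bR^N$ with $N = n + \binom{n}{2} + 1 + n$. I parametrize $\beta$ as an antisymmetric coordinate $2$-tensor plus $\lambda g_{ab}(p_0)$; this parametrization is linear in the fixed coordinates and has polynomial dependence on $g(p_0)$.

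For the target, I will not use $\cW_{p_0}\oplus\cC_{p_0}$, which moves with the metric. Instead I take the fixed ambient spaces of all $(1,3)$-tensors and all $3$-tensors at $p_0$. Injectivity of a linear map does not depend on how large the codomain is, so this replacement costs nothing.

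Second, I will show that the entries are rational in the jet coefficients. The Christoffel symbols are polynomial in $\partial g$ and $g^{-1}$, and $g^{-1} = \operatorname{adj}(g)/\det g$. The Riemann tensor involves $\partial^2 g$ and $\Gamma$. Its first and second covariant derivatives involve up to $\partial^4 g$, which is where the $4$-jet enters. The Ricci tensor, scalar curvature, Schouten tensor, Weyl tensor and Cotton tensor, together with $\nabla\weyl$ and $\nabla\cotton$ at $p_0$, are then polynomials in the jet coefficients and $1/\det g(p_0)$. Note that $\nabla \cotton$ needs $\nabla^2 \schouten$, hence fourth derivatives.

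The action formulas~\eqref{eqCoAction} are bilinear in $\beta$ and the tensor, and raising an index of $\beta$ uses $g^{-1}$. The maps $\xi\mapsto\nabla_\xi\weyl$ and $\kappa\mapsto \kappa_d\weyl^d{}_{cab}$ are linear. Hence every matrix entry of $\cO_{p_0}$ has the form $P/(\det g)^m$ with $P$ a polynomial. Multiplying the whole matrix by a fixed power $(\det g)^m$ does not change its rank on the open set where $\det g>0$. This gives a polynomial matrix $\widetilde{\cO}$.

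Third, $\cO_{p_0}$ fails to be injective exactly when $\operatorname{rank}\widetilde{\cO}<N$. This holds if and only if all $N\times N$ minors vanish, and these minors are the required finite family of polynomials. Their common zero locus intersected with the open set of positive definite jets is $Z$.

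I expect the main obstacle to be bookkeeping rather than an idea: checking that only jets up to order $4$ appear and that the $\co(TM)$ parametrization stays polynomial. The passage to a fixed codomain and the clearing of denominators are the points that need care.
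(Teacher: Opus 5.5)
Your proposal is correct and follows the same route as the paper: the entries of $\cO_{p_0}$ are polynomial in the $4$-jet and in $g(p_0)^{-1}$, non-injectivity is the simultaneous vanishing of the maximal minors, and multiplying by a power of $\det g(p_0)$ turns these minors into polynomials. Your explicit parametrization of $\co(T_{p_0}M)$ and your use of the full tensor spaces as codomain tighten the paper's phrase ``two fixed vector spaces'', since $\co(T_{p_0}M)$, $\cW_{p_0}$ and $\cC_{p_0}$ actually depend on $g(p_0)$, and they make the dimension comparison in the paper's footnote unnecessary.
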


\begin{proof}
	In some fixed coordinates around $p_0$, the tensors $\weyl$, $\nabla^g\weyl$,
	$\cotton$ and $\nabla^g \cotton$ at $p_0$ are given by expressions that are
	polynomial in the derivatives of order at most $4$ of the coefficients of $g$
	at $p_0$ and in the coefficients of $g(p_0)^{-1}$. The map $\cO_{p_0}$ depends
	linearly on these tensors, and a linear map between two fixed vector spaces
	fails to be injective exactly when all its maximal minors vanish\footnote{This
		uses the fact that the source of $\cO_{p_0}$ has dimension less than or equal
		to that of its target, so that the maximal minors are the ones of the size of
		the source and their simultaneous vanishing is the failure of injectivity
		rather than of surjectivity. The source $T_{p_0}M \oplus \co(T_{p_0}M) \oplus
			T^*_{p_0}M$ has dimension $2n + \tfrac{n(n-1)}{2} + 1$, while the target
		$\cW_{p_0} \oplus \cC_{p_0}$ has dimension $\tfrac{n(n+1)(n+2)(n-3)}{12} +
			\tfrac{n(n^2-4)}{3}$, \cf \cite[\S~6.7]{Weinberg} for the first summand
		and~\cite{GarciaHehlHeinickeMacias} for the second. As the reader may check,
		the target exceeds the source for every $n \geq 4$. This fact will however also
		be a direct consequence of the fact that there exists at least one injective
		map.}. After multiplication by a suitable positive power of $\det g(p_0)$,
	these minors become polynomials in the jet coefficients, and $Z$ is their
	common zero locus.
\end{proof}

Suppose now that some $4$-jet of Riemannian metrics lies outside $Z$, so that
one of these polynomials, say $P$, is not identically zero. The set of $4$-jets
of Riemannian metrics is open, so $P$ does not vanish identically on any
non-empty open subset of it. Hence $\{P \neq 0\}$ is dense, and the complement
of $Z$, which is open and contains $\{P \neq 0\}$, is dense as well. In this
sense the metrics we are looking for are generic. What we shall actually need
is a quantitative version, in which the perturbation is supported in a
prescribed neighborhood of $p_0$ and is small in a prescribed $C^m$-norm. This
is \cref{propLocalDefNCK} below, and the last ingredient of its proof is one
explicit metric whose obstruction is injective.

\begin{lemma}\label{lemGoodJet}
	Assume $n \geq 4$ and let $\lambda_1, \ldots, \lambda_n$ be pairwise distinct
	non-zero real numbers. Let $f$ be the quadratic form on $\bR^n$ given by
	\[
		f(x^1, \ldots, x^n) \coloneqq \frac{1}{2} \sum_{i=1}^n \lambda_i\, (x^i)^2,
	\]
	and let $\Sigma \subset \bR^{n+1}$ be its graph. Denoting by
	$g_{\mathrm{eucl}}$ the Euclidean metric of $\bR^n$, consider the metric
	\[
		g \coloneqq g_{\mathrm{eucl}} + \upd f \otimes \upd f,
	\]
	which, equivalently, can be described as the pull-back under $x \mapsto (x,
		f(x))$ of the metric induced on $\Sigma$ by the Euclidean metric of
	$\bR^{n+1}$. Then the map $\cO_{p_0}$ of $g$ at the origin $p_0 = 0$ is
	injective.
\end{lemma}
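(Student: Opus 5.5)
The approach is to exploit the large symmetry group of $g$ to split $\cO_{0}$ into small blocks, and then check each block by an explicit low-order computation at the origin. The metric $g = g_{\mathrm{eucl}} + \upd f \otimes \upd f$ is invariant under each reflection $s_i \colon x^i \mapsto -x^i$, because $f$ is even in each variable. Hence the group $G = (\Ztwo)^n$ acts isometrically on $(\bR^n, g)$ and fixes $p_0 = 0$.

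Since $\cO_{p_0}$ is built naturally from $g$, it is $G$-equivariant. Its kernel therefore splits into isotypic components indexed by the characters $\chi_S(s_i) = (-1)^{[i \in S]}$, $S \subset \{1,\dots,n\}$, and injectivity can be checked one character at a time.

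I first record what the symmetry gives at the origin.
\begin{itemize}
\item Every tensor of odd total degree that is invariant under $-\id = s_1 \cdots s_n$ vanishes at $0$. Hence $\nabla^g \weyl(0) = 0$ and $\cotton(0) = 0$ (each is a $2$-jet/$3$-jet quantity of odd degree).
\item The equations~\eqref{eqObstructionW} and~\eqref{eqObstructionC} therefore reduce at $0$ to
\[
\beta \cdot \weyl = 0, \qquad \nabla^g_\xi \cotton = \kappa_d\, \weyl^d{}_{c a b}.
\]
\item The source decomposes under $G$ as follows. The components $\xi^i$ and $\kappa_i$ carry the character $\chi_{\{i\}}$. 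The infinitesimal rotation $\beta = e_i \wedge e_j$ carries $\chi_{\{i,j\}}$. The scaling part $\beta = s\, \id$ is $G$-invariant.
\end{itemize}
So three kinds of blocks have to be treated: the trivial character, the characters $\chi_{\{i,j\}}$, and the characters $\chi_{\{i\}}$.

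The curvature at $0$ comes from the Gauss equation for $\Sigma \subset \bR^{n+1}$. Since $\upd f(0)=0$, the second fundamental form at $0$ is $h = \operatorname{diag}(\lambda_1,\dots,\lambda_n)$. Hence $\riem(0) = \frac12 h \kulk h$, so $\riem_{ijij} = \lambda_i \lambda_j$ and all components not of this form vanish. Then $\weyl(0)$ is the Weyl part of this tensor. It is diagonal in the sense that $\weyl_{ijkl} = 0$ unless $\{i,j\} = \{k,l\}$, with $\weyl_{ijij} = \lambda_i\lambda_j - \schouten_{ii} - \schouten_{jj}$ and $\schouten$ diagonal.

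\emph{Trivial character.} For $\beta = s\,\id$, the action~\eqref{eqCoAction} gives $\beta \cdot \weyl = -2s\, \weyl$. So this block is injective as soon as $\weyl(0) \neq 0$. I expect this to follow from $n \geq 4$ and the $\lambda_i$ being pairwise distinct and non-zero: a curvature tensor of the form $\frac12 h\kulk h$ is conformally flat only in degenerate cases, such as $h$ of rank $\leq 1$ or umbilic $h$.

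\emph{Character $\chi_{\{i,j\}}$.} For $\beta = e_i\wedge e_j$, I would evaluate $(\beta\cdot\weyl)_{ikjk}$ with a third index $k \neq i,j$, which exists since $n \geq 3$. This component should equal, up to sign, $\weyl_{ikik} - \weyl_{jkjk}$. Since $\weyl_{ikik} - \weyl_{jkjk} = (\lambda_i - \lambda_j)\lambda_k - (\schouten_{ii} - \schouten_{jj})$, I would compute $\schouten_{ii}$ explicitly and look for a $k$ making this difference non-zero. If a single $k$ does not suffice in general, I would sum or compare over the $n-2 \geq 2$ available values of $k$; this is where $n \geq 4$ should enter a second time.

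\emph{Character $\chi_{\{i\}}$ (main obstacle).} Here the unknowns are the two scalars $\xi^i$ and $\kappa_i$, and the condition is
\[
\xi^i (\nabla^g_i \cotton)_{abc} = \kappa_i\, \weyl^i{}_{cab}.
\]
The block is injective exactly when the two tensors $\nabla_i \cotton(0)$ and $\weyl^i{}_{c a b}(0)$ are linearly independent, so this requires the $4$-jet of $g$ through $\nabla \cotton$. Writing $g_{ab} = \delta_{ab} + \lambda_a\lambda_b x^a x^b$, all derivatives of order three and higher vanish at $0$ except those coming from this quadratic expression. I would therefore compute $\nabla_i \cotton(0)$ from second derivatives of the Schouten tensor, which involve products $\lambda_a^2\lambda_b^2$ and the like, via normal-coordinate expansions.

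Rather than computing $\nabla_i\cotton$ in full, I would exhibit one index pattern $(a,b,c)$ for which $\weyl^i{}_{cab} = 0$ but $(\nabla_i\cotton)_{abc} \neq 0$. Together with $\weyl(0) \neq 0$ having a non-zero $\weyl^i{}_{cab}$ component, this would give the required independence. A natural candidate is $(\nabla_i\cotton)_{j i j}$ or $(\nabla_i\cotton)_{jkk}$ with $j,k \neq i$. For the candidate $(\nabla_i\cotton)_{jkk}$, $\weyl^i{}_{kjk}$ vanishes because $\weyl$ is diagonal. For the candidate $(\nabla_i\cotton)_{jij}$, one has to check whether $\weyl^i{}_{jji}$ is non-zero, and pick the pattern accordingly. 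The value should be a polynomial in the $\lambda$'s that is non-zero under the distinctness hypothesis.

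This last computation is the step I expect to be laborious; it is the symbolic computation the paper mentions doing with assistance.
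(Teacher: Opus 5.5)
Your reduction to isotypic blocks under $(\Ztwo)^n$ is sound. The blocks for the trivial character and for $\chi_{\{i,j\}}$ go through as you sketch, and as in the paper. With $w_{ij} \coloneqq \weyl_{ijij}(0)$ one has $w_{ik}-w_{jk} = (\lambda_i-\lambda_j)\bigl(\lambda_k - \tfrac{\sigma_1-\lambda_i-\lambda_j}{n-2}\bigr)$. This vanishes for at most one $k \notin \{i,j\}$, and it also yields $\weyl(0) \neq 0$.

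\textbf{The gap is in the block $\chi_{\{i\}}$.} There, your shortcut of finding one index pattern with $\weyl^i{}_{cab}(0) = 0 \neq (\nabla_i\cotton)_{abc}(0)$ cannot work.
\begin{itemize}
\item By the parity principle you use yourself, $(\nabla_i\cotton)_{jkk}(0) = 0$ for $j,k \neq i$, since $i$ and $j$ each occur once. So your first candidate only gives $0=0$.
\item In fact the only non-zero components of $\nabla_i\cotton(0)$ are $v_{ij} \coloneqq (\nabla_i\cotton)_{ijj}(0) = -(\nabla_i\cotton)_{jij}(0)$ for $j \neq i$. At exactly these patterns $\weyl^i{}_{jij}(0) = -\weyl^i{}_{jji}(0) = w_{ij}$, and all other components of $\weyl^i{}_{cab}(0)$ vanish.
\item So the block condition is exactly $\xi^i v_{ij} = \kappa_i w_{ij}$ for all $j \neq i$.
\item For generic $\lambda$'s no $w_{ij}$ vanishes. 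For $n=4$ and $\lambda=(1,2,3,4)$ one gets $(w_{12},w_{13},w_{14}) = (\tfrac76,-\tfrac13,-\tfrac56)$. Hence the pattern you are looking for does not exist.
\item Likewise, $\weyl(0)\neq 0$ does not give $(w_{ij})_{j\neq i} \neq 0$ for every fixed $i$, and this block needs that too.
\end{itemize}

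\textbf{What actually has to be shown.} The vectors $(v_{ij})_{j\neq i}$ and $(w_{ij})_{j\neq i}$ in $\bR^{n-1}$ must be linearly independent. Both have zero sum, by trace-freeness of $\cotton$ and $\weyl$. So for $n=3$ they would automatically be proportional, and this is where $n \geq 4$ enters once more.

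The hypothesis $\lambda_i \neq 0$ is also indispensable exactly here, and your plan never uses it. If $\lambda_i = 0$, the coefficients of $g$ do not depend on $x^i$. Then $\del_i$ is a Killing field and $(e_i,0,0)$ lies in the kernel.

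So there is no way around computing $v_{ij}$ explicitly and comparing it with $w_{ij}$. The computation uses the second-order expansions of $\ric$ and $\scal$ together with $\del\Gamma(0)$. The paper then proceeds as follows.
\begin{itemize}
\item It observes that $w_{ij} = \hat{w}_i(\lambda_j)$ and $v_{ij} = \hat{v}_i(\lambda_j)$ for two polynomials of degree at most $2$ in one variable.
\item It checks that $\hat{v}_i$ and $\hat{w}_i$ are linearly independent: a non-trivial linear relation forces $(n+1)\lambda_i^3=0$.
\item It concludes because $\xi^i\hat{v}_i - \kappa_i\hat{w}_i$ would vanish at the $n-1 \geq 3$ distinct values $\lambda_j$.
\end{itemize}
This argument also covers the missing claim $(w_{ij})_{j\neq i}\neq 0$.
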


\begin{proof}
	Throughout the proof, we perform calculations with respect to the standard
	coordinates of $\bR^n$. The indices $i, j, k, \ldots$ are used to denote tensor
	components in these coordinates, in contrast to the abstract indices $a, b, c,
		\ldots$ used before. We do not make use of the Einstein summation convention
	because of the specific form of the metric, so all summations will be indicated
	explicitly.

	We first observe that, for any $i=1, \ldots, n$, the metric is invariant under
	reflection of the $i$-th coordinate. As a consequence, any tensor $T$ that is
	naturally built from $g$, such as the various curvature tensors or their
	covariant derivatives, is invariant as well. Thus all of its component
	functions $T_{i_1 \ldots i_N}$ are even or odd in $x^i$ according to the parity
	of the number of indices among $i_1, \ldots, i_N$ that are equal to $i$. In
	particular, the component functions can only be non-zero at the origin if every
	index appears an even number of times. This fact will simplify many of the
	computations we are going to perform.

	The metric components are explicitly given by $g_{ij}(x) = \delta_{ij} +
		\lambda_i \lambda_j\, x^i x^j$ and thus its inverse has components $g^{ij}(x) =
		\delta_{ij} - \rho(x) \lambda_i \lambda_j x^i x^j$ with
	\[
		\rho(x) \coloneqq \Bigl(1 + \sum_{i=1}^n \lambda_i^2 (x^i)^2\Bigr)^{-1}.
	\]
	The curvature is most easily computed by viewing $(\bR^n, g)$ as the
	hypersurface $\Sigma$, as in the statement of the lemma. Since the second
	fundamental form of $\Sigma$ is given by
	\[
		S_{ij}(x) = \frac{\hess_{ij} f}{\sqrt{1 + |\nabla f|^2}} = \rho(x)^{1/2}\lambda_i \delta_{ij},
	\]
	the Gauss equation allows us to conclude
	\begin{gather} \label{eqRiemCurv}
		\riem_{ijk\ell}(x) = \rho(x) \lambda_i \lambda_j \bigl(\delta_{ik}\delta_{j\ell} - \delta_{i\ell}\delta_{jk}\bigr).
	\end{gather}
	Denoting by $\sigma_1 \coloneqq \sum_{i=1}^n \lambda_i$ and $\sigma_2 \coloneqq
		\sum_{1 \leq i < j \leq n} \lambda_i \lambda_j$ the first two symmetric
	polynomials evaluated at the numbers $\lambda_i$, we get the following
	expressions for the components of the Ricci and the scalar curvature at the
	origin:
	\[
		\ric_{j\ell}(0) = \lambda_j(\sigma_1 - \lambda_j)\delta_{j\ell},
		\qquad
		\scal(0) = 2 \sigma_2.
	\]

	The Schouten tensor at the origin is diagonal as well,
	\[
		\schouten_{j\ell}(0) = \mu_j\delta_{j\ell} \quad\text{with } \mu_j = \frac{1}{n-2}\Bigl(\lambda_j(\sigma_1 - \lambda_j) - \frac{\sigma_2}{n-1}\Bigr).
	\]
	Therefore, the Weyl tensor at the origin $\weyl_0 \coloneqq \weyl(0) = \riem(0)
		- \schouten(0) \kulk g$ is given by
	\begin{equation}\label{eqW0Diag}
		(\weyl_0)_{ijk\ell} = w_{ij}\,\bigl(\delta_{ik}\delta_{j\ell} - \delta_{i\ell}\delta_{jk}\bigr)
	\end{equation}
	with
	\[
		w_{ik} = (\weyl_0)_{ikik} = \lambda_i \lambda_k - \mu_i - \mu_k \quad \text{for } i \neq k, \quad \text{and } w_{ii} = 0.
	\]

	The differences of the $w$'s, which we shall use twice, factor as
	\[
		w_{jk} - w_{ik} = (\lambda_j - \lambda_i)\,\bigl(\lambda_k - q_{ij}\bigr),
		\qquad
		q_{ij} \coloneqq \frac{\sigma_1 - \lambda_i - \lambda_j}{n-2}
	\]
	for $k \notin \{i, j\}$. Now fix $i \neq j$. As the $\lambda$'s are pairwise
	distinct, $\lambda_j - \lambda_i \neq 0$, while $\lambda_k = q_{ij}$ can hold
	for at most one index $k$. Since $n \geq 4$ leaves $n - 2 \geq 2$ choices of $k
		\neq i,j$, there is such a $k$ with
	\begin{equation}\label{eqWDiffNonZero}
		w_{jk} - w_{ik} \neq 0,
	\end{equation}
	so that at least one of $w_{jk}$ and $w_{ik}$ is non-zero. In particular
	$\weyl_0 \neq 0$.

	We first look at the obstruction equation~\eqref{eqObstructionW} at the origin
	$p_0 = 0$. Since $\nabla\weyl$ is a tensor of odd rank built from the metric,
	it vanishes at the origin, $\nabla\weyl(0) = 0$, by the remark at the beginning
	of the proof. Thus the obstruction reduces to $\beta \cdot \weyl_0 = 0$. Write
	$\beta = \omega + \phi\, g$ with $\omega \in \Lambda^2 T^*_{p_0}M$ and $\phi
		\in \bR$. The metric corresponds to the identity endomorphism, so
	by~\eqref{eqCoAction} it acts by $g \cdot \weyl_0 = -2\weyl_0$, and the
	obstruction reads
	\begin{equation}\label{eqBetaSplit}
		\omega \cdot \weyl_0 - 2\phi\, \weyl_0 = 0.
	\end{equation}
	The infinitesimal rotation $\omega$ preserves the fiber metric on tensors, so
	its action is skew-adjoint and $\langle \omega \cdot \weyl_0, \weyl_0 \rangle =
		0$. Pairing~\eqref{eqBetaSplit} with $\weyl_0$ thus gives $-2\phi\, |\weyl_0|^2
		= 0$, and as $\weyl_0 \neq 0$, we get $\phi = 0$ and $\omega \cdot \weyl_0 =
		0$.

	We still need to prove that $\omega = 0$, for which we compute $\omega \cdot
		\weyl_0$ using~\eqref{eqW0Diag}. Since the action of $\SO(TM)$ preserves the
	metric, it is compatible with raising and lowering of indices, and the same is
	true for its derivative, \ie we have
	\[
		(\omega \cdot \weyl_0)_{ijk\ell} = -\sum_{m=1}^n \left((\weyl_0)_{mjk\ell}\,\omega_i{}^m + (\weyl_0)_{imk\ell}\, \omega_j{}^m + (\weyl_0)_{ijm\ell}\, \omega_k{}^m + (\weyl_0)_{ijkm}\, \omega_{\ell}{}^m \right).
	\]
	Together with~\eqref{eqW0Diag}, we get
	\begin{equation}\label{eqOmegaW0}
		\begin{split}
			(\omega \cdot \weyl_0)_{ijk\ell}
			 & = \omega_{ik}(w_{ij} - w_{jk})\, \delta_{j\ell} + \omega_{j\ell}(w_{ij} - w_{i\ell})\, \delta_{ik}         \\
			 & \qquad + \omega_{i\ell}(w_{j\ell} - w_{ij})\, \delta_{jk} + \omega_{jk}(w_{ik} - w_{ij})\, \delta_{i\ell}.
		\end{split}
	\end{equation}
	We now specialize the indices. For pairwise distinct $i,j,k$, in the expression
	for the component $(i,k,j,k)$ only the first summand of~\eqref{eqOmegaW0}
	survives, giving $(\omega \cdot \weyl_0)_{ikjk} = \omega_{ij}\,(w_{ik} -
		w_{jk})$. If $\omega \cdot \weyl_0 = 0$, this vanishes for all pairwise
	distinct $i,j,k$. Fixing $i \neq j$ and choosing $k$ as
	in~\eqref{eqWDiffNonZero} forces $\omega_{ij} = 0$. As $i \neq j$ was arbitrary
	and $\omega$ is antisymmetric, this implies $\omega = 0$ and therefore $\beta =
		0$.

	Now, we turn our attention to the other obstruction~\eqref{eqObstructionC}.
	Since the Cotton tensor $\cotton$ has odd order, the parity principle described
	in the beginning of the proof gives $\cotton(0) = 0$. For $\nabla \cotton(0)$,
	it yields that the only non-vanishing components are those where two indices
	$i, j$ appear twice. By the anti-symmetry of $\cotton$ in the first two
	entries, this implies that $\nabla \cotton(0)$ is completely determined by the
	coefficients $v_{ij} \coloneqq (\nabla_i\cotton)_{ijj}(0)$, namely through
	\begin{equation*}
		(\nabla_\ell\cotton)_{ijk}(0) = v_{ij}\,\delta_{i\ell}\delta_{jk} - v_{ji}\,\delta_{j\ell}\delta_{ik}.
	\end{equation*}

	In order to calculate these coefficients, we need to determine the Ricci and
	scalar curvature up to second order around the origin. Noting that $\rho(x) = 1
		- \sum_{i = 1}^n \lambda_i^2 (x^i)^2 + \mathcal{O}(x^4)$ and $g^{ij}(x) =
		\delta_{ij} - \lambda_i\lambda_j x^i x^j + \mathcal{O}(x^4)$, we obtain the
	following from~\eqref{eqRiemCurv}:
	\begin{align*}
		\ric_{j\ell}(x) & = \left(1-  \sum_{i = 1}^n \lambda_i^2 (x^i)^2 \right)\ric_{j\ell}(0) - \left(\sum_{i=1}^n \lambda_j\delta_{j\ell}\lambda_i^3 (x^i)^2 - \lambda_j^2 \lambda_\ell^2 x^j x^\ell \right) + \mathcal{O}(x^4) \\
		\scal(x)        & = \left(1-  \sum_{i = 1}^n \lambda_i^2 (x^i)^2 \right)\scal(0) - \left(\sum_{i=1}^n \sigma_1 \lambda_i^3 (x^i)^2 - \sum_{i=1}^n \lambda_i^4 (x^i)^2\right)                                               \\
		                & \phantom{=}\;-\sum_{j,\ell = 1}^n \ric_{j\ell}(0) \lambda_j \lambda_\ell x^j x^\ell + \mathcal{O}(x^4)                                                                                                   \\
		                & = \left(1-  \sum_{i = 1}^n \lambda_i^2 (x^i)^2 \right)\scal(0) - 2\sum_{i=1}^n (\sigma_1 - \lambda_i) \lambda_i^3 (x^i)^2 + \mathcal{O}(x^4).
	\end{align*}
	Moreover, we need to know the Christoffel symbols, which are computed to
	$\Gamma_{ij}^k(x) = \sum_{\ell = 1}^n g^{k\ell}(x) \lambda_i \delta_{ij}
		\lambda_\ell x^\ell$ using the Koszul formula, yielding
	\begin{gather*}
		\Gamma_{ij}^k(0) = 0 \qquad \text{and }\quad \del_m \Gamma_{ij}^k(0) = \lambda_i  \lambda_k \delta_{ij} \delta_{km}.
	\end{gather*}
	Thus, for $i \neq j$, the relevant second derivatives of the Schouten tensor
	$\schouten$ are given by
	\begin{align*}
		\nabla_i \nabla_i \schouten_{jj}(0) & = \del_i \del_i \schouten_{jj}(0) - 2\sum_{k=1}^n (\del_i \Gamma_{ij}^k(0)) \schouten_{kj}(0)                                                                                                     \\
		                                    & = \frac{2\lambda_i^2}{n-2} \left(-\lambda_j(\sigma_1 - \lambda_j) - \lambda_j \lambda_i + \frac{\sigma_2 + \lambda_i(\sigma_1 - \lambda_i)}{n-1}\right) -2 \cdot 0                                \\
		                                    & = \frac{\lambda_i^2}{n-2} \left(2\lambda_j(\lambda_j - \lambda_i -\sigma_1) + \frac{2\bigl(\sigma_2 + \lambda_i(\sigma_1 - \lambda_i)\bigr)}{n-1}\right)
		\intertext{and}
		\nabla_i \nabla_j \schouten_{ij}(0) & = \del_i \del_j \schouten_{ij}(0) - \sum_{k=1}^n (\del_i \Gamma_{ji}^k(0)) \schouten_{kj}(0)  - \sum_{k=1}^n (\del_i \Gamma_{jj}^k(0)) \schouten_{ik}(0)                                          \\
		                                    & = \frac{\lambda_i \lambda_j}{n-2} \left(\lambda_i\lambda_j - \frac{\sigma_2}{n-1}\right)  - 0 - \frac{\lambda_i \lambda_j}{n-2}\left(\lambda_i(\sigma_1 - \lambda_i)- \frac{\sigma_2}{n-1}\right) \\
		                                    & = \frac{\lambda_i^2}{n-2} \Bigl(\lambda_j(\lambda_j + \lambda_i - \sigma_1) \Bigr).
	\end{align*}
	Taking the difference, we obtain
	\[
		v_{ij} = \frac{\lambda_i^2}{n-2}\Bigl(\lambda_j(\lambda_j - 3\lambda_i - \sigma_1)
		+ \frac{2\bigl(\sigma_2 + \lambda_i(\sigma_1 - \lambda_i)\bigr)}{n-1}\Bigr).
	\]

	With this at hand, we return to the second obstruction~\eqref{eqObstructionC}
	and notice that the summand involving $\beta$ vanishes at the origin due to
	$\cotton(0) = 0$. Considering, for $i \neq j$, the $(i,j,j)$ component of the
	remaining equation and using that $\nabla_k \cotton_{ijj}(0) = v_{ij}
		\delta_{ik}$ and $\weyl^k{}_{jij}(0) = w_{ij} \delta_{ik}$, we arrive at
	\begin{equation}\label{eqQuadricSystem}
		\xi_i v_{ij} = \kappa_i w_{ij}.
	\end{equation}
	We now fix $i$ and let $j \neq i$ vary. The $\lambda$'s being fixed, so are
	$\sigma_1$, $\sigma_2$ and the $\mu$'s, and both sides
	of~\eqref{eqQuadricSystem} depend on $j$ only through the value $\lambda_j$.
	Introducing the two polynomials in one indeterminate $\lambda$
	\begin{align*}
		\hat{w}_i(\lambda) & \coloneqq \frac{\lambda^2}{n-2}
		+ \Bigl(\lambda_i - \frac{\sigma_1}{n-2}\Bigr)\lambda
		- \mu_i + \frac{\sigma_2}{(n-1)(n-2)},                                \\
		\hat{v}_i(\lambda) & \coloneqq \frac{\lambda_i^2}{n-2}\Bigl(\lambda^2
		- (3\lambda_i + \sigma_1)\,\lambda
		+ \frac{2\bigl(\sigma_2 + \lambda_i(\sigma_1 - \lambda_i)\bigr)}{n-1}\Bigr),
	\end{align*}
	we have $w_{ij} = \hat{w}_i(\lambda_j)$ and $v_{ij} = \hat{v}_i(\lambda_j)$ for
	every $j \neq i$.

	These two polynomials are linearly independent. Suppose that $s \hat{v}_i + t
		\hat{w}_i = 0$ for some $s, t \in \bR$. Comparing the coefficients of
	$\lambda^2$ gives $s \lambda_i^2 + t = 0$, so $s = 0$ forces $t = 0$. Assuming
	$s \neq 0$, we may therefore normalize the relation to $\hat{v}_i =
		\lambda_i^2\, \hat{w}_i$, and comparing in it the coefficients of $\lambda$
	gives
	\[
		-\frac{\lambda_i^2\,(3\lambda_i + \sigma_1)}{n-2}
		= \lambda_i^2 \Bigl(\lambda_i - \frac{\sigma_1}{n-2}\Bigr),
	\]
	that is $(n+1)\,\lambda_i^3 = 0$, which is impossible since $\lambda_i \neq 0$.
	Hence $s = t = 0$.

	We can now conclude. By~\eqref{eqQuadricSystem}, the polynomial $\xi_i\,
		\hat{v}_i - \kappa_i\, \hat{w}_i$ vanishes at $\lambda_j$ for every $j \neq i$.
	These are $n - 1 \geq 3$ pairwise distinct values, while a non-zero polynomial
	of degree at most $2$ has at most two roots. Hence $\xi_i\, \hat{v}_i -
		\kappa_i\, \hat{w}_i$ is the zero polynomial, and the linear independence just
	established gives $\xi_i = \kappa_i = 0$. As $i$ was arbitrary, $\xi$ and
	$\kappa$ vanish. Together with $\beta = 0$ this shows that $\cO_{p_0}$ is
	injective.
\end{proof}

We can now assemble what precedes into the statement that will actually be used
in the proof of \cref{thmReductionToNCK}.

\begin{proposition}\label{propLocalDefNCK}
	Let $(M,g_0)$ be a Riemannian manifold of dimension $n \geq 3$ and $U_0 \subset
		M$ an open neighborhood of a point $p_0 \in M$. Then for every $\epsilon > 0$
	and every $m \geq 0$ there exists a symmetric $2$-tensor $h$ supported in $U_0$
	with $\|h\|_{C^m} < \epsilon$ such that $g_0 + h$ is a Riemannian metric
	admitting no non-zero conformal Killing vector on any connected neighborhood of
	$p_0$. Moreover, this holds for all metrics in a sufficiently small
	$C^K$-neighborhood of $g_0 + h$, where one may take $K = 4$ if $n \geq 4$ and
	$K = 5$ if $n = 3$.
\end{proposition}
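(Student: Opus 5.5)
For $n \geq 4$ we work entirely at the level of $4$-jets at $p_0$. Fix coordinates $x$ centered at $p_0$ whose domain lies in $U_0$, and let $j^4_{p_0} g_0$ be the $4$-jet of $g_0$. By \cref{lemGoodJet} (transported into these coordinates, and using that injectivity of $\cO_{p_0}$ is invariant under linear coordinate changes), some $4$-jet of a Riemannian metric lies outside the set $Z$ of \cref{lemZariski}. As explained after that lemma, the complement of $Z$ is then open and dense in the open set of $4$-jets of Riemannian metrics. We therefore find a polynomial $q(x)$ of degree at most $4$, with symmetric-matrix coefficients, such that $j^4_{p_0}(g_0 + q) \notin Z$ and the coefficients of $q$ are arbitrarily small.

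We then set $h \coloneqq \chi\, q$, where $\chi$ is a cutoff equal to $1$ near $p_0$ and supported in $U_0$. With $\chi$ fixed first, $\|h\|_{C^m} \leq C(\chi,m)\,|q|$, so $\|h\|_{C^m} < \epsilon$ once the coefficients of $q$ are small enough. Smallness in $C^0$ also keeps $g_0+h$ positive definite. Since $j^4_{p_0}(g_0+h) = j^4_{p_0}(g_0) + q \notin Z$, the map $\cO_{p_0}$ of $g_0+h$ is injective, and \cref{corReductionToKernel} shows that $g_0 + h$ has no non-zero conformal Killing vector on any connected neighborhood of $p_0$.

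For the stability statement, observe that $\cO_{p_0}$ depends continuously on the $4$-jet at $p_0$, and injectivity is an open condition (some maximal minor is non-zero). Hence any metric $C^4$-close to $g_0+h$ has $4$-jet at $p_0$ outside $Z$, and \cref{corReductionToKernel} applies to it as well, giving $K=4$.

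For $n=3$ the Weyl tensor vanishes identically, and \cref{lemGoodJet} needs $n \geq 4$, so the obstruction map $\cO_{p_0}$ reduces to the Cotton part alone. Here we run the same scheme, but with the pointwise obstruction of Beig, Chru\'sciel and Schoen~\cite{BeigChruscielSchoen}, which is built from the Cotton tensor and its covariant derivatives and involves the $5$-jet of the metric. We proceed in four steps:
\begin{enumerate}
\item Prolong once more: differentiate the parallel-section equation and use the $\nabla^g\cotton$-equation together with its derivative. A conformal Killing vector then has its $2$-jet $(\xi,\beta,\kappa)$ in the kernel of a linear map depending polynomially on the $5$-jet of $g$ at $p_0$.
\item Check that the set of $5$-jets for which this map fails to be injective is again Zariski-closed, exactly as in \cref{lemZariski}.
\item Exhibit, or invoke from~\cite[Thm.~2.1]{BeigChruscielSchoen}, one $5$-jet at which the map is injective.
\item Conclude by the density and cutoff argument above, now with $K=5$.
\end{enumerate}

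The main obstacle is the three-dimensional case. The explicit injectivity computation of \cref{lemGoodJet} is unavailable there, and one must either rely on the cited non-generic result or carry out a Cotton-tensor analogue of that computation at the level of $5$-jets. By comparison, the $n\geq 4$ case is a routine density-plus-cutoff argument.
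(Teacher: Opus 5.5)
For $n \geq 4$ your argument is essentially the paper's. The paper also uses the jet $j^*$ from \cref{lemGoodJet} together with a cutoff. It obtains the small perturbation along the segment $g_t = (1-t)g_0 + t\gamma$, where $\gamma$ has $4$-jet $j^*$ at $p_0$ and agrees with $g_0$ off $U_0$, and notes that $t \mapsto P(j^4_{p_0} g_t)$ is a polynomial in $t$ that is non-zero at $t=1$. This is just a concrete form of your density step, and the $C^4$-stability argument is the same.

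The three-dimensional case has a gap as you have set it up. In steps 1--3 you build your own, once-more prolonged linear map on $2$-jets $(\xi,\beta,\kappa)$. You then need one $5$-jet at which \emph{this particular} map is injective. Citing \cite[Thm.~2.1]{BeigChruscielSchoen} does not provide that. What it provides is a polynomial $Q$, built from $\cotton$, $\nabla\cotton$ and $\nabla^2\cotton$ and not identically zero, whose non-vanishing at $p_0$ forbids conformal Killing vectors near $p_0$. Non-existence of actual conformal Killing vectors does not imply that your formal kernel is trivial, since a $2$-jet in that kernel need not extend to a conformal Killing vector. So the \emph{invoke} option does not close step 3, and the \emph{exhibit} option is exactly the computation you have not carried out.

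The paper avoids this entirely. The Zariski and interpolation/density arguments only use that degeneracy forces the vanishing of a polynomial in a finite jet that is not identically zero. One therefore substitutes $Q$ for the maximal minor $P$ directly, which makes your steps 1--2 unnecessary. Since $Q$ depends only on the $5$-jet, this gives $K=5$.
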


\begin{proof}
	Assume first $n \geq 4$. By \cref{lemGoodJet} there is a $4$-jet $j^*$ of
	Riemannian metrics at $p_0$ for which $\cO_{p_0}$ is injective. Let $P$ be a
	maximal minor of $\cO_{p_0}$, viewed as a polynomial in the jet coefficients as
	in the proof of \cref{lemZariski}, so that $P(j^*) \neq 0$.

	We first build a Riemannian metric $\gamma$ on $M$ whose $4$-jet at $p_0$ is
	$j^*$ and which coincides with $g_0$ outside $U_0$. Let $\gamma_0$ be the
	polynomial representative of $j^*$ in some fixed coordinates around $p_0$. Its
	value at $p_0$ is positive definite, so, after shrinking $U_0$, we may assume
	that $\gamma_0$ is positive definite on $U_0$. Choose a cutoff function $\chi$
	supported in $U_0$, equal to $1$ near $p_0$ and with values in $[0,1]$, and set
	$\gamma \coloneqq \chi \gamma_0 + (1-\chi) g_0$. At every point of $U_0$ this
	is a convex combination of two positive definite forms, hence positive
	definite, and outside $U_0$ it equals $g_0$, so $\gamma$ is a Riemannian
	metric. Since $\chi \equiv 1$ near $p_0$, we have $j^4_{p_0}(\gamma) = j^*$.

	Now consider, for $t \in [0,1]$,
	\[
		g_t \coloneqq (1-t)\, g_0 + t\, \gamma.
	\]
	Each $g_t$ is a Riemannian metric, being a convex combination of two of them,
	and $g_t - g_0 = t (\gamma - g_0)$ is supported in $U_0$. Its $4$-jet at $p_0$
	is $(1-t)\, j^4_{p_0}(g_0) + t\, j^*$, so $t \mapsto
		P\bigl(j^4_{p_0}(g_t)\bigr)$ is a polynomial in $t$ whose value at $t = 1$ is
	$P(j^*) \neq 0$. A non-zero polynomial has finitely many roots, so this
	function is non-zero for all but finitely many $t \in [0, 1]$, and in
	particular for small enough $t > 0$.

	Choosing such a $t$ with moreover $t\, \|\gamma - g_0\|_{C^m} < \epsilon$, the
	tensor $h \coloneqq t (\gamma - g_0)$ has all the required properties. Indeed,
	set $g \coloneqq g_0 + h$. Its map $\cO_{p_0}$ is injective, so by
	\cref{corReductionToKernel} it admits no non-zero conformal Killing vector on
	any connected neighborhood of $p_0$. Finally, injectivity of $\cO_{p_0}$ is the
	non-vanishing of one of its maximal minors, and the tensors $\weyl$,
	$\nabla\weyl$, $\cotton$ and $\nabla\cotton$ at $p_0$ depend continuously on
	the $C^4$-jet of the metric, so the same conclusion holds for every metric in a
	sufficiently small $C^4$-neighborhood of $g$.

	For $n = 3$ the argument is the same once $\cO_{p_0}$ is replaced by the
	pointwise obstruction of Beig, Chru\'sciel and Schoen. Their Theorem~2.1
	provides a non-trivial polynomial $Q$ in a finite jet of the metric at $p_0$,
	built from the Cotton tensor and its derivatives of order at most two, such
	that $Q \neq 0$ at $p_0$ forbids conformal Killing vectors on a neighborhood of
	$p_0$, \cf \cite[Thm.~2.1, Thm.~4.1 and Prop.~4.2]{BeigChruscielSchoen}.
	Neither \cref{lemZariski} nor the interpolation above uses anything else about
	$\cO_{p_0}$ than that its degeneracy implies the vanishing of a polynomial in a
	finite jet, so both go through with $Q$ in place of a maximal minor. Only the
	appeal to \cref{lemGoodJet}, which is what restricted us to $n \geq 4$, has to
	be replaced, by the non-triviality of $Q$. The one change in the conclusion is
	that the neighborhood becomes a $C^5$-neighborhood.
\end{proof}

We now show how to conclude \cref{thmReductionToNCK} from this. Let us first
explain the general idea. \Cref{propLocalDefNCK} allows us to slightly perturb
a single metric to one without non-zero CKVs. Suppose now we want to find a
slight perturbation of a continuous path of metrics $[0,1] \lto \Met(M),\, x
	\mapsto g_x$ so that the perturbed path avoids metrics with non-zero CKVs. In
this case, running \cref{propLocalDefNCK} at any $x_0 \in [0,1]$ provides us
with a symmetric $2$-tensor $h$ so that $g_x + h$ has no non-zero CKVs for all
$x \in (x_0 - \delta, x_0 + \delta)$, for some small $\delta > 0$. Due to
compactness of $[0,1]$, it is possible to cover the whole interval~$[0,1]$ by
finitely many small intervals of this kind, and let us assume without loss of
generality that they are given by
\[
	I_k \coloneqq \left(\frac{k-1.03}{N}, \frac{k+1.03}{N} \right) \cap [0,1],
	\qquad k = 0, \ldots, N.
\]
Note that the $I_k$ with $k$ even already cover $[0,1]$, and so do those with
$k$ odd, two consecutive intervals of the same parity overlapping only in an
interval of length $0.06/N$.

One would now like to use a suitable partition of unity of $[0,1]$ to produce
from the different perturbation tensors $h_k$ (associated to $I_k$) a single
$x$-dependent perturbation $h_x$ such that for all $x \in [0,1]$ the metric
$g_x + h_x$ has no non-zero CKVs. This, however, does not work. In general, if
for $x \in [0,1]$ there are two perturbations $h$ and $\tilde{h}$ so that $g_x
	+ h$ and $g_x + \tilde{h}$ admit no non-zero CKVs, there is no reason to assume
that an arbitrary interpolation between $h$ and $\tilde{h}$ has the same
property.

What helps us instead is that \cref{propLocalDefNCK} rules out non-zero
\emph{local CKVs} around a point in~$M$. If the perturbation $g_x + h$ has no
non-zero CKVs around $p \in M$, then this property is not destroyed by adding a
symmetric $2$-tensor whose support is disjoint from a neighborhood $U$ of $p$.
So what we do is the following: we choose two points $p_0 \neq p_1 \in M$ with
disjoint open neighborhoods $U_0, U_1$. When choosing the perturbation tensors
using \cref{propLocalDefNCK}, we work with $(p_0, U_0)$ for all intervals $I_k$
with $k$ even and $(p_1, U_1)$ for all intervals $I_k$ with $k$ odd. Then we
glue together the perturbation tensors for the even intervals using a partition
of unity subordinate to $(I_k)_{k\,\text{even}}$, yielding $h^{(0)}_x$, and
likewise for the odd intervals, yielding $h^{(1)}_x$. Now notice that $g_x +
	h^{(0)}_x + h^{(1)}_x$ admits no non-zero CKVs around $p_0$ if $x \in
	\bigcup_{k\,\text{even}} \left(\frac{k-0.97}{N},\frac{k+0.97}{N}\right) \cap
	[0,1]$ and no non-zero CKVs around $p_1$ if $x \in \bigcup_{k\,\text{odd}}
	\left(\frac{k-0.97}{N},\frac{k+0.97}{N}\right) \cap [0,1]$. These two unions
cover $[0,1]$, so there are no non-zero CKVs for any $x \in [0,1]$.

We now generalize this strategy to parameter spaces that are (suitable subsets
of) higher-dimensional spheres $S^\ell$. We will have to choose perturbations
$h^{(0)}_x, \ldots, h^{(\ell)}_x$ supported on pairwise disjoint open
neighborhoods $U_0, \ldots, U_\ell$ of $\ell + 1$ points $p_0, \ldots, p_\ell
	\in M$. Each of these perturbations $h^{(j)}_x$ will be glued together from
constant (in $x \in S^\ell$) symmetric $2$-tensors using a partition of unity
of $S^\ell$ that depends on $j$. Where $h^{(j)}_x$ coincides with one of the
symmetric $2$-tensors from which it is built (\ie where one of the functions
forming the $j$-th partition of unity is $1$), the metric $g_x + \sum_{j =
		0}^\ell h_x^{(j)}$ does not admit non-zero CKVs around $p_j$. For each $x \in
	S^\ell$, this should be the case for at least one $j$. A family of $\ell +1$
partitions of unity with this property is constructed in
\cref{lemGoodPartition} below.

The construction of these partitions of unity rests on the following refinement
lemma, which is where the dimension of the parameter space enters.

\begin{lemma}[{\cite[Lem.~2.7]{MunkresDiffTop}}]\label{lemDisjointRefinement}
	Let $X$ be a (smooth) $\ell$-dimensional manifold, $\ell \geq 0$. Then for any
	open cover $(V_i)_{i \in I}$ there exist $\ell + 1$ families $(U^0_i)_{i \in
				I_0}, \ldots, (U^\ell_i)_{i \in I_\ell}$ of open subsets such that $U^j_i \cap
		U^j_{i^\prime} = \emptyset$ for all $j = 0, \ldots, \ell$ and $i \neq i^\prime
		\in I_j$, and such that $(U^j_i)_{0 \leq j \leq \ell,\, i \in I_j}$ is a
	locally finite refinement of $(V_i)_{i \in I}$.
\end{lemma}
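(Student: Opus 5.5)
The approach is to prove the refinement lemma with covering dimension: a smooth (second countable, Hausdorff) $\ell$-manifold $X$ has Lebesgue covering dimension $\ell$. Every open cover therefore admits a locally finite open refinement of order at most $\ell+1$, meaning each point lies in at most $\ell+1$ of its members. The $\ell+1$ pairwise disjoint families are then extracted by a nerve/barycentric subdivision argument.

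Concretely, I would proceed in three steps.

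\textbf{Step 1 (refinement of low order).} First take a locally finite open refinement $(W_\alpha)$ of $(V_i)$ of order $\leq \ell+1$. To stay self-contained, I would obtain it from a smooth triangulation of $X$ (Whitehead, Cairns), subdivided finely enough that every closed simplex lies in some $V_i$. The open stars of the vertices form a locally finite open cover whose nerve is the triangulation itself, of dimension $\ell$. Hence at most $\ell+1$ open stars meet at any point, and each star lies in some $V_i$ after a further barycentric subdivision, since then the star of a vertex lies in a small neighborhood of a simplex of the previous stage.

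\textbf{Step 2 (sort into disjoint families).} Take the triangulation $T$ from Step 1 and its barycentric subdivision $T'$. For each simplex $\sigma$ of $T$ with $\dim \sigma = j$, let $U^j_\sigma$ be the open star in $T'$ of the barycenter $\hat\sigma$. Since every point of $X$ lies in the open star of some vertex of $T'$, and the vertices of $T'$ are exactly the barycenters, the sets $U^j_\sigma$ with $0\le j\le \ell$ cover $X$. The family is locally finite because $T$ is. Each $U^j_\sigma$ lies in the open star of any vertex of $\sigma$ in $T$, hence in some $V_i$ once $T$ is fine enough.

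\textbf{Step 3 (disjointness).} Fix $j$ and let $\sigma\neq\sigma'$ both be $j$-simplices. The open stars of $\hat\sigma$ and $\hat\sigma'$ in $T'$ intersect iff $\hat\sigma,\hat\sigma'$ span an edge of $T'$. Edges of $T'$ join barycenters of simplices related by strict inclusion, and two distinct simplices of the same dimension are never so related. Hence $U^j_\sigma\cap U^j_{\sigma'}=\emptyset$, and indexing the families by $I_j=\{j\text{-simplices}\}$ gives the claim. For $\ell = 0$ this is trivial: the points themselves form one disjoint family.

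The main obstacle is the existence of a triangulation fine enough to be subordinate to $(V_i)$, which is a deep classical input. Alternatively, one can avoid triangulations entirely. In that route one works with a proper embedding $X\hookrightarrow\bR^N$ and a cubical decomposition of $\bR^N$ in general position, or uses the dimension-theoretic theorem that $\dim X = \ell$ and then the standard construction of $\ell+1$ discrete families from a cover of order $\ell+1$ (Ostrand's theorem). I would try the triangulation route first, quoting Whitehead, since the combinatorics in Steps 2 and 3 are then elementary.
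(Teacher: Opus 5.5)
The paper gives no proof of this lemma; it quotes it from Munkres, so there is no internal argument to compare with. Your combinatorial core, Steps 2 and 3, is a correct and standard proof:
\begin{itemize}
\item the open stars $U^j_\sigma$ in $T'$ of the barycenters of the $j$-simplices of $T$ cover $X$ and form a locally finite family;
\item for fixed $j$ they are pairwise disjoint, because distinct simplices of the same dimension are never strictly nested;
\item each $U^j_\sigma$ lies in the open star of $\sigma$ in $T$, hence in the open star of any vertex of $\sigma$.
\end{itemize}

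One justification needs repair. Step 2 requires that the open vertex stars of $T$ itself lie in members of the cover. Your Step 1 argument (closed simplices inside members, then ``a further barycentric subdivision'') does not give this. Take $\bR$ with vertices at $\bZ$ and $V_n = (n-0.1,\, n+1.1)$. Every closed edge lies in some $V_n$, yet the star $(n-\tfrac12,\, n+\tfrac12)$ of the vertex $n$ in the barycentric subdivision lies in no $V_m$, and this star is exactly your set $U^0_{\{n\}}$.

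The fix is to choose $T$ so that each closed simplex lies in a member of a star refinement of $(V_i)_{i\in I}$, which exists by paracompactness. Then the closed star of a vertex $v$ lies in the union of the refining sets containing $v$, hence in some $V_i$. Equivalently, you can quote the standard lemma that a locally finite complex admits a subdivision whose vertex stars refine a given open cover. For non-compact $X$ this subdivision is in general not an iterated barycentric subdivision. The order-$(\ell+1)$ refinement built in Step 1 is never used in Steps 2--3 and can be dropped.

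As for cost, the Cairns--Whitehead theorem is much more than the paper needs. The lemma is only applied to $X = S^\ell$, in \cref{lemGoodPartition}. There you can take $T$ to be an iterated barycentric subdivision of the boundary of an $(\ell+1)$-simplex, with mesh below half a Lebesgue number of the cover, and the whole argument becomes elementary.

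The dimension-theoretic route you mention (covering dimension $\ell$, then extracting $\ell+1$ discrete families) matches the perspective of the remark after \cref{lemGoodPartition}, which cites Hurewicz--Wallman. It also works for arbitrary topological manifolds. Your route instead has the advantage of producing the families explicitly.
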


\begin{lemma} \label{lemGoodPartition}
	Let $\ell \geq 0$ and suppose $(V_i)_{i \in I}$ is an open cover of $S^\ell$.
	Then there are $\ell+1$ finite partitions of unity $(\eta^0_i)_{1 \leq i \leq
				N_0}, \ldots, (\eta^\ell_i)_{1 \leq i \leq N_\ell}$ such that
	$((\eta_i^j)^{-1}(1))_{0 \leq j \leq \ell, 1 \leq i \leq N_j}$ is a cover of
	$S^\ell$ and $(\supp(\eta_i^j))_{0 \leq j \leq \ell, 1 \leq i \leq N_j}$
	refines $(V_i)_{i \in I}$.
\end{lemma}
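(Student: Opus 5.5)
We apply \cref{lemDisjointRefinement} to $X = S^\ell$ and then shrink the resulting sets so that suitable bump functions become available. First, by compactness of $S^\ell$, we pass from $(V_i)_{i\in I}$ to a finite subcover. \Cref{lemDisjointRefinement} then yields $\ell+1$ families $(U^j_i)_{i\in I_j}$. Each family consists of pairwise disjoint open sets, and together they form a locally finite refinement of the cover. A locally finite cover of a compact space has a finite subcover, so we may assume each $I_j$ is finite, say $I_j=\{1,\dots,N_j\}$, after discarding superfluous sets. Note that some $U^j_i$ may be empty; this is harmless.

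Next we shrink the cover twice. By the shrinking lemma for finite open covers of a normal space, there are open sets $W^j_i$ with $\overline{W^j_i}\subset U^j_i$ such that the $W^j_i$ still cover $S^\ell$. For each $(j,i)$ we choose a continuous (or smooth) function $\psi^j_i\colon S^\ell\to[0,1]$. We require $\psi^j_i\equiv 1$ on $\overline{W^j_i}$ and $\supp\psi^j_i\subset U^j_i$; this is possible by Urysohn's lemma, or by smooth bump functions. Because the $U^j_i$ for fixed $j$ are pairwise disjoint, the supports of $\psi^j_1,\dots,\psi^j_{N_j}$ are pairwise disjoint. Hence $\sum_i\psi^j_i\le 1$ pointwise.

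The only nontrivial point is that each family must be a partition of unity, so we must add a complementary function. We set $\eta^j_i\coloneqq\psi^j_i$ for $1\le i\le N_j$ and $\eta^j_{N_j+1}\coloneqq 1-\sum_i\psi^j_i$. The trouble is that the support of $\eta^j_{N_j+1}$ is essentially all of $S^\ell$ and need not lie in any $V_i$. To fix this, we cover $S^\ell$ by finitely many members $V_{i_1},\dots,V_{i_r}$ and use an ordinary partition of unity $(\theta_s)$ subordinate to them. We then split the complementary function as $\eta^j_{N_j+s}\coloneqq\theta_s\,(1-\sum_i\psi^j_i)$ for $s=1,\dots,r$. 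Each of these has support in $\supp\theta_s\subset V_{i_s}$, and the functions $\eta^j_1,\dots,\eta^j_{N_j+r}$ sum to $1$. The $\eta^j_i$ with $i\le N_j$ have supports in $U^j_i$, which lie in some $V$ by the refinement property. Thus all supports refine $(V_i)$.

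Finally, the sets $(\eta^j_i)^{-1}(1)$ with $i\le N_j$ contain $\overline{W^j_i}$, and the $W^j_i$ cover $S^\ell$. Hence the level sets $(\eta^j_i)^{-1}(1)$ over all $j$ and $i$ cover $S^\ell$, as required; the extra functions do not need to contribute. I expect the main obstacle to be the bookkeeping in the complementary-function step, namely ensuring the added functions do not violate the refinement condition. The subdivision by $(\theta_s)$ handles this. The disjointness within each family is exactly what makes $\sum_i\psi^j_i\le1$ hold, and hence what allows the $\psi^j_i$ to equal $1$ on the $W^j_i$.
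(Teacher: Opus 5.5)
Your proof is correct. It shares its first step with the paper (the disjoint refinement of \cref{lemDisjointRefinement}, reduced to finitely many sets), but the mechanism that makes some $\eta^j_i$ equal to $1$ is different.

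You shrink the refined cover to open sets $W^j_i$ with $\overline{W^j_i} \subset U^j_i$. You then take plateau functions $\psi^j_i$ equal to $1$ on $\overline{W^j_i}$ with $\supp\psi^j_i \subset U^j_i$. For the support condition, Urysohn needs an intermediate open set, or you can use smooth bump functions as you suggest. Disjointness within each family gives $\sum_i \psi^j_i \leq 1$. You complete to a partition of unity with the fillers $\theta_s(1-\sum_i\psi^j_i)$, spread out by an auxiliary partition of unity subordinate to $(V_i)$.

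The paper instead takes one partition of unity $(\chi^j_i)$ subordinate to the refinement. It uses the closed sets $B_j = \{\sum_{i \in I_j}\chi^j_i \geq \frac{1}{\ell+1}\}$, which cover $S^\ell$ because the $\ell+1$ sums add up to $1$. For each $j$ it replaces the cover by the sets $U^j_i$, $i \in I_j$, together with the sets $U^{j'}_{i} \setminus B_j$ for $j' \neq j$ and $i \in I_{j'}$. Every point of $B_j$ then lies in exactly one member of this cover. Hence \emph{any} partition of unity subordinate to it equals $1$ at that point.

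What each buys:
\begin{itemize}
\item Your route makes the sets where $\eta^j_i = 1$ explicit: they contain a shrinking of the refinement. The price is the shrinking lemma, plateau functions and a separate filler.
\item The paper's route needs nothing beyond the existence of subordinate partitions of unity. The role of your filler is played by the sets of the other families with $B_j$ removed. It also produces all $\ell+1$ partitions of unity indexed by the same finite set.
\end{itemize}
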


\begin{proof}
	Let $(U^j_i)_{0 \leq j \leq \ell,\, i \in I_j}$ be a refinement of $(V_i)_{i
				\in I}$ as in \cref{lemDisjointRefinement}. It is locally finite and $S^\ell$
	is compact, so we may assume that all the $I_j$ are finite. We let
	$(\chi^j_i)_{0 \leq j \leq \ell,\, i \in I_j}$ be a partition of unity
	subordinate to it and we set
	\[
		\chi_j \coloneqq \sum_{i \in I_j} \chi^j_i,
		\qquad
		B_j \coloneqq \chi_j^{-1}\Bigl(\Bigl[\tfrac{1}{\ell+1},1\Bigr]\Bigr),
	\]
	so that $B_j$ is closed and $\sum_{j = 0}^\ell \chi_j = 1$.

	Write $I^\prime \coloneqq \coprod_{j = 0}^\ell I_j$, a finite set, and for $i
		\in I^\prime$ let $j(i)$ be the unique index with $i \in I_{j(i)}$. For every
	$j$ we consider the open sets
	\begin{equation}\label{eqCovSets}
		W^j_i \coloneqq
		\begin{cases}
			U^j_i & \text{if } j(i) = j, \\ U^{j(i)}_i \setminus B_j\ \, &
			   \text{otherwise,}
		\end{cases}
		\qquad i \in I^\prime.
	\end{equation}
	They cover $S^\ell$. Indeed, if $x \notin B_j$, then any $U^{j(i)}_i$
	containing $x$ provides such a set, while if $x \in B_j$, then $\chi_j(x) > 0$,
	so that $\chi^j_i(x) > 0$ and $x \in U^j_i$ for some $i \in I_j$. We let
	$(\eta^j_i)_{i \in I^\prime}$ be a partition of unity subordinate to
	$(W^j_i)_{i \in I^\prime}$. Identifying $I^\prime$ with $\{1, \ldots,
		|I^\prime|\}$, we obtain partitions of unity indexed as in the statement, with
	$N_j = |I^\prime|$ for all $j$.

	The condition on the supports is satisfied, because $\supp(\eta^j_i) \subset
		W^j_i \subset U^{j(i)}_i$ and $(U^j_i)_{j,i}$ refines $(V_i)_{i \in I}$. There
	remains to see that every $x \in S^\ell$ lies in $(\eta^j_i)^{-1}(1)$ for some
	$i$ and $j$. As $\sum_{j = 0}^\ell \chi_j(x) = 1$, there is a $j$ with
	$\chi_j(x) \geq \frac{1}{\ell+1}$, that is, with $x \in B_j$. Exactly one of
	the sets \eqref{eqCovSets} then contains $x$: it is none of the $W^j_i$ with
	$j(i) \neq j$, since these avoid $B_j$, and the others are the $U^j_i$ with $i
		\in I_j$, which are pairwise disjoint. All the $\eta^j_{i^\prime}$ but one
	therefore vanish at $x$, and the remaining one equals $1$ there.
\end{proof}

\begin{remark}
	\begin{itemize}
		\item The proof uses nothing about $S^\ell$ beyond the fact that it is a compact
		      manifold of dimension $\ell$, and the statement holds in that generality.
		\item 	The open cover $((\eta_i^j)^{-1}((\frac{1}{2},1]))_{0 \leq j \leq \ell, 1 \leq
					      i \leq N_j}$ has the property that every point $x \in S^\ell$ is contained in
		      at most $\ell + 1$ of these sets. In fact, all of them must have different
		      $j$-index, because if $\eta^j_{i_0}(x) > \frac{1}{2}$ for some $i_0$ and $j$,
		      then $\eta^j_i(x) < \frac{1}{2}$ for all $i \neq i_0$. We may compare this with
		      a central result of dimension theory which states that a separable metrizable
		      topological space has dimension $\leq \ell$ if and only if every finite open
		      cover admits a refinement with the property that no more than $\ell + 1$ of its
		      open sets have a non-empty intersection \cite[Thm.~V 8]{HurewiczWallman}. For
		      manifolds, the notion of dimension used there coincides with the usual one
		      \cite[Cor.~1 of Thm.~IV 3]{HurewiczWallman}, so the sphere $S^\ell$ has
		      dimension $\ell$ and the $((\eta_i^j)^{-1}((\frac{1}{2},1]))_{0 \leq j \leq
					      \ell, 1 \leq i \leq N_j}$ form a refinement of $(V_i)_{i \in I}$ as guaranteed
		      by dimension theory. In this sense, \cref{lemGoodPartition} enhances the
		      conclusion in the case of compact manifolds.

	\end{itemize}
\end{remark}

\begin{proof}[Proof of \cref{thmReductionToNCK}]
	Let $A \subset \Omega \subset S^\ell$ be as in the statement, $g \colon (\Omega,A) \to (\Met(M), \NCK(M))$ be a map of pairs and $\mathcal{U}$ a $C^\infty$-neighborhood of $g$.
	Since $\Omega$ is compact, after possibly shrinking $\mathcal{U}$ we may assume that it consists of all $\tilde{g}$ with $\max_{x \in \Omega} \|\tilde{g}_x - g_x\|_{C^m} < C$ for a fixed $C^m$-norm on $\Gamma(S_2 M)$ with $m \geq 5$ (so that the $C^K$-stability of \cref{propLocalDefNCK} applies below in either dimension range).

	We first note that a compact Riemannian manifold $(M,g)$ has no non-zero CKV
	if and only if
	\begin{align*}
		\mu(g) \coloneqq \inf_{0 \neq Y \in H^1(M,TM)} \frac{\|\bL^{g} Y\|^2_{L^2(M,g)}}{\|Y\|^2_{H^1(M,g)}} > 0,
	\end{align*}
	see the proof of \cref{propSplitting}. The argument given in the proof of
	\cref{propTTBundle} for the lower semicontinuity of $\mu_V$ applies verbatim
	to $\mu$. Hence, if $g$ has no non-zero CKV, then $\mu(\tilde{g}) > 0$ for all
	$\tilde{g}$ in a $C^1$-neighborhood of $g$, so there are no CKVs for all those
	$\tilde{g}$. Since $A$ is
	compact, we may assume that $m$ and $C$ above are chosen such that for any $a
		\in A$ all metrics $\tilde{g}$ with $\|\tilde{g} - g_a\|_{C^m} < C$ admit no
	CKVs. For each $x_0 \in A$ we may now choose an open neighborhood $V_{x_0}$ of
	$x_0$ in $\Omega$ such that $\|g_x - g_{x_0}\|_{C^m} < \frac{C}{2}$ for all $x \in
		V_{x_0}$ (and in particular $g_x \in \NCK(M)$).

	We now define open neighborhoods $V_{x_0}$ in $\Omega \setminus A$ for all $x_0 \in
		\Omega \setminus A$. To do so, we first choose once and for all $\ell + 1$ different
	points $p_0, \ldots, p_\ell \in M$ with corresponding open neighborhoods $U_0,
		\ldots, U_\ell$ that we assume to be disjoint. Then we apply
	\cref{propLocalDefNCK} to the Riemannian manifold $(M,g_{x_0})$, $x_0 \in \Omega
		\setminus A$, for the point $p_j$ with open neighborhood $U_j$ and with
	$\epsilon = \frac{C}{2(\ell+1)}$. We obtain $h^j_{x_0}$ with $\supp(h^j_{x_0}) \subset U_j$
	and $\|h^j_{x_0}\|_{C^m} < \frac{C}{2(\ell+1)}$ such that $g_{x_0} + h^j_{x_0}$
	has no local conformal Killing vector around $p_j$. By the
	$C^K$-stability in \cref{propLocalDefNCK} and the continuity of $x \mapsto
		g_x$, the same holds for $g_x + h^j_{x_0}$ for all $x$ in an open neighborhood
	$V^j_{x_0} \subset \Omega \setminus A$ of $x_0$. We now define $V_{x_0} \coloneqq
		\bigcap_{j=0}^\ell V^j_{x_0}$.

	By construction, $(V_{x_0})_{x_0 \in \Omega}$ defines an open cover of $\Omega$. Since
	every open subset $V$ of $\Omega$ may be written as $V = \tilde{V} \cap \Omega$ for an
	open subset $\tilde{V}$ of $S^\ell$, this open cover gives rise to an open
	cover $(\tilde{V}_{x_0})_{x_0 \in \Omega \dotcup \pt}$ of $S^\ell$, where
	$\tilde{V}_\ast = S^\ell \setminus \Omega$. For this open cover, we obtain $\ell +
		1$ partitions of unity as in \cref{lemGoodPartition}. For every $i$ and $j$, we
	choose some $x_{ij} \in \Omega \dotcup \pt$ such that $\supp(\eta^j_i) \subset
		\tilde{V}_{x_{ij}}$. Further, if $x_{ij} \in \Omega \setminus A$, we set
	$h_{ij} \coloneqq h^j_{x_{ij}}$. If $x_{ij} \in A \dotcup
		\pt$, we set $h_{ij} = 0$. We now define for $x \in \Omega$ and $t \in [0,1]$
	\begin{gather*}
		H(x,t) \coloneqq g_x + t \sum_{j = 0}^\ell \sum_{i = 1}^{N_j} \eta_i^j(x)
		h_{ij}
	\end{gather*}
	and claim that this satisfies all the required properties.

	First of all, it is clear that $H(x,0) = g_x$ for all $x \in \Omega$. Secondly, let
	$a \in A$ and consider any $i, j$ such that $\eta_i^j(a) \neq 0$. Then
	$\supp(\eta_i^j) \cap \Omega \not\subset \Omega \setminus A$ and thus $x_{ij}$ cannot lie
	in $\Omega \setminus A$: by construction, $\tilde{V}_{x_0} \cap \Omega = V_{x_0} \subset
		\Omega \setminus A$ for all $x_0 \in \Omega \setminus A$. This implies that $h_{ij} = 0$
	and thus $H(a,t) = g_a$ for all $t \in [0,1]$, since $i, j$ were arbitrary with
	$\eta_i^j(a) \neq 0$. Third, let $x \in \Omega$ and $i,j$ such that $\eta^j_i(x) =
		1$. Note that such indices exist, because $((\eta_i^j)^{-1}(1))_{0 \leq j \leq
				\ell, 1 \leq i \leq N_j}$ covers $S^\ell$. Since $x \in \supp(\eta^j_i)$ and
	$\tilde{V}_\ast \cap \Omega = \emptyset$, we have $x_{ij} \neq \ast$ and $x \in
		\tilde{V}_{x_{ij}} \cap \Omega = V_{x_{ij}}$. Moreover, $\eta^j_{i^\prime}(x) = 0$
	for all $i^\prime \neq i$, since the $\eta^j_{i^\prime}$ form a partition of
	unity. If $x_{ij} \in A$, then $h_{ij} = 0$
	and we use the estimate
	\begin{align*}
		\|H(x,1) - g_{x_{ij}}\|_{C^m} & \leq \|g_x - g_{x_{ij}}\|_{C^m} + \|H(x,1) - g_x\|_{C^m}      \\
		                              & \leq \|g_x - g_{x_{ij}}\|_{C^m}+ \sum_{\substack{j^\prime = 0 \\ j^\prime \neq j}}^{\ell} \sum_{i^\prime=1}^{N_{j^\prime}} \eta_{i^\prime}^{j^\prime}(x) \|h_{i^\prime j^\prime}\|_{C^m} \\
		                              & < \frac{C}{2} + \ell \frac{C}{2(\ell+1)} \leq C
	\end{align*}
	to conclude that $H(x,1) \in \NCK(M)$. If $x_{ij} \in \Omega \setminus A$, we use
	the general observation that $H(x,1)_{|U_j} = (g_x + h_{ij})_{|U_j}$. This is
	because $\eta^j_{i^\prime}(x) = 0$ for $i^\prime \neq i$, $\supp(h_{i^\prime
				j^\prime}) \subset U_{j^\prime}$ for all $i^\prime,
		j^\prime$ by construction and $U_j \cap U_{j^\prime} = \emptyset$ if $j^\prime
		\neq j$. But $x \in V_{x_{ij}} \subset V^j_{x_{ij}}$, so $g_x + h_{ij}$ has no local
	CKVs around $p_j$, and this implies that $H(x,1) \in \NCK(M)$ as well. Thus we obtain $H(x,1) \in \NCK(M)$ for all
	$x \in \Omega$. Finally, we have that
	\begin{gather*}
		\|H(x,t) - g_x\|_{C^m} \leq t \sum_{j = 0}^\ell \sum_{i=1}^{N_j} \eta_i^j(x) \|h_{ij}\|_{C^m} \leq (\ell+1) \frac{C}{2(\ell+1)} < C
	\end{gather*}
	for all $x \in \Omega$, so $H(-,t) \in \mathcal{U}$ for all $t \in [0,1]$.
\end{proof}

\section{The bundle of TT-tensors}\label{secTTtensors}
We recall that, given a Riemannian manifold $(M, g)$, a TT-tensor $\sigma$ is a
symmetric traceless 2-tensor such that $\divg^g \sigma = 0$. These tensors have
become of great interest in the study of initial data in general relativity
because of the York decomposition of symmetric traceless 2-tensors that we
recall in \cref{propSplitting} below. The main goal of this section is to
choose, for every Riemannian metric $g$ without conformal Killing vectors, a
non-zero TT-tensor $\sigma(g)$ with respect to $g$, depending continuously on
$g$.

\begin{proposition}\label{propSplitting}
	Let $(M, g)$ be a closed Riemannian manifold. Then every smooth symmetric
	traceless 2-tensor $T$ admits a decomposition
	\[
		T = \sigma + \bL^g W,
	\]
	where $W$ is a smooth vector field, $\sigma$ is a smooth TT-tensor and the
	conformal Killing operator $\bL$ is defined as in~\eqref{eqConfKillingOp}. The
	summands $\sigma$ and $\bL^g W$ are unique, and $W$ is unique up to the
	addition of a conformal Killing vector. Further, this decomposition is
	$L^2$-orthogonal in the following sense: for any vector field $X$ and any
	TT-tensor $\sigma$,
	\begin{equation}\label{eqOrthogonality}
		\int_M \< \sigma, \bL^g X\>_g \upd\mu^g = 0.
	\end{equation}
\end{proposition}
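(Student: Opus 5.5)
The plan is to obtain the York decomposition by solving the vector Laplacian equation $\bL^{g,*}\bL^g W = \bL^{g,*} T$, where $\bL^{g,*}$ is the formal $L^2$-adjoint of $\bL^g$. I will carry this out in four steps: identify the adjoint, prove orthogonality, establish ellipticity and solvability, then read off the decomposition and its uniqueness.

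\emph{The adjoint and orthogonality.} For a traceless symmetric $2$-tensor $S$, integrating by parts gives
\[
	\int_M \< S, \bL^g X\>_g \upd\mu^g = \int_M \< S, \lie_X g\>_g \upd\mu^g = 2\int_M S^{ab}\nabla_a X_b \,\upd\mu^g = -2\int_M (\divg^g S)(X)\,\upd\mu^g.
\]
The first equality holds because the trace part of $\bL^g X$ pairs trivially with the traceless $S$. Hence on traceless tensors $\bL^{g,*} S = -2\divg^g S$, and \eqref{eqOrthogonality} follows immediately, since $\divg^g\sigma = 0$ for a TT-tensor $\sigma$.

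\emph{Ellipticity and solvability.} The key analytic input is that $P \coloneqq \bL^{g,*}\bL^g = -2\divg^g \bL^g$ is a second-order linear elliptic operator on vector fields. For this I will compute its principal symbol: for $\zeta \neq 0$,
\[
	\sigma_P(\zeta)X = 2\Bigl(|\zeta|^2 X + \bigl(1-\tfrac{2}{n}\bigr)\<\zeta,X\>\zeta\Bigr),
\]
up to sign conventions. This symbol is positive definite because $1 - \tfrac{2}{n} > -1$. The operator $P$ is also formally self-adjoint, and
\[
	\<PX,X\>_{L^2} = \|\bL^g X\|_{L^2}^2,
\]
so $\ker P = \ker \bL^g$, which is the space of conformal Killing vectors (a finite-dimensional space). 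Standard elliptic theory on closed manifolds then gives that $P$ is Fredholm from $H^{k+2}$ to $H^k$, with $L^2$-orthogonal splitting $\Gamma(TM) = \ker P \oplus \Im P$ in the smooth category, by elliptic regularity. The right-hand side $\bL^{g,*}T = -2\divg^g T$ is $L^2$-orthogonal to $\ker P$: for any CKV $Y$ we have $\int \<\bL^{g,*}T, Y\> = \int \<T, \bL^g Y\> = 0$. So there is a smooth $W$ with $PW = \bL^{g,*} T$.

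\emph{The decomposition and uniqueness.} Set $\sigma \coloneqq T - \bL^g W$. It is symmetric and traceless because $T$ and $\bL^g W$ are, and $\divg^g \sigma = -\tfrac12 \bL^{g,*}\sigma = -\tfrac12(\bL^{g,*}T - PW) = 0$. For uniqueness, suppose $\sigma + \bL^g W = \sigma' + \bL^g W'$. Pairing the difference $\sigma - \sigma' = \bL^g(W'-W)$ with itself and applying \eqref{eqOrthogonality} gives $\|\bL^g(W'-W)\|_{L^2}^2 = 0$. Hence $\sigma = \sigma'$, $\bL^g W = \bL^g W'$, and $W' - W$ is a conformal Killing vector.

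The only step requiring care is the ellipticity computation and the invocation of Fredholm theory and regularity. Everything else is integration by parts.
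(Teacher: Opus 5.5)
Your argument is correct, but it takes a different route from the paper. You solve $\bL^{g,*}\bL^g W=\bL^{g,*}T$ using standard elliptic theory. The symbol computation shows that $P=\bL^{g,*}\bL^g$ is elliptic, and the Fredholm alternative for the formally self-adjoint $P$ gives solvability exactly when the right-hand side is $L^2$-orthogonal to $\ker P=\ker\bL^g$. Elliptic regularity then gives smoothness.

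The paper instead works variationally. It restricts to the closed subspace $\Hring^1(M,TM)$ of $H^1$ vector fields that are $L^2$-orthogonal to the conformal Killing vectors. By a compactness argument (Rellich plus the Bochner formula for $\bL^g$), it proves that the coercivity constant $\mu_V(g)$ of \eqref{eqDefMuV} is positive. It then obtains $W$ as a weak solution via Lax--Milgram and upgrades it to a smooth one by elliptic regularity.

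Your version is shorter. It also makes the orthogonality \eqref{eqOrthogonality} explicit by integration by parts, which the paper leaves implicit.

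The paper's version avoids Fredholm theory. More importantly, it produces the quantitative constant $\mu_V(g)$, which is reused later in two places:
\begin{itemize}
\item In the proof of \cref{thmReductionToNCK}, the characterization $g\in\NCK(M)\Leftrightarrow\mu(g)>0$, together with lower semicontinuity, gives a $C^1$-neighborhood of each such metric without conformal Killing vectors.
\item In \cref{propTTBundle}, the locally uniform lower bound $\|\bL^g V\|_{L^2,g}\geq c\|V\|_{H^1,g}$ required by \cref{corSubbundle1} comes from the lower semicontinuity of $g\mapsto\mu_V(g)$.
\end{itemize}
The Lax--Milgram argument also applies verbatim to $T\in L^2$ with $W\in H^1$, which is the fiberwise picture behind the Hilbert bundle $\TT(M)$.

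Your Fredholm argument is essentially what the paper itself uses later in \cref{corTT}, applied to the operators $P_{g,s}$ to obtain continuous dependence of $W_g$ on $g$.
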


As the construction of the splitting will play an important role in what
follows, we give an explicit proof of it:

\begin{proof}
	Let $\Hring^1(M, TM)$ denote the set of vector fields $X \in H^1(M, TM)$
	orthogonal to all the conformal Killing vectors:
	\[
		\Hring^1(M, TM) \coloneqq \left\{X \in H^1(M, TM)\;\middle|\; \forall Y \in \mathfrak{c}(M, g)\colon \int_M \left\<X, Y\right\>_g \upd\mu^g = 0\right\},
	\]
	where $\mathfrak{c}(M, g)$ denotes the subspace of conformal Killing vectors of
	$(M, g)$:
	\[
		\mathfrak{c}(M, g) = \left\{Y \in H^1(M, TM)\;\middle|\; \bL^g Y \equiv 0\right\}.
	\]
	We write $V \coloneqq \Hring^1(M, TM)$ for short, and we first claim that the
	coercivity constant of $\bL^g$ over $V$,
	\begin{equation}\label{eqDefMuV}
		\mu_V(g) \coloneqq \inf_{\substack{X \in V\\ X \not\equiv 0}} \frac{\int_M |\bL^g X|_g^2 \upd\mu^g}{\|X\|_{H^1}^2},
	\end{equation}
	is strictly positive. The argument goes by contradiction. Assume that $\mu_V(g)
		= 0$. Then there exists a sequence $(X_k)_{k \geq 1}$ of elements of
	$\Hring^1(M, TM)$ such that $\|X_k\|_{H^1} = 1$ for all $k \geq 1$ and such
	that
	\[
		\int_M \left|\bL^g X_k\right|_g^2 \upd\mu^g \to_{k \to \infty} 0.
	\]
	As the embedding $H^1(M, TM) \hookrightarrow L^2(M, TM)$ is compact, we can
	assume without loss of generality that the sequence $(X_k)_{k \geq 1}$
	converges to some $X_\infty$ for the $L^2$-norm.

	We now use the Bochner formula for the conformal Killing operator (see
	\eg\cite[Ch.~2, \S~6]{Yano}). For any $k, \ell \geq 1$, we have
	\[
		\begin{aligned}
			 & \frac{1}{2}\int_M \left|\bL^g (X_k - X_\ell)\right|_g^2 \upd\mu^g                                                                                            \\
			 & \qquad = \int_M \left[\left|\nabla (X_k - X_\ell)\right|_g^2 + \frac{n-2}{n} (\divg^g (X_k - X_\ell))^2 - \ric(X_k - X_\ell, X_k - X_\ell)\right] \upd\mu^g.
		\end{aligned}
	\]
	In particular, we get
	\begin{align*}
		\int_M \left|\nabla (X_k - X_\ell)\right|_g^2 \upd\mu^g
		 & \leq \int_M \left|\bL^g (X_k - X_\ell)\right|_g^2 \upd\mu^g + \int_M \ric(X_k - X_\ell, X_k - X_\ell) \upd\mu^g  \\
		 & \leq 2 \left(\int_M \left|\bL^g X_k\right|_g^2 \upd\mu^g + \int_M \left|\bL^g X_\ell\right|_g^2 \upd\mu^g\right) \\
		 & \qquad + \|\ric\|_{L^\infty} \|X_k - X_\ell\|_{L^2}^2.
	\end{align*}
	From this last inequality, we see that the sequence $(X_k)_{k \geq 1}$ is
	Cauchy in $H^1(M, TM)$. By the uniqueness of the limit of a sequence in $L^2(M,
		TM)$, we get that $X_\infty \in H^1(M, TM)$ and $X_k \to_{k \to \infty}
		X_\infty$ for the $H^1$-norm. In particular, $\|X_\infty\|_{H^1} = \lim_{k \to
			\infty} \|X_k\|_{H^1} = 1$ so $X_\infty \not\equiv 0$. The quadratic form
	\begin{equation}\label{eqQuadraticForm}
		X \mapsto \int_M \left|\bL^g X\right|_g^2 \upd\mu^g
	\end{equation}
	being continuous over $H^1(M, TM)$, we have that
	\[
		\int_M \left|\bL^g X_\infty\right|_g^2 \upd\mu^g = \lim_{k \to \infty} \int_M \left|\bL^g X_k\right|_g^2 \upd\mu^g = 0.
	\]
	As a consequence, $\bL^g X_\infty \equiv 0$ a.\,e., meaning that $X_\infty$ is
	a conformal Killing vector: $X_\infty \in \mathfrak{c}(M, g)$. But the subspace
	$\Hring^1(M, TM)$ is closed so $X_\infty \in \Hring^1(M, TM)$ as it is the
	limit of elements in $\Hring^1(M, TM)$. Hence,
	\[
		X_\infty \in \mathfrak{c}(M, g) \cap \Hring^1(M, TM) = \{0\}.
	\]
	This is the desired contradiction as $\|X_\infty\|_{H^1} = 1$. In particular,
	we have shown that the quadratic form given by~\eqref{eqQuadraticForm} is
	coercive over $\Hring^1(M, TM)$.

	The existence of the York decomposition is then obtained by the Lax-Milgram
	theorem: given $T \in L^2(M, \Sring_2 M)$, the linear form
	\[
		X \mapsto \int_M \left\<T, \bL^g X\right\>_g \upd\mu^g
	\]
	is continuous over $\Hring^1(M, TM)$ so there exists a unique $W \in
		\Hring^1(M, TM)$ such that, for all $X \in \Hring^1(M, TM)$, we have
	\[
		\int_M \left\<\bL^g X, \bL^g W\right\>_g \upd\mu^g = \int_M \left\<\bL^g X, T\right\>_g \upd\mu^g.
	\]
	In particular,
	\[
		\forall X \in \Hring^1(M, TM)\colon \int_M \left\<\bL^g X, T - \bL^g W\right\>_g \upd\mu^g = 0.
	\]
	Note that the previous equality is trivially true if $X \in \mathfrak{c}(M, g)$
	as, in this case, we have that $\bL^g X = 0$. So, by linearity, we can extend
	it to $X \in H^1(M, TM) = \mathfrak{c}(M, g) \oplus \Hring^1(M, TM)$. Hence,
	for any vector field $X \in H^1(M, TM)$, we have
	\[
		\int_M \left\<\bL^g X, T - \bL^g W\right\>_g \upd\mu^g = 0.
	\]
	Thus, integrating by parts, we see that, in the weak sense, $\divg^g(T - \bL^g
		W) = 0$. By elliptic regularity, we get that $W$ is a smooth vector field.
	Hence, $\sigma \coloneqq T - \bL^g W$ is divergence-free and traceless, \ie a
	TT-tensor. Finally, if $T = \sigma' + \bL^g W'$ is another such decomposition,
	then $\sigma - \sigma' = \bL^g(W' - W)$ is a TT-tensor that is $L^2$-orthogonal
	to itself by~\eqref{eqOrthogonality}. Hence $\sigma = \sigma'$, and $W' - W$ is
	a conformal Killing vector.
\end{proof}

The constant $\mu_V(g)$ defined by~\eqref{eqDefMuV} in the proof of
\cref{propSplitting} is not the same as the Sobolev-like one that is usually
defined in this context (see \eg \cite{DahlGicquaudHumbert}), but it is more
suited for our needs.

Note that the orthogonality condition~\eqref{eqOrthogonality} gives a weak
notion of a TT-tensor, which was also used in the course of the proof of
\cref{propSplitting}:
\begin{definition}\label{defWeakTT}
	Let $(M, g)$ be a given closed Riemannian manifold. A symmetric 2-tensor
	$\sigma \in L^2(M, S_2 M)$ is called \emph{(weakly) transverse and traceless},
	\ie a \emph{(weak) TT-tensor}, if $\tr^g(\sigma) \equiv 0$ and, for any vector
	field $X \in \Gamma(TM)$, we have
	\[
		\int_M \left\< \sigma, \bL^g X\right\>_g \upd\mu^g = 0.
	\]
\end{definition}

As the proof of \cref{propSplitting} indicates, metrics admitting non-zero
conformal Killing vectors deserve particular care for the construction of
York's splitting. However, it has been proven in the previous section that
these metrics are so rare that removing them has no influence on the homotopy
groups of the spaces of metrics and of CMC initial data sets that we consider
(\cf \cref{corReductionToNCK}). Recall that $\NCK(M)$ denotes the set of
metrics having no non-zero conformal Killing vector.

\begin{definition}\label{defTTFiber}
	Let $M$ be a closed manifold. For any metric $g \in \Met(M)$, we let $\TT(M,
		g)$ denote the set of (weak) TT-tensors of $(M, g)$:
	\[
		\TT(M,g) = \left\{\sigma \in L^2(M, S_2 M) \;\middle|\; \tr^g(\sigma) \equiv 0 ~\text{and}~ \forall X \in \Gamma(TM) \colon \int_M \left\<\sigma, \bL^g X\right\>_g \upd\mu^g = 0\right\}.
	\]
\end{definition}

The continuous assignment $\NCK(M) \ni g \mapsto \sigma(g)$ we are looking for
may be viewed as a non-vanishing section of the following bundle.
\begin{proposition}\label{propTTBundle}
	Let $M$ be a closed manifold of dimension $n \geq 3$. The space $\TT(M)$
	defined by
	\[
		\TT(M) \coloneqq \coprod_{g \in \NCK(M)} \TT(M, g)
	\]
	is a vector subbundle of the trivial Hilbert bundle $\NCK(M) \times L^2(M,
		S_2M)$ over $\NCK(M)$.
\end{proposition}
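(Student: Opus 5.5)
We will exhibit $\TT(M)$ locally as the image of a continuous family of projections. Fix $g_0 \in \NCK(M)$. For every $g$ we have the $L^2(M,S_2M)$-orthogonal projection (with respect to a fixed reference inner product, say that of $g_0$) onto the closed subspace of $g$-traceless tensors. It is given pointwise by $T \mapsto T - \frac1n \tr^g(T)\, g$, corrected to be orthogonal for the reference product. Composing with the York projection of \cref{propSplitting}, extended to $L^2$, gives an operator
\[
	P_g \colon L^2(M,S_2M) \lto L^2(M,S_2M), \qquad P_g T \coloneqq T^\circ - \bL^g W_g(T^\circ),
\]
where $T^\circ$ is the $g$-traceless part of $T$ and $W_g(T^\circ) \in H^1(M,TM)$ is the Lax--Milgram solution. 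Since $g \in \NCK(M)$, $\Hring^1 = H^1$. This $P_g$ is a bounded idempotent with image exactly $\TT(M,g)$: weak TT-tensors are fixed because $W=0$ then solves the variational problem, and the image is weakly TT by construction. A standard criterion then applies: if $g \mapsto P_g$ is continuous in operator norm near $g_0$, then $P_{g_0}$ restricted to $\TT(M,g)$ is an isomorphism onto $\TT(M,g_0)$ for $g$ close to $g_0$ (as $\|P_g-P_{g_0}\|<1/\|P_{g_0}\|$-type estimates give invertibility of $1 - (P_g - P_{g_0})$-compositions). This yields local trivializations $\NCK(M)\times \TT(M,g_0) \supset U\times\TT(M,g_0) \to \TT(M)|_U$, $(g,\sigma)\mapsto P_g\sigma$, which are continuous, linear on fibers and inverted by $P_{g_0}$ up to a continuous invertible family. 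That is exactly the subbundle property.

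The key steps are as follows. (i) The pointwise traceless projection depends continuously on $g$ in operator norm, even for $C^0$-convergence of metrics, and so do the $L^2$ norms induced by $g$ compared to those of $g_0$. (ii) The operators $\bL^g \colon H^1 \to L^2$ depend continuously on $g$ in operator norm for $C^1$-convergence, since $\bL^g X$ involves first derivatives of $X$ and Christoffel symbols of $g$. (iii) The solution operator $T \mapsto W_g(T)$ of $\int\<\bL^g X,\bL^g W\>\upd\mu^g = \int \<\bL^g X, T\>\upd\mu^g$ depends continuously on $g$. Writing the Lax--Milgram operators $A_g \colon H^1 \to (H^1)^*$, we have $\|A_g - A_{g_0}\| \to 0$ by (ii), and $\|A_g^{-1}\| \le 1/\mu_V(g)$. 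Continuity of inversion then gives $\|A_g^{-1} - A_{g_0}^{-1}\|\to 0$, provided $\mu_V$ is bounded below near $g_0$.

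The main obstacle is therefore the locally uniform lower bound on the coercivity constant, i.e. the lower semicontinuity of $\mu_V = \mu$ on $\NCK(M)$ already alluded to in the proof of \cref{thmReductionToNCK}. We will prove this directly. The Rayleigh quotient $\|\bL^g X\|^2_{L^2(g)}/\|X\|^2_{H^1(g)}$ differs from the $g_0$ one by a factor $1+O(\|g-g_0\|_{C^1})$ plus an additive error $O(\|g-g_0\|_{C^1})$, uniformly in $X$. This is because $\bL^g X - \bL^{g_0}X$ is bounded by $C\|g-g_0\|_{C^1}\|X\|_{H^1}$ and the volume forms and norms are uniformly equivalent. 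Hence $\mu(g) \ge \mu(g_0) - C\|g-g_0\|_{C^1} \ge \mu(g_0)/2$ on a $C^1$-neighborhood, which also shows $\NCK(M)$ is open. With this bound, steps (i)--(iii) combine to give norm continuity of $g\mapsto P_g$, and the subbundle structure follows.
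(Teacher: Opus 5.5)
Your argument is correct, but it follows a different route from the paper's. The paper never writes down a projection onto $\TT(M,g)$. It first shows that the $g$-tracefree tensors and the image of $\bL$ form subbundles via \cref{corSubbundle1}: an injective family with locally uniform two-sided bounds, norm-continuous in trivializations, has a subbundle as image. It then obtains $\TT(M)$ as an orthogonal complement via \cref{lemSubbundle2}, which uses the connectedness of $\NCK(M)$ to keep the typical fiber fixed. Its analytic input is the lower semicontinuity of $\mu_V$, proved by a sequence argument.

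You instead combine the trace projection and York's decomposition into one norm-continuous family of idempotents $P_g$. With the $g$-pointwise tracefree part, $P_g$ is exactly the $L^2(g)$-orthogonal projection onto $\TT(M,g)$. You then invoke the perturbation lemma for idempotents, which is the same $PQ+(1-P)(1-Q)$ device the paper uses inside \cref{lemSubbundle2}.

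This is more direct, and it shows that your ``main obstacle'' is not a separate step. Once $g\mapsto A_g$ is norm-continuous and $A_{g_0}$ is invertible, openness of the invertible operators and continuity of inversion already give, for $g$ near $g_0$, that $A_g$ is invertible and $A_g^{-1}\to A_{g_0}^{-1}$. Hence the locally uniform coercivity and the openness of $\NCK(M)$ come for free. Your estimate $\mu(g)\geq\mu(g_0)-C\|g-g_0\|_{C^1}$ is a cleaner version of the paper's argument, but it is not needed.

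Two routine points remain to match \cref{defHilbertBundle} literally:
\begin{enumerate}
\item Your trivializations $(g,\sigma)\mapsto P_g\sigma$ are not fiberwise isometric, so you must orthonormalize them. This is what \cref{lemUniformisation} does when applied to the pulled-back scalar products $(\sigma,\sigma')\mapsto\langle P_g\sigma,P_g\sigma'\rangle$ on $\TT(M,g_0)$.
\item You must note that all fibers are isometric to one fixed Hilbert space. This holds because each $\TT(M,g)$ is separable and infinite-dimensional, or alternatively by the connectedness of $\NCK(M)$.
\end{enumerate}
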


Before we can prove this proposition, we have to make clear what we mean by a
Hilbert bundle and establish some facts about their subbundles. Note that some
care is needed here due to the fact that the bundle of TT-tensors is infinite
dimensional and several topologies exist on the set $\cL(H)$ of continuous
linear maps if $H$ is an infinite-dimensional Hilbert space, see
\eg\cite[Ch.~IX, \S~5]{Conway}. The definition we take here is the strongest
possible one:

\begin{definition}\label{defHilbertBundle}
	Let $E$ and $B$ be topological spaces and $\pi_E \colon E \to B$ be a
	continuous map, let also $H$ be a (real) Hilbert space. We say that the triple
	$(E, B, \pi_E)$ is a \emph{Hilbert bundle with typical fiber $H$} if
	\begin{enumerate}
		\item For each $x \in B$, the set $E_x \coloneqq \pi_E^{-1}(x)$ is endowed with the
		      structure of a Hilbert space.
		\item There exists an open covering $(U_i)_{i \in I}$ of $B$ together with
		      homeomorphisms $\phi_i \colon \pi_E^{-1}(U_i) \to U_i \times H$ such that the
		      first component of $\phi_i$ is equal to $\pi_E$ and for each $x \in U_i$, the
		      second component of $\phi_i$ induces an isometric linear isomorphism from $E_x$
		      to $H$.
		\item For each $i, j \in I$, the transition functions $\phi_i \circ \phi_j^{-1}$
		      mapping $(U_i \cap U_j) \times H$ onto itself can be written as
		      \[
			      \phi_i \circ \phi_j^{-1} (x, v) = (x, g_{ij}(x, v))
		      \]
		      where, for each $x \in U_i \cap U_j$, $g_{ij}(x, \cdot) \in \mathrm{O}(H)$
		      where $\mathrm{O}(H) \subset \cL(H)$ denotes the orthogonal group with respect
		      to the scalar product $\<\cdot, \cdot\>$ and the mapping $x \mapsto g_{ij}(x,
			      \cdot)$ is continuous for the operator norm topology.
	\end{enumerate}
\end{definition}

We require the transition functions to be continuous for the operator norm
topology. The strong operator topology would be the most natural one if one
only required the trivializing functions $\phi_i \colon E\vert_{U_i} \to U_i
	\times H$ to be continuous. However, the proof of \cref{lemSubbundle2} below
uses the fact that the invertible operators form an open subset of $\cL(H)$,
and this fails for the strong operator topology. For instance, the orthogonal
projections onto an increasing sequence of finite-dimensional subspaces with
dense union converge strongly to the identity, and none of them is invertible.
A good replacement of a bundle notion for the strong operator topology is given
by Dixmier and Douady's \emph{continuous field of Hilbert spaces}
\cite{DixmierDouady}, which would probably have been sufficient to arrive at
the ultimate conclusion \cref{corTT}. However, we decided to stick with the
simpler and less technical notion of Hilbert bundle above, giving some
preliminary results as we were unable to find a good reference that suits our
needs.

The first result allows us to deal with the fact that the bundle of symmetric
$2$-tensors of regularity $L^2$ is naturally endowed with a scalar product
depending on the base metric $g \in \NCK(M)$:

\begin{lemma}\label{lemUniformisation}
	Let $(H, \<\cdot, \cdot\>_0)$ be a fixed real Hilbert space, and let $B$ be a
	topological space. Consider the product $E = B \times H$ endowed with its
	canonical projection $\pi_E \colon E \to B$. Suppose that for each $x \in B$,
	the fiber $\{x\} \times H$ is equipped with a scalar product
	$\langle\cdot,\cdot\rangle_x$ making it a Hilbert space with a norm equivalent
	to that of $\<\cdot, \cdot\>_0$, and that this scalar product depends
	continuously on the base point in the following sense:

	For every $x_0 \in B$ and every $\varepsilon > 0$, there exists a neighborhood
	$U$ of $x_0$ such that
	\[
		\forall x \in U,\quad \left\|\langle \cdot,\cdot \rangle_x - \langle \cdot,\cdot \rangle_{x_0}\right\|_{x_0} < \varepsilon,
	\]
	where $\|\cdot\|_{x_0}$ denotes the operator norm on bilinear forms defined by
	\[
		\|b\|_{x_0} \coloneqq \sup_{\substack{u,v \in H\\ \|u\|_{x_0} = \|v\|_{x_0} = 1}} |b(u,v)|.
	\]

	Then the bundle $(E, B, \pi_E)$ is isometrically equivalent to the trivial
	bundle $B \times H$: there exists a bundle isomorphism $\Phi \colon E \to B
		\times H, e \mapsto (\pi_E(e),\Phi_{\pi_E(e)}(e))$ such that $\Phi$ induces an
	isometry on each of the fibers.
\end{lemma}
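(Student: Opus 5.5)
The plan is to represent each scalar product through a positive operator and take its square root. For every $x \in B$, the Riesz representation theorem with respect to $\<\cdot,\cdot\>_0$ gives a unique bounded operator $A_x \in \cL(H)$ with
\[
	\<u,v\>_x = \<A_x u, v\>_0 \qquad \text{for all } u,v \in H .
\]
Symmetry of $\<\cdot,\cdot\>_x$ makes $A_x$ self-adjoint for $\<\cdot,\cdot\>_0$. Equivalence of the norms gives constants $0 < c_x \leq C_x$ with $c_x \|u\|_0^2 \leq \<A_x u,u\>_0 \leq C_x \|u\|_0^2$, so $A_x$ is positive and invertible. We then set $\Phi_x \coloneqq A_x^{1/2}$, which is the positive square root given by the continuous functional calculus. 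The map $\Phi(x,u) \coloneqq (x, A_x^{1/2} u)$ is then fiberwise a linear isomorphism, and it is isometric from $(H,\<\cdot,\cdot\>_x)$ to $(H,\<\cdot,\cdot\>_0)$ because
\[
	\|A_x^{1/2} u\|_0^2 = \<A_x u, u\>_0 = \|u\|_x^2 .
\]

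It remains to check that $\Phi$ and $\Phi^{-1}$ are continuous $B \times H \to B \times H$. For this it suffices that $x \mapsto A_x^{1/2}$ and $x \mapsto A_x^{-1/2}$ are continuous for the operator norm (with respect to $\|\cdot\|_0$). The joint continuity in $(x,u)$ then follows from the estimate
\[
	\|T_x u - T_{x_0} u_0\| \leq \|T_x - T_{x_0}\|\,\|u\| + \|T_{x_0}\|\,\|u-u_0\| .
\]

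So the core step is norm continuity of $x \mapsto A_x$. Fix $x_0$. The hypothesis says that $|\<(A_x - A_{x_0})u,v\>_0| \leq \varepsilon \|u\|_{x_0}\|v\|_{x_0}$. Using $\|u\|_{x_0} \leq C_{x_0}^{1/2}\|u\|_0$, this gives $\|A_x - A_{x_0}\|_{\cL(H)} \leq C_{x_0}\varepsilon$ for $x$ near $x_0$, where the operator norm is taken with respect to $\<\cdot,\cdot\>_0$. This is the only place where the way the hypothesis is formulated (norms at $x_0$ rather than at a fixed reference point) has to be handled, and it costs only the fixed constant $C_{x_0}$.

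Once $A_x \to A_{x_0}$ in norm, the spectra of the $A_x$ lie, for $x$ near $x_0$, in a fixed compact interval $[c,C] \subset (0,\infty)$, for instance $[c_{x_0}/2, 2C_{x_0}]$. On such an interval the functions $t \mapsto t^{\pm 1/2}$ are uniform limits of polynomials. Polynomial functional calculus is norm continuous on bounded sets, so a $3\varepsilon$-argument yields norm continuity of $x \mapsto A_x^{\pm 1/2}$. Alternatively, one can use the resolvent integral formula for $A^{1/2}$ over a contour around $[c,C]$.

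I expect this functional calculus continuity to be the main (though standard) obstacle. The rest is bookkeeping: $\Phi$ commutes with the projection, and its fiberwise inverse is $A_x^{-1/2}$, so $\Phi$ is the desired isometric bundle isomorphism onto the trivial bundle $B \times (H,\<\cdot,\cdot\>_0)$.
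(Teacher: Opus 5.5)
Your proposal is correct and follows the paper's proof: you represent $\langle\cdot,\cdot\rangle_x$ by a positive self-adjoint operator $A_x$ via Riesz, show $x \mapsto A_x$ is norm continuous, and set $\Phi(x,v) = (x, A_x^{1/2} v)$. The differences are minor. Where the paper cites Bouldin, you prove norm continuity of $A \mapsto A^{\pm 1/2}$ by polynomial approximation on a locally uniform spectral interval. You also spell out two points the paper leaves implicit: converting the hypothesis into the $\|\cdot\|_0$-operator norm, and the continuity of $\Phi^{-1}$.
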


\begin{proof}
	For each $x \in B$, the scalar product $\langle\cdot, \cdot\rangle_x$ defines a
	bounded, coercive, symmetric bilinear form on $(H, \<\cdot, \cdot\>_0)$, and
	hence induces a bounded, self-adjoint, positive-definite operator $A_x \in
		\mathcal{L}(H)$ via the Riesz representation theorem:
	\[
		\langle A_x u, v \rangle_0 = \langle u, v \rangle_x \quad \text{for all } u, v \in H.
	\]
	Note that $A_x$ is self-adjoint and strictly positive: there exist constants $0
		< m \leq M < \infty$ such that
	\[
		m \|u\|_0^2 \leq \langle A_x u, u \rangle_0 \leq M \|u\|_0^2.
	\]

	Now, define $A_x^{1/2}$, the positive square root of $A_x$. Since
	$A_x$ is self-adjoint and positive-definite, $A_x^{1/2}$ exists, is bounded and
	invertible, and self-adjoint (see~\cite[Ch.~VIII, Thm.~3.5]{Conway}).
	Furthermore, the map $x \mapsto A_x$ is continuous in operator norm, as a
	direct consequence of the continuity assumption on the scalar products, and the
	map
	\[
		A \mapsto A^{1/2}
	\]
	is continuous in the operator norm topology on the open cone of self-adjoint
	strictly positive operators, see \eg \cite{Bouldin}. It follows that $x \mapsto
		A_x^{1/2}$ is continuous in the operator norm topology. Each $A_x^{1/2}$ is a bijective
	isometry from $(H, \langle\cdot, \cdot\rangle_x)$ to $(H, \langle\cdot,
		\cdot\rangle_0)$.

	We now define the map
	\[
		\Phi \colon (x, v) \mapsto (x, A_x^{1/2} v),
	\]
	which is a homeomorphism from $B \times H$ (endowed with the topology where the
	norm on each fiber is $\|\cdot\|_x$) to $B \times H$ with the fixed Hilbert
	structure $\langle\cdot,\cdot\rangle_0$, and which is a fiberwise isometry by
	construction.
\end{proof}

We will have to deal with subbundles.
\begin{definition}
	Let $\pi_E \colon E \to B$ be a Hilbert bundle with typical fiber $H$. A subset
	$E' \subset E$ is a \emph{Hilbert subbundle} \label{defHilbertSubbundle} if for
	each $x \in B$ the fiber $E'_x \subset E_x$ is a closed subspace and if $E'$
	equipped with the induced fiberwise scalar product forms a Hilbert bundle.
\end{definition}

\begin{corollary}\label{corSubbundle1}
	Let $E$ and $F$ be two Hilbert bundles in the sense of \cref{defHilbertBundle}
	over the same base $B$. Assume that there exists a map $\iota \colon E \to F$
	that is linear on each fiber and such that, for any $x_0 \in B$, there exists a
	neighborhood $U$ of $x_0$ and constants $c, C > 0$ such that the following
	holds:

	\begin{enumerate}
		\item For all $x \in U$, the linear map $\iota_x \colon E_x \to F_x$ satisfies
		      \[
			      \forall v \in E_x,\ c \|v\|_{E_x} \leq \|\iota_x(v)\|_{F_x} \leq C \|v\|_{E_x}.
		      \]
		\item \label{itm2LmSubbundle1} Over any arbitrary trivialization $\pi_E^{-1}(U) \overset{\phi}{\cong} U
			      \times H_1$ and $\pi_F^{-1}(U) \overset{\psi}{\cong} U \times H_2$ of $E$ and
		      $F$, the family of mappings $x \mapsto \tilde{\iota}_x \colon H_1 \to H_2$ induced by $\iota$ is
		      continuous for the operator norm.
	\end{enumerate}

	Then the image of $\iota$, $\iota(E)$, is a subbundle of $F$.
\end{corollary}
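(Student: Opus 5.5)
The proof is local, so fix $x_0 \in B$ and trivializations $\phi$, $\psi$ of $E$ and $F$ over a neighborhood $U$ of $x_0$ as in the statement. Work in the fixed Hilbert spaces $H_1$, $H_2$ with the induced operator-norm-continuous family $\tilde\iota_x \colon H_1 \to H_2$. Since the trivializations are fiberwise isometric, the two-sided bound in the first hypothesis transfers: $c\|v\| \leq \|\tilde\iota_x v\| \leq C\|v\|$ for all $x \in U$. In particular each $\tilde\iota_x$ is injective with closed range $R_x \coloneqq \tilde\iota_x(H_1)$, so each fiber $\psi_x^{-1}(R_x) = \iota(E_x)$ is a closed subspace of $F_x$.

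The key construction is to build local trivializations of $\iota(E)$ using the polar decomposition. Set
\[
	Q_x \coloneqq \tilde\iota_x^*\tilde\iota_x \in \cL(H_1).
\]
It is self-adjoint and satisfies $c^2 \leq Q_x \leq C^2$, and $x \mapsto Q_x$ is norm-continuous because adjoint and composition are norm-continuous. By the continuity of the square root on the positive cone (the same fact \cite{Bouldin} used in \cref{lemUniformisation}) and of inversion on invertible operators, $x \mapsto Q_x^{-1/2}$ is norm-continuous. Then
\[
	J_x \coloneqq \tilde\iota_x Q_x^{-1/2} \colon H_1 \to R_x
\]
is a linear isometric isomorphism, since $J_x^*J_x = \mathrm{Id}$ and $J_x$ is onto $R_x$. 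It also depends norm-continuously on $x$.

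From this, define the candidate trivialization of $\iota(E)$ over $U$ by
\[
	\iota(E)|_U \ni w \longmapsto \bigl(x, J_x^{-1}\psi_x(w)\bigr) = \bigl(x, J_x^*\psi_x(w)\bigr), \qquad x = \pi_F(w).
\]
Fiberwise this is an isometry onto $H_1$, with inverse $(x,v) \mapsto \psi^{-1}(x, J_x v)$. Continuity of both maps follows from the continuity of $\psi$ and $\psi^{-1}$ together with the joint continuity of $(x,v) \mapsto J_x v$ and $(x,u) \mapsto J_x^* u$, which holds because these families are norm-continuous. Here the topology on $\iota(E)$ is the subspace topology of $F$.

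It remains to check the transition functions. For two such charts built from data $(\psi, \tilde\iota)$ and $(\psi', \tilde\iota')$, the transition on $H_1$ is $x \mapsto J_x'^{*}\,(\psi'_x\psi_x^{-1})\,J_x$. The middle factor is the transition of $F$, which is norm-continuous and orthogonal by \cref{defHilbertBundle}. Hence the composite is orthogonal on $H_1$, being an isometric bijection between copies of $H_1$ through $R_x$ and $R'_x$, and it is norm-continuous. If different trivializations of $E$ are used, this only composes with $E$'s transition functions, which are again orthogonal and norm-continuous.

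I expect the main obstacle to be a clean justification that the square root and inverse of $Q_x$ vary norm-continuously and that the charts are homeomorphisms for the subspace topology. Everything else is bookkeeping. As an alternative, one could use the orthogonal projection $P_x = \tilde\iota_x Q_x^{-1}\tilde\iota_x^*$ onto $R_x$ and compare it with $P_{x_0}$ near $x_0$, where $P_{x_0}P_x$ is invertible on $R_x$. The polar-decomposition route, however, avoids shrinking $U$.
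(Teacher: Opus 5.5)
Your proposal is correct and is essentially the paper's argument. The paper pulls the scalar product of $F$ back to $H_1$ via $\tilde\iota_x$, and the operator representing this scalar product is exactly your $Q_x=\tilde\iota_x^*\tilde\iota_x$; \cref{lemUniformisation} then yields the chart $w\mapsto Q_x^{1/2}\tilde\iota_x^{-1}\psi_x(w)=J_x^{-1}\psi_x(w)$, which is literally yours, so your polar-decomposition phrasing simply inlines that lemma (and checks the homeomorphism and transition properties somewhat more explicitly).
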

\begin{proof}
	First of all, we note that for each $x \in B$ the lower bound implies that the
	map $\iota_x$ is injective and also that its image is complete and thus closed
	in $F_x$. Here, for the completeness just observe that a Cauchy sequence
	$(\iota_x(v_k))_{k \geq 0}$ in $\iota_x(E_x)$ has to come from a Cauchy
	sequence $(v_k)_{k \geq 0}$ in $E_x$ since $\|v_k - v_\ell\|_{E_x} \leq
		c^{-1}\|\iota_x(v_k) - \iota_x(v_\ell)\|_{F_x}$.

	It remains to check that $\iota(E)$ with the induced fiberwise scalar product
	forms a Hilbert bundle. Since $E$ is a Hilbert bundle, we may choose an open
	covering $(U_i)_{i \in I}$ and trivializations $\phi_i \colon \pi_E^{-1}(U_i)
		\to U_i \times H$ for $E$, where $(H,\langle\cdot,\cdot\rangle_0) \coloneqq
		(H_1,\langle\cdot,\cdot\rangle_{H_1})$ is its typical fiber. By virtue of the
	linear isomorphism $j_{i,x} \colon H \to E_x \to \iota_x(E_x),\,v \mapsto
		\iota_x(\phi_i^{-1}(x,v))$ we obtain a unique scalar product
	$\langle\cdot,\cdot\rangle_x$ on $H$ for which the isomorphism is an isometry.
	This scalar product depends continuously on $x$ as it is given by $\langle
		v,w\rangle_x = \langle \tilde{\iota}_x(v),\tilde{\iota}_x(w)\rangle_{H_2}$ in
	the notation of \cref{itm2LmSubbundle1}. The given bounds for $\iota_x$ show
	that $\langle\cdot,\cdot\rangle_x$ is norm equivalent to
	$\langle\cdot,\cdot\rangle_0$ for all $x \in U_i$.

	Thus we can apply \cref{lemUniformisation} and obtain an isometric bundle
	isomorphism $\Phi_i \colon U_i \times H \to U_i \times H, (x,v) \mapsto
		(x,\Phi_{i,x}(v))$, where the left hand side is equipped with the fiberwise
	scalar product $\langle\cdot,\cdot\rangle_x$ and the right hand side with
	$\langle\cdot,\cdot\rangle_0$. We can then define the trivializations for
	$\iota(E)$ by $\psi_i(f) \coloneqq \Phi_i (x, j_{i,x}^{-1}(f)) = (x, \Phi_{i,x}
		\circ j_{i,x}^{-1}(f))$ where $x = \pi_F(f)$. By definition, these are
	fiberwise isometric isomorphisms. Hence the second component of the transition
	functions $\psi_i \circ \psi_j^{-1}(x,v) = (x, \Phi_{i,x} \circ j_{i,x}^{-1}
		\circ j_{j,x} \circ \Phi_{j,x}^{-1}(v)) = (x, \Phi_{i,x} \circ g_{ij}(x,\cdot)
		\circ \Phi_{j,x}^{-1}(v))$ has to be orthogonal fiberwise and the formula shows
	that it continuously depends on $x$.
\end{proof}

As for finite-dimensional vector bundles (\cf \cite[Prop.~1.3]{HatcherKT}),
Hilbert subbundles have complements.

\begin{lemma}\label{lemSubbundle2}
	Let $E \to B$ be a Hilbert bundle with typical fiber $H$ over a connected base
	$B$ and $E' \subset E$ a Hilbert subbundle with typical fiber $H'$. Then
	$(E')^\perp \coloneqq \coprod_{x \in B} (E'_x)^\perp$ is a subbundle of $E$.
\end{lemma}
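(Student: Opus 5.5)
The argument is local on $B$. Fix $x_0 \in B$ and trivializations of $E$ and $E'$ over a common neighborhood $U$ of $x_0$. Through them, $E'$ becomes a family of closed subspaces $E'_x \subset H$. These are the images of a family of isometric embeddings $j_x \colon H' \to H$, obtained by composing the trivialization of $E'$ with the inclusion and the trivialization of $E$. I will also need $x \mapsto j_x$ to be continuous in operator norm. This is the one point where the definition of subbundle has to be unwound. I would read it as saying that the inclusion $E' \hookrightarrow E$, expressed in the two trivializations, is operator-norm continuous, exactly as in \cref{itm2LmSubbundle1} of \cref{corSubbundle1}. If needed, I would add this requirement explicitly to \cref{defHilbertSubbundle}.

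Since each $j_x$ is an isometry, the orthogonal projection onto $E'_x$ is $P_x = j_x j_x^*$. This expression is operator-norm continuous in $x$, so $Q_x \coloneqq \id - P_x$, the projection onto $(E'_x)^\perp$, is operator-norm continuous as well. The plan is then the standard finite-dimensional trick of \cite[Prop.~1.3]{HatcherKT}. Let $H'' \coloneqq (E'_{x_0})^\perp \subset H$ and define
\[
	T_x \colon H'' \lto (E'_x)^\perp, \qquad T_x v \coloneqq Q_x v .
\]
At $x = x_0$ the map $T_{x_0}$ is the identity.

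To see that $T_x$ is an isomorphism for $x$ near $x_0$, consider the operator $S_x \coloneqq P_{x_0} P_x + Q_{x_0} Q_x \in \cL(H)$. It depends continuously on $x$ in operator norm and satisfies $S_{x_0} = \id$. Since the invertible operators form an open subset of $\cL(H)$ (the fact emphasized before \cref{lemUniformisation}), $S_x$ is invertible after shrinking $U$. Moreover $S_x$ maps $E'_x$ into $E'_{x_0}$ and $(E'_x)^\perp$ into $H''$. Because it is bijective on $H = E'_x \oplus (E'_x)^\perp$ and respects these splittings, it restricts to isomorphisms $E'_x \to E'_{x_0}$ and $(E'_x)^\perp \to H''$.

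The restriction $R_x \coloneqq S_x|_{(E'_x)^\perp} \colon (E'_x)^\perp \to H''$ is an isomorphism. Its inverse is therefore an injective map
\[
	\iota_x \coloneqq R_x^{-1} \colon H'' \lto (E'_x)^\perp \subset H .
\]
As a map $H'' \to H$ it equals $S_x^{-1}|_{H''}$, which is operator-norm continuous and has locally uniform bounds $c \|v\| \leq \|\iota_x v\| \leq C \|v\|$, both coming from $\|S_x\|$ and $\|S_x^{-1}\|$. Its image is exactly $(E'_x)^\perp$. Applying \cref{corSubbundle1} to the trivial bundle $U \times H'' \to E|_U$ with this $\iota$ then shows that $(E')^\perp|_U$ is a Hilbert subbundle of $E|_U$. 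This holds with typical fiber $H''$, after \cref{lemUniformisation} has been used to make the trivializations isometric.

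Connectedness of $B$ is needed to get a single typical fiber. The Hilbert dimension of $(E'_x)^\perp$ is locally constant, since it is preserved by the isomorphisms above. It is therefore constant on the connected space $B$, and all the local fibers $H''$ are isometric to one fixed Hilbert space. Gluing the local trivializations then works as in \cref{corSubbundle1}. The transition maps are products of fiberwise isometries that depend continuously on $x$ in operator norm, so they are operator-norm continuous with values in $\mathrm{O}$.

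I expect the main obstacle to be the bookkeeping in the second step. One has to check that $S_x$ really preserves the two splittings, which is immediate from its formula. One also has to make sure that continuity of $x \mapsto P_x$ in operator norm is genuinely available from the subbundle hypothesis rather than only strong continuity.
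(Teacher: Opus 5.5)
Your argument is correct, and its core is the paper's. Your $S_x = P_{x_0}P_x + Q_{x_0}Q_x$ is exactly the operator $f_x$ used there, made invertible near $x_0$ by openness of the invertibles. The paper also relies on the same reading of \cref{defHilbertSubbundle}, namely operator-norm continuity of the inclusion in trivializations, but states it without comment.

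The two proofs diverge after this point.
\begin{itemize}
\item The paper makes $f_x$ orthogonal by polar decomposition, $u_x = f_x(f_x^\star f_x)^{-1/2}$. This requires noticing that $f_x^\star f_x$ commutes with $p_x$, so that $(f_x^\star f_x)^{-1/2}$ preserves $E'_x$. The result is a trivialization of all of $E$ in which $E'$, and hence $(E')^\perp$, is constant.
\item In the paper, connectedness then enters through an open--closed argument: the model subspaces of these adapted trivializations can all be rotated onto one fixed $H'_0$.
\item You keep only the non-isometric identification $S_x^{-1}|_{H''}\colon H''\to (E'_x)^\perp$. You let \cref{corSubbundle1} do the orthonormalization, via the square root of the Gram operator from \cref{lemUniformisation}.
\item You use connectedness through local constancy of the Hilbert dimension of $(E'_x)^\perp$.
\end{itemize}

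Your route reuses machinery already proved. It also makes the role of connectedness more transparent: the orthogonal operator $S_\Psi$ in the paper exists precisely when these codimensions agree. The paper's route gives a somewhat stronger output, namely trivializations of $E$ itself adapted to the splitting $E'\oplus(E')^\perp$.

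One phrase in your gluing step is loose. The transition maps of $(E')^\perp$ are not literally products of operator-norm continuous isometries. They are composites of norm-continuous families: the square roots from \cref{lemUniformisation}, the operators $S_x$ and $S_x^{-1}$, and the transition functions of $E$. Their composite is fiberwise orthogonal, and that is all that is needed.
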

\begin{remark}
	Note that we need to assume here that the base $B$ is connected. This is to
	avoid trivial situations where the codimension of $E'_x$ changes with $x$. As
	an example, take the two-point space $B = \{\pm 1\}$ and let $H$ be an
	infinite-dimensional separable Hilbert space so that $H \cong H \oplus H$ (let
	$(e_0, e_1, e_2, \ldots)$ denote a Hilbert basis of $H$, then $H = H_0 \oplus
		H_1$, where $H_0$ denotes the closed subspace generated by $e_0, e_2, \ldots$
	and $H_1$ the one generated by $e_1, e_3, \ldots$, so both $H_0$ and $H_1$ are
	isometrically isomorphic to $H$). Let $E = \{\pm 1\} \times H \cong (\{-1\}
		\times H) \cup (\{1\} \times (H \oplus H))$ and $E' = \{\pm 1\} \times H$,
	where $E'$ is equal to $E$ over $\{-1\}$ and is given by one of the summands
	$H$ over $\{1\}$. Then $E' \subset E$ satisfies all the other assumptions of
	the lemma, but $(E')^\perp \cong (\{-1\} \times \{0\}) \cup (\{1\} \times H)$
	is not a Hilbert subbundle in the sense of \cref{defHilbertSubbundle}.
\end{remark}
\begin{proof}[Proof of \cref{lemSubbundle2}]
	Let $x_0 \in B$ be an arbitrary point.
	Let $U$ be an open neighborhood of $x_0$.
	Without loss of generality, we can assume that $U$ is so small that both $E$ and $E'$
	are trivial over $U$.
	With respect to these trivializations, the inclusion $E' \subset E$ is given by a
	family $(\iota_x)_{x \in U}$ of isometric embeddings $H' \hookrightarrow H$, continuous
	for the operator norm.

	Now note that $\iota_x^\star \circ \iota_x = \id$ since $\langle \iota_x^\star
		\circ \iota_x(v),w \rangle_{H'} = \langle \iota_x(v),\iota_x(w) \rangle_{H} =
		\langle v,w \rangle_{H'}$. From this it is not hard to see that $p_x \coloneqq
		\iota_x \circ \iota_x^\star \colon H \to H$ defines the orthogonal projection
	on the image of $\iota_x$. The map $x \mapsto p_x$ is continuous for the
	operator norm.

	The idea is to introduce the family of maps $f_x \colon H \to H$ defined as
	\[
		f_x \coloneqq p_0 \circ p_x + (\id - p_0) \circ (\id - p_x),
	\]
	where $p_0$ is a shorthand for $p_{x_0}$. Note that $f_x$ maps the image of
	$\iota_x$ into the image of $\iota_{x_0}$ and that $f_{x_0} = \id$. Since the
	set of invertible linear maps is an open subset of $\cL(H)$ when endowed with
	the operator norm, and the mapping $x \mapsto f_x$ is continuous for the
	operator norm, we can shrink $U$ to ensure that $f_x$ is invertible for all $x
		\in U$.

	Note that $f_x^\star \circ f_x$ is given by
	\[
		f_x^\star \circ f_x = p_x \circ p_0 \circ p_x + (\id - p_x) \circ (\id - p_0) \circ (\id - p_x)
	\]
	so it maps each $\Im(\iota_x)$ into itself. By standard results in spectral
	theory, $(f_x^\star \circ f_x)^{-1/2}$ also maps $\Im(\iota_x)$ into itself.

	As a consequence, the operator $u_x \coloneqq f_x \circ (f_x^\star \circ
		f_x)^{-1/2}$ is an orthogonal operator of $H$ mapping $\Im(\iota_x)$ onto
	$\Im(\iota_{x_0})$. Since $x \mapsto u_x$ is continuous for the operator norm,
	the map $u: U \times H \to U \times H$ defined by
	\[
		u(x, v) = (x, u_x(v)),
	\]
	provides us with a new trivialization $\Psi$ of $E$ over $U$ with the property
	that $E'$ gets mapped to $U \times H'_\Psi$ with $H'_\Psi \coloneqq
		\iota_{x_0}(H')$. For the purpose of this proof, we call trivializations with
	this property \emph{adapted trivializations}. This procedure can be repeated
	for other base points, so $B$ can be covered by adapted trivializations.

	We finally use the connectivity of $B$ to prove that the $H'_\Psi$ always ``sit
	in the same way'' inside $H$, \ie that given any adapted trivialization $\Psi$
	defined over an open set $U$ there is some $S_\Psi \in \mathrm{O}(H)$ such that
	$H'_0 = S_\Psi(H_\Psi')$ for the fixed subspace $H'_0 \coloneqq \iota_{x_0}(H')
		\subset H$. Then all the adapted trivializations can be modified so that $E'$
	gets mapped to $U \times H'_0$ and so they restrict to trivializations
	$(E')^\perp_{|U} \cong U \times (H'_0)^\perp$, showing that $(E')^\perp$ is a
	Hilbert bundle.

	So first of all, we note that, given a point $x \in B$, if there exists such an
	orthogonal operator $S_\Psi$ for one of the adapted trivializations $\Psi
		\colon E_{|U} \cong U \times H$ with $x \in U$, then it is also the case for
	any other one $\Phi \colon E_{|V} \cong V \times H$ with $x \in V$. Simply use
	$S_\Psi \circ g_x^{-1}$, where $\Phi \circ \Psi^{-1}(x,v) = (x,g_x(v))$. The
	subset of all the points where there are such orthogonal operators for one (and
	thus all) adapted trivializations will be called $\Omega$.

	By construction, $x_0 \in \Omega$, so $\Omega$ is non-empty. Moreover, $\Omega$
	is open, since if $x \in \Omega$ and $\Psi$ is an adapted trivialization
	defined on $U \ni x$, then existence of $S_\Psi$ shows that $U \subset \Omega$.
	But $\Omega$ is also closed, for if $x \not\in \Omega$, then there is an
	adapted trivialization $\Psi$ defined on $U \ni x$ for which an orthogonal
	operator with the required properties does not exist and then $U \cap \Omega =
		\emptyset$. Since $B$ is connected, it follows that $\Omega = B$, which is what
	we had to show.
\end{proof}

As a consequence of \cref{lemSubbundle2}, we obtain the following result.

\begin{corollary}\label{corSubbundle}
	Let $\pi_E \colon E \to B$ be a Hilbert bundle with typical fiber $H$ over a
	connected base $B$ and $E' \subset E$ a Hilbert subbundle with typical fiber
	$H'$. Then $H'$ can be identified with a closed subspace of $H$ such that the
	following hold:
	\begin{enumerate}
		\item There exists an open covering $(U_i)_{i \in I}$ of $B$ and local
		      trivializations $\phi_i \colon \pi_E^{-1}(U_i) \to U_i \times H$ such that
		      $\phi_i(E' \cap \pi_E^{-1}(U_i)) = U_i \times H'$.
		\item For each $i, j \in I$, the transition functions $\phi_i \circ \phi_j^{-1}$
		      mapping $(U_i \cap U_j) \times H$ onto itself can be written as
		      \[
			      \phi_i \circ \phi_j^{-1} (x, v) = (x, g_{ij}(x, v)),
		      \]
		      where, for each $x \in U_i \cap U_j$, $g_{ij}(x, \cdot) \in \mathrm{O}(H')
			      \times \mathrm{O}((H')^\perp) \subset \mathrm{O}(H)$ and the mapping $x \mapsto
			      g_{ij}(x, \cdot)$ is continuous for the operator norm topology.
	\end{enumerate}
\end{corollary}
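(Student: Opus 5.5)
The plan is to extract this directly from the proof of \cref{lemSubbundle2}, where the construction of adapted trivializations was already carried out; we only need to record what it produced. Fix $x_0 \in B$ and set $H' \coloneqq \iota_{x_0}(H') \subset H$ (called $H'_0$ there), a closed subspace since $\iota_{x_0}$ is an isometric embedding. For every point $x \in B$, the construction with $x$ in place of $x_0$, using the operator $u_x$ built from $f_x = p_{x}\circ p_y + (\id - p_x)\circ(\id - p_y)$ and its polar decomposition, gives an adapted trivialization $\Psi$ over a neighborhood $U$ of $x$. It maps $E'|_U$ onto $U \times H'_\Psi$. The connectedness argument then gives $S_\Psi \in \mathrm{O}(H)$ with $S_\Psi(H'_\Psi) = H'$. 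We set $\phi \coloneqq (\id_U \times S_\Psi)\circ \Psi$. This is still a trivialization of $E$ in the sense of \cref{defHilbertBundle}, because composing with a fixed orthogonal operator preserves fiberwise isometry and operator-norm continuity of the transition functions. Moreover $\phi(E' \cap \pi_E^{-1}(U)) = U \times H'$. Collecting these $\phi_i$ over a covering $(U_i)$ of $B$ gives the first item.

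For the second item, we compare two such trivializations $\phi_i, \phi_j$. The transition $g_{ij}(x,\cdot)$ is a composition of the transition between the underlying trivializations of $E$, which lies in $\mathrm{O}(H)$ and is operator-norm continuous in $x$, with the $x$-independent orthogonal maps $S$ and the operators $u_x$. The maps $u_x$ depend continuously on $x$ in operator norm, since $x\mapsto f_x$ does and inversion and $A\mapsto A^{-1/2}$ are operator-norm continuous on the open set of invertible, respectively positive, operators. So $g_{ij}(x,\cdot) \in \mathrm{O}(H)$, and $x \mapsto g_{ij}(x,\cdot)$ is continuous for the operator norm. Since both $\phi_i$ and $\phi_j$ send $E'_x$ onto $\{x\}\times H'$, $g_{ij}(x,\cdot)$ maps $H'$ onto $H'$. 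Because it is orthogonal, it also maps $(H')^\perp$ onto $(H')^\perp$. Hence it is block diagonal, $g_{ij}(x,\cdot) \in \mathrm{O}(H') \times \mathrm{O}((H')^\perp)$. This subgroup carries the restricted operator norm, so continuity is inherited. The two blocks are the transition functions of $E'$ and of $(E')^\perp$, the latter consistent with \cref{lemSubbundle2}.

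The only point needing care, and thus the main obstacle, is the operator-norm continuity of $x \mapsto u_x$ near each base point. That continuity is exactly what is used in \cref{lemSubbundle2}, so I would just cite that proof. The rest is bookkeeping: an orthogonal operator preserving a closed subspace preserves its orthogonal complement, and the trivializations $\phi_i$ are fiberwise isometric because $S_\Psi$ and $\Psi$ are.
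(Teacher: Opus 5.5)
Your proof is correct, but it takes a somewhat different route from the paper's.

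The paper uses only the \emph{statement} of \cref{lemSubbundle2}. Since $E'' = (E')^\perp$ is a Hilbert subbundle, it chooses trivializations $\phi'_i$ of $E'$ and $\phi''_i$ of $E''$ over a common cover. It adds them to obtain $\phi_i \colon \pi_E^{-1}(U_i) \to U_i \times (H' \oplus H'')$ and fixes an isomorphism $H' \oplus H'' \cong H$. Block-diagonality of the transition functions is then built in, the two blocks being the transition functions of $E'$ and of $E''$.

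You instead work inside the \emph{proof} of \cref{lemSubbundle2}. You take the adapted trivializations normalized by $S_\Psi$, which is exactly the step ``all the adapted trivializations can be modified so that $E'$ gets mapped to $U \times H'_0$''. You then derive block-diagonality afterwards, from the fact that an orthogonal operator mapping $H'$ onto $H'$ also maps $(H')^\perp$ onto $(H')^\perp$.

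Your route has two advantages:
\begin{itemize}
\item Your $\phi_i$ differ from the original trivializations of $E$ only by operator-norm continuous families of orthogonal operators. So they are visibly homeomorphisms compatible with the given Hilbert bundle structure of $E$. In the paper's summed trivializations this point is left as ``readily checked''; it needs, for instance, continuity of the fiberwise orthogonal projection onto $E'$, which is only established inside the lemma's proof.
\item You obtain the subbundle property of $(E')^\perp$ as a by-product (the $(H')^\perp$-blocks) rather than needing it as input.
\end{itemize}
The paper's route, in turn, is more modular: it depends only on the conclusion of the lemma and not on its internal construction.
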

\begin{proof}
	By \cref{lemSubbundle2}, $E'' \coloneqq (E')^\perp$ is a Hilbert subbundle with
	typical fiber $H''$. We may now choose trivializations $\phi'_i \colon
		\pi_{E'}^{-1}(U_i) \to U_i \times H'$ and $\phi''_i \colon \pi_{E''}^{-1}(U_i)
		\to U_i \times H''$, where we assumed without loss of generality that they are
	defined on the same open covering $(U_i)_{i \in I}$ of $B$. Then, using $E = E'
		\oplus E''$, we define $\phi_i \colon \pi_{E}^{-1}(U_i) \to U_i \times (H'
		\oplus H'')$ by adding up $\phi'_i$ and $\phi''_i$. For any point $x$, we have
	$H' \oplus H'' \cong E'_x \oplus E''_x = E_x \cong H$ and we may fix such an
	isomorphism to identify $H'$ with a closed subspace of $H$. Now the rest of the
	claim is readily checked.
\end{proof}

Before applying these results with base $B = \NCK(M)$, we record the two
topological properties of $\NCK(M)$ that they require. First, $\NCK(M)$ is
connected. Indeed, by \cref{corReductionToNCK} the inclusion $\NCK(M)
	\hookrightarrow \Met(M)$ is a weak homotopy equivalence, in particular a
bijection on path components, and $\Met(M)$ is convex. Second, $\NCK(M)$ is
paracompact. Indeed, it is a subspace of the space of smooth sections of $S_2M$
with its $C^\infty$-topology, which is a Fréchet space, hence metrizable, so
$\NCK(M)$ is metrizable, and every metrizable space is paracompact by Stone's
theorem, \cf \eg \cite[Thm.~41.4]{Munkres}.

We can now prove \cref{propTTBundle}. Note that, from now on, we will indicate
the metric used to define the functional norms. For example, given a metric $g
	\in \NCK(M)$, the norm $\|\cdot\|_{L^2, g}$ is defined as
\[
	\|v\|_{L^2, g}^2 \coloneqq \int_M |v|_g^2 \upd\mu^g.
\]

\begin{proof}[Proof of \cref{propTTBundle}]
	Due to \cref{lemUniformisation}, we may equally well endow each fiber of
	$\NCK(M) \times L^2(M, S_2M)$ with the scalar product inherited from that of
	the underlying metric $g$, which we will do in the following. We first note
	that the subbundle $\Sring_2(M)$ whose fiber at any $g \in \NCK(M)$ consists of
	(symmetric) tracefree 2-tensors is a subbundle of $\NCK(M) \times L^2(M,
		S_2M)$. This follows from \cref{lemSubbundle2} as it is the orthogonal
	complement of the subbundle of pure trace 2-tensors. To see in turn that these
	form a subbundle, we note that they form the image of the map
	\[
		\begin{array}{rccc}
			\iota_0 \colon & \NCK(M) \times L^2(M, \bR) & \to     & \NCK(M) \times L^2(M, S_2M) \\
			               & (g, f)                     & \mapsto & f\,g,
		\end{array}
	\]
	and that $\iota_0$ trivially fulfills the assumptions of \cref{corSubbundle1}.

	We now address the more complicated question of proving that the map $\bL
		\colon \NCK(M) \times H^1(M, TM)\to \Sring_2(M)$ sending a pair $(g, V)$ to
	$\bL^g V$ fulfills the assumptions of \cref{corSubbundle1}. The subbundle
	orthogonal to $\Im(\bL)$ in $\Sring_2(M)$, namely $\TT(M)$, will then itself be
	a subbundle of $\NCK(M) \times L^2(M, S_2M)$.

	We begin by proving the first assumption of \cref{corSubbundle1}. To this end,
	we use again Bochner's formula for the conformal Killing operator $\bL^g$ for
	any $V \in H^1(M, TM)$:
	\[
		\frac{1}{2}\int_M |\bL^g V|_g^2 \upd\mu^g = \int_M \left[|\nabla^g V|_g^2 + \frac{n-2}{n} (\divg^g V)^2 - \ric^g(V, V)\right] \upd\mu^g.
	\]
	As $\divg^g V = \tr^g (\nabla^g V)$, we have, from the Cauchy-Schwarz
	inequality, $|\divg^g V| \leq \sqrt{n} |\nabla^g V|_g$. As a consequence, we
	deduce that
	\begin{align*}
		\frac{1}{2}\int_M |\bL^g V|_g^2 \upd\mu^g
		 & \leq \int_M \left[|\nabla^g V|_g^2 + (n-2) |\nabla^g V|_g^2 + \|\ric^g\|_{L^\infty, g} |V|_g^2\right] \upd\mu^g \\
		 & = \int_M \left[(n-1)|\nabla^g V|_g^2 + \|\ric^g\|_{L^\infty, g} |V|_g^2\right] \upd\mu^g.
	\end{align*}

	As the mapping $g \mapsto \|\ric^g\|_{L^\infty, g} = \sup_M |\ric^g|_g$ is
	continuous due to the fact that $\NCK(M)$ is endowed with the $C^\infty$
	Fréchet topology, we immediately see that, in a neighborhood of any metric $g_0
		\in \NCK(M)$, there exists a constant $C > 0$ such that $\|\bL^g V\|_{L^2, g}
		\leq C \|V\|_{H^1, g}$ for any $V \in H^1(M,TM)$. The proof of the lower bound
	for $\|\bL^g V\|_{L^2, g}$ is more difficult.

	From the proof of \cref{propSplitting}, we know that the constant $\mu_V(g)$ is
	strictly positive for any $g \in \NCK(M)$. The existence of the constant $c$ as
	in \cref{corSubbundle1} is a consequence of the fact that the map $g \mapsto
		\mu_V(g)$ is lower semicontinuous as we prove now. Let $g_0 \in \NCK(M)$ be,
	once again, an arbitrary metric and let $(g_k)_{k \geq 1}$ be a sequence of
	metrics that converges in $C^1$-norm to $g_0$. We have to prove that
	$\liminf_{k \to \infty} \mu_V(g_k) \geq \mu_V(g_0)$.

	Extracting a subsequence $(g_{\phi(k)})_{k \geq 1}$ of the metrics $g_k$ by
	means of a strictly increasing function $\phi \colon \bN \to \bN$, we can
	assume that $\lim_{k \to \infty} \mu_V(g_{\phi(k)})$ actually exists and
	satisfies
	\[
		\lim_{k \to \infty} \mu_V(g_{\phi(k)}) = \liminf_{k \to \infty} \mu_V(g_k).
	\]
	We shall rename the sequence $(g_{\phi(k)})_{k \geq 1}$ as $(g_k)_{k \geq 1}$.
	By the definition of $\mu_V(g_k)$, we can choose a vector field $X_k$ such that
	$\|X_k\|_{H^1, g_k} = 1$ and
	\[
		\int_M \left|\bL^{g_k} X_k \right|_{g_k}^2 \upd\mu^{g_k} \leq \left(1 + \frac{1}{k}\right) \mu_V(g_k).
	\]
	So, in particular,
	\[
		\lim_{k \to \infty} \int_M \left|\bL^{g_k} X_k \right|_{g_k}^2 \upd\mu^{g_k} = \lim_{k \to \infty} \mu_V(g_k).
	\]
	Note that, even though we do not impose that the sequence $(X_k)_{k \geq 1}$
	converges, we have
	\[
		\lim_{k \to \infty} \|X_k\|_{H^1, g_0}^2 = \lim_{k \to \infty} \|X_k\|_{H^1, g_k}^2 = 1.
	\]
	Let us prove this fact as it is archetypal of the arguments used in the proof
	of the lower semicontinuity. We have, by definition,
	\[
		\|X_k\|_{H^1, g_0}^2 = \int_M \left[|\nabla^{g_0} X_k|_{g_0}^2 + |X_k|_{g_0}^2\right] \upd\mu^{g_0}.
	\]
	As the sequence $(g_k)_{k \geq 1}$ converges to $g_0$, there exist constants
	$\epsilon_k > 0$ tending to zero as $k$ goes to infinity such that, for all $k
		\geq 1$, we have
	\[
		(1 + \epsilon_k)^{-2} g_0 \leq g_k \leq (1 + \epsilon_k)^2 g_0.
	\]
	In particular, we have
	\[
		(1 + \epsilon_k)^{-n-2} \int_M |X_k|_{g_k}^2 \upd\mu^{g_k} \leq \int_M |X_k|_{g_0}^2 \upd\mu^{g_0} \leq (1 + \epsilon_k)^{n+2} \int_M |X_k|_{g_k}^2 \upd\mu^{g_k},
	\]
	and, similarly,
	\[
		(1 + \epsilon_k)^{-n-4} \int_M |\nabla^{g_0} X_k|_{g_k}^2 \upd\mu^{g_k} \leq \int_M |\nabla^{g_0} X_k|_{g_0}^2 \upd\mu^{g_0} \leq (1 + \epsilon_k)^{n+4} \int_M |\nabla^{g_0} X_k|_{g_k}^2 \upd\mu^{g_k}.
	\]
	Adding these two estimates and using the fact that $(1 + \epsilon_k)^{n+2} \leq
		(1 + \epsilon_k)^{n+4}$, we conclude that
	\begin{align*}
		\|X_k\|_{H^1, g_0}^2 & \geq (1 + \epsilon_k)^{-n-4} \int_M \left[|\nabla^{g_0} X_k|_{g_k}^2 + |X_k|_{g_k}^2\right] \upd\mu^{g_k}, \\
		\|X_k\|_{H^1, g_0}^2 & \leq (1 + \epsilon_k)^{n+4} \int_M \left[|\nabla^{g_0} X_k|_{g_k}^2 + |X_k|_{g_k}^2\right] \upd\mu^{g_k}.
	\end{align*}
	All we need to do now is to prove that
	\[
		\lim_{k \to \infty} \int_M \left[|\nabla^{g_0} X_k|_{g_k}^2 + |X_k|_{g_k}^2\right] \upd\mu^{g_k} = 1.
	\]
	The left-hand side is almost $\|X_k\|_{H^1, g_k}^2$, the only difference being
	that the connection is $\nabla^{g_0}$ instead of $\nabla^{g_k}$. Note that
	$\nabla^{g_k} X_k - \nabla^{g_0} X_k = \Gamma(g_0, g_k)(\cdot, X_k)$, where
	$\Gamma(g_0, g_k)$ is the relative Christoffel symbol defined, in abstract
	index notation, as
	\begin{align*}
		{\Gamma(g_0, g_k)_{ab}}^c
		 & \coloneqq \frac{1}{2} (g_k)^{cd}\left(\nabla^{g_0}_a (g_k)_{db} + \nabla^{g_0}_b (g_k)_{ad} - \nabla^{g_0}_d (g_k)_{ab}\right)            \\
		 & = \frac{1}{2} (g_k)^{cd}\left(\nabla^{g_0}_a (g_k - g_0)_{db} + \nabla^{g_0}_b (g_k - g_0)_{ad} - \nabla^{g_0}_d (g_k - g_0)_{ab}\right).
	\end{align*}
	As $g_k$ converges to $g_0$ in $C^1$-norm, we see that $\|\Gamma(g_0,
		g_k)\|_{L^\infty, g_0}$ converges to zero, and, hence, $\|\Gamma(g_0,
		g_k)\|_{L^\infty, g_k}$ also.

	From the $\epsilon$-Young inequality, we have, for any $\theta_k > 0$,
	\begin{align*}
		|\nabla^{g_0} X_k|_{g_k}^2
		 & = |\nabla^{g_k} X_k - \Gamma(g_0, g_k)(\cdot, X_k)|_{g_k}^2                                                                  \\
		 & \leq (1+\theta_k) |\nabla^{g_k} X_k|_{g_k}^2 + \left(1 + \frac{1}{\theta_k}\right) |\Gamma(g_0, g_k)|_{g_k}^2 |X_k|_{g_k}^2,
	\end{align*}
	and, similarly,
	\[
		|\nabla^{g_0} X_k|_{g_k}^2 \geq (1-\theta_k) |\nabla^{g_k} X_k|_{g_k}^2 + \left(1 - \frac{1}{\theta_k}\right) |\Gamma(g_0, g_k)|_{g_k}^2 |X_k|_{g_k}^2.
	\]

	We choose $\theta_k = \left(\|\Gamma(g_0, g_k)\|_{L^\infty, g_k}\right)^{1/2}$
	so that
	\[
		\theta_k \to_{k \to \infty} 0\quad\text{and}\quad \frac{1}{\theta_k} \|\Gamma(g_0, g_k)\|_{L^\infty, g_k}^2 \to_{k \to \infty} 0.
	\]

	Hence, integrating the inequalities over $M$, we get
	\begin{align}
		 & \int_M \left[|\nabla^{g_0} X_k|_{g_k}^2 + |X_k|_{g_k}^2\right] \upd\mu^{g_k} \label{eqUpperBound}                                                                                     \\
		 & \qquad \leq \int_M \left[|\nabla^{g_k} X_k|_{g_k}^2 + |X_k|_{g_k}^2\right] \upd\mu^{g_k}\nonumber                                                                                     \\
		 & \qquad\qquad + \theta_k \int_M |\nabla^{g_k} X_k|_{g_k}^2 \upd\mu^{g_k} + \left(1 + \frac{1}{\theta_k}\right) \int_M |\Gamma(g_0, g_k)|_{g_k}^2 |X_k|_{g_k}^2 \upd\mu^{g_k} \nonumber \\
		 & \qquad \leq \|X_k\|_{H^1, g_k}^2 + \theta_k \|X_k\|_{H^1, g_k}^2 + \left(1 + \frac{1}{\theta_k}\right) \|\Gamma(g_0, g_k)\|_{L^\infty, g_k}^2 \|X_k\|_{H^1, g_k}^2 \nonumber          \\
		 & \qquad = 1 + \theta_k + \left(1 + \frac{1}{\theta_k}\right) \|\Gamma(g_0, g_k)\|_{L^\infty, g_k}^2.\nonumber
	\end{align}

	And, similarly,
	\begin{equation}\label{eqLowerBound}
		\int_M \left[|\nabla^{g_0} X_k|_{g_k}^2 + |X_k|_{g_k}^2\right] \upd\mu^{g_k} \geq 1 - \theta_k - \left(1 + \frac{1}{\theta_k}\right) \|\Gamma(g_0, g_k)\|_{L^\infty, g_k}^2.
	\end{equation}
	Combining the inequalities~\eqref{eqUpperBound} and~\eqref{eqLowerBound}, we
	see that
	\[
		\int_M \left[|\nabla^{g_0} X_k|_{g_k}^2 + |X_k|_{g_k}^2\right] \upd\mu^{g_k} \to 1
	\]
	as claimed. The operator $\bL^{g_k} - \bL^{g_0}$ is a first-order operator
	whose coefficients are controlled by $\|g_k - g_0\|_{C^1, g_0}$, so there is a
	sequence $\delta_k \to 0$ with $\|(\bL^{g_k} - \bL^{g_0}) X\|_{L^2, g_k} \leq
		\delta_k \|X\|_{H^1, g_k}$ for all $X \in H^1(M, TM)$. Applying the
	$\epsilon$-Young inequality as before, comparing the norms $\|\cdot\|_{L^2,
		g_k}$ and $\|\cdot\|_{L^2, g_0}$ as in the first step, and using the
	normalization $\|X_k\|_{H^1, g_k} = 1$, we get, for some new sequence
	$\epsilon_k \to 0$,
	\[
		\int_M |\bL^{g_k} X_k|_{g_k}^2 \upd\mu^{g_k} \geq (1 - \epsilon_k) \int_M |\bL^{g_0} X_k|_{g_0}^2 \upd\mu^{g_0} - \epsilon_k.
	\]
	Hence, we infer that
	\begin{align*}
		\left(1 + \frac{1}{k}\right) \mu_V(g_k)
		 & \geq \int_M |\bL^{g_k} X_k|_{g_k}^2 \upd\mu^{g_k}                               \\
		 & \geq (1 - \epsilon_k) \int_M |\bL^{g_0} X_k|_{g_0}^2 \upd\mu^{g_0} - \epsilon_k \\
		 & \geq (1 - \epsilon_k) \mu_V(g_0) \|X_k\|_{H^1, g_0}^2 - \epsilon_k.
	\end{align*}
	Passing to the limit $k \to \infty$, we obtain
	\[
		\lim_{k \to \infty} \mu_V(g_k) \geq \mu_V(g_0).
	\]
	Continuity of $g \mapsto \bL^g$ for the operator norm follows by the same kind
	of arguments.
\end{proof}

We combine \cref{propTTBundle} with a corollary of Kuiper's theorem, namely
that infinite-dimensional Hilbert bundles over paracompact spaces, such as
$\NCK(M)$, are trivial bundles (see~\cite[Ch.~I.7, Rem.~1]{BoossBleecker}).
Indeed, from~\cite[Prop.~3]{BourguignonEbinMarsden}, the space of (smooth)
TT-tensors over any given Riemannian metric on $M$ is infinite dimensional,
since $n \geq 3$. As a consequence, we have proven the following theorem:

\begin{theorem}\label{thmTT}
	Let $M$ be a closed manifold of dimension $n \geq 3$. The space $\TT(M)$ is a
	trivial Hilbert bundle over $\NCK(M)$.
\end{theorem}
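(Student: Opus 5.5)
The proof combines \cref{propTTBundle} with Kuiper's theorem, so it has three parts. First, \cref{propTTBundle} says $\TT(M)$ is a Hilbert subbundle of $\NCK(M)\times L^2(M,S_2M)$, in the sense of \cref{defHilbertBundle}. Second, we check that the hypotheses for triviality hold: the base $\NCK(M)$ is paracompact and connected, and the typical fiber is an infinite-dimensional separable Hilbert space. Third, we apply the corollary of Kuiper's theorem (\cite[Ch.~I.7, Rem.~1]{BoossBleecker}). Since the transition functions are continuous into $\mathrm{O}(H)$ with the norm topology, this corollary says that every Hilbert bundle with infinite-dimensional separable fiber over a paracompact base is trivial.

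For the base, we use the observations recorded just before the proof of \cref{propTTBundle}. $\NCK(M)$ is metrizable, as a subspace of the Fréchet space of smooth symmetric $2$-tensors, and hence paracompact by Stone's theorem. It is connected by \cref{corReductionToNCK}. Connectedness matters because it guarantees a single typical fiber $H$ over all of $\NCK(M)$; this is what \cref{lemSubbundle2} and \cref{corSubbundle} use.

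For the fiber, $H$ is a closed subspace of the separable space $L^2(M,S_2M)$, so it is separable. It remains to see that $H$ is infinite-dimensional, which amounts to showing that $\TT(M,g)$ is infinite-dimensional for one (hence every) $g\in\NCK(M)$. For this we invoke \cite[Prop.~3]{BourguignonEbinMarsden}: for $n\geq 3$ the space of smooth TT-tensors of any metric is infinite-dimensional, and smooth TT-tensors lie in $\TT(M,g)$. As an alternative check, note that the York decomposition of \cref{propSplitting} exhibits $\TT(M,g)$ as the $L^2$-orthogonal complement of the image of $\bL^g$, an operator from vector fields to tracefree tensors. For $n\geq 3$ the fiber rank $\tfrac{n(n+1)}{2}-1$ exceeds $n$, so this complement has infinite dimension.

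Finally, Kuiper's theorem says that $\mathrm{O}(H)$ with the norm topology is contractible. The cited corollary then gives that the cocycle $\{g_{ij}\}$ is cohomologous to the trivial one over the paracompact base, i.e.\ there is a global isometric trivialization $\TT(M)\cong \NCK(M)\times H$. The main point needing care is that the structure group of $\TT(M)$ really is $\mathrm{O}(H)$ with the norm topology, since Kuiper's theorem is a statement about that topology. This is exactly what \cref{defHilbertBundle} and the subbundle results \cref{corSubbundle1}, \cref{lemSubbundle2} provide, so no further estimates are needed.
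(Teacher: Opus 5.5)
Your proposal is correct and follows the paper's route. The paper likewise combines \cref{propTTBundle}, the paracompactness of $\NCK(M)$, the infinite-dimensionality of the fibers from \cite[Prop.~3]{BourguignonEbinMarsden}, and the Kuiper-type triviality result \cite[Ch.~I.7, Rem.~1]{BoossBleecker}. Your rank-count ``alternative check'' is not an independent argument: it supplies only the non-injectivity half of the symbol condition on $\divg^g$ that \cite{BourguignonEbinMarsden} requires, so it restates that citation.
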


Although \cref{thmTT} is of independent interest, we will mostly use the
following consequence in the sequel:

\begin{corollary}\label{corTT}
	Let $M$ be a closed manifold of dimension $n \geq 3$. There exists a section
	$\sigma \colon \NCK(M) \to \TT(M)$ of the bundle of TT-tensors such that, for
	each $g \in \NCK(M)$, $\sigma(g)$ is a smooth TT-tensor with
	$\|\sigma(g)\|_{L^2, g} = 1$. Moreover, $\sigma$ is continuous as a map from
	$\NCK(M)$ to $C^\infty(M, S_2M)$.
\end{corollary}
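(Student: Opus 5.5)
By \cref{thmTT}, we may fix a global trivialization $\Theta \colon \TT(M) \to \NCK(M) \times H$, where $H$ is an infinite-dimensional separable Hilbert space and $\Theta$ is fiberwise isometric for the $L^2(g)$ scalar products. Choose a unit vector $e \in H$ and set $\sigma_0(g) \coloneqq \Theta^{-1}(g, e)$. This gives a nowhere-vanishing section with $\|\sigma_0(g)\|_{L^2,g} = 1$. It is continuous as a map $\NCK(M) \to L^2(M,S_2M)$, because $\Theta^{-1}$ is a homeomorphism onto $\TT(M)$, which carries the subspace topology of $\NCK(M) \times L^2$. Two defects remain. The values are only weak $L^2$ TT-tensors, and continuity holds only for the $L^2$ topology. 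I will fix both by smoothing and then re-projecting onto TT-tensors with the York decomposition.

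For the smoothing I will use a fixed smoothing operator independent of $g$. Fix a background metric $\bar g$ and let $S \coloneqq e^{-\Delta_{\bar g}}$ be the heat operator at time $1$ on $S_2M$, built from the rough Laplacian of $\bar g$. This is a continuous linear map $L^2(M,S_2M) \to C^\infty(M,S_2M)$, so $g \mapsto S\sigma_0(g)$ is continuous into $C^\infty$. Next, take the $g$-traceless part $T(g) \coloneqq S\sigma_0(g) - \frac1n \tr^g(S\sigma_0(g))\,g$ and apply \cref{propSplitting}. This gives $T(g) = \tau(g) + \bL^g W(g)$ with $W(g) \in \Hring^1$, which on $\NCK(M)$ is simply all of $H^1$, and $\tau(g)$ a smooth TT-tensor. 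I will show that $g \mapsto W(g)$ is continuous into $C^\infty$. The operator $P_g \coloneqq (\bL^g)^*\bL^g$ is elliptic, and by the lower bound on $\mu_V$ from \cref{propTTBundle} it is invertible on $\NCK(M)$. Its coefficients depend continuously on $g$ in $C^\infty$. Standard Schauder or Sobolev estimates, with constants uniform on a neighbourhood of $g_0$ (using the locally uniform coercivity), give continuity of $g \mapsto P_g^{-1}$ from $H^{s}$ to $H^{s+2}$ in operator norm for every $s$. Since $W(g) = P_g^{-1}(\bL^g)^*T(g)$, it follows that $W$, and hence $\tau(g)$, depends continuously on $g$ in every $H^s$, and therefore in $C^\infty$.

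The step I expect to be the main obstacle is showing $\tau(g) \neq 0$. Smoothing could in principle destroy the TT-component, and this is why I chose the heat operator. Since $S$ is self-adjoint and positive for the $\bar g$ inner product, $\langle S\sigma, \sigma \rangle_{L^2,\bar g} = \|e^{-\Delta/2}\sigma\|^2 > 0$ for $\sigma \neq 0$. However, the York projection is orthogonal for $g$, not for $\bar g$, so this positivity does not transfer directly. I would therefore change the construction so that non-vanishing is forced. I would replace $S$ by the family $S_t = e^{-t\Delta_{\bar g}}$ and note that $S_t\sigma_0(g) \to \sigma_0(g)$ in $L^2$ as $t \to 0$, locally uniformly in $g$ because $\sigma_0$ is $L^2$-continuous. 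Since the TT-projection $\Pi_g$ is $L^2$-continuous in $g$ (by \cref{propTTBundle}) and fixes $\sigma_0(g)$, we get $\Pi_g S_t \sigma_0(g) \neq 0$ for $t < t(g)$, where $t(g)$ can be chosen locally uniform. Using paracompactness of $\NCK(M)$, choose a continuous positive function $t \colon \NCK(M) \to (0,1]$ below these local thresholds. Then set $\tau(g) \coloneqq \Pi_g S_{t(g)}\sigma_0(g)$, which is nonzero, smooth, and continuous into $C^\infty$, because $(t,\sigma) \mapsto S_t\sigma$ is jointly continuous into $C^\infty$ for $t > 0$. Finally, define $\sigma(g) \coloneqq \tau(g)/\|\tau(g)\|_{L^2,g}$. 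The normalizing denominator is continuous and positive, so $\sigma$ is the required section.
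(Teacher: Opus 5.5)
Your proof is correct, and its skeleton matches the paper's:
\begin{itemize}
\item the unit $L^2$-continuous section $\sigma_0$ obtained from \cref{thmTT};
\item a smooth $g$-trace-free tensor that is $L^2(g)$-close to $\sigma_0(g)$;
\item York projection, with continuity into $C^\infty$ coming from operator-norm continuity of $g \mapsto P_g^{-1}$ (the paper gets this directly from continuity of $g \mapsto P_{g,s}$ and of operator inversion, so no uniform a priori estimates are needed);
\item non-vanishing of the projection, and normalization.
\end{itemize}
For the non-vanishing you do not need any continuity of $\Pi_g$ in $g$. It suffices that $\Pi_g$ is an $L^2(g)$-orthogonal projection fixing $\sigma_0(g)$, so that $\|\Pi_g S_t\sigma_0(g)-\sigma_0(g)\|_{L^2,g}\le\|S_t\sigma_0(g)-\sigma_0(g)\|_{L^2,g}$. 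This is exactly the paper's argument.

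The genuine difference is the smoothing step. The paper chooses, near each $g_0$, one fixed smooth tensor $T$ with $\|T-\sigma_0(g_0)\|_{L^2,g_0}<\frac12$ and passes to $\mathring{T}^g$. It then glues these $g$-independent tensors with a locally finite partition of unity on $\NCK(M)$. The bound $<\frac12$ survives by the triangle inequality, and continuity into $C^\infty$ is immediate.

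You instead smooth $\sigma_0(g)$ itself by $e^{-t(g)\Delta_{\gbar}}$. This is more intrinsic, but it costs more analysis. You need contractivity and strong continuity of the semigroup on $L^2(\gbar)$, which give $\|S_t\sigma_0(g)-\sigma_0(g)\|\le 2\|\sigma_0(g)-\sigma_0(g_0)\|+\|S_t\sigma_0(g_0)-\sigma_0(g_0)\|$. You also need joint continuity of $(t,\sigma)\mapsto S_t\sigma$ into $C^\infty$ for $t>0$. Moreover, it does not avoid the partition of unity, which you still use to build $t(g)$.

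For that step, state the local threshold as holding for all $g\in U_{g_0}$ and all $t\in(0,t_{g_0}]$; your estimate gives exactly this. Then $t(g)=\sum_\alpha\chi_\alpha(g)\,t_{g_\alpha}$ works, because it is at most the largest $t_{g_\alpha}$ with $\chi_\alpha(g)>0$, and $g\in U_{g_\alpha}$ for that $\alpha$.
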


\begin{proof}
	From \cref{thmTT}, there exists a nowhere-vanishing continuous section
	$\sigma_0$ of $\TT(M)$. Without loss of generality, we can assume that
	$\|\sigma_0(g)\|_{L^2, g} = 1$ for all $g \in \NCK(M)$. However, there is no
	reason for $\sigma_0(g)$ to be smooth, and $\sigma_0$ is a priori only
	continuous as a map from $\NCK(M)$ to $L^2(M, S_2M)$. We first approximate
	$\sigma_0$ by a family of smooth tensors depending continuously on $g$ in
	$C^\infty$.

	For a smooth symmetric $2$-tensor $T$ and a metric $g$, we denote by
	\[
		\mathring{T}^g \coloneqq T - \frac{1}{n} \tr^g(T)\, g
	\]
	its trace-free part with respect to $g$. The map $g \mapsto (T \mapsto
		\mathring{T}^g)$ is continuous from $\Met(M)$ to $C^\infty(M,
		\mathrm{End}(S_2M))$. Let $g_0 \in \NCK(M)$. Since smooth tensors are dense in
	$L^2(M, S_2M)$, there exists a smooth symmetric $2$-tensor $T$ such that $\|T -
		\sigma_0(g_0)\|_{L^2, g_0} < \frac{1}{2}$. As $\sigma_0(g_0)$ is trace-free
	with respect to $g_0$ and taking the trace-free part is a pointwise orthogonal
	projection, we also have $\|\mathring{T}^{g_0} - \sigma_0(g_0)\|_{L^2, g_0} <
		\frac{1}{2}$. The map $g \mapsto \|\mathring{T}^g - \sigma_0(g)\|_{L^2, g}$
	being continuous, the inequality $\|\mathring{T}^g - \sigma_0(g)\|_{L^2, g} <
		\frac{1}{2}$ holds for all $g$ in an open neighborhood of $g_0$.

	We obtain in this way an open cover $(\mathcal{O}_\alpha)_\alpha$ of $\NCK(M)$
	and smooth symmetric $2$-tensors $T_\alpha$ such that $\|\mathring{T}_\alpha^g
		- \sigma_0(g)\|_{L^2, g} < \frac{1}{2}$ for all $g \in \mathcal{O}_\alpha$. As
	$\NCK(M)$ is paracompact, there is a locally finite partition of unity
	$(\chi_\alpha)_\alpha$ subordinate to this cover, see~\cite[Theorem
		41.7]{Munkres}, and we set
	\[
		\sigma_1(g) \coloneqq \sum_\alpha \chi_\alpha(g)\, \mathring{T}_\alpha^g.
	\]
	The sum being locally finite, $\sigma_1$ is continuous from $\NCK(M)$ to
	$C^\infty(M, S_2M)$, and each $\sigma_1(g)$ is a smooth symmetric $2$-tensor
	that is trace-free with respect to $g$. Moreover, by the triangle inequality,
	\[
		\|\sigma_1(g) - \sigma_0(g)\|_{L^2, g}
		\leq \sum_\alpha \chi_\alpha(g) \|\mathring{T}_\alpha^g - \sigma_0(g)\|_{L^2, g}
		< \frac{1}{2}.
	\]

	Let $\sigmatil(g)$ denote the $L^2$-orthogonal projection of $\sigma_1(g)$ onto
	$\TT(M, g)$. Since $\sigma_0(g) \in \TT(M, g)$, we have $\|\sigmatil(g) -
		\sigma_0(g)\|_{L^2, g} < \frac{1}{2}$, hence $\|\sigmatil(g)\|_{L^2, g} >
		\frac{1}{2}$. Note that projection onto $\TT(M)$ is nothing but York's
	decomposition:
	\[
		\sigma_1(g) = \sigmatil(g) + \bL^g W_g,
	\]
	where $W_g \in H^1(M, TM)$ is unique since $g$ has no non-zero CKV. In
	particular, by \cref{propSplitting}, $W_g$ and $\sigmatil(g)$ are smooth.

	It remains to show that $\sigmatil$ is continuous as a map to $C^\infty(M,
		S_2M)$. We fix a background metric $\gbar$ on $M$ to define the Sobolev spaces
	$H^s$ and their norms. The vector field $W_g$ is the solution of
	\[
		(\bL^g)^* \bL^g W_g = (\bL^g)^* \sigma_1(g),
	\]
	where $(\bL^g)^*$ denotes the formal $L^2$-adjoint of $\bL^g$ with respect to
	$g$. For $s \geq 0$, we consider the operator
	\[
		\begin{array}{rccc}
			P_{g, s} \colon & H^{s+2}(M, TM) & \to     & H^s(M, TM)         \\
			                & W              & \mapsto & (\bL^g)^* \bL^g W.
		\end{array}
	\]
	Its coefficients depend smoothly on $g$ and its first two derivatives, so the
	map $g \mapsto P_{g, s}$ is continuous from $\NCK(M)$ to $\cL(H^{s+2}(M, TM),
		H^s(M, TM))$ for the operator norm. By standard elliptic theory, each $P_{g,
				s}$ is Fredholm with index zero, being elliptic and formally self-adjoint with
	respect to $g$, and its kernel consists of the conformal Killing vectors of
	$g$. For $g \in \NCK(M)$, $P_{g, s}$ is therefore invertible, and $g \mapsto
		P_{g, s}^{-1}$ is continuous for the operator norm, by continuity of the
	inverse. Since $g \mapsto (\bL^g)^* \sigma_1(g)$ is continuous from $\NCK(M)$
	to $H^s(M, TM)$, the map $g \mapsto W_g = P_{g, s}^{-1}\bigl((\bL^g)^*
		\sigma_1(g)\bigr)$ is continuous from $\NCK(M)$ to $H^{s+2}(M, TM)$, and
	$\sigmatil(g) = \sigma_1(g) - \bL^g W_g$ is continuous as a map to $H^{s+1}(M,
		S_2M)$. As $s$ is arbitrary, the Sobolev embedding theorem shows that
	$\sigmatil$ is continuous as a map to $C^\infty(M, S_2M)$. The required section
	$\sigma$ is now obtained by normalizing $\sigmatil$, since $g \mapsto
		\|\sigmatil(g)\|_{L^2, g}$ is continuous and does not vanish.
\end{proof}

\section{Constructing homotopy classes}\label{secHomotopy}
The goal of this section is to prove the following result for closed connected
manifolds~$M$ of dimension $n \geq 3$:

\begin{theorem}\label{thmHomotopy}
	There exists a map $\Psi \colon \sDEC_{0, \CMC}(M) \to \VacC_{0, \CMC}(M)
		\subset \DEC_{0, \CMC}(M)$ such that the inclusion map
	\[
		\mathrm{Incl} \colon \sDEC_{0, \CMC}(M) \hookrightarrow \DEC_{0, \CMC}(M)
	\]
	and $\Psi$ are homotopic as maps from $\sDEC_{0, \CMC}(M)$ to $\DEC_{0,
			\CMC}(M)$.
\end{theorem}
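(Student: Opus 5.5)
We construct $\Psi$ via the CMC conformal method, using the seed $\sigma(g)$ from \cref{corTT}, and then build the homotopy as a straight-line path in the seed data followed by the Lichnerowicz solution. For $(g,k) \in \sDEC_{0,\CMC}(M)$ write $\tau = \tr^g k$, a constant, and let $\mathring{k} = k - \frac{\tau}{n} g$. We want $(\tilde g, \tilde k) = \bigl(\phi^{4/(n-2)} g,\; \phi^{-2}\tilde\sigma + \frac{\tau}{n}\phi^{4/(n-2)} g\bigr)$ with $\tilde\sigma$ a TT-tensor for $g$. This pair is CMC, and it satisfies the momentum constraint because $\tilde\sigma$ is TT and $\tau$ is constant. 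It is vacuum iff $\phi>0$ solves the Lichnerowicz equation
\[
	\tfrac{4(n-1)}{n-2}\Delta\phi + \scal^g \phi - |\tilde\sigma|_g^2 \phi^{-\frac{3n-2}{n-2}} + \tfrac{n-1}{n}\tau^2 \phi^{\frac{n+2}{n-2}} = 0
\]
(with the paper's sign convention for $\Delta$). Conformal changes preserve the absence of CKVs, so $\tilde g \in \NCK(M)$ automatically.

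The homotopy runs in two stages. In the first stage, for $s\in[0,1]$, we use the York decomposition $\mathring{k} = \sigma_k + \bL^g W_k$ of \cref{propSplitting}; this is continuous in $(g,k)$ on $\NCK(M)$ by the argument in the proof of \cref{corTT}. We define $H_s(g,k) = \bigl(g,\, \sigma_k + (1-s)\bL^g W_k + \frac{\tau}{n}g\bigr)$. Then $j = \divg^g(\mathring{k}_s) = (1-s)j_0$. For $\rho$, note that $|\mathring k_s|^2$ is not monotone pointwise, so instead of a pointwise estimate we pass through a seed. In the second stage we replace the seed $\sigma_k$ by $\sigma_k + \epsilon\,\sigma(g)$ and rescale. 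The cleanest route, however, is to apply Lichnerowicz along the whole path. I would define, for $s\in[0,1]$, the data $\bigl(g,\ \theta_s := (1-s)\mathring{k} + s(\sigma_k + c(g,k)\sigma(g))\bigr)$ with the positive function $c$ chosen continuously (e.g. $c=1$ suffices, since $\sigma(g)\ne0$ by \cref{corTT}). We then solve a modified equation, namely Lichnerowicz with an added source term $2s'\,(\rho$-type positive$)$, so that the solution gives data with $\rho \ge |j|$ interpolating to vacuum at the end. Concretely, the ``conformally covariant'' formulation with matter $(\rho,j)$ scaled as $\rho\mapsto(1-s)\rho\,\phi^{-?}$ is the device I would use.

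A simpler homotopy is available: the straight-line homotopy $\bigl((1-t)\mathrm{Incl} + t\,\Psi\bigr)$ will not preserve DEC in general, so I prefer a path through strictly DEC data followed by a Lichnerowicz-solved path. Path A runs inside $\sDEC_{0,\CMC}$: $(g, \tfrac{\tau}{n}g + \mathring{k})$ is joined to $(g,\tfrac{\tau}{n} g + \sigma_k + \epsilon\sigma(g))$ by scaling $\mathring k$ down along the $\bL W$ part. This is delicate, so instead I would scale $\tau$ up, i.e. use $\tau_s=\tau+ sT(g,k)$ with large $T$ so that $\tfrac{n-1}{2n}\tau_s^2$ dominates all other terms and strict DEC is preserved while $\mathring{k}$ is deformed to the TT seed; compare the role of $\tau(g)$ in \eqref{eqSuspDEC}. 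Path B, from such data with large $\tau \neq 0$ and seed $\sigma\ne0$, is the family $(\phi_\lambda^{4/(n-2)}g,\dots)$ in which $\phi_\lambda$ solves Lichnerowicz with an extra term $\lambda\cdot 2\rho_0$ (for $\lambda$ from $1$ to $0$); here $\phi\equiv1$ at $\lambda=1$ and the vacuum solution at $\lambda=0$. Since $\tau\ne0$ and $\sigma\ne0$, each such equation has a unique positive solution by the sub/supersolution method and the monotonicity of the nonlinearity (Isenberg, \cref{tab:SolLich}). The resulting data have $j=0$ and $\rho = \lambda\rho_0\phi^{-?}\ge0$, so they lie in $\DEC$.

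The main obstacle is continuity of the solution $\phi$ in the $C^\infty$-topology jointly in $(g,k,\lambda)$, and the uniform positivity needed for it. I would get this from the implicit function theorem. The linearization of the Lichnerowicz operator at a solution is $\tfrac{4(n-1)}{n-2}\Delta + V$ with $V = \scal - \ldots = $ a strictly positive potential, since $\tau\ne0$, after the standard rewriting using the equation; this potential is invertible on every $H^s$. Combining this with uniqueness, local continuity upgrades to global continuity, and then we bootstrap to $C^\infty$ as in the proof of \cref{corTT}. Concatenating Path A, Path B and the reverse of the $\tau$-scaling applied to vacuum data (which must stay vacuum, so this last piece instead reruns Lichnerowicz with $\tau_s$ varying, again uniquely solvable) gives the homotopy from $\mathrm{Incl}$ to $\Psi$.
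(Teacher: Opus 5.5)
\textbf{The gap is in Path A.} The plan you finally commit to is Path A followed by Path B, and it fails at Path A in a way that cannot be repaired. Along $\tau_s=\tau+sT(g,k)$, as soon as $\tau$ and $T$ have opposite signs, $\tau_s$ vanishes for some $s$. At that moment $2\rho_s=\scal^g-|\kring_s|_g^2$, which is negative wherever $\scal^g<0$. Such data lie in $\sDEC_{0,\CMC}(M)$, for instance $(g,\pm\tau(g)g)$ from \eqref{eqSuspDEC} with $g\in\NCK(M)$ not of positive scalar curvature. Letting $T$ follow the sign of $\tau$ does not help, since $T$ cannot do so continuously while staying large near $\tau=0$.

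No cleverer choice fixes this. By \eqref{eqCompDW}, $\overline{D}^2=D^2+\tau^2/4$ when $\tau$ is constant. Hence $\{\tau>0\}$ and $\{\tau<0\}$ are contractible subsets of $\Ini_{\CMC}(M)$ contained in $\InvI(M)$. Path A is meant to be a homotopy inside $\sDEC(M)\subset\InvI(M)$ from the inclusion to a map into $\{\tau\neq0\}$. Such a homotopy would make every map $S^\ell\to\sDEC_{0,\CMC}(M)$, $\ell\geq1$, null-homotopic in $\InvI(M)$, forcing $\oladiff=0$ on it. But for spin $M$ of dimension $n\geq6$ with a positive scalar curvature metric, \cref{corNonTrivSDEC}, \cref{obsValuesInCMC} and \cref{corReductionToNCK} provide such maps with $\oladiff\neq0$, exactly as in the proof of \cref{thmMain}.

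The same zero crossing defeats your last step, which reruns the vacuum Lichnerowicz equation while $\tau_s$ returns to $\tau$: by \cref{tab:SolLich} there is no solution when $\tau_s=0$ and $\cY(M,g)\leq0$. That step is superfluous anyway, since $\Psi$ may have any mean curvature.

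\textbf{The fix is the route you set aside.} Applying Lichnerowicz along the whole path is the paper's proof, and it needs neither a change of $\tau$ nor Path A. Take the seed $t\kring+(1-t)\sigma(g)$, whose $g$-divergence is $tj$. Solve \eqref{eqLichnerowicz} with $\sigma_t=(1-t)\sigma(g)$; its right-hand side is $|t\kring+(1-t)\sigma(g)|_g^2\phi^{-\kappa-3}+2t\rho\,\phi^{-1-\kappa/2}$. The exponents you left open are forced by $\divg^{g_t}(\phi^{-2}T)=\phi^{-2-\kappa}\divg^g T$ for trace-free $T$. With $g_t=\phi^\kappa g$ and $k_t$ as in \cref{lemIDSForPsi}, one gets $j_t=t\phi^{-2-\kappa}j$ and $\rho_t=t\phi^{-2-3\kappa/2}\rho$. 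Hence $|j_t|_{g_t}=t\phi^{-2-3\kappa/2}|j|_g\leq\rho_t$.

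Strict DEC already gives $\cY(M,g)>0$ or $\tau\neq0$. Moreover $B^2=2t\rho>0$ for $t>0$, while $A^2=|\sigma(g)|_g^2\not\equiv0$ at $t=0$. So \cref{propLichnerowicz} applies for every $t$. The same two facts, rather than $\tau\neq0$, give injectivity of the linearization via the maximum principle (\cref{lemContinuity}).

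Finally, your endpoint seed $\sigma_k+c\,\sigma(g)$ with $c=1$ is not safe. Take $g\in\PNCK(M)$ and $k=\frac{\tau}{n}g-\sigma(g)$ with $|\tau|$ large, which lies in $\sDEC_{0,\CMC}(M)$. Then the seed vanishes, and since $\cY(M,g)>0$ the vacuum Lichnerowicz equation has no positive solution. Drop $\sigma_k$, or take $c>\|\sigma_k\|_{L^2,g}$.
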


To this end, we are going to construct, given a pair $(g, k) \in \sDEC_{0,
		\CMC}(M)$, a $1$-parameter family of initial data sets $(g_t, k_t) \in \DEC_{0,
		\CMC}(M)$, $t \in [0, 1]$ so that $(g_1, k_1) = (g, k)$ and $(g_0, k_0) \in
	\VacC_{0, \CMC}(M)$. The construction will be continuous in $(g,k)$ and is an
adaptation of the well-established conformal method in general relativity, see
\eg\cite{CarlottoReview}.

Let $(g,k) \in \Ini_{\CMC}(M)$ for an $n$-dimensional manifold $M$. To motivate
our construction, we consider the constraint equations for the pair $(g_t,
	k_t)$:
\begin{equation}\label{eqConstraintsT}
	\left\lbrace
	\begin{aligned}
		\scal^{g_t} + \left(\tr^{g_t} k_t\right)^2 - |k_t|_{g_t}^2 & = 2 \rho_t, \\
		\divg^{g_t} (k_t) - \upd \left(\tr^{g_t} k_t\right)        & = j_t.
	\end{aligned}
	\right.
\end{equation}

We assume that $g_t$ is conformal to the metric $g$, namely
\[
	g_t = \phi_t^\kappa g,
\]
where $\kappa = \frac{4}{n-2}$. In this setting, the dominant energy condition
for $(g_t,k_t)$ reads
\[
	\rho_t \geq |j_t|_{g_t} = \phi_t^{-\kappa/2} |j_t|_g.
\]
This suggests looking for families $(g_t,k_t)$ such that
\[
	j_t = t \phi_t^\alpha j,
	\qquad
	\rho_t = t \phi_t^{\alpha - \kappa/2} \rho,
\]
for some exponent $\alpha$ to be chosen later. Here and in the following,
$\rho$ and $j$ denote the quantities defined by~\eqref{eqDensities} for the
initial data set~$(g,k)$. It is also natural to impose that the mean curvature
$\tau = \tr^{g_t} k_t \in \bR$ remains constant with respect to $t$, so that
\[
	k_t = \frac{\tau}{n} g_t + \kring_t,
\]
where $\kring_t$ denotes the trace-free part of $k_t$. With these choices, the
constraint equations~\eqref{eqConstraintsT} become
\begin{equation} \label{eqConstraintsRewritten}
	\left\lbrace
	\begin{aligned}
		\phi_t^{-1-\kappa}\left(\frac{4(n-1)}{n-2} \Delta \phi_t
		+ \scal^g\, \phi_t\right)
		+ \frac{n-1}{n} \tau^2
		- \phi_t^{-2\kappa}|\kring_t|_g^2
		 & = 2 t \phi_t^{\alpha - \kappa/2} \rho, \\
		\divg^{g_t} (\kring_t)
		 & = t \phi_t^\alpha j.
	\end{aligned}
	\right.
\end{equation}

We recall that for any symmetric trace-free $2$-tensor $T$, one has
\[
	\divg^{g_t} \left(\phi_t^{-2} T\right)
	= \phi_t^{-2-\kappa} \divg^g (T).
\]
Therefore, the momentum constraint (the second equation
of~\eqref{eqConstraintsRewritten}) can be rewritten as
\[
	\divg^g \left(\phi_t^2 \kring_t\right)
	= t \phi_t^{\alpha + 2 + \kappa} j.
\]
This naturally leads to the choice $\alpha = -2-\kappa$, so that
\[
	\divg^g \left(\phi_t^2 \kring_t\right) = t j.
\]
As a consequence, we must have
\[
	\phi_t^2 \kring_t = t \kring + \sigma_t,
\]
where $\kring = k - \frac{1}{n}\tau g$ and $\sigma_t$ is a TT-tensor with
respect to $g$, in order for the momentum constraint to be satisfied. Finally,
plugging this ansatz for $\kring_t$ into the Hamiltonian constraint (the first
equation of~\eqref{eqConstraintsRewritten}) yields
\[
	\phi_t^{-1-\kappa}\left(\frac{4(n-1)}{n-2} \Delta \phi_t
	+ \scal^g\, \phi_t\right)
	+ \frac{n-1}{n} \tau^2
	- \phi_t^{-2\kappa-4} |t \kring + \sigma_t|_g^2
	= 2 t \phi_t^{\alpha - \kappa/2} \rho.
\]
Multiplying both sides by $\phi_t^{1+\kappa}$, we obtain the following
Lichnerowicz-type equation:
\begin{equation}\label{eqLichnerowicz}
	\frac{4(n-1)}{n-2} \Delta \phi_t + \scal^g\, \phi_t
	+ \frac{n-1}{n} \tau^2 \phi_t^{\kappa+1}
	= \frac{|t \kring + \sigma_t|_g^2}{\phi_t^{\kappa+3}}
	+ \frac{2 t \rho}{\phi_t^{1 + \kappa/2}},
\end{equation}
which we will simply call the Lichnerowicz equation in what follows. Let us
summarize the construction described so far in the following lemma.

\begin{lemma}\label{lemIDSForPsi}
	Consider a CMC initial data set $(g,k) \in \Ini_{\CMC}(M)$ defined on an
	$n$-dimensional manifold $M$ and set $\kappa = \frac{4}{n-2}$. Let $\rho$ and
	$j$ be its energy and momentum density, defined by~\eqref{eqDensities}, and set
	\begin{align*}
		\tau   & \coloneqq \tr^g(k) \qquad \text{and} \\
		\kring & \coloneqq k - \frac{1}{n}\tau g.
	\end{align*}
	Let $t \in [0,1]$ and suppose that $\sigma_t$ is a TT-tensor with respect to
	$g$ and that $\phi_t$ is a positive solution to~\eqref{eqLichnerowicz}. Then
	$(g_t,k_t)$ defined by
	\begin{align}\label{eqDefIDSForPsi}
		g_t \coloneqq \phi_t^\kappa g, \qquad k_t \coloneqq \frac{\tau}{n} g_t +
		\phi_t^{-2}\bigl(t\kring + \sigma_t\bigr)
	\end{align}
	is a CMC initial data set with $\rho_t = t\phi_t^{-2-3\kappa/2}\rho$, $j_t =
		t\phi_t^{-2-\kappa}j$ and thus $|j_t|_{g_t} = t\phi_t^{-2-3\kappa/2}|j|_g$. In
	particular, for $t > 0$, $(g_t,k_t)$ satisfies the (strict) dominant energy
	condition if and only if $(g,k)$ does, and $(g_0,k_0)$ satisfies the vacuum
	constraints. Finally, for $t = 1$ and $\sigma_1 = 0$, the constant function
	$\phi_1 = 1$ solves~\eqref{eqLichnerowicz}, and the corresponding initial data
	set is $(g_1,k_1) = (g,k)$.
\end{lemma}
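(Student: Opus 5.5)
The plan is to reverse the derivation that precedes the lemma. Every step there was an equivalence, so it suffices to compute $\rho_t$ and $j_t$ directly for the pair defined in~\eqref{eqDefIDSForPsi}.

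First, the CMC property. Since $\sigma_t$ and $\kring$ are $g$-traceless, they are also $g_t$-traceless, because $g_t$ is conformal to $g$. Hence $\tr^{g_t} k_t = \frac{\tau}{n}\tr^{g_t} g_t = \tau$, which is constant.

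Second, the momentum constraint. Because $\tau$ is constant, $j_t = \divg^{g_t}(k_t) = \divg^{g_t}\bigl(\phi_t^{-2}(t\kring+\sigma_t)\bigr)$. The recalled identity $\divg^{g_t}(\phi_t^{-2}T) = \phi_t^{-2-\kappa}\divg^g T$ for trace-free $T$ then gives $j_t = \phi_t^{-2-\kappa}\bigl(t\divg^g\kring + \divg^g\sigma_t\bigr)$. We have $\divg^g\sigma_t=0$, and $\divg^g\kring = \divg^g k - \upd\tau = j$ since $\tau$ is constant. So $j_t = t\phi_t^{-2-\kappa}j$. Using $|\cdot|_{g_t} = \phi_t^{-\kappa/2}|\cdot|_g$ on $1$-forms yields $|j_t|_{g_t} = t\phi_t^{-2-3\kappa/2}|j|_g$.

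Third, the Hamiltonian constraint.
\begin{enumerate}
\item Apply the conformal transformation law $\scal^{g_t} = \phi_t^{-1-\kappa}\bigl(\tfrac{4(n-1)}{n-2}\Delta\phi_t + \scal^g\phi_t\bigr)$.
\item Compute $(\tr^{g_t}k_t)^2 - |k_t|_{g_t}^2 = \tau^2 - \tfrac{\tau^2}{n} - |\kring_t|^2_{g_t}$, using that the trace part and the trace-free part are orthogonal.
\item For the trace-free part, $|\phi_t^{-2}T|_{g_t}^2 = \phi_t^{-4-2\kappa}|T|_g^2$ for a $2$-tensor $T$.
\item Assemble $2\rho_t$ from these pieces, substitute~\eqref{eqLichnerowicz} after dividing it by $\phi_t^{1+\kappa}$, and read off $2\rho_t = 2t\rho\,\phi_t^{-1-\kappa/2-1-\kappa} = 2t\phi_t^{-2-3\kappa/2}\rho$.
\end{enumerate}
This exponent bookkeeping is the only place where care is needed. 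It is the main, if mild, obstacle: one must check that $\alpha-\kappa/2$ with $\alpha=-2-\kappa$ matches the power produced by~\eqref{eqLichnerowicz}.

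Finally, the consequences. For $t>0$, both $\rho_t$ and $|j_t|_{g_t}$ equal the same positive factor $t\phi_t^{-2-3\kappa/2}$ times $\rho$ and $|j|_g$ respectively. Hence $\rho_t \geq |j_t|_{g_t}$ holds if and only if $\rho\geq|j|_g$, and likewise for the strict inequality. For $t=0$, both $\rho_0$ and $j_0$ vanish, so $(g_0,k_0)$ is vacuum.

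For $t=1$, $\sigma_1=0$ and $\phi_1\equiv1$, the equation~\eqref{eqLichnerowicz} reduces to $\scal^g + \frac{n-1}{n}\tau^2 = |\kring|_g^2 + 2\rho$. This is exactly the definition~\eqref{eqDensities} of $\rho$, rewritten with $|k|_g^2 = |\kring|_g^2 + \tau^2/n$. Then~\eqref{eqDefIDSForPsi} gives back $g_1=g$ and $k_1 = \frac{\tau}{n}g + \kring = k$.
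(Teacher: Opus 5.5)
Your proposal is correct and takes essentially the paper's route: the paper justifies the lemma by the derivation immediately preceding it, which uses the same conformal transformation laws (for $\scal$, for $\divg^{g_t}(\phi_t^{-2}T)$ with $T$ trace-free, and for pointwise norms), and you run that derivation in the sufficiency direction, with the exponent $-2-\kappa-\kappa/2=-2-3\kappa/2$ checking out. The only slip is cosmetic: $\divg^g\kring=\divg^g k-\frac{1}{n}\upd\tau$ rather than $\divg^g k-\upd\tau$, but both correction terms vanish because $\tau$ is constant.
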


In order to conclude \cref{thmHomotopy} from this lemma, we have to do the
following for the initial data sets of interest: firstly, choose a continuous
family of TT-tensors -- this was done in \cref{secTTtensors} -- and secondly,
show that the Lichnerowicz equation~\eqref{eqLichnerowicz} admits a unique
positive solution, continuously depending on the parameters. We treat a
slightly more general version of this equation, using that $\rho \geq 0$ for
DEC initial data sets $(g,k)$. From now on, $M$ is always closed and connected.

\begin{proposition}\label{propLichnerowicz}
	The equation
	\begin{equation}\label{eqLichnerowicz2}
		\frac{4(n-1)}{n-2} \Delta \phi + \scal^g\, \phi
		+ \frac{n-1}{n} \tau^2 \phi^{\kappa+1}
		= \frac{A^2}{\phi^{\kappa+3}} + \frac{B^2}{\phi^{1 + \kappa/2}},
	\end{equation}
	with $A^2$ and $B^2$ two smooth non-negative functions and $\tau$ a constant,
	admits a unique positive solution $\phi \in C^\infty(M, \bR)$ unless
	\begin{itemize}
		\item $\cY(M, g) \geq 0$ and $A^2 \equiv B^2 \equiv 0$,
		\item $\cY(M, g) \leq 0$ and $\tau = 0$.
	\end{itemize}
	Further, there is no positive solution when only one of these two conditions
	holds, while there is a $1$-dimensional set of positive solutions
	to~\eqref{eqLichnerowicz2} when both of them hold.
\end{proposition}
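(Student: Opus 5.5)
The plan follows Isenberg's classical strategy: a conformal reduction to a convenient background, sub- and supersolutions, a monotonicity-based uniqueness argument, and integral identities for the non-existence cases. First I use the conformal covariance of \eqref{eqLichnerowicz2}. If $g' = \psi^\kappa g$, then $\phi$ solves the equation for $(g,A^2,B^2,\tau)$ if and only if $\phi/\psi$ solves the same type of equation for $g'$, with $A'^2 = \psi^{-2\kappa-4}A^2$ and $B'^2 = \psi^{-\kappa - 3\kappa/2 \cdots}B^2$, i.e. with the natural weights dictated by the homogeneities; $\tau$ is unchanged. Neither the hypotheses nor the conclusion change, since positivity, vanishing and uniqueness are all preserved. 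I therefore pick $g$ in its conformal class with constant scalar curvature, whose sign is that of $\cY(M,g)$. This uses the solution of the Yamabe problem; alternatively, scalar curvature of constant sign, via the first eigenfunction of the conformal Laplacian, is enough.

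For existence I write the equation as $\frac{4(n-1)}{n-2}\Delta\phi = F(x,\phi)$ with
\[
F(x,\phi) = -\scal^g\phi - \tfrac{n-1}{n}\tau^2\phi^{\kappa+1} + A^2\phi^{-\kappa-3} + B^2\phi^{-1-\kappa/2}.
\]
Then I build constant or near-constant barriers. If $\tau \neq 0$, a large constant is a supersolution, since the $\tau^2\phi^{\kappa+1}$ term dominates. If $\tau = 0$ and $\scal^g > 0$, a large constant works as well. For the subsolution, if $A^2 + B^2 \not\equiv 0$ I take $\epsilon u$, where $u > 0$ solves a linear problem $\frac{4(n-1)}{n-2}\Delta u + (\scal^g + \lambda)u = A^2 + B^2$ with $\lambda$ large; for small $\epsilon$ the negative powers dominate. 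If $A^2 + B^2 \equiv 0$ and $\tau \neq 0$ with $\scal^g < 0$ constant, the constant solving $\scal^g + \frac{n-1}{n}\tau^2\phi^\kappa = 0$ is an explicit solution. The monotone iteration method then yields a smooth positive solution between the barriers, with regularity from elliptic bootstrap.

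Uniqueness follows from the structure of the nonlinearity: $\phi \mapsto F(x,\phi)/\phi = -\scal^g - \frac{n-1}{n}\tau^2\phi^\kappa + A^2\phi^{-\kappa-4} + B^2\phi^{-2-\kappa/2}$ is non-increasing, and strictly decreasing unless $\tau = 0$ and $A^2 + B^2 \equiv 0$. Given two solutions, I set $w = \phi_2/\phi_1$ and look at its maximum and minimum. The maximum principle applied to the equation for $w$ forces $w \equiv 1$, or $w$ constant in the degenerate case. That degenerate case is exactly the one where both conditions hold, $\cY = 0$ and $\tau = 0$, $A = B = 0$, and it produces the $1$-dimensional family of constant multiples of the Yamabe-type solution.

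Non-existence comes from integration against the first eigenfunction. If $\cY \geq 0$, $A = B = 0$ and $\tau \neq 0$, integrating the equation in a representative with $\scal^g \geq 0$ gives $0 < \int \frac{n-1}{n}\tau^2\phi^{\kappa+1} = -\int \scal^g\phi \leq 0$, a contradiction; the case $\cY > 0$, $\tau = 0$ is analogous. If $\cY \leq 0$, $\tau = 0$ and $A^2 + B^2 \not\equiv 0$, the left-hand side integrates to something $\leq 0$ while the right-hand side is positive. The main obstacle I expect is the careful bookkeeping of the conformal weights of $A^2$ and $B^2$, and the zero Yamabe case. There the subsolution and supersolution must be handled delicately; I would first try rescaling a known solution of the Yamabe-null case rather than using constants.
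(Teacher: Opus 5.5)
Your proposal is correct in substance, but each of its three parts uses a different tool from the paper.

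\textbf{Comparison with the paper.} The paper never changes the conformal representative. It takes one function $u>0$ solving $L_\Lambda u = A^2+B^2$, where $L_\Lambda = L_0 + \frac{n-1}{n}\tau^2\Lambda$ has positive first eigenvalue; when $\tau=0$ it uses $\Lambda=0$, which is allowed because then $\cY(M,g)>0$. It takes $\lambda u$ and $\mu u$ as super- and sub-solution, and your $\epsilon u$ is essentially its $\mu u$. It proves uniqueness with the Brezis--Oswald identity. This gives $\upd\log(\phi_1/\phi_2)\equiv 0$ from the mere non-strict monotonicity of $f_x$, and only afterwards does the paper substitute $\phi_1=c\phi_2$ to get $c=1$. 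It proves non-existence by reading off the sign of $\scal^{\ghat}$ for $\ghat=\phi^\kappa g$.

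Your normalization to a representative whose scalar curvature has the sign of $\cY(M,g)$ makes the supersolutions trivial (large constants). It also reduces non-existence to integrating the equation over $M$. The price is the covariance bookkeeping and a maximum-principle uniqueness argument that is sensitive to where $A^2+B^2$ vanishes. The paper's integral route avoids both.

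\textbf{Points to tighten.}

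First, the correct weight is $B'^2=\psi^{-2-3\kappa/2}B^2$; only its positivity matters.

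Second, in the uniqueness step with $\tau=0$, the quotient $F(x,\phi)/\phi$ is strictly decreasing only at points where $A^2+B^2>0$. If $\max w>1$ is attained where $A^2+B^2=0$, comparing at the maximum point gives no contradiction. You need the strong maximum principle. On $\{w>1\}$ the ratio $w$ satisfies a linear elliptic inequality without zeroth-order term, so $w\equiv\max w$ on the connected manifold $M$. The equation for $w$ then fails at any point where $A^2+B^2>0$. This is exactly the step the Brezis--Oswald identity handles for free.

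Third, your non-existence list omits the case $\cY(M,g)<0$, $\tau=0$, $A^2\equiv B^2\equiv 0$. The same integration handles it, since it gives $\int_M \scal^g\phi\,\upd\mu^g = 0$ with $\scal^g<0$.

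Finally, your worry about the zero Yamabe case is unfounded. Existence is only claimed there when $\tau\neq 0$ and $A^2+B^2\not\equiv 0$, and your large constant together with $\epsilon u$ already works.
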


Solving the Lichnerowicz equation with $B^2 \equiv 0$ is by now
well-established, going back to~\cite{Isenberg}, see
\eg\cite{GicquaudLichnerowicz}. However, the equation we are considering here
is more general as it corresponds to the non-vacuum case. Such an equation has
been studied \eg in~\cite{HNT1,HNT2} but their study was limited to the case of
dimension $n=3$. While it is well known among people working on initial data
sets that these results can be straightforwardly generalized to higher
dimensions and that the $B$-term does not modify the analysis of the equation,
we give a full treatment here for the sake of completeness. We will use the
fact that the sign of the Yamabe constant $\cY(M, g)$ is the same as that of
the first eigenvalue $\lambda_1(L_0)$ of the conformal Laplacian $L_0$ defined
as
\[
	L_0 u = \frac{4(n-1)}{n-2} \Delta u + \scal^g u.
\]
This fact is proven \eg in~\cite[Prop.~2.2]{GicquaudLichnerowicz}. We decompose
the proof of the proposition into several claims.

\begin{claim}
	The Lichnerowicz equation~\eqref{eqLichnerowicz2} admits no positive solution
	in the case where
	\begin{itemize}
		\item either $\cY(M, g) \geq 0$ and $A^2 \equiv B^2 \equiv 0$,
		\item or $\cY(M, g) \leq 0$ and $\tau = 0$,
	\end{itemize}
	unless $\cY(M, g) = 0$, $\tau = 0$ and $A^2 \equiv B^2 \equiv 0$. In that case,
	the set of positive solutions to~\eqref{eqLichnerowicz2} consists of the
	positive multiples of the first eigenfunction $u_0 > 0$ of the conformal
	Laplacian, $L_0 u_0 = 0$.
\end{claim}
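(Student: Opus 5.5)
The plan is to test the equation against the positive first eigenfunction $u_0$ of the conformal Laplacian $L_0$. Let $\lambda_1 = \lambda_1(L_0)$, whose sign equals that of $\cY(M,g)$, and normalize $u_0 > 0$ with $L_0 u_0 = \lambda_1 u_0$. Suppose $\phi > 0$ solves \eqref{eqLichnerowicz2}. Multiply the equation by $u_0$ and integrate over $M$. Since $L_0$ is formally self-adjoint, $\int_M u_0 L_0\phi \,\upd\mu^g = \lambda_1 \int_M u_0 \phi\,\upd\mu^g$, and we obtain the identity
\[
	\lambda_1 \int_M u_0 \phi \,\upd\mu^g + \frac{n-1}{n}\tau^2 \int_M u_0 \phi^{\kappa+1}\,\upd\mu^g = \int_M u_0\Bigl(\frac{A^2}{\phi^{\kappa+3}} + \frac{B^2}{\phi^{1+\kappa/2}}\Bigr)\upd\mu^g .
\]
Every case of the claim will be read off from the signs of the terms in this identity.

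In the first case, $\cY(M,g) \geq 0$ and $A^2 \equiv B^2 \equiv 0$, so the right-hand side vanishes. The left-hand side is a sum of two non-negative terms, since $\lambda_1 \geq 0$, $u_0 > 0$ and $\phi > 0$. Hence both terms vanish, which forces $\lambda_1 = 0$ and $\tau = 0$. This is the exceptional situation, so outside of it no positive solution exists.

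In the second case, $\cY(M,g) \leq 0$ and $\tau = 0$, so the identity reduces to
\[
	\lambda_1 \int_M u_0\phi\,\upd\mu^g = \int_M u_0\bigl(A^2\phi^{-\kappa-3} + B^2\phi^{-1-\kappa/2}\bigr)\upd\mu^g .
\]
Here the left-hand side is $\leq 0$ and the right-hand side is $\geq 0$. Both must therefore vanish, giving $\lambda_1 = 0$. Because $u_0\phi^{-\kappa-3}$ and $u_0\phi^{-1-\kappa/2}$ are strictly positive, the vanishing of the right-hand side also gives $A^2 \equiv B^2 \equiv 0$. Again we land in the exceptional situation.

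It remains to treat the exceptional case $\cY(M,g) = 0$, $\tau = 0$, $A^2 \equiv B^2 \equiv 0$. There the equation is simply $L_0\phi = 0$ with $\lambda_1 = 0$. Since $M$ is connected, the first eigenvalue of the Schrödinger-type operator $L_0$ is simple, with an eigenfunction of constant sign (standard, via the maximum principle or Krein–Rutman, \cf\cite{GicquaudLichnerowicz}). So $\ker L_0 = \bR u_0$, and the positive solutions are exactly the positive multiples of $u_0$. Conversely, every such multiple clearly solves the equation.

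The only point needing care is regularity: the integration by parts is legitimate because positive solutions are taken in $C^\infty(M,\bR)$. The one non-computational input is the simplicity and positivity of the first eigenfunction, which is standard.
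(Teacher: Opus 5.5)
Your argument is correct, but it takes a different route from the paper.

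The paper passes to the conformal metric $\phi^\kappa g$, whose scalar curvature is $-\frac{n-1}{n}\tau^2 + A^2\phi^{-2\kappa-4} + B^2\phi^{-2-3\kappa/2}$. It then reads off the sign of $\cY(M,g)$ from the conformal invariance of the Yamabe invariant. This uses the standard facts that a metric whose scalar curvature is negative, identically zero, or non-negative and not identically zero has Yamabe invariant negative, zero, or positive respectively.

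You instead stay with $g$ and pair the equation against the positive first eigenfunction $u_0$. This uses only the self-adjointness of $L_0$ and the fact that $\cY(M,g)$ and $\lambda_1(L_0)$ have the same sign, which the paper also invokes.

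Your version is more self-contained. The integration against $u_0$ is essentially the proof of the Yamabe-sign facts the paper quotes, carried out once for this equation. It also extracts $\lambda_1 = 0$, $\tau = 0$ and $A^2 \equiv B^2 \equiv 0$ directly from a single integral identity. The paper's version is shorter and more geometric, since it interprets a positive solution as a conformal metric with prescribed scalar curvature.

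The exceptional case, where the equation reduces to $L_0\phi = 0$ and one uses simplicity of the first eigenvalue, is handled the same way in both.
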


\begin{proof}
	Suppose~\eqref{eqLichnerowicz2} admits a positive solution $\phi$ and consider
	the metric $\ghat \coloneqq \phi^\kappa g$. Its scalar curvature is given by
	\[
		\scal^{\ghat} = \phi^{-\kappa-1} \left(\frac{4(n-1)}{n-2} \Delta \phi + \scal^g\, \phi\right) = - \frac{n-1}{n} \tau^2 + \frac{A^2}{\phi^{2\kappa+4}} + \frac{B^2}{\phi^{2 + 3\kappa/2}}.
	\]

	\begin{itemize}
		\item \textbf{Case 1:} If $A^2 \equiv B^2 \equiv 0$, then $\scal^{\ghat} = -\frac{n-1}{n} \tau^2$.
		      \begin{itemize}
			      \item If $\tau \neq 0$, then $\scal^{\ghat} < 0$, implying $\cY(M, g) = \cY(M, \ghat)
				            < 0$.
			      \item If $\tau = 0$, then $\scal^{\ghat} = 0$, implying $\cY(M, g) = \cY(M, \ghat) =
				            0$.
		      \end{itemize}

		\item \textbf{Case 2:} If $\tau = 0$, then
		      $\scal^{\ghat} = \frac{A^2}{\phi^{2\kappa+4}} + \frac{B^2}{\phi^{2 + 3\kappa/2}} \geq 0$,
		      forcing $\cY(M, g) \geq 0$ with equality if and only if $A^2 \equiv B^2 \equiv 0$.
	\end{itemize}

	In the case where $\cY(M, g) = 0$ and $A^2 \equiv B^2 \equiv 0$, the equation
	reduces to $L_0 \phi = 0$. The positive solutions are the positive multiples of
	the first eigenfunction $u_0 > 0$ of the conformal Laplacian, $L_0 u_0 = 0$.
\end{proof}

We now address the cases of existence of a unique solution to the Lichnerowicz
equation. So we assume that we are in none of the exceptional cases. If $A^2 +
	B^2 \equiv 0$, this means that $\cY(M,g) < 0$ and further $\tau \neq 0$. Then,
since $\tau$ is a constant, the unique solution is provided by the resolution
of the Yamabe problem. If $A^2 + B^2 \not\equiv 0$, this means that $\tau \neq
	0$ or $\cY(M,g) > 0$. The proof of existence of a unique solution in these
cases is based on the sub- and super-solution method (see
\eg\cite[Ch.~14]{Taylor3}).

One key ingredient in the sequel is the following positivity lemma.

\begin{lemma}\label{lemPositivity}
	Let $a > 0$ be a constant, let $f \in C^\infty(M, \bR)$, and consider the
	formally self-adjoint second-order elliptic operator $L = a\Delta + f$ on $M$,
	assumed to have positive first eigenvalue $\lambda_1(L) > 0$. For any $h \in
		C^\infty(M, \bR)$ with $h \geq 0$ and $h \not\equiv 0$, the unique solution $u
		\in C^\infty(M, \bR)$ to $Lu = h$ satisfies $u > 0$.
\end{lemma}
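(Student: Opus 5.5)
The plan is to combine existence and uniqueness from the spectral theory of $L$ with a strong maximum principle argument. Since $L = a\Delta + f$ is formally self-adjoint and elliptic with $\lambda_1(L) > 0$, the operator $L \colon H^{2}(M) \to L^2(M)$ is invertible: its kernel is trivial because $\int_M u\, Lu \,\upd\mu^g \geq \lambda_1(L) \|u\|_{L^2}^2$. By elliptic regularity, the unique solution $u$ of $Lu = h$ is smooth. So the content of the lemma is the sign of $u$.

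First I will show $u \geq 0$, using the variational characterization of $\lambda_1(L)$. Write $u = u_+ - u_-$ with $u_\pm = \max\{\pm u, 0\} \in H^1(M)$. Test the equation against $u_-$. Since $\nabla u_+$ and $\nabla u_-$ have disjoint supports almost everywhere, and $u_+ u_- = 0$,
\[
	-\int_M \bigl(a|\nabla u_-|^2 + f u_-^2\bigr)\upd\mu^g = \int_M h\, u_- \,\upd\mu^g \geq 0 .
\]
The left-hand side is at most $-\lambda_1(L)\|u_-\|_{L^2}^2$. Hence $u_- \equiv 0$, that is, $u \geq 0$. This is the step that uses $\lambda_1(L) > 0$ essentially. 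Note that $f$ may change sign, so a naive pointwise maximum principle applied directly to $L$ is not available.

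Next I will upgrade to strict positivity. Choose a constant $c > 0$ with $f + c > 0$, for instance $c = 1 + \max_M |f|$. Then $u \geq 0$ satisfies
\[
	a\Delta u + (f + c)u = h + c u \geq 0 .
\]
With the geometric sign convention $\Delta = -\div\nabla$, this says $-a\,\div\nabla u + (f+c) u \geq 0$, where the zeroth-order coefficient is non-negative. Suppose $u$ vanished somewhere. Then $u$ would attain an interior minimum equal to $0$ on the closed manifold. The strong maximum principle (Hopf) for supersolutions with non-negative zeroth-order coefficient and a non-positive minimum then forces $u \equiv 0$ on the connected manifold $M$. But $u \equiv 0$ would give $h = Lu = 0$, contradicting $h \not\equiv 0$. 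Therefore $u > 0$.

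I do not expect any step to be a serious obstacle. The only point needing care is the weak maximum principle step, because $f$ is not assumed non-negative. Passing through $\lambda_1(L)$ with the test function $u_-$ handles this, after which the strong maximum principle is applied to the shifted operator. An alternative route would be to note that $(L + c)^{-1}$ is positivity improving and expand $L^{-1}$ in terms of it, but the argument above is more direct.
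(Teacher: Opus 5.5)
Your proposal is correct and follows the paper's proof. You split $u = u_+ - u_-$ and use $\lambda_1(L) > 0$ to get $u \geq 0$, then absorb the negative part of the zeroth-order coefficient so the strong maximum principle and connectedness give $u > 0$. The differences are cosmetic: you test the weak equation directly against $u_-$, where the paper compares $J_h(u)$ with $J_h(u_+)$ and uses uniqueness of the minimizer, and you shift by a constant $c$ rather than moving $f^- u$ to the right-hand side.
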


\begin{proof}
	We first claim that the bounded symmetric bilinear form $B_L$ defined over
	$H^1(M, \bR)$ by
	\[
		B_L(u,v) = \int_M \left[a \langle \upd u, \upd v\rangle + f uv\right]\, \upd\mu^g
	\]
	is coercive. First, since $\lambda_1(L) > 0$, we have, by Rayleigh's principle,
	that for any $v \in H^1(M, \bR)$,
	\[
		B_L(v,v) \geq \lambda_1(L) \|v\|_{L^2}^2.
	\]
	We now promote this inequality to the coercivity of $B_L$. Set $\Lambda
		\coloneqq \|f\|_{L^\infty} + 1$. Then
	\[
		B_L(v,v) + \Lambda \|v\|_{L^2}^2 = \int_M \left[a |\upd v|^2 + (f + \|f\|_{L^\infty} + 1) v^2\right]\, \upd\mu^g \geq \min(a, 1) \|v\|_{H^1}^2.
	\]
	Thus
	\[
		\left(1+ \frac{\Lambda}{\lambda_1(L)}\right)B_L(v,v)
		\geq B_L(v,v)+ \Lambda \|v\|_{L^2}^2
		\geq \min(a, 1) \|v\|_{H^1}^2.
	\]

	By the Lax-Milgram theorem, applied to the coercive bilinear form $B_L$ on the
	real Hilbert space $H^1(M, \bR)$, there is a unique solution $u \in H^1(M,\bR)$
	of the equation
	\begin{gather*}
		B_L(u,v) = \int_M hv \,\upd \mu^g \qquad \text{ for all } v \in H^1(M,\bR),
	\end{gather*}
	\ie a unique weak solution of $Lu = h$.
	It may be characterized as the unique minimizer
	of the functional
	\[
		J_h(v) = \frac{1}{2} \int_M \left[a |\upd v|^2 + f v^2\right]\, \upd\mu^g - \int_M h v\, \upd\mu^g.
	\]
	By elliptic regularity, $u \in C^\infty(M)$ and solves $Lu = h$ classically.

	Writing $u_\pm = \max(\pm u,\,0)$, we have $u_\pm \geq 0$ and $u = u_+ - u_-$.
	By~\cite[Lem.~7.6]{GilbargTrudinger}, applied in local charts, we have $u_\pm
		\in H^1(M,\bR)$, with $\upd u_+ = \upd u$ on $\{u > 0\}$, $\upd u_- = -\upd u$
	on $\{u < 0\}$, and $\upd u_\pm = 0$ elsewhere. In particular, $u_+u_- = 0$
	everywhere and $\langle \upd u_+, \upd u_- \rangle = 0$ almost everywhere, so
	that $B_L(u,u) = B_L(u_+,u_+) + B_L(u_-,u_-)$. Using $\int_M hu \,\upd \mu^g =
		\int_M hu_+ \,\upd \mu^g - \int_M hu_- \,\upd \mu^g$ we obtain the
	decomposition
	\[
		J_h(u) = J_h(u_+) + \frac{1}{2} B_L(u_-,u_-) + \int_M h u_-\, \upd\mu^g.
	\]
	Since $B_L(u_-,u_-) \geq 0$ and $\int_M hu_-\,\upd \mu^g \geq 0$ (as $h \geq
		0$), the right-hand side is bounded below by $J_h(u_+)$. But $u$ is the unique
	minimizer of $J_h$, so $u = u_+ \geq 0$.

	It remains to show $u > 0$. Since $h \not\equiv 0$, the constant function $u
		\equiv 0$ is not a solution. Assume, by contradiction, that there exists a
	point $x_0 \in M$ such that $u(x_0) = 0$. Writing $f = f^+ - f^-$ with $f^\pm =
		\max(\pm f, 0)$, we have
	\[
		a \Delta u + f^+ u = h + f^- u \geq 0,
	\]
	since $u \geq 0$. Let $r > 0$ be such that $r < \mathrm{inj}_g(x_0)$. The
	function $u$ attains its minimum value at $x_0$, which is an interior point of
	$B(x_0, r)$. By~\cite[Thm.~8.19]{GilbargTrudinger}, $u$ is constant on $B(x_0,
		r)$. Hence $u^{-1}(0)$ is open. As it is also closed and $M$ is connected, $u$
	vanishes on $M$, a contradiction.
\end{proof}

We now continue to show that there is a solution in the case $A^2 + B^2
	\not\equiv 0$ with $\tau \neq 0$ or $\cY(M,g) > 0$. Since $\tau$ is constant,
the operator
\[
	L_\Lambda \coloneqq L_0 + \frac{n-1}{n}\tau^2\Lambda
\]
has first eigenvalue $\lambda_1(L_0) + \frac{n-1}{n}\tau^2\Lambda$. When $\tau
	\neq 0$, choose
\[
	\Lambda \geq \frac{n}{(n-1)\tau^2}\max\bigl\{0,\,-\lambda_1(L_0)\bigr\} + 1
\]
so that $\lambda_1(L_\Lambda) > 0$. When $\tau = 0$ set $\Lambda = 0$ (so
$L_\Lambda = L_0$, which has positive first eigenvalue since $\cY(M,g) > 0$).
We let $u \in C^\infty(M)$ be the unique solution to
\begin{equation}\label{eqDefU}
	L_\Lambda u = A^2 + B^2.
\end{equation}
By \cref{lemPositivity}, $u > 0$.

\begin{claim}\label{clSuper}
	The function $\phi_+ \coloneqq \lambda u$ (resp. $\phi_- = \mu u$) is a
	super-solution (resp.\ a sub-solution) to~\eqref{eqLichnerowicz2} for $\lambda
		> 0$ sufficiently large (resp.\ $\mu > 0$ sufficiently small):
	\begin{align*}
		 & \frac{4(n-1)}{n-2} \Delta \phi_+ + \scal^g\, \phi_+
		+ \frac{n-1}{n} \tau^2 \phi_+^{\kappa+1}
		\geq \frac{A^2}{\phi_+^{\kappa+3}} + \frac{B^2}{\phi_+^{1 + \kappa/2}}  \\
		 & \text{(resp.\ }\ \frac{4(n-1)}{n-2} \Delta \phi_- + \scal^g\, \phi_-
		+ \frac{n-1}{n} \tau^2 \phi_-^{\kappa+1}
		\leq \frac{A^2}{\phi_-^{\kappa+3}} + \frac{B^2}{\phi_-^{1 + \kappa/2}}\ \text{)}.
	\end{align*}
\end{claim}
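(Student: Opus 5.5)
The plan is to plug $\phi = c\,u$ into the Lichnerowicz operator and use the defining equation \eqref{eqDefU} of $u$. Since $L_0$ is linear, $L_0 \phi = c\,L_0 u$. By \eqref{eqDefU},
\[
	L_0 u = A^2 + B^2 - \tfrac{n-1}{n}\tau^2 \Lambda u ,
\]
so the left-hand side of \eqref{eqLichnerowicz2} evaluated at $\phi = c u$ becomes
\[
	c\,(A^2+B^2) + \tfrac{n-1}{n}\tau^2\bigl(c^{\kappa+1}u^{\kappa+1} - \Lambda c\, u\bigr).
\]
We then compare this term by term with $A^2 (cu)^{-\kappa-3} + B^2 (cu)^{-1-\kappa/2}$. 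The key fact is that $u$ is smooth and positive on the compact $M$ (\cref{lemPositivity}), so $0 < \min u \leq u \leq \max u < \infty$. All the comparisons therefore reduce to inequalities between constants.

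\textbf{Super-solution.} Take $c = \lambda$ large. The $\tau^2$ term is $\tfrac{n-1}{n}\tau^2 \lambda u\,(\lambda^\kappa u^\kappa - \Lambda)$. It is non-negative once $\lambda^\kappa (\min u)^\kappa \geq \Lambda$, and it vanishes identically when $\tau = 0$. It then suffices to have $\lambda A^2 \geq A^2 \lambda^{-\kappa-3} u^{-\kappa-3}$ and $\lambda B^2 \geq B^2 \lambda^{-1-\kappa/2}u^{-1-\kappa/2}$ pointwise. These follow from $\lambda^{\kappa+4} (\min u)^{\kappa+3} \geq 1$ and $\lambda^{2+\kappa/2}(\min u)^{1+\kappa/2}\geq 1$, both of which hold for $\lambda$ large.

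\textbf{Sub-solution.} Take $c = \mu$ small. We need
\[
	\mu(A^2+B^2) + \tfrac{n-1}{n}\tau^2\mu^{\kappa+1}u^{\kappa+1} \leq A^2(\mu u)^{-\kappa-3} + B^2(\mu u)^{-1-\kappa/2} + \tfrac{n-1}{n}\tau^2\Lambda\mu u .
\]
Here we discard the favourable term involving $\Lambda$ only where it is convenient. The $A^2$ and $B^2$ contributions are handled exactly as before, with $\mu^{\kappa+4}(\max u)^{\kappa+3} \leq 1$ and $\mu^{2+\kappa/2}(\max u)^{1+\kappa/2}\leq 1$. This leaves the term $\tfrac{n-1}{n}\tau^2\mu^{\kappa+1}u^{\kappa+1}$ with nothing yet to absorb it.

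\textbf{Main obstacle.} The delicate point is absorbing this $\tau^2$ term, since $A^2+B^2$ may vanish on part of $M$. When $\tau = 0$ it is absent. When $\tau\neq0$, I would absorb it into the $\Lambda$ term: $\mu^{\kappa+1}u^{\kappa+1} \leq \Lambda \mu u$ holds as soon as $\mu^\kappa (\max u)^\kappa \leq \Lambda$, and $\Lambda \geq 1 > 0$ by its choice in that case. To keep the pieces independent, I would split the comparison into three separate pointwise inequalities, one each for the $A^2$, $B^2$ and $\tau^2$ parts, each with non-negative coefficients. This way no lower bound on $A^2 + B^2$ is ever needed.
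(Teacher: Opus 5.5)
Your proposal is correct and follows the paper's proof essentially verbatim. You substitute $\phi = cu$, use \eqref{eqDefU} to rewrite $L_0 u$, and reduce to the same three pointwise conditions ($c^\kappa u^\kappa$ versus $\Lambda$, $c^{\kappa+4}u^{\kappa+3}$ versus $1$, $c^{2+\kappa/2}u^{1+\kappa/2}$ versus $1$), controlled by $\min u$ for the super-solution and $\max u$ for the sub-solution. Your separate treatment of $\tau = 0$ in the sub-solution case, where $\Lambda = 0$ and the condition $\mu^\kappa u^\kappa \leq \Lambda$ could not hold but the $\tau^2$ term is absent, makes explicit a point that the paper's ``reversing all inequalities'' leaves implicit.
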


\begin{proof}
	Using~\eqref{eqDefU}, the left-hand side of the super-solution condition
	\[
		L_0(\lambda u) + \frac{n-1}{n}\tau^2(\lambda u)^{\kappa+1}
		\geq \frac{A^2}{(\lambda u)^{\kappa+3}} + \frac{B^2}{(\lambda u)^{1+\kappa/2}}
	\]
	equals
	\[
		\lambda(A^2+B^2) + \tfrac{n-1}{n}\tau^2\lambda u
		\bigl(\lambda^\kappa u^\kappa - \Lambda\bigr).
	\]
	So, after rearranging, $\phi_+$ is a super-solution to the Lichnerowicz
	equation if and only if
	\[
		\frac{n-1}{n}\tau^2 \lambda u\left(\lambda^\kappa u^\kappa - \Lambda \right)
		\geq A^2 \left(\frac{1}{\lambda^{\kappa+3} u^{\kappa+3}} - \lambda\right) + B^2 \left(\frac{1}{\lambda^{1+\kappa/2} u^{1+\kappa/2}} - \lambda\right).
	\]
	This inequality is fulfilled as soon as the following three conditions are met:
	\[
		\lambda^\kappa u^\kappa \geq \Lambda,
		\
		\lambda^{\kappa+4}u^{\kappa+3} \geq 1,
		\ \text{and}\
		\lambda^{2+\kappa/2}u^{1+\kappa/2} \geq 1.
	\]
	As $u$ is bounded from below by a positive constant, this is the case if
	$\lambda > 0$ is large enough. A similar analysis can be done for the
	sub-solution $\phi_-$ upon reversing all the inequalities. As $u$ is bounded
	from above, we see that, if $\mu > 0$ is small enough, $\phi_- = \mu u$ is a
	sub-solution to the Lichnerowicz equation.
\end{proof}

Since $\phi_- \leq \phi_+$, the sub- and super-solution method (see
\eg\cite[Ch.~14]{Taylor3}) yields a smooth positive solution $\phi$
to~\eqref{eqLichnerowicz2} with $\phi_- \leq \phi \leq \phi_+$. The only
remaining question is that of the uniqueness of the solution $\phi$
to~\eqref{eqLichnerowicz2}. We follow here the approach used
in~\cite{GicquaudLichnerowicz}. Let $\phi_1, \phi_2$ be two positive smooth
solutions to~\eqref{eqLichnerowicz2}. Setting
\[
	f(\phi) \coloneqq \frac{n-2}{4(n-1)}\!\left[
		\frac{A^2}{\phi^{\kappa+4}} + \frac{B^2}{\phi^{2+\kappa/2}}
		- \scal^g - \frac{n-1}{n}\tau^2\phi^{\kappa}
		\right],
\]
each solution satisfies $\Delta\phi_i/\phi_i = f(\phi_i)$. By integration by
parts~\cite{BrezisOswald},
\begin{align}\label{eqBrezisOswald}
	 & \int_M\!(f(\phi_1) - f(\phi_2))(\phi_1^2 - \phi_2^2)\,\upd\mu^g \nonumber \\
	 & = \int_M\!\left(
	\frac{\Delta\phi_1}{\phi_1} - \frac{\Delta\phi_2}{\phi_2}
	\right)\!(\phi_1^2 - \phi_2^2)\,\upd\mu^g \nonumber                          \\
	 & \qquad =
	\int_M\!(\phi_1^2 + \phi_2^2) \left|\frac{\upd\phi_1}{\phi_1} - \frac{\upd\phi_2}{\phi_2}\right|^2\!\upd\mu^g.
\end{align}
Viewing $f$ as a family of smooth functions $f_x \colon \bR_{>0} \to \bR$
parametrized over $x \in M$, we see that all the functions $f_x$ are
non-increasing:
\[
	f_x'(\phi) = \frac{n-2}{4(n-1)}\!\left[
		-(\kappa{+}4)\frac{A(x)^2}{\phi^{\kappa+5}}
		- \!\left(2{+}\tfrac{\kappa}{2}\right)\!\frac{B(x)^2}{\phi^{3+\kappa/2}}
		- \frac{(n-1)\kappa}{n}\tau^2\phi^{\kappa-1}
		\right] \leq 0.
\]
Hence $[f(\phi_1)-f(\phi_2)](\phi_1^2-\phi_2^2) \leq 0$ pointwise,
and~\eqref{eqBrezisOswald} forces $\upd \log(\phi_1/\phi_2) \equiv 0$, so the
ratio $\phi_1/\phi_2$ equals a positive constant $c$.

Substituting $\phi_1 = c\phi_2$ into~\eqref{eqLichnerowicz2} and subtracting
$c$ times the equation for $\phi_2$ gives
\begin{equation}\label{eqConstancy}
	\frac{n-1}{n}\tau^2\phi_2^{\kappa+1}(c^{\kappa+1}-c)
	+ \frac{A^2}{\phi_2^{\kappa+3}}\bigl(c - c^{-\kappa-3}\bigr)
	+ \frac{B^2}{\phi_2^{1+\kappa/2}}\bigl(c - c^{-1-\kappa/2}\bigr)
	= 0.
\end{equation}
Since $\kappa > 0$ and $c > 0$, the three factors $(c^{\kappa+1}-c)$, $(c -
	c^{-\kappa-3})$, $(c - c^{-1-\kappa/2})$ all share the sign of $c-1$. As
$\tau^2, A^2, B^2 \geq 0$ and $\phi_2 > 0$, equation~\eqref{eqConstancy} forces
$c = 1$, unless $\tau = 0$ and $A \equiv B \equiv 0$, in which case any $c > 0$
is possible. This concludes the proof of \cref{propLichnerowicz}.

We now apply \cref{propLichnerowicz} to show that for any $(g, k) \in \sDEC_{0,
		\CMC}(M)$, any $t \in [0, 1]$ and a suitable choice of TT-tensor $\sigma_t$ the
Lichnerowicz equation~\eqref{eqLichnerowicz} admits a unique positive solution
$\phi \in C^\infty(M,\bR)$. Here, the choice of TT-tensors is motivated by the
requirement that $\sigma_1 = 0$ for $t=1$ in the application later, while we
generally need $\sigma_0 \not\equiv 0$ to ensure existence of a positive
solution for $t=0$, as the proof shows.

\begin{corollary}\label{corExistence}
	For any $(g, k) \in \sDEC_{\CMC}(M)$, any TT-tensor $\sigma \not\equiv 0$ with
	respect to $g$ and any $t \in [0, 1]$, there exists a unique positive solution
	$\phi_t \in C^\infty(M, \bR)$ to the equation~\eqref{eqLichnerowicz} with
	$\sigma_t = (1-t)\sigma$.
\end{corollary}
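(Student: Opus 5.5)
We will deduce this directly from \cref{propLichnerowicz}, applied with $A^2 \coloneqq |t\kring + (1-t)\sigma|_g^2$ and $B^2 \coloneqq 2t\rho$. Both are smooth, and both are non-negative: $A^2$ trivially, and $B^2$ because $(g,k)$ satisfies the strict DEC, so $\rho > |j|_g \geq 0$. Equation~\eqref{eqLichnerowicz} with $\sigma_t = (1-t)\sigma$ is then exactly~\eqref{eqLichnerowicz2} with the constant $\tau = \tr^g k$. It therefore suffices to check that neither of the two exceptional cases of \cref{propLichnerowicz} occurs. In fact it is enough to show that $A^2 + B^2 \not\equiv 0$ and that $\cY(M,g) \leq 0$ forces $\tau \neq 0$.

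First, $A^2 + B^2 \not\equiv 0$. For $t > 0$ we have $B^2 = 2t\rho > 0$ everywhere. For $t = 0$ we have $A^2 = |\sigma|_g^2 \not\equiv 0$ by assumption. So the first exceptional case ($A^2 \equiv B^2 \equiv 0$) never occurs, and this is exactly where the hypothesis $\sigma \not\equiv 0$ is needed.

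Second, the case $\cY(M,g) \leq 0$ and $\tau = 0$ must be excluded; this is the main point. Suppose $\tau = 0$. Then the strict DEC gives
\[
\scal^g = 2\rho + |k|_g^2 - \tau^2 = 2\rho + |\kring|_g^2 > 0,
\]
so $g$ has positive scalar curvature and $\cY(M,g) > 0$. Hence $\cY(M,g) \leq 0$ implies $\tau \neq 0$, and the second exceptional case cannot occur either. Note that this uses only the $t$-independent data $(g,k)$, so the argument is uniform in $t \in [0,1]$.

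With both exceptional cases excluded, \cref{propLichnerowicz} yields a unique positive smooth solution $\phi_t$. We expect no real obstacle: the only subtlety is remembering that at $t = 0$ the matter term disappears and the nonvanishing must come from $\sigma$, while at $t > 0$ it comes from $\rho > 0$. Note also that the condition $g \in \NCK(M)$ plays no role here; it matters only for choosing $\sigma$ continuously later.
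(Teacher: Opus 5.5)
Your proof is correct and follows essentially the same route as the paper. Both apply \cref{propLichnerowicz} with $A^2 = |t\kring + (1-t)\sigma|_g^2$ and $B^2 = 2t\rho$, and rule out $A^2 \equiv B^2 \equiv 0$ using $\rho > 0$ for $t > 0$ and $\sigma \not\equiv 0$ for $t = 0$; both then use the strict DEC in the form $2\rho = \scal^g + \frac{n-1}{n}\tau^2 - |\kring|_g^2 > 0$ to exclude $\tau = 0$ together with $\cY(M,g) \leq 0$, where your contrapositive ($\tau = 0 \Rightarrow \scal^g > 0$) is the paper's ``either $\scal^g > 0$ everywhere or $\tau \neq 0$''.
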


\begin{proof}
	Since $(g, k) \in \sDEC_{\CMC}(M)$, the strict DEC gives
	\[
		\scal^g + \frac{n-1}{n} \tau^2 \geq \scal^g + \frac{n-1}{n} \tau^2 - |\kring|^2 = 2 \rho > 0.
	\]
	In particular, we either have $\scal^g > 0$ everywhere or, at a point $x_0$
	where $\scal^g(x_0) \leq 0$, we have
	\[
		\frac{n-1}{n} \tau^2 \geq 2 \rho(x_0) > 0.
	\]
	This means that at least one of the following two assumptions is fulfilled:
	\[
		\cY(M, g) > 0 \ \text{or}\ \tau \neq 0.
	\]
	Further, if $t > 0$, comparing~\eqref{eqLichnerowicz}
	and~\eqref{eqLichnerowicz2}, we have
	\[
		B^2 = 2 t \rho > 0.
	\]
	When $t = 0$, comparing once again~\eqref{eqLichnerowicz}
	and~\eqref{eqLichnerowicz2}, we have
	\[
		A^2 = |t \kring + (1-t) \sigma|^2 = |\sigma|^2 \not\equiv 0.
	\]
	All in all, we can never have $A^2 \equiv B^2 \equiv 0$. Now the statement
	follows from \cref{propLichnerowicz}.
\end{proof}

We finally need to see that this solution depends continuously on $(g, k, t)$.
This requires a continuous choice of TT-tensors. In the following, we will use
the continuous choice $\sigma(g) \in \TT(M,g) \setminus \{0\}$ of a non-zero
TT-tensor with respect to $g \in \NCK(M)$ provided by \cref{corTT} and then set
$\sigma_t=(1-t)\sigma(g)$ as above.

\begin{lemma}\label{lemContinuity}
	The positive solution $\phi \in C^\infty(M, \bR)$ to the Lichnerowicz
	equation~\eqref{eqLichnerowicz} with $\sigma_t = (1-t)\sigma(g)$ depends
	continuously on $(g, k, t) \in \sDEC_{0, \CMC}(M) \times [0,1]$.
\end{lemma}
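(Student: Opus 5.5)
We will prove continuity with the implicit function theorem in Banach spaces, combined with the uniqueness from \cref{corExistence}. Fix $s$ large (so that $H^{s+2}(M)\hookrightarrow C^2$) and a background metric $\gbar$ to define Sobolev norms. Consider the map
\[
	F \colon \sDEC_{0,\CMC}(M)\times[0,1]\times \{\phi \in H^{s+2}(M,\bR) \mid \phi>0\} \lto H^{s}(M,\bR)
\]
\[
	F(g,k,t,\phi) \coloneqq \tfrac{4(n-1)}{n-2}\Delta^g\phi + \scal^g\phi + \tfrac{n-1}{n}\tau^2\phi^{\kappa+1} - \frac{|t\kring+(1-t)\sigma(g)|_g^2}{\phi^{\kappa+3}} - \frac{2t\rho}{\phi^{1+\kappa/2}},
\]
where $\tau$, $\kring$ and $\rho$ are computed from $(g,k)$. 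By \cref{corTT}, $g\mapsto\sigma(g)$ is continuous into $C^\infty$. Hence the coefficients $A^2=|t\kring+(1-t)\sigma(g)|_g^2$, $B^2=2t\rho$, $\scal^g$ and $\tau$ depend continuously on $(g,k,t)$ in $C^\infty$. It follows that $F$ is continuous, and it is $C^1$ in $\phi$ with $D_\phi F$ continuous in all variables for the operator norm.

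The key step is invertibility of the linearization at the solution $\phi$ given by \cref{corExistence}:
\[
	D_\phi F\,(v) = \tfrac{4(n-1)}{n-2}\Delta v + \Bigl(\scal^g + (\kappa+1)\tfrac{n-1}{n}\tau^2\phi^\kappa + (\kappa+3)\tfrac{A^2}{\phi^{\kappa+4}} + (1+\tfrac\kappa2)\tfrac{B^2}{\phi^{2+\kappa/2}}\Bigr)v .
\]
Using the equation to eliminate $\scal^g$, write $\scal^g = -\frac{4(n-1)}{n-2}\frac{\Delta\phi}{\phi} - \frac{n-1}{n}\tau^2\phi^\kappa + \frac{A^2}{\phi^{\kappa+4}} + \frac{B^2}{\phi^{2+\kappa/2}}$. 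Substituting this, the zeroth-order coefficient becomes
\[
	-\tfrac{4(n-1)}{n-2}\frac{\Delta\phi}{\phi} + \kappa\tfrac{n-1}{n}\tau^2\phi^\kappa + (\kappa+4)\frac{A^2}{\phi^{\kappa+4}} + (2+\tfrac\kappa2)\frac{B^2}{\phi^{2+\kappa/2}} .
\]
Conjugating by $\phi$ (setting $v=\phi w$) removes the $\Delta\phi/\phi$ term, leaving an operator of the form $-\phi^{-2}\mathrm{div}(\phi^2\nabla w)$ plus a nonnegative potential. This potential is not identically zero, since by the proof of \cref{corExistence} either $\tau\neq0$ or $A^2+B^2\not\equiv0$. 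An integration by parts then shows the kernel is trivial. The operator is self-adjoint Fredholm of index zero from $H^{s+2}$ to $H^s$, hence invertible. This mirrors the Brezis--Oswald computation used for uniqueness, and it is the step I expect to require the most care.

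The implicit function theorem (in the version with merely continuous parameter dependence and $C^1$ dependence in $\phi$) then yields, near each $(g_0,k_0,t_0)$, a continuous local solution branch into $H^{s+2}$. By the uniqueness in \cref{corExistence}, this branch coincides with $\phi_t$. So $(g,k,t)\mapsto\phi$ is continuous into $H^{s+2}$ for every $s$, and therefore into $C^\infty$ by Sobolev embedding. Positivity is preserved because $H^{s+2}$-closeness implies $C^0$-closeness.
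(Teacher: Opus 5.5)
Your proposal is correct and follows essentially the same route as the paper's proof. The steps match: the implicit function theorem with merely continuous dependence on the parameter, the uniqueness from \cref{corExistence} to identify the local branch with the solution, and an index-zero Fredholm linearization whose kernel is killed by writing $v=\phi w$ and using the equation to produce a nonnegative, not identically vanishing potential. The only cosmetic differences are that you close the kernel argument by integrating by parts against $\phi^2 w$, where the paper invokes the strong maximum principle, and that you work with $H^{s+2}\to H^{s}$ instead of $H^{s}\to H^{s-2}$.
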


\begin{proof}
	We are going to show that, for each $s > \max\{n/2, 2\}$, the map $\Theta$
	taking a triple $(g, k, t)$ to the solution $\phi_t$ to the Lichnerowicz
	equation~\eqref{eqLichnerowicz} with $\sigma_t = (1-t)\sigma(g)$ is continuous
	as a map
	\[
		\Theta \colon \sDEC_{0,\CMC}(M) \times [0, 1] \to H^s(M, \bR_{>0}).
	\]
	We do this by means of the implicit function theorem. We consider the map
	\[
		\mathrm{Lich} \colon \sDEC_{0, \CMC}(M) \times [0, 1] \times H^s(M, \bR_{> 0}) \to H^{s-2}(M, \bR)
	\]
	defined by
	\[
		\mathrm{Lich}(g, k, t, \phi) \coloneqq \frac{4(n-1)}{n-2} \Delta \phi
		+ \scal^g \phi + \frac{n-1}{n} \tau^2 \phi^{\kappa+1}
		- \frac{|t \kring + (1-t)\sigma(g)|_g^2}{\phi^{\kappa+3}}
		- \frac{2 t \rho}{\phi^{1 + \kappa/2}},
	\]
	where $\tau \coloneqq \tr^g(k)$, $\kring \coloneqq k - \frac{1}{n}\tau g$ and
	$\rho \coloneqq \frac{1}{2}\bigl(\scal^g + (\tr^g k)^2 - |k|_g^2\bigr)$.
	Because of our assumption on $s$, we have the Sobolev embedding $H^s(M, \bR)
		\hookrightarrow L^\infty(M, \bR)$, $H^s(M,\bR)$ is a Banach algebra, and
	composition with smooth functions on the open set $\{\phi > 0\}$ is
	continuously differentiable on it (\cf\cite[App.~III]{ChoquetBruhat}). Since
	moreover $\sigma$ is continuous as a map from $\NCK(M)$ to $C^\infty(M, S_2M)$
	by \cref{corTT}, the map $\mathrm{Lich}$ is well-defined and continuous, it is
	continuously differentiable with respect to $\phi$, and its partial
	differential $D\mathrm{Lich}_\phi$ depends continuously on $(g, k, t, \phi)$
	for the operator norm. These are the assumptions of the implicit function
	theorem when the parameter $(g, k, t)$ lies in a topological space, since its
	proof by the contraction mapping principle only uses the continuity of the
	dependence on the parameter. Moreover, by the uniqueness in
	\cref{corExistence}, the positive solution provided by the implicit function
	theorem is $\Theta(g, k, t)$. All we need to see is therefore that
	$D\mathrm{Lich}_\phi$ is invertible at a solution $\phi$, \ie at a point where
	$\mathrm{Lich}(g, k, t, \phi) = 0$.

	To keep the formulas readable, we write, as in~\eqref{eqLichnerowicz2},
	\[
		A^2 \coloneqq |t \kring + (1-t)\sigma(g)|_g^2
		\qquad \text{and} \qquad
		B^2 \coloneqq 2 t \rho.
	\]
	The differential of $\mathrm{Lich}$ with respect to $\phi$ in the direction
	$\psi$ is given by
	\begin{align*}
		D\mathrm{Lich}_\phi(\psi) & = \frac{4(n-1)}{n-2} \Delta \psi + \scal^g \psi             \\
		                          & \qquad + \left[\frac{n-1}{n}(\kappa+1) \tau^2 \phi^\kappa +
			(\kappa+3)\frac{A^2}{\phi^{\kappa+4}} + \left(1 +
			\frac{\kappa}{2}\right)\frac{B^2}{\phi^{2 + \kappa/2}}\right] \psi.
	\end{align*}
	For simplicity, we denote by $h \geq 0$ the function appearing inside the
	brackets. So all we need to do is to show that $D\mathrm{Lich}_\phi \colon
		H^s(M, \bR) \to H^{s-2}(M, \bR)$ is an isomorphism, \ie that for any $v \in
		H^{s-2}(M, \bR)$, the equation
	\[
		D\mathrm{Lich}_\phi(\psi) = \frac{4(n-1)}{n-2} \Delta \psi + \scal^g \psi + h \psi = v
	\]
	admits a unique solution. By standard elliptic theory, the operator
	$D\mathrm{Lich}_\phi$ is Fredholm with zero index so all we need to do is to
	prove that the equation $D\mathrm{Lich}_\phi(\psi) = 0$ only admits zero as a
	solution. We use again the strategy from the proof of \cref{propLichnerowicz}.
	We rewrite it as follows:
	\begin{align*}
		0
		 & =
		\frac{4(n-1)}{n-2} \Delta \left(\phi \frac{\psi}{\phi}\right) + \scal^g \psi + h \psi \\
		 & = \frac{4(n-1)}{n-2} \left[\phi \Delta \left(\frac{\psi}{\phi}\right)
			- 2 \left\<\upd\phi, \upd\left(\frac{\psi}{\phi}\right)\right\>\right]
		+ \left(\frac{4(n-1)}{n-2} \Delta \phi + \scal^g \phi + h \phi\right) \left(\frac{\psi}{\phi}\right),
	\end{align*}
	where we used the product rule $\Delta(uv) = u \Delta v - 2\<\upd u, \upd v\> +
		v \Delta u$. Using $\mathrm{Lich}(g, k, t, \phi) = 0$, the first factor of the
	second summand evaluates as
	\begin{align*}
		\frac{4(n-1)}{n-2} \Delta \phi + \scal^g \phi + h \phi
		 & = - \frac{n-1}{n} \tau^2 \phi^{\kappa+1}
		+ \frac{A^2}{\phi^{\kappa+3}} + \frac{B^2}{\phi^{1 + \kappa/2}}    \\
		 & \quad + \frac{n-1}{n}(\kappa+1) \tau^2 \phi^{\kappa+1}
		+ (\kappa+3)\frac{A^2}{\phi^{\kappa+3}}
		+ \left(1 + \frac{\kappa}{2}\right)\frac{B^2}{\phi^{1 + \kappa/2}} \\
		 & = \frac{n-1}{n}\kappa \tau^2 \phi^{\kappa+1}
		+ (\kappa+4) \frac{A^2}{\phi^{\kappa+3}}
		+ \left(2 + \frac{\kappa}{2}\right) \frac{B^2}{\phi^{1 + \kappa/2}}.
	\end{align*}
	From the maximum principle~\cite[Thm.~8.19]{GilbargTrudinger}, we see that
	$\psi/\phi$ vanishes unless
	\[
		\frac{n-1}{n}\kappa \tau^2 \phi^{\kappa+1}
		+ (\kappa+4) \frac{A^2}{\phi^{\kappa+3}}
		+ \left(2 + \frac{\kappa}{2}\right) \frac{B^2}{\phi^{1 + \kappa/2}} \equiv 0.
	\]
	But if $t > 0$, this is impossible because $\rho > 0$ and, if $t = 0$, we have
	$\sigma(g) \not\equiv 0$. All in all, we conclude that $D\mathrm{Lich}_\phi$ is
	an isomorphism. The implicit function theorem then allows us to conclude that
	$\phi \in H^s(M, \bR_{> 0})$ depends continuously on $(g, k, t)$. As $s$ is
	arbitrary, we have that $\phi$ depends continuously on $(g, k, t)$ in
	$C^\infty(M, \bR_{>0})$.
\end{proof}

\begin{proof}[Proof of \cref{thmHomotopy}]
	Let $g \mapsto \sigma(g) \in \TT(M,g) \setminus \{0\}$ be a continuous choice
	of non-zero TT-tensors for metrics $g \in \NCK(M)$, which exists by
	\cref{corTT}.
	Given $(g, k) \in \sDEC_{0, \CMC}(M)$, we set $\kappa$, $\rho$, $\tau$ and
	$\kring$ as in \cref{lemIDSForPsi} and consider the Lichnerowicz
	equation~\eqref{eqLichnerowicz} for $t \in [0,1]$ and $\sigma_t =
		(1-t)\sigma(g)$.

	By \cref{corExistence}, there is a unique positive solution $\phi_t \in
		C^\infty(M, \bR)$ to this equation. The formula~\eqref{eqDefIDSForPsi} then
	defines an initial data set $(g_t,k_t) \in \DEC_{0,\CMC}(M)$. Here, the part of
	the statement that it is a CMC initial data set satisfying DEC is contained in
	\cref{lemIDSForPsi}, while the part that $g_t \in \NCK(M)$ follows directly
	from the conformal invariance of the existence of CKVs. \Cref{lemIDSForPsi}
	implies further that $(g_0,k_0) \in \VacC_{0,\CMC}(M)$ and that $(g_1,k_1) =
		(g,k)$ since $\sigma_1 = 0$, so that $\phi_1 = 1$ solves~\eqref{eqLichnerowicz}
	and is, by uniqueness, the solution.

	Since $\sigma$ is continuous with values in $C^\infty(M, S_2M)$ (\cref{corTT}),
	\cref{lemContinuity} ensures that $\phi_t$, and hence $(g_t,k_t)$, depends
	continuously on $(g, k, t)$, yielding a continuous map $H \colon \sDEC_{0,
			\CMC}(M) \times [0,1] \to \DEC_{0, \CMC}(M)$. This map is the required homotopy
	from the inclusion $\mathrm{Incl} = H(\blank,1)$ to the map $\Psi \coloneqq
		H(\blank,0) \colon \sDEC_{0, \CMC}(M) \to \VacC_{0, \CMC}(M) \subset \DEC_{0,
			\CMC}(M)$, which takes its values in $\VacC_{0, \CMC}(M)$.
\end{proof}

\appendix
\section{More details on the index difference}\label{secOladiff}
In this section, we briefly recall the definition of the index difference
\begin{align*}
	\oladiff \colon \pi_\ell(\sDEC(M), (g_0,k_0))	\lto \KO^{-n-\ell}(\pt),
\end{align*}
defined for a base point $(g_0,k_0) \in \sDEC(M)$, from~\cite{Gloeckle2024a}.
Throughout, $M$ is a closed spin manifold of dimension $n$ with chosen spin
structure.

Associated to the spin structure of $M$, there is a $\Cl_n$-linear spinor
bundle $\Sigma_{\Cl} M \to M$. It is equipped with a scalar product, a
$\Ztwo$-grading, a (left) Clifford multiplication by $TM$, a (right)
$\Cl_n$-action and a connection $\nabla$, all of which are compatible. The
$\Cl_{n,1}$-linear \emph{hypersurface spinor bundle} $\overline\Sigma_{\Cl} M
	\to M$ is defined similarly by associating $\Cl_{n,1}$ to the spin structure
via left multiplication. Since $\overline\Sigma_{\Cl} M \cong \Sigma_{\Cl} M
	\otimes_{\Cl_n} \Cl_{n,1}$, it may just be viewed as two copies of
$\Sigma_{\Cl} M$, but its important property is that the Clifford actions
extend. Most notably, this amounts to the existence of an involution, which we
denote by $e_0 \cdot$, that anti-commutes with the left $TM$-action. This can
be used to define, in the spirit of Witten~\cite{Witten}, a modified connection
$\overline{\nabla}$ by
\begin{gather} \label{eqDefWittenConn}
	\overline{\nabla}_X \psi \coloneqq \nabla_X \psi
	- \frac{1}{2} e_0 \cdot k(X,\blank)^\sharp \cdot \psi
\end{gather}
for $X \in TM$ and $\psi \in \Gamma(\overline{\Sigma}_{\Cl}M)$.

The ($\Cl_{n,1}$-linear) \emph{Dirac-Witten operator} is the self-adjoint
elliptic first-order differential operator on $\overline{\Sigma}_{\Cl} M$
defined by the local formula
\begin{gather} \label{eqDefDiracWitten}
	\overline{D} \psi
	\coloneqq \sum_{i = 1}^n e_i \cdot \overline{\nabla}_{e_i} \psi,
\end{gather}
where $\psi \in \Gamma(\overline{\Sigma}_{\Cl}M)$ and $e_1, \ldots, e_n$ is a
local orthonormal basis of $TM$. By virtue of~\eqref{eqDefWittenConn}, it
satisfies the following comparison formula to the Dirac operator $D$ on
$\overline{\Sigma}_{\Cl} M$ (defined by~\eqref{eqDefDiracWitten} with
$\overline{\nabla}$ replaced by $\nabla$):
\begin{gather} \label{eqCompDW}
	\overline{D} = D - \frac{1}{2} \tr^g(k) e_0 \cdot
\end{gather}
Moreover, it satisfies the Schrödinger-Lichnerowicz type formula
\begin{gather*}
	\overline{D}^2 = \overline{\nabla}^* \overline{\nabla}
	+ \frac{1}{2} (\rho - e_0 \cdot j^\sharp \cdot),
\end{gather*}
which in particular implies that for initial data sets with $\rho > |j|_g$ the
associated Dirac-Witten operator has zero kernel and is hence invertible. This
motivates the following definition:

\begin{definition}\label{defInvI}
	For a spin manifold $M$ with fixed spin structure, we define the subspace of
	\emph{initial data sets with invertible Dirac-Witten operator} by $\InvI(M)
		\coloneqq \{(g,k) \in \Ini(M) \mid \ker(\overline{D}) = 0\}$.
\end{definition}

\begin{remark}
	Note that $\InvI(M)$ depends on the chosen spin structure. For example, the
	Dirac-Witten operator of the flat metric on the torus $T^2$ and $k = 0$ is
	invertible for all but one of the four spin structures on $T^2$.
\end{remark}

\begin{remark}
	Since the second fundamental form $k$ only enters the Dirac-Witten operator
	through its trace, \cf\eqref{eqCompDW}, whether $(g,k)$ lies in $\InvI(M)$
	depends on $g$ and $\tr^g(k)$ alone.
\end{remark}

Because $\overline{D}$ is a self-adjoint elliptic differential operator on a
closed manifold, standard analytic results
(\cf\eg\cite[Thm.~III.5.8]{LawsonMichelsohn}) imply that its bounded transform
$\frac{\overline{D}}{\sqrt{1+\overline{D}^2}}$ defines a (bounded) Fredholm
operator on the Hilbert space of $L^2$-sections of $\overline{\Sigma}_{\Cl}M
	\to M$. Since the right $\Cl_{n,1}$-action on $\overline{\Sigma}_{\Cl}M$
commutes with $\overline{D}$, this Fredholm operator is $\Cl_{n,1}$-linear.
Moreover, it is invertible if and only if $(g,k) \in \InvI(M)$.

Now, we recall that the space of Fredholm operators on an infinite-dimensional
separable Hilbert space $H$ is a classifying space for
$\KO$-theory~\cite{AtiyahSinger}. More precisely, assume that $H$ is equipped
with an (ample) $\Cl_{r,s}$-action and let $\Fred^{r,s}(H)$ denote the (``ample
component'' of the) space of Fredholm operators commuting with this action and
$\Fred^{*,r,s}(H) \subset \Fred^{r,s}(H)$ its subspace of invertible operators.
Then, for any compact relative CW-complex $(X,Y)$ there is a bijective index
map (\cf\cite[Thm.~2.7 and Lem.~2.8]{Ebert2017}
or~\cite[Thm.~2.5]{Gloeckle2024a})
\begin{align*}
	\ind \colon [(X,Y), (\Fred^{r,s}(H), \Fred^{*,r,s}(H))] \overset{\cong}{\lto} \KO^{s-r}(X,Y)
\end{align*}
which is natural in $(X,Y)$ with respect to continuous maps of pairs. Here, the
left hand side denotes the set of homotopy classes of continuous maps of pairs
$(X,Y) \to (\Fred^{r,s}(H), \Fred^{*,r,s}(H))$.

The idea is to associate to each initial data set $(g,k)$ the bounded transform
of its Dirac-Witten operator, which yields a continuous map of pairs
\begin{gather*}
	\overline{F} \colon (\Ini(M), \InvI(M)) \lto (\Fred^{n,1}(H), \Fred^{*,n,1}(H)),
\end{gather*}
where $H = L^2(M, \overline{\Sigma}_{\Cl}M)$. The technical problem arising
here is that the spinor bundle (and thus $H$) depends on the metric. One way to
overcome this is to fix some $g_0 \in \Met(M)$ and to suitably identify the
spinor bundles for other metrics with the one for $g_0$. A suitable way of
identification is provided by the method of generalized cylinders by Bär,
Gauduchon and Moroianu~\cite{BaerGauduchonMoroianu} and this gives rise to a
well-defined and continuous map~$\overline{F}$
(\cf\cite[Thm.~3.14]{Gloeckle2024a}).

We need two more small ingredients. First, fixing some $(g_0,k_0) \in \Ini(M)$,
any map $S^\ell \to \InvI(M)$ extends by convex combination to a map of pairs
$(D^{\ell + 1}, S^\ell) \to (\Ini(M), \InvI(M))$. On the level of homotopy
classes, this construction yields a bijection
\begin{gather*}
	[S^\ell, \InvI(M)] \overset{\cong}{\lto} [(D^{\ell + 1}, S^\ell),(\Ini(M), \InvI(M))],
\end{gather*}
which is independent of the choice of cone point $(g_0,k_0)$. Second, since
$\KO$ is a cohomology theory, there are canonical isomorphisms
$\KO^{*}(D^{\ell+1}, S^\ell) \cong \widetilde{\KO}^{*}(S^{\ell+1}) \cong
	\widetilde{\KO}^{*-\ell-1}(S^0) \cong \KO^{*-\ell-1}(\pt)$.

\begin{definition}\label{defOladiff}
	The index difference for initial data sets is the map defined by the
	composition
	\begin{align*}
		\oladiff \colon [S^\ell, \InvI(M)] & \overset{\cong}{\lto} [(D^{\ell + 1}, S^\ell),(\Ini(M), \InvI(M))]                            \\
		                                   & \overset{\overline{F}_*}{\lto} [(D^{\ell + 1}, S^\ell),(\Fred^{n,1}(H), \Fred^{*,n,1}(H))]    \\
		                                   & \overset{\ind}{\underset{\cong}{\lto}} \KO^{1-n}(D^{\ell+1}, S^\ell) \cong \KO^{-n-\ell}(\pt)
	\end{align*}
	of the maps discussed above, where $H = L^2(M, \overline{\Sigma}_{\Cl}M)$ for
	some arbitrarily fixed metric $g_0 \in \Met(M)$.
\end{definition}

Since $\sDEC(M) \subseteq \InvI(M)$ and a fixed base point can be easily
forgotten, the index difference defined above immediately gives rise to the map
$\oladiff \colon \pi_\ell(\sDEC(M),(g_0,k_0)) \to \KO^{-n-\ell}(\pt)$
considered in~\cite{Gloeckle2024a}, which we denote by the same symbol.

\printbibliography

@incollection{AmmannGloeckle,
  author    = {Ammann, B. and Glöckle, J.},
  booktitle = {Perspectives in Scalar Curvature},
  editor    = {Gromov, M. L. and Lawson, Jr., H. B.},
  month     = {2},
  publisher = {World Scientific},
  title     = {Dominant energy condition and spinors on {L}orentzian manifolds},
  year      = {2023}
}

@article{ArmsMarsdenMoncrief,
  author  = {Arms, J. M. and Marsden, J. E. and Moncrief, V.},
  journal = {Ann. Physics},
  number  = {1},
  pages   = {81--106},
  title   = {The structure of the space of solutions of {E}instein's equations. {II}. {S}everal {K}illing fields and the {E}instein--{Y}ang--{M}ills equations},
  volume  = {144},
  year    = {1982}
}

@article{AtiyahSinger,
  author  = {Atiyah, M. F. and Singer, I. M.},
  journal = {Publ. math. IHES},
  pages   = {5-26},
  title   = {Index theory for skew-adjoint Fredholm operators},
  volume  = {37},
  year    = {1969}
}

@article{BaerGauduchonMoroianu,
  author  = {Bär, C. and Gauduchon, P. and Moroianu, A.},
  journal = {Math. Z.},
  number  = {3},
  pages   = {545--580},
  title   = {Generalized cylinders in semi-Riemannian and spin geometry},
  volume  = {249},
  year    = {2005}
}

@article{Bartnik,
  author   = {Bartnik, R.},
  fjournal = {Communications in Analysis and Geometry},
  journal  = {Commun. Anal. Geom.},
  number   = {5},
  pages    = {845--885},
  title    = {Phase space for the {Einstein} equations},
  volume   = {13},
  year     = {2005}
}

@incollection{BartnikIsenberg,
  author    = {Bartnik, R. and Isenberg, J.},
  booktitle = {The {E}instein equations and the large scale behavior of gravitational fields},
  editor    = {Chruściel, P. T. and Friedrich, H.},
  pages     = {1--38},
  publisher = {Birkh\"auser, Basel},
  title     = {The constraint equations},
  year      = {2004}
}

@article{BeigChruscielSchoen,
  author  = {Beig, R. and Chruściel, P. T. and Schoen, R.},
  journal = {Ann. Henri Poincar\'e},
  number  = {1},
  pages   = {155--194},
  title   = {{KID}s are non-generic},
  volume  = {6},
  year    = {2005}
}

@book{Besse,
  author    = {Besse, A. L.},
  number    = {10},
  publisher = {Springer},
  series    = {Ergebnisse der Mathematik und ihrer Grenzgebiete, 3.~Folge},
  title     = {Einstein Manifolds},
  year      = {1987}
}

@book{BoossBleecker,
  author    = {Booss, B. and Bleecker, D. D.},
  publisher = {Springer, Cham},
  series    = {Universitext},
  title     = {Topology and analysis. {The} {Atiyah}-{Singer} index formula and gauge- theoretic physics. {Transl}. from the {German} by {D}. {D}. {Bleecker} and {A}. {Mader}},
  year      = {1985}
}

@misc{BamlerKleiner,
  author = {Bamler, R. H. and Kleiner, B.},
  note   = {\arxiv{1909.08710}},
  title  = {Ricci flow and contractibility of spaces of metrics},
  year   = {2019}
}

@article{BotvinnikGilkey,
  author  = {Botvinnik, B. and Gilkey, P. B.},
  journal = {Math. Ann.},
  number  = {3},
  pages   = {507--517},
  title   = {The eta invariant and metrics of positive scalar curvature},
  volume  = {302},
  year    = {1995}
}

@article{BotvinnikEbertRandalWilliams,
  author  = {Botvinnik, B. and Ebert, J. and Randal-Williams, O.},
  journal = {Invent. math.},
  pages   = {749--835},
  title   = {Infinite loop spaces and positive scalar curvature},
  year    = {2014}
}

@article{Bouldin,
  author  = {Bouldin, R.},
  journal = {SIAM Journal on Mathematical Analysis},
  number  = {2},
  pages   = {206-210},
  title   = {The Norm Continuity Properties of Square Roots},
  volume  = {3},
  year    = {1972}
}

@article{BourguignonEbinMarsden,
  author  = {Bourguignon, J.-P. and Ebin, D. G. and Marsden, J. E.},
  journal = {C. R. Acad. Sci., Paris, S{\'e}r. A},
  pages   = {867--870},
  title   = {Sur le noyau des op{\'e}rateurs pseudo-diff{\'e}rentiels {\`a} symbole surjectif et non injectif},
  volume  = {282},
  year    = {1976}
}

@article{BrezisOswald,
  author  = {Br{\'e}zis, H. and Oswald, L.},
  journal = {Nonlinear Anal.},
  number  = {1},
  pages   = {55--64},
  title   = {Remarks on sublinear elliptic equations},
  volume  = {10},
  year    = {1986}
}

@article{CarlottoReview,
  author  = {Carlotto, A.},
  journal = {Living Rev. Relativ.},
  note    = {Id/No 2},
  pages   = {170},
  title   = {The general relativistic constraint equations},
  volume  = {24},
  year    = {2021}
}

@article{CheegerGromov,
  author  = {Cheeger, J. and Gromov, M.},
  journal = {Topology},
  number  = {2},
  pages   = {189--215},
  title   = {$L_2$-Cohomology and group cohomology},
  volume  = {25},
  year    = {1986}
}

@book{ChoquetBruhat,
  author    = {Choquet-Bruhat, Y.},
  publisher = {Oxford University Press},
  title     = {General relativity and the Einstein equations},
  year      = {2009}
}

@incollection{ChoquetBruhatYork,
  address   = {New York},
  author    = {Choquet-Bruhat, Y. and York, Jr., J. W.},
  booktitle = {General relativity and gravitation, Vol.~1},
  editor    = {Held, A.},
  pages     = {99--172},
  publisher = {Plenum Press},
  title     = {The {C}auchy problem},
  year      = {1980}
}

@book{Conway,
  author    = {Conway, J. B.},
  edition   = {2nd ed.},
  publisher = {New York etc.: Springer-Verlag},
  series    = {Grad. Texts Math.},
  title     = {A course in functional analysis.},
  volume    = {96},
  year      = {1990}
}

@article{CrowleySchickSteimle,
  author  = {Crowley, D. and Schick, T. and Steimle, W.},
  journal = {Journal of Topology},
  number  = {4},
  pages   = {1077--1099},
  title   = {Harmonic spinors and metrics of positive curvature via the Gromoll filtration and Toda brackets},
  volume  = {11},
  year    = {2018}
}

@incollection{CurryGover,
  address   = {Cambridge},
  author    = {Curry, S. N. and Gover, A. R.},
  booktitle = {Asymptotic analysis in general relativity},
  editor    = {Daud{\'e}, T. and H{\"a}fner, D. and Nicolas, J.-P.},
  pages     = {86--170},
  publisher = {Cambridge University Press},
  series    = {London Math. Soc. Lecture Note Ser.},
  title     = {An introduction to conformal geometry and tractor calculus, with a view to applications in general relativity},
  volume    = {443},
  year      = {2018}
}

@article{DahlGicquaudHumbert,
  author  = {Dahl, M. and Gicquaud, R. and Humbert, E.},
  journal = {Duke Math. J.},
  number  = {14},
  pages   = {2669--2697},
  title   = {{A limit equation associated to the solvability of the vacuum {E}instein constraint equations by using the conformal method}},
  volume  = {161},
  year    = {2012}
}

@article{DixmierDouady,
  author    = {Dixmier, J. and Douady, A.},
  journal   = {Bulletin de la Soci\'et\'e Math\'ematique de France},
  pages     = {227--284},
  publisher = {Soci\'et\'e math\'ematique de France},
  title     = {Champs continus d{\textquoteright}espaces hilbertiens et de {$C^\ast $}-alg\`ebres},
  volume    = {91},
  year      = {1963}
}

@article{Ebert2017,
  author  = {Ebert, J.},
  journal = {Trans. Amer. Math. Soc.},
  pages   = {7469--7507},
  title   = {The two definitions of the index difference},
  volume  = {369},
  year    = {2017}
}

@article{FischerMarsden,
  author  = {Fischer, A. E. and Marsden, J. E.},
  journal = {Bull. Amer. Math. Soc.},
  number  = {5},
  pages   = {997--1003},
  title   = {Linearization stability of the {E}instein equations},
  volume  = {79},
  year    = {1973}
}

@article{FischerMoncrief,
  author  = {Fischer, A. E. and Moncrief, V.},
  journal = {Classical Quantum Gravity},
  number  = {21},
  pages   = {4493--4515},
  title   = {The reduced {E}instein equations and the conformal volume collapse of $3$-manifolds},
  volume  = {18},
  year    = {2001}
}

@article{GarciaHehlHeinickeMacias,
  author   = {Garc{\'{\i}}a, A. A. and Hehl, F. W. and Heinicke, C. and Mac{\'{\i}}as, A.},
  doi      = {10.1088/0264-9381/21/4/024},
  fjournal = {Classical and Quantum Gravity},
  issn     = {0264-9381},
  journal  = {Classical Quantum Gravity},
  language = {English},
  number   = {4},
  pages    = {1099--1118},
  title    = {The {Cotton} tensor in {Riemannian} spacetimes},
  volume   = {21},
  year     = {2004},
  zbl      = {1045.83051},
  zbmath   = {2070692}
}

@article{GicquaudLichnerowicz,
  author  = {Gicquaud, R.},
  journal = {J. Math. Phys.},
  number  = {2},
  pages   = {Paper No. 022501, 12},
  title   = {Existence of solutions to the {L}ichnerowicz equation: a new proof},
  volume  = {63},
  year    = {2022}
}

@article{GicquaudSmallTT,
  author   = {Gicquaud, Romain},
  title    = {Solutions to the {Einstein} constraint equations with a small {TT}-tensor and vanishing {Yamabe} invariant},
  fjournal = {Annales Henri Poincar{\'e}},
  journal  = {Ann. Henri Poincar{\'e}},
  issn     = {1424-0637},
  volume   = {22},
  number   = {7},
  pages    = {2407--2435},
  year     = {2021},
  language = {English},
  doi      = {10.1007/s00023-021-01036-1},
  zbmath   = {7373884},
  zbl      = {1471.53036}
}

@unpublished{GicquaudConformal,
  author = {Gicquaud, R.},
  note   = {in preparation},
  title  = {{The conformal method is not conformal}}
}

@book{GilbargTrudinger,
  address   = {Berlin},
  author    = {Gilbarg, D. and Trudinger, N. S.},
  note      = {Reprint of the 1998 edition},
  publisher = {Springer-Verlag},
  series    = {Classics in Mathematics},
  title     = {Elliptic partial differential equations of second order},
  year      = {2001}
}

@article{Gloeckle2021,
  title    = {An Enlargeability Obstruction for Spacetimes with both Big Bang and Big Crunch},
  author   = {Glöckle, J.},
  journal  = {Math. Res. Lett.},
  fjournal = {Mathematical Research Letters},
  year     = {2024},
  volume   = {31},
  number   = {5},
  pages    = {1435--1469},
  doi      = {10.4310/MRL.241211042922}
}

@article{Gloeckle2024a,
  author  = {Glöckle, J.},
  journal = {Math. Ann.},
  month   = {02},
  pages   = {1323--1355},
  title   = {On the space of initial values strictly satisfying the dominant energy condition},
  volume  = {388},
  year    = {2024}
}

@article{GromovLawson1980,
  author  = {Gromov, M. and Lawson, Jr., H. B.},
  journal = {Ann. of Math. (2)},
  number  = {2},
  pages   = {209--230},
  title   = {Spin and scalar curvature in the presence of a fundamental group. {I}},
  volume  = {111},
  year    = {1980}
}

@article{GromovLawson1980b,
  author  = {Gromov, M. and Lawson, Jr., H. B.},
  journal = {Ann. of Math. (2)},
  number  = {3},
  pages   = {423--434},
  title   = {The classification of simply connected manifolds of positive scalar curvature},
  volume  = {111},
  year    = {1980}
}

@article{HankeSchickSteimle,
  author  = {Hanke, B. and Schick, T. and Steimle, W.},
  journal = {Publ. math. IHES},
  number  = {1},
  pages   = {335--367},
  title   = {The space of metrics of positive scalar curvature},
  volume  = {120},
  year    = {2014}
}

@book{HatcherKT,
  author = {Hatcher, A.},
  note   = {Available at \url{https://pi.math.cornell.edu/~hatcher/VBKT/VBpage.html}},
  title  = {Vector Bundles and $K$-Theory},
  year   = {2017}
}

@article{Hitchin,
  author  = {Hitchin, N.},
  journal = {Advances in Mathematics},
  number  = {1},
  pages   = {1-55},
  title   = {Harmonic Spinors},
  volume  = {14},
  year    = {1974}
}

@article{HNT1,
  author  = {Holst, M. and Nagy, G. and Tsogtgerel, G.},
  journal = {Phys. Rev. Lett.},
  number  = {16},
  pages   = {161101, 4},
  title   = {{Far-from-constant mean curvature solutions of {E}instein's constraint equations with positive {Y}amabe metrics}},
  volume  = {100},
  year    = {2008}
}

@article{HNT2,
  author  = {Holst, M. and Nagy, G. and Tsogtgerel, G.},
  journal = {Comm. Math. Phys.},
  number  = {2},
  pages   = {547--613},
  title   = {{Rough solutions of the {E}instein constraints on closed manifolds without near-{CMC} conditions}},
  volume  = {288},
  year    = {2009}
}

@book{HurewiczWallman,
  author    = {Hurewicz, W. and Wallman, H.},
  address   = {Princeton, NJ},
  booktitle = {Dimension theory},
  edition   = {Rev. ed.},
  publisher = {Princeton University Press},
  series    = {Princeton mathematical series},
  title     = {Dimension theory},
  year      = {1948}
}

@article{Isenberg,
  author  = {Isenberg, J.},
  journal = {Classical and Quantum Gravity},
  number  = {9},
  pages   = {2249--2274},
  title   = {Constant mean curvature solutions of the {E}instein constraint equations on closed manifolds},
  volume  = {12},
  year    = {1995}
}

@book{Kammeyer,
  author    = {Kammeyer, H.},
  publisher = {Springer Nature Switzerland},
  number    = {2247},
  series    = {Lecture Notes in Mathematics},
  title     = {Introduction to $\ell^2$-invariants},
  year      = {2019}
}

@book{LawsonMichelsohn,
  address   = {Princeton},
  author    = {Lawson, Jr., H. B. and Michelsohn, M.-L.},
  publisher = {Princeton University Press},
  title     = {Spin Geometry},
  year      = {1989}
}

@article{Lichnerowicz1944,
  author  = {Lichnerowicz, A.},
  journal = {J. Math. Pures Appl. (9)},
  pages   = {37--63},
  title   = {L'int\'egration des \'equations de la gravitation relativiste et le probl\`eme des $n$ corps},
  volume  = {23},
  year    = {1944}
}

@article{Marques,
  author  = {Marques, F. C.},
  journal = {Ann. of Math. (2)},
  number  = {2},
  pages   = {815--863},
  title   = {Deforming three-manifolds with positive scalar curvature},
  volume  = {176},
  year    = {2012}
}

@article{MaxwellDrift,
  author  = {Maxwell, D.},
  journal = {Comm. Anal. Geom.},
  note    = {\arxiv{1407.1467}},
  number  = {1},
  pages   = {207--281},
  title   = {Initial data in general relativity described by expansion, conformal deformation and drift},
  volume  = {29},
  year    = {2021}
}

@article{MaxwellModel,
  author  = {Maxwell, D.},
  journal = {Comm. Math. Phys.},
  note    = {\arxiv{0909.5674}},
  number  = {3},
  pages   = {697--736},
  title   = {A model problem for conformal parameterizations of the {E}instein constraint equations},
  volume  = {302},
  year    = {2011}
}

@article{MaxwellNonCMC,
  author  = {Maxwell, D.},
  journal = {Math. Res. Lett.},
  number  = {4},
  pages   = {627--645},
  title   = {{A class of solutions of the vacuum {E}instein constraint equations with freely specified mean curvature}},
  volume  = {16},
  year    = {2009}
}

@article{Moncrief,
  author  = {Moncrief, V.},
  journal = {J. Math. Phys.},
  number  = {3},
  pages   = {493--498},
  title   = {Spacetime symmetries and linearization stability of the {E}instein equations. {I}},
  volume  = {16},
  year    = {1975}
}

@book{MunkresDiffTop,
  author    = {Munkres, J. R.},
  address   = {Princeton, NJ},
  booktitle = {Elementary differential topology},
  publisher = {Princeton University Press},
  series    = {Annals of mathematics studies, 54},
  title     = {Elementary differential topology},
  year      = {1963}
}

@book{Munkres,
  address   = {Upper Saddle River, NJ},
  author    = {Munkres, J. R.},
  edition   = {2},
  publisher = {Prentice Hall},
  title     = {Topology},
  year      = {2000}
}

@article{ParkerTaubes,
  author  = {Parker, T. and Taubes, C. H.},
  journal = {Comm. Math. Phys. },
  pages   = {223--238},
  title   = {{On Witten's proof of the positive energy theorem}},
  volume  = {84},
  year    = {1982}
}

@book{Ringstroem,
  author    = {Ringström, H.},
  month     = {05},
  publisher = {Oxford University Press},
  title     = {{On the Topology and Future Stability of the Universe}},
  year      = {2013}
}

@incollection{Rosenberg,
  author    = {Rosenberg, J.},
  booktitle = {Surveys in differential geometry. {V}ol. {XI}},
  editor    = {Cheeger, J. and Grove, K.},
  pages     = {259--294},
  series    = {Surv. Differ. Geom.},
  title     = {Manifolds of positive scalar curvature: a progress report},
  volume    = {11},
  year      = {2007}
}

@article{Ruberman,
  author  = {Ruberman, D.},
  journal = {Geom. Topol.},
  pages   = {895--924},
  title   = {Positive scalar curvature, diffeomorphisms and the {S}eiberg-{W}itten invariants},
  volume  = {5},
  year    = {2001}
}

@article{SchoenYau,
  author  = {Schoen, R. and Yau, S.-T.},
  journal = {manuscripta math.},
  pages   = {159--183},
  title   = {On the structure of manifolds with positive scalar curvature},
  volume  = {28},
  year    = {1979}
}

@article{Stolz,
  author  = {Stolz, S.},
  journal = {Ann. of Math. (2)},
  number  = {3},
  pages   = {511--540},
  title   = {Simply connected manifolds of positive scalar curvature},
  volume  = {136},
  year    = {1992}
}

@book{Taylor3,
  address   = {New York},
  author    = {Taylor, M. E.},
  edition   = {Second},
  pages     = {xxii+715},
  publisher = {Springer},
  series    = {Applied Mathematical Sciences},
  title     = {Partial differential equations {III}. {N}onlinear equations},
  volume    = {117},
  year      = {2011}
}

@book{Weinberg,
  address   = {New York},
  author    = {Weinberg, S.},
  publisher = {John Wiley \& Sons},
  title     = {Gravitation and cosmology: principles and applications of the general theory of relativity},
  year      = {1972}
}

@book{Wald,
  author    = {Wald, R. M.},
  title     = {General relativity},
  publisher = {University of Chicago Press},
  address   = {Chicago},
  year      = {1984}
}

@article{Witten,
  author  = {Witten, E.},
  journal = {Comm. Math. Phys.},
  number  = {3},
  pages   = {381--402},
  title   = {A new proof of the positive energy theorem},
  volume  = {80},
  year    = {1981}
}

@book{Yano,
  address   = {New York},
  author    = {Yano, K.},
  pages     = {ix+156},
  publisher = {Marcel Dekker Inc.},
  series    = {Pure and Applied Mathematics, No. 1},
  title     = {Integral formulas in {R}iemannian geometry},
  year      = {1970}
}
\end{document}